
\documentclass[pdftex,phd,nohyperref]{} 
\overfullrule=0pt
\usepackage{amsmath}
\usepackage{amssymb}
\usepackage{mathtools}
\usepackage{amsthm}
\usepackage{bbm}
\usepackage[utf8]{inputenc} 
\usepackage[T1]{fontenc}    

\usepackage{color}
\usepackage{xcolor}
\usepackage[colorlinks,linkcolor=black,urlcolor=black,citecolor=black]{hyperref}

\usepackage{graphicx}
\usepackage{indentfirst}
\usepackage{pdflscape}


\usepackage{multirow}
\usepackage{booktabs} 
\usepackage[round]{natbib}
\usepackage{subcaption}
\usepackage{enumitem}
\captionsetup{compatibility=false}
\usepackage{algorithm}
\usepackage{algpseudocode}[noend] 
\usepackage[section]{placeins}
\FloatBarrier

\def\bSig\mathbf{\Sigma}

\DeclareMathOperator{\logit}{logit}
\DeclareMathOperator{\rank}{rank}
\DeclareMathOperator{\dimselfdefined}{dim}

\DeclareMathOperator{\expit}{expit}

\theoremstyle{plain}
\newtheorem{theorem}{Theorem}[section]
\newtheorem{proposition}[theorem]{Proposition}
\newtheorem{lemma}[theorem]{Lemma}

\theoremstyle{definition}
\newtheorem{definition}[theorem]{Definition}
\newtheorem{assumption}[theorem]{Assumption}
\theoremstyle{remark}

\newcommand{\red}[1]{{\color{red} #1}}

\newcommand{\bX}{\boldsymbol{X}}
\newcommand{\bx}{\boldsymbol{x}}
\newcommand{\bbeta}{\boldsymbol{\beta}}
\newcommand{\btheta}{\boldsymbol{\theta}}
\newcommand{\Tau}{\mathcal{T}}

\makeatletter
\renewcommand{\@biblabel}[1]{[#1]\hfill}
\makeatother
\makeatletter
\def\@hangfrom#1{\setbox\@tempboxa\hbox{{#1}}%
      \hangindent 0pt
      \noindent\box\@tempboxa}
\makeatother





\title[Causal Inference under Data Restrictions]{
\large Causal Inference under Data Restrictions
}

\author{Xiaoqing Tan}
\degree{B.S., Sun Yat-sen University, 2018}

\school[]{School of Public Health}


\date{July 25, 2022}

\year{2022}   

\keywords{
Machine learning, 
Conditional average treatment effects, 
Principal stratum, 
Model averaging, 
Robust decision making, 
Fairness
}

\subject{Dissertation}

\raggedbottom
\begin{document}
\maketitle


\committeemember{\textbf{Lu Tang}, PhD, Assistant Professor\\ Department of Biostatistics, University of Pittsburgh}

\committeemember{\textbf{Gong Tang}, PhD, Professor\\ Department of Biostatistics, University of Pittsburgh}
\committeemember{\textbf{Chung-Chou H. Chang}, PhD, Professor\\ Department of Medicine and Biostatistics, University of Pittsburgh}
\committeemember{\textbf{Emily Brant}, MD, Assistant Professor\\ Department of Medicine, University of Pittsburgh} 
\committeemember{\textbf{Zhengling Qi}, PhD, Assistant Professor\\ School of Business, George Washington University}


\school{School of Public Health}
\makecommittee


\copyrightpage


\begin{abstract}

This dissertation focuses on modern causal inference under uncertainty and data restrictions, with applications to neoadjuvant clinical trials, distributed data networks, and robust individualized decision making.

In the first project, we propose a method under the principal stratification framework to identify and estimate the average treatment effects on a binary outcome, conditional on the counterfactual status of a post-treatment intermediate response. Under mild assumptions, the treatment effect of interest can be identified. We extend the approach to address censored outcome data. The proposed method is applied to a neoadjuvant clinical trial and its performance is evaluated via simulation studies.

In the second project, we propose a tree-based model averaging approach to improve the estimation accuracy of conditional average treatment effects at a target site by leveraging models derived from other potentially heterogeneous sites, without them sharing subject-level data. To our best knowledge, there is no established model averaging approach for distributed data with a focus on improving the estimation of treatment effects. The performance of this approach is demonstrated by a study of the causal effects of oxygen therapy on hospital survival rate and backed up by comprehensive simulations.

In the third project, we propose a robust individualized decision learning framework with sensitive variables to improve the worst-case outcomes of individuals caused by sensitive variables that are unavailable at the time of decision. Unlike most existing work that uses mean-optimal objectives, we propose a robust learning framework via finding a newly defined quantile- or infimum-optimal decision rule. From a causal perspective, we also generalize the classic notion of (average) fairness to conditional fairness for individual subjects. The reliable performance of the proposed method is demonstrated through synthetic experiments and three real-data applications. 

\textbf{Public health significance:} The dissertation addresses several aspects of causal inference: 1) identify principal stratum treatment effects; 2) enhance the estimation of treatment effects via heterogeneous data integration; 3) derive robust individualized decision rules considering worst-case scenarios. It has the potential to fundamentally improve the current practice in drug development and precision medicine.

\end{abstract}




\tableofcontents

\listoftables                      

\listoffigures                



\phantomsection
\preface



This work, as with most research, is an effort in collaboration. 
First and foremost, I would like to express my deepest gratitude to my Ph.D. advisors Dr. Gong Tang and Dr. Lu Tang.  
This dissertation would not have been possible without their continuous guidance, support, and encouragement. 
Gong is an exemplary scholar, who is always hardworking and dedicated to research. 
I am constantly amazed by his sharpness on research and his deep insights about many different topics, ranging from medical science, applied math, to statistics, and computer science. 
His ability to persevere through difficulties and ask the right questions are abilities that I will continue to
aspire to throughout my career. 
I am also fortunate enough to work with Lu, an inspirational advisor, who is incredibly generous with his ideas and time; and a caring mentor, who constantly offers his support and empathy.  
Our meetings and discussions have been a great source of inspiration and greatly shaped my way of approaching research problems. 
I am deeply indebted to him for devoting so much time and energy to mentoring me and guiding me through the transition from a student to a researcher. 
His passion for research have profoundly influenced me from both professional and personal perspectives.

I am also very grateful to have Dr. Joyce Chang, Dr. Zhengling Qi, and Dr. Emily Brant serving on my doctoral dissertation committee and providing me with invaluable comments. 
I am fortunate to collaborate with Joyce on the project of heterogeneous causal models integration. I am grateful for her training and guidance on causal inference, and I have learnt a lot about how to present research ideas, story telling, as well as critical thinking. 
I am grateful to Zhengling, who has helped me tremendously in developing the project of robust individualized decision learning. 
His constant support and humor make all the research meetings interesting and enjoyable. 
Zhengling’s enthusiasm for research has left a lot of positive impacts on me. 
Emily has provided me with a wonderful opportunity to work on the exciting application of sepsis disease for interdisciplinary collaboration. 
Her domain expertise in medicine has been invaluable in leading me to find immense motivation for methodology development that has shaped my dissertation.

Additionally, I would like to express my gratitude and appreciation to a number of wonderful researchers who I have been fortunate to collaborate with through my Ph.D. 
Specifically, I would like to thank Dr. Timothy Girard, Dr. David Samuels from Vanderbilt University, and others in their team. 
They have exposed me to the world of interesting applications of critical care medicine and provided me with numerous opportunities to apply meaningful and interpretable statistical tools in medical problems. 
I would like to thank Dr. Jiebiao Wang and Dr. Qi Yan from Columbia University for their guidance and support on our genetic side project. 
They have provided me with a great opportunity to discuss and learn causal inference problems under genetic context. 
I would like to thank the amazing research team I met when I interned at Eli Lilly and Company, including Dr. Shu Yang from North Carolina State University, Dr. Ilya Lipkovich, Dr. Douglas Faries, Dr. Wendy Ye, and Zbigniew Kadziola. 
My journey would not have been so rewarding without them, and I have learned a lot from our interactions.

I would also like to thank all my friends that I made and all the people that I met throughout the Ph.D. studies.  
My thanks also go to staff members at the Department of Biostatistics, University of Pittsburgh, who always patiently helped with my questions and warmly welcomed me into the office with big smiles.

Last but not least, I would like to thank my family, Yu Wang the duck, Kitty and Bunny the cats, Larry the chinchilla, and Bibi the bunny for their unwavering support and unconditioned love. 



\chapter{Introduction}

\section{Challenges in Causal Inference under Data Restrictions}

Modern statistical and machine learning methods are capable of capturing correlations between variables but often fail to inform us the causes behind. 
Causal inference, on the other hand, helps to understand the underlying data-generating process, which is critical when analyzing data from our contemporary world, particularly in fields such as healthcare, political science, and economics.

\subsection{Data Collection Concerns in Traditional Clinical Trials}

In traditional clinical trials, data collection often requires long years of follow-up, leading to problems such as patients' withdrawal and high cost of the study. Recently, in neoadjuvant clinical trials, early efficacy of a treatment is assessed first via an intermediate post-treatment response and the eventual efficacy is assessed via long-term outcomes such as survival. 
Although strongly associated with survival, this intermediate response has not been confirmed as a surrogate endpoint. To fully understand its clinical implication, it is important to establish causal estimands such as the causal effect in survival for patients who would obtain a certain intermediate response under treatment. 
In Chapter~\ref{chap:ps}, driven by a recent neoadjuvant clinical trial, a method is developed under the principal stratification framework to identify and estimate the average treatment effects on the long-term outcome, conditional on the counterfactual status of the post-treatment intermediate response.

\subsection{Privacy Concerns in Distributed Data Networks}

In the modern context, new challenges arise in the research of causal inference due to data restrictions. 
Data privacy has become an important issue with the establishment of multiple distributed research networks in large scale studies \citep{fleurence2014launching,hripcsak2015observational,platt2018fda,donohue2021use}. These distributed networks collect sensitive subject-level data and store them at individual research sites (e.g., hospitals). Effective statistical and machine learning approaches are hence needed to be developed to jointly analyze data across sites, without directly utilizing subject-level information. 
Chapter~\ref{chap:hetero} introduces a tree-based model averaging approach to improve the estimation accuracy of conditional average treatment effects at a target site by leveraging models derived from other potentially heterogeneous sites, without them sharing subject-level data.

\subsection{Timeliness and Fairness Concerns in Decision Making}

There's been a growing concern around the timeliness and fairness of individualized decision making algorithms. 
For example, there may exist sensitive variables that are important to the intervention decision, but their inclusion in decision making is prohibited due to reasons such as delayed availability, fairness, or other concerns. 
Robust individualized decision rules that take into account the variation caused by the unavailability of these sensitive variables are needed. 
On the other hand, in most existing works such as \cite{manski2004statistical,qian2011performance}, individualized decision rules aim to maximize the potential average performance. 
Consequently, certain groups may get unfairly or unsafely treated due to the heterogeneity in their response to the treatment. 
These problems can impact people's lives in direct and important ways like loan approvals or the length of a sentence in a court case. 
It is therefore an imperative task to develop fairness-aware decision learning methods. 
In Chapter~\ref{chap:itr}, we propose a robust individualized decision learning framework with sensitive variables to improve the worst-case outcomes of individuals caused by sensitive variables that are unavailable at the time of decision.

\section{Outline and Contributions}

This section lists the chapters and corresponding contributions. Each chapter aims to be a self contained exposition on a specific topic; as a result, some introductory material for particular chapters are similar in scope. In the following of the dissertation, I develop several statistical and machine learning methods to address the aforementioned challenges in causal inference under uncertainty and data restrictions, with applications to neoadjuvant randomized trials, distributed data networks, and robust individualized decision making.

Chapter~\ref{chap:ps} concerns identifying and estimating causal effects that involve a post-treatment intermediate response in neoadjuvant randomized clinical trials. 
In neoadjuvant trials, early efficacy of a treatment is assessed via the binary pathological complete response (pCR) and the eventual efficacy is assessed via long-term clinical outcomes such as survival. Although pCR is strongly associated with survival, it has not been confirmed as a surrogate endpoint. To fully understand its clinical implication, it is important to establish causal estimands such as the causal effect in survival for patients who would achieve pCR under the new regimen. 
Under the principal stratification framework, previous studies focus on sensitivity analyses by varying model parameters in an imposed model on counterfactual outcomes. Under mild assumptions, we propose an approach to identify and estimate those model parameters using empirical data and subsequently the causal estimand of interest. We also extend our approach to address censored outcome data. The proposed method is applied to a recent clinical trial and its performance is evaluated via simulation studies. 
This chapter has been accepted for publication in the \textit{Proceedings of the First Conference on Causal Learning and Reasoning} \citep[\textit{CLeaR'22},][]{tan2022identifying}.

Chapter~\ref{chap:hetero} concerns improving the estimation accuracy of personalized treatment effects by leveraging models rather than subject-level data from heterogeneous data sources. 
Accurately estimating personalized treatment effects within a study site (e.g., a hospital) has been challenging due to limited sample size. Furthermore, privacy considerations and lack of resources prevent a site from leveraging subject-level data from other sites. We propose a tree-based model averaging approach to improve the estimation accuracy of conditional average treatment effects (CATE) at a target site by leveraging models derived from other potentially heterogeneous sites, without them sharing subject-level data. To our best knowledge, there is no established model averaging approach for distributed data with a focus on improving the estimation of treatment effects. Specifically, under distributed data networks, our framework provides an interpretable tree-based ensemble of CATE estimators that joins models across study sites, while actively modeling the heterogeneity in data sources through site partitioning. The performance of this approach is demonstrated by a real-world study of the causal effects of oxygen therapy on hospital survival rate and backed up by comprehensive simulation results. 
This chapter has been accepted for publication in the \textit{Proceedings of the $\mathit{39}^{th}$ International Conference on Machine Learning}
\citep[\textit{ICML'22},][]{tan2021tree}.

Chapter~\ref{chap:itr} introduces RISE, a \underline{r}obust \underline{i}ndividualized decision learning framework with \underline{se}nsitive variables, where sensitive variables are collectible data and important to the intervention decision, but their inclusion in decision making is prohibited due to reasons such as delayed availability or fairness concerns. A naive baseline is to ignore these sensitive variables in learning decision rules, leading to significant uncertainty and bias. 
To address this, we propose a decision learning framework to incorporate sensitive variables during \textit{offline} training but not include them in the input of the learned decision rule during model deployment. 
Specifically, from a causal perspective, the proposed framework intends to improve the worst-case outcomes of individuals caused by sensitive variables that are unavailable at the time of decision. 
Unlike most existing literature that uses mean-optimal objectives, we propose a robust learning framework by finding a newly defined quantile- or infimum-optimal decision rule. 
The reliable performance of the proposed method is demonstrated through synthetic experiments and three real-world applications. 
This chapter has been accepted for publication in the \textit{Advances in Neural Information Processing Systems}
\citep[\textit{NeurIPS'22},][]{tan2022rise}.

\chapter{Identifying Principal Stratum Causal Effects Conditional on a Post-treatment Intermediate Response}\label{chap:ps}

\section{Introduction}

We have seen a major shift in the conduct of breast cancer clinical trials in recent years. Traditionally, breast cancer patients are randomly assigned to control or treatment after the primary surgery. 
Patients from the two groups are then followed over years for comparison of their long-term outcomes such as disease-free survival and overall survival. However, in recent years, there have been an increasing number of neoadjuvant trials where many of the systemic therapies are administered prior to the breast surgery \citep{food2013guidance}.

The primary endpoint in neoadjuvant breast cancer clinical trials is pathological complete response (pCR), a binary indicator of absence of invasive cancer in the breast and auxiliary nodes \citep{food2013guidance}. The rationale for using pCR is that efficacy of a treatment can be assessed at the time of surgery  
instead of the typical 5-10 years of follow-up on survival endpoints in the adjuvant setting. Strong association between pCR and survival has been well documented  \citep{cortazar2014pathological, von2012definition, song2021association}, making pCR an attractive candidate surrogate. In the latest guidance of the U.S. Food and Drug administration (FDA), pCR is accepted as an endpoint to support accelerated drug approvals, provided certain requirements are met \citep{food2013guidance}. 
It is important to decipher the causal relationship among treatment, pCR, and survival in order to interpret the efficacy in survival when pCR is involved. 

In the recently published National Surgical Adjuvant Breast and Bowel Project (NSABP) B-40 trial, patients with operable human epidermal growth factor receptor 2 (HER2)-negative breast cancer were randomly assigned to receive or not to receive bevacizumab along with their neoadjuvant chemotherapy regimens \citep{bear2012bevacizumab}. The addition of bevacizumab significantly increased the rate of pCR (28.2\% without bevacizumab vs. 34.5\% with bevacizumab, p-value $=0.02$). In terms of the long-term outcomes, patients on bevacizumab showed improvements in event-free survival (EFS) and overall survival (OS) compared to the control patients (EFS: hazard ratio 0.80, p-value $=0.06$; OS: hazard ratio 0.65, p-value $=0.004$) \citep{bear2015neoadjuvant}. Some investigators are interested in the comparison of survival between pCR patients in the treatment group and pCR patients in the control group. Such comparison, however, is problematic because these two groups of pCR patients are different and any direct comparison between them lacks causal interpretation. 

Under the counterfactual framework \citep{rubin1974estimating}, potentially a patient has a pCR status after taking the control regimen and a pCR status after taking the treatment. 
Similarly, one can define counterfactual outcomes and causal effects in survival status (0/1) after a certain time period such as three years. 
The principal stratum framework proposed by \cite{frangakis2002principal} can be used to describe causal effect in long-term outcomes (such as EFS) with an intermediate outcome (such as pCR) involved. 
Each principal stratum consists of subjects with the same pair of potential pCR status: the pCR status under the control regimen and the pCR status under the treatment regimen. 
One can then define the causal effect of treatment in EFS on each principal stratum.

In this chapter, we propose a method to identify and estimate principal stratum causal effects for a binary outcome and later extend our method for censored outcome data. 
The causal estimand of interest is the treatment efficacy in 3-year EFS and OS among patients who would achieve pCR under chemotherapy plus bevacizumab as in our motivating study, the NSABP B-40 trial. 
A model of counterfactual outcome given the observed data is imposed. Using some probabilistic arguments, we connect the model parameters with quantities that can be empirically estimated from the observed data. The resulting equations allow us to estimate the model parameters and subsequently the causal estimand of interest, and resolve the identifiability issue.

The remaining chapter is organized as follows. Section~\ref{s:related} presents related work in principal stratum causal effects. Section~\ref{s:method} introduces the standard data settings, causal estimands of interest, and a regression model in the context of a randomized neoadjuvant trial. In Section~\ref{ss:estimation}, we provide key assumptions for identification of the causal estimand and introduce the proposed method. In Section~\ref{s:sim}, we conduct a simulation study to assess the performance of our method in terms of bias and coverage of bootstrap confidence intervals. In Section~\ref{appb40}, we apply the proposed method to the motivating NSABP B-40 study. We conclude with a discussion of the proposed method and future work in Section~\ref{s:discuss}.

\section{Related Work}\label{s:related}

\cite{frangakis2002principal} propose to split study population into principal strata.  
Each principal stratum is by definition independent of treatment assignment since it contains information on counterfactual, or potential outcomes rather than the observed outcome for a specific treatment assignment. One can then define treatment effects 
on each principal stratum. Additionally, any union of the basic principal strata would also be a valid principal stratum as it leads to comparisons among a common set of individuals.  \cite{gilbert2015surrogate} show the principal stratification framework is useful for evaluating whether and how treatment effects differs across subgroups characterized by the intermediate variable, thus being firmly associated with the utility of the treatment marker.

Identification of principal stratum causal effects is in general difficult. A major challenge is that we do not observe the individual membership of principal stratum because of its counterfactual nature \citep{gilbert2008evaluating, wolfson2010statistical}. 
Under the principal stratification framework, \cite{gilbert2003sensitivity} propose to perform sensitivity analyses by varying model parameters in an imposed parametric model for counterfactual outcomes.  
\cite{shepherd2006sensitivity} and \cite{jemiai2007semiparametric} extend this sensitivity analyses approach by including baseline covariates in the model. 
These sensitivity analyses can provide researchers with a range of causal estimates under different values of the sensitivity parameters. 
In reality, however, it is often unclear what the plausible values are for these sensitivity parameters and the selected combinations may not be exhaustive. 
\cite{li2010bayesian} and \cite{zigler2012bayesian} use Bayesian approaches to model the joint distribution of the counterfactual intermediate outcomes and long-term outcomes and incorporate prior information regarding non-identifiable associations. 
The lack of identifiability, however, still exists and is reflected by the over-coverage of confidence intervals in their simulation studies. 

Principal stratum causal effects with regards to outcomes truncated by death are not identifiable without further assumptions \citep{zhang2003estimation, kurland2009longitudinal, lee2010causal}. 
\cite{tchetgen2014identification} identify causal effects by borrowing information from post-treatment risk factors of the intermittent outcome and the causal estimand may vary according to the selected risk factors. 
Instrumental variables are also introduced to provide information on the unobserved principal strata and the justification of that exclusion restriction assumption is often challenging \citep{ding2011identifiability, wang2017identification}. 

All the above methods either fall into sensitivity analyses or require exclusion restriction assumptions. In this chapter, we propose a method to identify and estimate principal stratum causal effects under data settings as \cite{shepherd2006sensitivity} for a binary outcome and later extend our method to address issues of censored outcome data under mild assumptions. Identification of the causal effect is achieved with the bias minimal and the coverage probabilities close to the nominal levels.

\section{The Principal Stratification Framework of Interest}\label{s:method}

\subsection{Standard Setting for Neoadjuvant Studies}

Consider a neoadjuvant breast cancer clinical trial where patients are randomized to two treatment groups. For subject $i = 1, 2, \ldots, n$, let $Z_i \in \{0, 1\}$ be the binary treatment assignment; $X_i \in \varGamma = \{0, 1, \ldots, K\}$ be a baseline discrete covariate. A continuous baseline variable $X_i$ such as clinical tumor size, would be grouped into $K+1$ categories based on scientific knowledge. We will discuss extensions to the scenarios with a continuous $X_i$ in Section~\ref{s:discuss}. Throughout this paper, we assume that the stable unit treatment value assumption (SUTVA) \citep{rubin1980randomization} holds: the potential outcomes of any individual $i$ are unrelated to the treatment assignment of other individuals. Then we can denote $S_i(Z_i) \in \{0, 1\}$ as a binary post-randomization intermediate response such as the pCR status for subject $i$ under treatment $Z_i$ (possibly counterfactual). And denote $Y_i\{Z_i,S_i(Z_i)\}=Y_i(Z_i) \in \{0, 1\}$ as a binary long-term outcome of interest such as the EFS status at 3-year after study entry for subject $i$ under treatment $Z_i$ (possibly counterfactual). For  individual $i$, $\{Z_i, X_i, S_i(Z_i), Y_i(Z_i)\}$ represents the observed data of treatment assignment, baseline covariate, intermediate response and long-term outcome. If $Z_i = 0$, $\{S_i(0),Y_i(0)\}$ are observed and $\{S_i(1),Y_i(1)\}$ are counterfactual. If $Z_i = 1$, then $\{S_i(1),Y_i(1)\}$ are observed and $\{S_i(0),Y_i(0)\}$ are counterfactual. Thus for individual $i$, the complete counterfactual data would be $\{Z_i, X_i, S_i(0), S_i(1), Y_i(0), Y_i(1)\}$. Another important assumption is the monotonicity assumption: $S_i(0) \leq S_i(1)$ \citep{angrist1996identification}, as in the motivating NSABP B-40 study, addition of bevacizumab led to improved pCR \citep{bear2012bevacizumab}. We also assume for subject $i$, the treatment assignment $Z_i$ is independent of $X_i$ and the potential outcomes. 

Under the principal stratification framework, denote the principal strata to be $E_{jk} = \{i: S_i(0) = j, S_i(1) = k\}$,
$j,k=0,1$. The principal stratum causal effects of interest are
\begin{eqnarray*}
\theta_{jk}=\mathbb{E}\{Y_i(1)-Y_i(0)|i\in E_{jk}\},~~j,k=0,1.
\end{eqnarray*}

Under the monotonicity assumption, the principal stratum $E_{10}$ is empty. In the NSABP B-40 study, we are interested in the causal effect in $E_{01} \cup E_{11}$, those who would achieve pCR had they been treated with chemotherapy plus bevacizumab: 
\begin{eqnarray*}
\theta=\mathbb{E}\{Y_i(1)-Y_i(0)|i\in E_{+1}=E_{01} \cup E_{11}\} = \mathbb{E}\{Y_i(1)-Y_i(0)|S_i(1)=1\}.
\end{eqnarray*}

Other principal stratum causal effects such as $\theta_{jk}$ can be estimated using a similar approach as we outline in Section~\ref{ss:estimation}.

\subsection{Modeling a Counterfactual Outcome}

In order to estimate the principal stratum causal effects, \cite{gilbert2003sensitivity} propose to use a logistic regression model for $\Pr \{S_i(1)=1|S_i(0)=0, Y_i(0)\}$ as
\begin{align*}
    \Pr \{S_i(1)=1|S_i(0)=0, Y_i(0)\}&=\logit^{-1}\{\beta_0+\beta_1 Y_i(0)\}.
\end{align*}

\cite{shepherd2006sensitivity} further extend the logistic regression by incorporating baseline covariates $X_i$ as
\begin{align}
\Pr \{S_i(1)=1|S_i(0)=0, Y_i(0), X_i = x\}&=\logit^{-1}\{\beta_0+\beta_1 Y_i(0)+\beta_2 x\} \nonumber \\
&=\frac{\exp\{\beta_0+\beta_1 Y_i(0)+\beta_2 x\}}{1+\exp\{\beta_0+\beta_1 Y_i(0)+\beta_2 x\}} \label{Model}.
\end{align}

\cite{jemiai2007semiparametric} consider a more general model framework:
\begin{align*}
\Pr\{S_i(1)=1|S_i(0)=0, Y_i(0),X_i=x\}=w[r(x)+g\{Y_i(0),x\}]
\end{align*}
where $w(u)\equiv\{1+\exp(-u)\}^{-1}$ and $g(\cdot,\cdot)$ is a known function. In the case of \cite{shepherd2006sensitivity}, $g(u,v)=\beta_1 u$ with $\beta_1$ known. 
\cite{jemiai2007semiparametric} show that under the monotonicity assumption, inference could be made on $\theta$ for any fixed function $g$ and sensitivity analyses could be performed by varying $g$.

\section{The Proposed Method} \label{ss:estimation}

\subsection{Key Identification Assumptions}
\label{s:assump}

Identification of causal effects is achieved through two key assumptions. First, the monotonicity assumption: $S_i(0) \leq S_i(1)$ \citep{angrist1996identification}. That is, a subject who responds under the control would respond if given the treatment. This monotonicity assumption could prove valuable \citep{bartolucci2011modeling} and can be justified in many scenarios that the additional therapy would help to improve the response. In the motivating NSABP B-40 study, addition of bevacizumab led to improved pCR \citep{bear2012bevacizumab}. 
Second, a parametric model is used to describe the counterfactual response under the treatment for a control non-respondent. Both the future long-term outcome and a baseline covariate are predictors in this parametric model. It is required that the level of the covariates is at least of the same dimension of model parameters and the imposed linearity assumption is critical to identify and estimate those regression parameters. We will elaborate the second assumption in Section~\ref{s:identify}.

\subsection{Identification of Model Parameters and Causal Estimands}
\label{s:identify}
As mentioned in \cite{shepherd2006sensitivity} and will be described in Section \ref{ss:estparam}, when the parameters of model~\eqref{Model} are identified, the causal estimands can be identified.

\begin{lemma} \label{lemma_idf}
For any $x \in \Gamma=\{ 0,1, \ldots, K\}$ and $y \in \{0,1\}$, let $a_x=\Pr\{S(1)=1|S(0)=0,X=x\}$ and  $b_{xy}=\Pr\{Y(0)=y|S(0)=0, X=x\}$. 
Let ${\mathbf a}=(a_0,a_1,\ldots,a_K)^T$ and ${\mathbf b}_y=(b_{y0},b_{y1},\ldots,b_{yK})^T$. 
Define $h_x({\bbeta},{\mathbf a},{\mathbf b}_0,{\mathbf b}_1)=a_x-\sum_{y=0}^1 b_{xy}\logit^{-1}\{\beta_0+\beta_1 y+\beta_2 x\}$, 
and $H({\bbeta}, {\mathbf a},{\mathbf b}_0,{\mathbf b}_1)=\{h_0(\beta,{\mathbf a},{\mathbf b}_0,{\mathbf b}_1),\ldots,h_K(\beta,{\mathbf a},{\mathbf b}_0,{\mathbf b}_1)\}^T$. 

If $\rank\{{\partial H({\bbeta},{\mathbf a},{\mathbf b}_0,{\mathbf b}_1)}/{\partial\bbeta}\}=3$, within the neighborhood of $\bbeta$ there is a unique solution $\bbeta=\psi({\mathbf a},{\mathbf b}_0,{\mathbf b}_1)$ such that
$H\{\psi({\mathbf a},{\mathbf b}_0,{\mathbf b}_1),{\mathbf a},{\mathbf b}_0,{\mathbf b}_1\}=0.$
\end{lemma}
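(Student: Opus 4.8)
The plan is to read Lemma~\ref{lemma_idf} as a local inversion statement and prove it with the implicit function theorem, after reducing the (generically overdetermined) system of $K+1$ equations $H=0$ to a square system of size $3$. First I would note that the conclusion is meant around a ``true'' point $(\bbeta,\mathbf a,\mathbf b_0,\mathbf b_1)$ at which $H$ already vanishes: by the law of total probability together with model~\eqref{Model},
\[
a_x=\Pr\{S(1)=1\mid S(0)=0,X=x\}=\sum_{y=0}^{1}b_{xy}\,\logit^{-1}\{\beta_0+\beta_1 y+\beta_2 x\},
\]
so $h_x(\bbeta,\mathbf a,\mathbf b_0,\mathbf b_1)=0$ for every $x\in\Gamma$. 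Hence what must be shown is that near this point the zero set of $H(\,\cdot\,,\mathbf a,\mathbf b_0,\mathbf b_1)$ is a single point that moves smoothly with $(\mathbf a,\mathbf b_0,\mathbf b_1)$.

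Next, since $\rank\{\partial H/\partial\bbeta\}=3$, I would select three indices $x_1,x_2,x_3\in\Gamma$ whose rows make the corresponding $3\times 3$ block of $\partial H/\partial\bbeta$ nonsingular at the true point, and set $\widetilde H=(h_{x_1},h_{x_2},h_{x_3})^{T}$. Each $h_x$ is real-analytic in $\bbeta$ (because $\logit^{-1}$ is), so $\widetilde H$ is a smooth map into $\mathbb R^{3}$ whose Jacobian in $\bbeta$ is invertible at the true point. The implicit function theorem then gives a neighborhood of $(\mathbf a,\mathbf b_0,\mathbf b_1)$ and a unique smooth map $\bbeta=\psi(\mathbf a,\mathbf b_0,\mathbf b_1)$, with values in a prescribed neighborhood of $\bbeta$, solving $\widetilde H\{\psi(\cdot),\cdot\}=0$; in particular any solution of $\widetilde H=0$ that stays near $\bbeta$ equals $\psi(\cdot)$.

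Finally I would promote this to the full system $H=0$. For the $(\mathbf a,\mathbf b_0,\mathbf b_1)$ to which the lemma is applied --- those arising (as population or limiting empirical quantities) from a distribution obeying model~\eqref{Model} --- there exists $\bbeta'$ near $\bbeta$ with $H\{\bbeta',\cdot\}=0$, hence $\widetilde H\{\bbeta',\cdot\}=0$, so $\bbeta'=\psi(\cdot)$ and therefore $H\{\psi(\cdot),\cdot\}=0$; conversely any near-$\bbeta$ zero of $H$ is a zero of $\widetilde H$, hence equals $\psi(\cdot)$, which gives uniqueness. The step I expect to be the main obstacle is exactly this last one: when $K+1>3$ the system is overdetermined, so the extra equations $h_x=0$ for $x\notin\{x_1,x_2,x_3\}$ must be argued to be redundant on the relevant set of $(\mathbf a,\mathbf b_0,\mathbf b_1)$ rather than further constraining it, which forces one to be explicit that these quantities are tied to a population obeying model~\eqref{Model}. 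If one only wants local identifiability of $\bbeta$ from the fixed true $(\mathbf a,\mathbf b_0,\mathbf b_1)$, the uniqueness clause of the implicit function theorem applied to $\widetilde H$ already suffices and this complication does not arise.
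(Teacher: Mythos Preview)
Your approach is essentially identical to the paper's: both verify $H(\bbeta,\mathbf a,\mathbf b_0,\mathbf b_1)=0$ at the true parameter via the law of total probability and model~\eqref{Model}, then invoke the implicit function theorem under the rank-$3$ Jacobian hypothesis. The paper's proof is terser---it simply cites the implicit function theorem without spelling out the reduction to a square $3\times 3$ subsystem or the redundancy of the remaining equations---so your version is, if anything, a more carefully articulated rendition of the same argument.
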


\begin{proof}
For all $x\in \Gamma$, we have
\begin{align*}
a_x&=\Pr\{S(1)=1|S(0)=0,X=x\} =  \sum_{y=0}^1 \Pr\{S(1)=1,Y(0)=y|S(0)=0,X=x\} \\
&=  \sum_{y=0}^1  \Pr\{Y(0)=y|S(0)=0,X=x\} \Pr\{S(1)=1|Y(0)=y,S(0)=0,X=x\} \\
&=  \sum_{y=0}^1  b_{xy} \logit^{-1}(\beta_0+\beta_1 y+\beta_2 x). 
\end{align*}

Hence, $H({\bbeta}, {\mathbf a},{\mathbf b}_0,{\mathbf b}_1)=0$ 
and $H(\cdot)$ is a smooth function of $\bbeta, {\mathbf a},{\mathbf b}_0, \text{and } {\mathbf b}_1$. 
By invoking the implicit function theorem, 
when $\rank({\partial H}/{\partial\bbeta})=3$, there exists a smooth function $\psi$ such that $\bbeta=\psi({\mathbf a},{\mathbf b}_0,{\mathbf b}_1)$ and $H\{\psi({\mathbf a},{\mathbf b}_0,{\mathbf b}_1),{\mathbf a},{\mathbf b}_0,{\mathbf b}_1\}=0.$ 
\end{proof}

The identifiability of model parameter $\bbeta$ depends on the availability of $a_x=\Pr\{S(1)=1|S(0)=0,X=x\}$ and $b_{xy}=\Pr\{Y(0)=y|S(0)=0, X=x\}$, for $x\in \Gamma; y=0,1$. The linearity in $X=x$ in model~\eqref{Model} also plays an important role. In general, when $\beta_2\neq 0$ and $K\geq 2$, there are equal or more equations than the number of unknown parameters in $\bbeta$, Lemma~\ref{lemma_idf} would hold. 
In practice, given $({\mathbf a},{\mathbf b}_0,{\mathbf b}_1)$, one solves for $\bbeta$ such that $H({\bbeta}, {\mathbf a},{\mathbf b}_0,{\mathbf b}_1) = 0$. Then verify that  $\rank\{{\partial H({\bbeta},{\mathbf a},{\mathbf b}_0,{\mathbf b}_1)}/{\partial\bbeta}\}=3$ at the solution.

\subsection{Estimation of Causal Estimands}
\label{s:Estimation}

The causal estimand of interest is
\begin{align}
    \theta = \mathbb{E}\{Y_i(1) - Y_i(0)|S_i(1)=1\}
    &= \mathbb{E}\{Y_i(1)|S_i(1)=1\} - \mathbb{E}\{Y_i(0)|S_i(1)=1\}. \label{trtDiff}
\end{align}

Because $\{Y_i(1),S_i(1)\}$ are observed for subjects in the treatment arm, $\Pr \{Y_i(1)=1|S_i(1)=1\}$ can be estimated by
\begin{align}
    \widehat{\Pr} \{Y_i(1)=1|S_i(1)=1\} 
    &= \frac{\sum_i \mathbbm{1}\{Z_i=1, S_i(1)=1, Y_i(1)=1\}}{\sum_i \mathbbm{1}\{Z_i=1, S_i(1)=1\}}. \label{prY11_E+1}
\end{align}
where $\mathbbm{1}\{\cdot\}$ is the indicator function. 

Meanwhile,
\begin{align}
    \Pr \{Y_i(0) = 1|S_i(1)=1\}
    &= \frac{\Pr \{S_i(1) = 1, Y_i(0) = 1\}}{\Pr \{S_i(1) = 1\}} \nonumber \\
    &=\frac{\sum_x \Pr \{S_i(1) = 1, Y_i(0) = 1|X_i = x\} \cdot \Pr \{X_i = x\}}{\sum_x \Pr \{S_i(1) = 1|X_i = x\} \cdot \Pr \{X_i = x\}} \label{prY01}
\end{align}

In equation \eqref{prY01}, $\Pr \{X_i=x\}$ can be estimated by $\widehat{\Pr} \{X_i=x\} = {{\sum_i \mathbbm{1}(X_i=x)}/{n}}$ and
\begin{align}
    &\Pr \{S_i(1) = 1, Y_i(0) = 1|X_i = x\} \nonumber \\
     &=  \sum_{j=0}^{1} \Pr \{S_i(1) = 1, Y_i(0) = 1, S_i(0) = j|X_i = x\} \nonumber \\
     &=  \sum_{j=0}^{1} \Pr \{S_i(1) = 1, Y_i(0) = 1 | S_i(0) = j, X_i = x\} \cdot \Pr \{S_i(0) = j|X_i = x\} \nonumber \\
    &=  \sum_{j=0}^{1} \Big [ \Pr \{S_i(1)=1|S_i(0)=j, Y_i(0)=1, X_i = x\} \nonumber \\
    &\quad \cdot \Pr \{Y_i(0)=1|S_i(0)=j, X_i = x\} \cdot \Pr \{S_i(0) = j|X_i = x\} \Big ]. \label{factor}
\end{align}

In equation \eqref{factor}, $\Pr \{Y_i(0)=1|S_i(0)=j, X_i = x\}$, $j=0,1$, can be estimated by
\begin{gather*}
    \begin{split}
        \widehat{\Pr} \{Y_i(0)=1|S_i(0)=j, X_i = x\} 
        &= \frac{\sum_i \mathbbm{1}\{Z_i=0, S_i(0)=j, Y_i(0)=1, X_i=x\}}{\sum_i \mathbbm{1}\{Z_i=0, S_i(0)=j, X_i=x\}}.
    \end{split} 
\end{gather*}

By the monotonicity assumption, $\Pr \{S_i(1)=1|S_i(0)=1,Y_i(0)=1, X_i = x\}\equiv 1$. 

The estimation of $\Pr \{S_i(j) = 1|X_i = x\}$, $j=0,1$, is described in Lemma \ref{lemma_in}.
\begin{lemma} \label{lemma_in}
Under the monotonicity assumption, for any $x$,
we denote 
\begin{eqnarray*}
\widehat{q}_j(x)=\frac{\sum_i \mathbbm{1}\{Z_i=j,S_i(j)=1,X_i=x\}}{\sum_i \mathbbm{1}\{Z_i=j,X_i=x\}},~~j=0,1;
\end{eqnarray*}
the observed proportions of responders in the control group and the treatment group with $X=x$, respectively. 

We use maximum likelihood estimation to estimate $\Pr \{S_i(j) = 1|X_i = x\}$, $j=0,1$.
\begin{itemize}
  \item[(a)] when $\widehat{q}_1(x)\geq \widehat{q}_0(x)$, the maximum likelihood estimate of 
$\Pr\{S_i(j)=1|X_i=x\}$ is $\widehat{q}_j(x)$, $j=0,1;$
  \item[(b)] when $\widehat{q}_1(x)< \widehat{q}_0(x)$, the maximum likelihood estimate of 
$\Pr\{S_i(j)=1|X_i=x\}$ is \\ ${{\sum_i \mathbbm{1}(S_i=1,X_i=x)}/{\sum_i \mathbbm{1}(X_i=x)}}$, $j=0,1$.
\end{itemize}
\end{lemma}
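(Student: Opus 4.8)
The plan is to observe that, once we condition on a fixed baseline stratum $\{X_i=x\}$, Lemma~\ref{lemma_in} is a constrained binomial maximum-likelihood problem in the two unknowns $p_0=\Pr\{S_i(0)=1\mid X_i=x\}$ and $p_1=\Pr\{S_i(1)=1\mid X_i=x\}$. First I would fix $x\in\Gamma$ and write $n_j(x)=\sum_i\mathbbm{1}\{Z_i=j,X_i=x\}$ and $r_j(x)=\sum_i\mathbbm{1}\{Z_i=j,S_i(j)=1,X_i=x\}$, so that $\widehat q_j(x)=r_j(x)/n_j(x)$. Since randomization makes $Z_i$ independent of $(X_i,S_i(0),S_i(1))$, within arm $j$ of this stratum the observed responses are i.i.d.\ Bernoulli$(p_j)$ and the assignment labels and covariate values are ancillary; the relevant log-likelihood is therefore $\ell(p_0,p_1)=\ell_0(p_0)+\ell_1(p_1)$ with $\ell_j(p)=r_j(x)\log p+\{n_j(x)-r_j(x)\}\log(1-p)$, to be maximized over $C=\{(p_0,p_1):0\le p_0\le p_1\le 1\}$. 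The order constraint $p_0\le p_1$ is exactly what the monotonicity assumption $S_i(0)\le S_i(1)$ forces on the marginal response probabilities, so the MLE is the maximizer of $\ell$ over $C$.

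Assuming the stratum is non-degenerate ($n_0(x),n_1(x)>0$), each $\ell_j$, hence $\ell$, is strictly concave, so there is a unique maximizer over the convex compact set $C$. For part (a), when $\widehat q_1(x)\ge\widehat q_0(x)$ the unconstrained maximizer $(\widehat q_0(x),\widehat q_1(x))$ already lies in $C$, and being the global maximizer of $\ell$ over all of $[0,1]^2$ it is a fortiori the constrained maximizer; this yields $\widehat{\Pr}\{S_i(j)=1\mid X_i=x\}=\widehat q_j(x)$. For part (b), when $\widehat q_1(x)<\widehat q_0(x)$, I would show the constrained maximizer lies on the diagonal edge $\{p_0=p_1\}$. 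Here $\widehat q_0(x)>0$ (otherwise $\widehat q_1(x)<0$, impossible) and $\widehat q_1(x)<1$, so $r_0(x)>0$ and $n_1(x)-r_1(x)>0$; the endpoint behavior $\ell_0'(0^+)=+\infty$ and $\ell_1'(1^-)=-\infty$ then excludes the vertices $\widehat p_0=0$ and $\widehat p_1=1$, and the remaining vertices are incompatible with $\widehat p_0<\widehat p_1$. If we also had $\widehat p_0<\widehat p_1$, the constraint would be slack and stationarity would give $\widehat p_0=\widehat q_0(x)>\widehat q_1(x)=\widehat p_1$, a contradiction; hence $\widehat p_0=\widehat p_1=:p$. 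Maximizing the strictly concave one-parameter function $\ell_0(p)+\ell_1(p)$ then gives $p=\{r_0(x)+r_1(x)\}/\{n_0(x)+n_1(x)\}$, and rewriting $r_0(x)+r_1(x)=\sum_i\mathbbm{1}\{S_i=1,X_i=x\}$ and $n_0(x)+n_1(x)=\sum_i\mathbbm{1}\{X_i=x\}$ (with $S_i=S_i(Z_i)$ the observed response) recovers the stated pooled estimator.

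The main obstacle is the boundary analysis in part (b): one must argue carefully that the constrained optimum lands on the diagonal face of $C$ rather than on another face or at a vertex, which is precisely why the argument routes through the endpoint behavior of the scores $\ell_j'$ and uses that case (b) rules out the degenerate values $\widehat q_0(x)=0$ and $\widehat q_1(x)=1$. Two minor points I would also address: that the two cases agree on the shared boundary $\widehat q_0(x)=\widehat q_1(x)$, where both formulas collapse to the common value; and that non-degeneracy of the stratum ($n_0(x),n_1(x)>0$) is tacitly assumed, with the obvious conventions when an arm is empty within a given stratum.
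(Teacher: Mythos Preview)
Your proof is correct and follows essentially the same approach as the paper: both recognize the problem as maximizing a product of two binomial likelihoods in $(p_0,p_1)=(\Pr\{S(0)=1\mid X=x\},\Pr\{S(1)=1\mid X=x\})$ subject to the order constraint $p_0\le p_1$ induced by monotonicity (the paper parameterizes equivalently via $(p_{11x},p_{+1x})$ with $p_{01x}\ge0$). Your boundary analysis for part~(b) is in fact more explicit than the paper's, which simply asserts $\widehat p_{11x}=\widehat p_{+1x}$ in that case and then maximizes the one-parameter likelihood.
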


In the second scenario, the estimates are the same as the pooled proportion of responders among patients with $X=x$. The proof of Lemma \ref{lemma_in} is presented in Appendix~\ref{supplA}.

The last item in equation \eqref{prY01} needed for estimating the causal estimand is $\Pr \{S_i(1)=1|S_i(0)=0,Y_i(0)=1, X_i = x\}$.  \cite{gilbert2003sensitivity} and \cite{shepherd2006sensitivity} conduct sensitivity analyses by varying the values of the $\bbeta$ in model~\eqref{Model}. In Section~\ref{ss:estparam}, we will discuss how to estimate $\bbeta$ using a probabilistic equation.

\subsection{Estimation of Model Parameters}\label{ss:estparam}

Let
\begingroup
\allowdisplaybreaks
\begin{align*}
    & G_L(x) = \Pr \{ S_i(1)=1|S_i(0)=0, X_i=x \}  \\
    & G_R(x,y) = \Pr \{ Y_i(0)=y|S_i(0)=0, X_i=x \}  \\
    & G_M(x,y;\bbeta) = \Pr \{ S_i(1)=1|S_i(0)=0, Y_i(0)=y, X_i=x \}.
\end{align*}
\endgroup
This leads to an equation system:
\begin{equation*}
    G_L(x) = \sum_{y=0}^1 G_M(x,y;\bbeta) \cdot G_R(x,y); x \in \Gamma \label{GM}
\end{equation*}

We can estimate $G_L(x)$ with the following empirical estimates from the observed data by
\begin{align*}
    \widehat{G}_L(x) 
    &= \frac{\widehat{\Pr}\{S_i(0)=0, S_i(1)=1|X_i=x\}}{\widehat{\Pr} \{ S_i(0)=0|X_i=x \}}
\end{align*}
where the numerator and the denominator are derived from Lemma \ref{lemma_in}. The details are presented in Appendix~\ref{supplA}. 

Because $\{X_i, S_i(0), Y_i(0)\}$ are observed for subjects in the control arm, $G_R(x,y)$ can be estimated by
\begin{align*}
    \widehat{G}_R(x,y) 
    = \frac{\sum_i \mathbbm{1}\{Z_i=0, S_i(0)=0, Y_i(0)=y, X_i=x\}}{\sum_i \mathbbm{1}\{Z_i=0, S_i(0)=0, X_i=x\}}
\end{align*}

With $\widehat{G}_L(x)$ and $\widehat{G}_R(x,y)$ estimated from the observed data and $G_M(x,y;\bbeta)$ specified as the regression model in equation~\eqref{Model}, we have
\begin{equation}
    \widehat{G}_L(x) = \sum_{y=0}^1 G_M(x,y;\bbeta) \cdot \widehat{G}_R(x,y); ~~x \in \Gamma \label{GM-estimated}
\end{equation}

The number of unknown parameters $\bbeta$ in system of equations \eqref{GM-estimated} is three and the number of equations is $(K+1)$, for $X_i \in \varGamma = \{0, 1, \ldots, K\}$. For \eqref{GM-estimated}, when $K+1<3$, we cannot uniquely solve for $\bbeta$. When $K+1=3$, the number of equations is the same as the number of unknown parameters and in general we can solve for $\bbeta$. When $K+1>3$, there are more equations than the number of unknown parameters, and there are generally no exact solutions to the equation systems (\ref{GM-estimated}). In that case, we propose to estimate $\bbeta$ by 
\begin{equation}
    \widehat{\bbeta} = \operatorname*{\arg\,min}_{\bbeta} \sum_{x=0}^K\{\widehat{G}_L(x) - \sum_{y=0}^1 G_M(x,y;\bbeta) \cdot \widehat{G}_R(x,y)\}^2 \label{betabeta}
\end{equation}
where $\widehat{G}_L(x)$, $\widehat{G}_R(x,y)$ and $G_M(x,y;\bbeta)$ are probabilities bounded between 0 and 1.

With $\bbeta$ estimated, we can estimate the causal estimand $\theta$ via the procedure outlined in Section~\ref{s:Estimation}.

\subsection{Consistency of Model Parameters and Causal Estimands} \label{consistency}

Here we provide the theoretical guarantee of our estimators $\bbeta$ and $\theta$.

Let 
\begin{eqnarray*}
&& Q_0^{(x)}(\bbeta)= \{{G}_L(x) - {\sum_{y=0}^1} G_M(x,y;\bbeta) \cdot {G}_R(x,y)\}^2,~~ x \in \Gamma; y=0,1\\
&& \tilde{Q}_0(\bbeta)=\{Q_0^{(0)}(\bbeta),Q_0^{(1)}(\bbeta),\ldots, Q_0^{(K)}(\bbeta)\}^T, \\
&&Q_n(\bbeta) = \sum_{x=0}^K Q_n^{(x)}(\bbeta) = \sum_{x=0}^K \{\widehat{G}_L(x) - \sum_{y=0}^1 G_M(x,y;\bbeta) \cdot \widehat{G}_R(x,y)\}^2
\end{eqnarray*}

\begin{theorem}\label{Theorem1}
Under the following conditions:
\begin{itemize}
  \item[(a)] $\bbeta$ satisfies $Q_0^{(x)}(\bbeta)=0$, $\forall x\in \Gamma=\{0,1,\ldots,K\}$.
  \item[(b)] $\rank | {{\partial \tilde{Q}_0(\bbeta)}/{\partial \bbeta}}  | \geq \dimselfdefined(\bbeta)$. 
  \item[(c)] ${\widehat{G}_L(x) \overset{p}{\to} G_L(x), \widehat{G}_R(x, y) \overset{p}{\to} G_R(x, y)}$, as ${n \rightarrow \infty}$, $\forall x \in \Gamma; \forall y=0,1$.
\end{itemize} 
Then ${\widehat{\bbeta}=\operatorname*{\arg\,min}_{\bbeta} Q_n(\bbeta) \overset{p}{\to} \bbeta}$ and the causal estimand $\widehat{\theta} \overset{p}{\to} \theta$ as $n \to \infty$. 
\end{theorem}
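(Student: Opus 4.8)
The plan is to prove the two convergence statements in sequence: first $\widehat{\bbeta} \overset{p}{\to} \bbeta$, then propagate this through the plug-in formulas of Section~\ref{s:Estimation} to obtain $\widehat{\theta} \overset{p}{\to} \theta$. For the first part, I would recognize $\widehat{\bbeta} = \arg\min_{\bbeta} Q_n(\bbeta)$ as an M-estimator and invoke a standard argmin/consistency result (e.g., the argmax theorem, or Theorem~5.7 in van der Vaart's \emph{Asymptotic Statistics}). The ingredients needed are: (i) the population objective $Q_0(\bbeta) := \sum_{x=0}^K Q_0^{(x)}(\bbeta)$ has a well-separated minimum at the true $\bbeta$, and (ii) $Q_n$ converges to $Q_0$ uniformly on a neighborhood of $\bbeta$. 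Ingredient (ii) is the easy one: by condition~(c) we have $\widehat{G}_L(x) \overset{p}{\to} G_L(x)$ and $\widehat{G}_R(x,y) \overset{p}{\to} G_R(x,y)$ for each of the finitely many $(x,y)$, and $G_M(x,y;\bbeta) = \logit^{-1}(\beta_0 + \beta_1 y + \beta_2 x)$ is bounded in $[0,1]$ and continuous in $\bbeta$; since the sum over $x$ is finite, pointwise convergence plus continuity gives the needed uniform convergence on any compact parameter set by a standard argument (the map $(G_L, G_R, \bbeta) \mapsto Q$ is jointly continuous and the data-dependent arguments are consistent).

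For ingredient (i), condition~(a) says $Q_0^{(x)}(\bbeta) = 0$ for all $x$, so the true $\bbeta$ is a global minimizer of $Q_0$ with minimum value $0$ (each summand is a square, hence nonnegative). The question is whether it is the \emph{unique} minimizer in a neighborhood — i.e., identifiability. This is exactly where condition~(b), the rank condition $\rank|\partial \tilde{Q}_0(\bbeta)/\partial\bbeta| \geq \dimselfdefined(\bbeta) = 3$, enters, and it is essentially Lemma~\ref{lemma_idf} restated at the level of the squared residuals. I would argue as follows: the zero set of $H(\cdot)$ near $\bbeta$ is, by the implicit function theorem (Lemma~\ref{lemma_idf}), the single point $\bbeta$ once the Jacobian $\partial H/\partial\bbeta$ has full rank $3$; hence $Q_0(\bbeta') = 0 \iff \bbeta' = \bbeta$ locally. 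Combined with continuity of $Q_0$, this yields the well-separated-minimum condition $\inf_{\|\bbeta' - \bbeta\| \geq \varepsilon} Q_0(\bbeta') > 0$ on any compact neighborhood, and the M-estimator consistency theorem then delivers $\widehat{\bbeta} \overset{p}{\to} \bbeta$. I expect this identifiability step — carefully linking the rank condition on $\tilde{Q}_0$ back to the rank condition on $H$ from Lemma~\ref{lemma_idf}, and ruling out spurious minimizers — to be the main obstacle, since it is the one place the proof is not just bookkeeping; one must be slightly careful that the rank of the Jacobian of the vector of \emph{squared} residuals relates correctly to the rank of the Jacobian of $H$ (at a zero of $H$, the derivative of $(h_x)^2$ is $2 h_x \cdot \partial h_x/\partial\bbeta = 0$, so one actually wants the rank condition phrased via $H$ itself or via a second-order/positive-definiteness argument on the Hessian of $Q_0$).

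Once $\widehat{\bbeta} \overset{p}{\to} \bbeta$ is established, the second statement follows by a continuous-mapping / Slutsky chain through the estimation recipe of Section~\ref{s:Estimation}. Specifically: $\widehat{\Pr}\{Y_i(1)=1\mid S_i(1)=1\}$ in \eqref{prY11_E+1} is a ratio of sample averages and converges to its population value by the law of large numbers (using randomization of $Z$); $\widehat{\Pr}\{X_i=x\}$ converges likewise; $\widehat{G}_R(x,y)$ and $\widehat{\Pr}\{Y_i(0)=1\mid S_i(0)=j,X_i=x\}$ converge by condition~(c) and the LLN; $\widehat{\Pr}\{S_i(j)=1\mid X_i=x\}$ converges by Lemma~\ref{lemma_in} together with the continuous mapping theorem applied to the $\max$/pooling operation (the event $\widehat q_1(x) \geq \widehat q_0(x)$ stabilizes under monotonicity except on a vanishing-probability set, or one simply notes the estimator is a continuous function of $(\widehat q_0, \widehat q_1)$); and finally $G_M(x,y;\widehat{\bbeta}) \overset{p}{\to} G_M(x,y;\bbeta)$ by continuity of $\logit^{-1}$ and the already-proven consistency of $\widehat{\bbeta}$. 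Assembling these via \eqref{factor}, \eqref{prY01}, \eqref{trtDiff} — all of which are continuous (the denominator $\Pr\{S_i(1)=1\} = \sum_x \Pr\{S_i(1)=1\mid X_i=x\}\Pr\{X_i=x\}$ is bounded away from $0$ under the obvious regularity that $\theta$ is well-defined) — the continuous mapping theorem gives $\widehat{\theta} \overset{p}{\to} \theta$. This second half is entirely routine; the only care needed is to confirm every denominator appearing along the way has a strictly positive population limit so that the ratios are continuous at the limit point.
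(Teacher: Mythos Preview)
Your proposal is correct and follows essentially the same route as the paper: cast $\widehat{\bbeta}$ as an extremum/M-estimator, verify identification (via the rank condition and implicit function theorem, as in Lemma~\ref{lemma_idf}) and uniform convergence of $Q_n$ to $Q_0$ (via boundedness of $G_M$ and the fact that $\widehat G_L,\widehat G_R$ do not depend on $\bbeta$), then apply Slutsky/continuous mapping to the plug-in formula for $\widehat\theta$. Your observation that the Jacobian of the \emph{squared} residuals $\tilde Q_0$ vanishes at $\bbeta$ (so condition~(b) is really a statement about $\partial H/\partial\bbeta$ or the Hessian of $Q_0$) is a genuine subtlety that the paper's proof glosses over; flagging it is useful.
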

The detailed proof of Theorem \ref{Theorem1} is presented in Appendix~\ref{supplB}.

\subsection{Extension to Censored Data}

As in the motivating NSABP B-40 study, the long-term outcome $Y_i$ may be subject to right censoring. For any time $T=t_0$ of interest, the binary counterfactual outcomes would be $\{Y_i(0;t_0),Y_i(1;t_0)\}$ and the causal estimand can be formulated as
\begin{align*}
\theta(t_0)=\mathbb{E}\{Y_i(1;t_0)-Y_i(0;t_0)|i\in E_{+1}\}.
\end{align*}
With $Y_i$ subject to censoring, $\Pr\{Y_i(1;t_0)=1|i\in E_{+1}\}$ can be estimated by the Kaplan-Meier (KM) estimates at time $T=t_0$. The estimation is similar for other relevant quantities such as $\Pr\{Y_i(0;t_0)=1|S_i(0)=j,X_i=x\}$ in equation \eqref{factor} under the scenario where $Y_i(Z_i)$ is always observed.

\section{Simulation Studies}\label{s:sim}

A simulation study is used to assess the performance of the proposed method. The setup is chosen to resemble the NSABP B-40 study by simulating treatment assignment, baseline tumor size category, binary pCR response status, and binary survival status, specifically:
\begin{equation*}
    \mathfrak{D} = [ D_i = \{Z_i, X_i, S_i(0), S_i(1), Y_i(0), Y_i(1)\} , \text{ } i=1, \ldots ,n].
\end{equation*}

We simulate the subject-level data as follows. First, we simulate the categorical baseline tumor category $X_i$ from a multinomial distribution with $\Pr \{X_i=x \} = 0.25, x \in \{0,1,2,3\}$. Next, we simulate $S_i(0)$ given $X_i$ from a Bernoulli distribution with $\Pr \{S_i(0) = 1|X_i = x\} = p(x) \text{ with } p(0),\,p(1),\,p(2),\,p(3) = 0.3,\, 0.25, \,0.25, \,0.2$, respectively. We then simulate the survival status under control, $Y_i(0)$, with a Bernoulli draw with $\Pr\{Y_i(0)=1|S_i(0) = 0, X_i=x\} = 0.7, \,0.65, \,0.6, \,0.55$ for $x=0,\,1,\,2,\,3$, respectively and $\Pr\{Y_i(0)=1|S_i(0) = 1, X_i=x\} = 0.84, \,0.78, \,0.72, \,0.66$ for $x=0,\,1,\,2,\,3$, respectively. The choice of these numbers reflects a 20\% improvement in 3-year EFS for respondents over nonrespondents under the control regimen. 

Next, we simulate the conditional distribution $\{S_i(1)|S_i(0),Y_i(0),X_i\}$. For subjects with $S_i(0) = 1$ we set $S_i(1)$ to be 1 to enforce the monotonicity assumption. For subjects with $S_i(0) = 0$ we draw $S_i(1)$ from a Bernoulli distribution: $\Pr \{ S_i(1)=1|S_i(0)=0, Y_i(0)=y, X_i=x \} = \logit^{-1} \{\beta_0+\beta_{1}y+\beta_{2}x\}$. We try different settings for $\bbeta$ = (-3, -5, 0.2), (-5, -1, -2), and (-7, 3, 0.2).

We then simulate the survival status under treatment, $Y_i(1)$, according to the following probability distributions:
\begingroup
\allowdisplaybreaks
\begin{eqnarray*}
&&\Pr \{ Y_i(1)=1|S_i(0)=0, S_i(1)=0, Y_i(0)=0\} = 0.5, \\
&&\Pr \{ Y_i(1)=1|S_i(0)=0, S_i(1)=0, Y_i(0)=1\} = 0.6, \\
&&\Pr \{ Y_i(1)=1|S_i(0)=0, S_i(1)=1, Y_i(0)=0\} = 0.85, \\
&& \Pr \{ Y_i(1)=1|S_i(0)=0, S_i(1)=1, Y_i(0)=1\} = 0.9, \\
&& \Pr \{ Y_i(1)=1|S_i(0)=1, S_i(1)=1, Y_i(0)=0\} = 0.85, \\
&& \Pr \{ Y_i(1)=1|S_i(0)=1, S_i(1)=1, Y_i(0)=1\} = 0.9.
\end{eqnarray*}
\endgroup
These probabilities are chosen to make the 3-year EFS under treatment greater for those who would obtain pCR under treatment than those who would not, and have a greater 3-year EFS for those patients who would be event-free under control than those who would not be event-free under control. We set these probabilities to be independent of the baseline tumor size given the potential outcomes $\{S_i(0),S_i(1),Y_i(0)\}$.

Lastly we simulate the treatment assignment with equal probability for each arm as a Bernoulli draw with $\Pr\{Z_i = 0\}$ and $\Pr\{Z_i = 1\}$ both equal to 0.5 to ensure that independence between potential outcomes and treatment assignment. 
For the simulated data the true average causal effect for principal stratum $S_i(1)=1$, $\mathbb{E}\{Y_i(1)-Y_i(0)|S_i(1) = 1\}$, can be calculated using the above parameters for simulations. The detailed calculations is given in Appendix~\ref{supplC}. 
Under the three parameter settings the true values of the causal estimands are $\theta$=0.179, 0.130, and 0.120, respectively. This means that under the three different settings, if the treatment was administered to all subjects who would achieve pCR under treatment there would be a 17.9\%, 13.0\%, 12.0\% increment in survival respectively, within the time frame under consideration, than had all of them taken the control instead. 

Under each parameter setting and a chosen sample size $n$=1000, 2000, or 4000, we simulate $R$=1000 replicates. A quasi-Newton method, the Broyden-Fletcher-Goldfarb-Shanno algorithm, is used for the optimization. 
We create $B$=500 bootstrap samples to obtain the 95\% confidence interval for the causal estimates. Let $\widehat{\theta}^{(r)}$ be the mean estimate among bootstrap samples from the $r$ replicate, $r = 1, \ldots ,R$. 

We construct bootstrap confidence intervals to account for the variability introduced by estimating model parameters. 
We use the basic bootstrap CI, or the pivotal CI \citep{davison1997bootstrap} for constructing CIs from bootstrap estimates. 
Let 
$\{\widehat{\theta}^{(1)}, \widehat{\theta}^{(2)}, \ldots, \widehat{\theta}^{(B)}\}$ are the causal effect estimates from $B$ bootstrap samples. Denote $\theta^*_{(1-\alpha/2)}$ and $\theta^*_{(\alpha/2)}$ as the $100(1 - \alpha/2)\%$ and $100(\alpha/2)\%$ of the bootstrap causal effect estimates. The $100(1-\alpha)\%$ bootstrap confidence interval is given by $(2\widehat{\theta} - \theta^*_{(1-\alpha/2)}, 2\widehat{\theta} - \theta^*_{(\alpha/2)})$ 
where $\widehat{\theta}$ is the estimate from the data.

We report the empirical bias, mean squared error (MSE), average length of 95\% CIs, and the coverage of those CIs, where
$$\text{Bias}(\widehat{\theta}) = R^{-1}\sum_{r=1}^R \{\widehat{\theta}^{(r)} - \theta \},$$ 
$$\text{MSE}(\widehat{\theta}) =  R^{-1}\sum_{r=1}^R \{\widehat{\theta}^{(r)} - \theta \}^2,$$
$$\text{$95\%$ CI width} = R^{-1}\sum_{r=1}^R|\widehat{\theta}_{U,0.05}^{(r)} - \widehat{\theta}_{L,0.05}^{(r)}|,$$
$$\text{$95\%$ CI coverage} = R^{-1}\sum_{r=1}^R \mathbbm{1}\{\theta \in (\widehat{\theta}_{L,0.05}^{(r)}, \widehat{\theta}_{U,0.05}^{(r)})\}$$ 
with $\widehat{\theta}_{L,0.05}^{(r)}$ and $\widehat{\theta}_{U,0.05}^{(r)}$ the lower bound and upper bound of the $95\%$  bootstrap CIs of $\widehat{\theta}$ from the $r^{th}$ simulated dataset. 
Table \ref{t:table_simulate} shows the simulation results of the proposed method under three different parameter settings and various sample sizes. Our simulation results show the identification of causal effects is achieved with the bias negligible and the coverage probabilities close to the nominal levels.

\clearpage
\begin{table}[hbt!]
 \centering
  \caption{Simulation results of the proposed method under three different parameter settings and various sample sizes.}
    \label{t:table_simulate}
\begin{normalsize}
\begin{tabular}{ccccc}
\toprule
{Sample size} & {Empirical bias} & MSE &  {95\% CI width} & {95\% CI coverage}\\
\midrule
\multicolumn{5}{c}{{Setting 1: $\bbeta$=(-3, -5, 0.2), $\theta$=0.179}}  \\
 \addlinespace[0.5ex]
 1000 & -0.011 & 3.001e-3 & 0.206 & 0.952 \\
 2000 & -0.006 & 1.539e-3 & 0.155 & 0.955 \\
 4000 & -0.002 & 6.755e-4 & 0.116 & 0.962 \\
 \addlinespace[1.5ex]
 \multicolumn{5}{c}{{Setting 2: $\bbeta$=(-5, -1, -2), $\theta$=0.130}}  \\
 \addlinespace[0.5ex]
 1000 & -6.011e-5 & 2.496e-3 & 0.185 & 0.943 \\
 2000 & 9.358e-4 & 1.137e-3 & 0.130 & 0.948 \\
 4000 & 1.086e-4 & 5.462e-4 & 0.093 & 0.950 \\
 \addlinespace[1.5ex]
 \multicolumn{5}{c}{{Setting 3: $\bbeta$=(-7, 3, 0.2), $\theta$=0.120}}  \\
 \addlinespace[0.5ex]
 1000 & 0.008 & 2.547e-3 & 0.194 & 0.955 \\
 2000 & 0.006 & 1.319e-3 & 0.141 & 0.957 \\
 4000 & 0.003 & 6.363e-4 & 0.100 & 0.953 \\
\bottomrule
\end{tabular}
\end{normalsize}
\end{table}

\section{Application to NSABP B-40 Trial}\label{appb40}

\subsection{B-40 Data Analysis}
\label{b40-intro}

Here we apply the proposed method to the NSABP B-40 study \citep{bear2012bevacizumab, bear2015neoadjuvant}. Among the 1206 enrolled participants, 13 withdrew consent, 7 had missing data and 2 had had inoperable disease after chemotherapy. Another 15 patients did not have nodal assessment so their pCR status was not ascertained. We conduct our analysis among the rest 1169 patients. Our purpose is to estimate the causal treatment effect in 3-year EFS and OS among patients who would obtain a pCR had bevacizumab been added to their treatment regimen. KM estimates are used since there are 61 patients censored at 3 years.

To apply our method, the clinical tumor size is used as the baseline auxiliary covariate $X$. Patients are grouped into four nearly equal-sized groups: 2-3 cm, 3.1-4 cm, 4.1-6 cm and $>$6 cm, based on breast cancer expert knowledge. We code these four tumor size groups into $\{0,1,2,3\}$, respectively. Among the 589 patients in the control arm, the proportions of those who achieved pCR in each patient group are 28\%, 23\%, 22\% and 17\%, respectively; among the 580 patients in the treatment arm, the proportions of those who achieved pCR are 31\%, 26\%, 25\% and 27\%, respectively. This does not violate the monotonicity assumption $S_i(0) \leq S_i(1)$. The 3-year long-term outcome status $Y_i=1$ if the patient $i$ survived within the first 3 years and 0 otherwise.

We calculate the 95\% bootstrap confidence intervals from $500$ bootstrap samples. 
The estimated causal treatment effect in 3-year EFS among those who would obtained pCR under treatment is $\widehat{\theta}_{\text{EFS}}=0.180$ (95\% CI=(0.056, 0.377)) with $\widehat{\bbeta}=(-1.797,-5.874,0.285)$. The estimated causal treatment effect in 3-year OS among those who would obtained pCR under treatment is $\widehat{\theta}_{\text{OS}}=0.175$ (95\% CI=(0.062, 0.354)) with $\widehat{\bbeta}=(-1.85,-4.764,0.289)$. For both scenarios, because 0 is outside of the 95\% CIs, we would claim that the addition of bevacizumab improves 3-year EFS and OS among patients who would respond to neoadjuvant chemotherapy plus bevacizumab at a 95\% confidence level.

\subsection{Sensitivity of Initial Parameters in Optimization}

For the real data application, the initial estimate $\bbeta_{init} = (\beta_0, \beta_1, \beta_2)$ is set at $(0, 0, 0)$. 
To see the sensitivity of initial parameters, we try $9261 = 21\times 21 \times 21$ different initial values of $\bbeta_{init}$, with $\beta_0$, $\beta_1$, and $\beta_2$ on the integer grids of $[-10,10]\times [-10,10]\times [-10,10]$. The corresponding histograms of causal estimates in 3-year EFS and 3-year OS at convergence are presented in Figure~\ref{figure_hist}. Our estimated model parameters $\widehat{\bbeta}$ in Section~\ref{b40-intro} achieves the minimum loss of equation (\ref{betabeta}). Except for some extreme initialization such as (10,10,10), most of the $\widehat{\theta}$ are the same or very close to the causal estimates calculated by using $\bbeta_{init}$ = (0,0,0) as initial parameters. Therefore, we conclude that the causal estimand is not sensitive to the initial parameter settings in optimization. In practice, we suggest running optimization with various initial values and identify the right estimate.

\begin{figure}[!htb]
\centering
\centerline{\includegraphics[width=\linewidth]{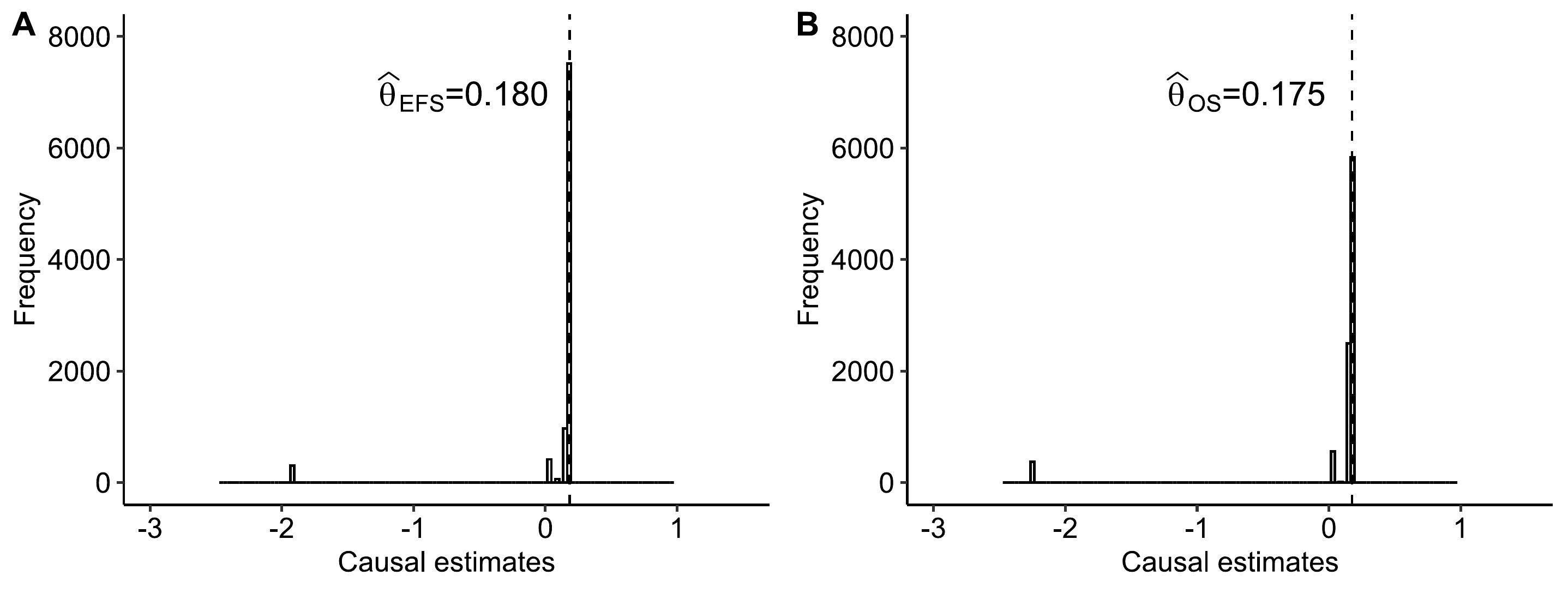}}
\caption{
Histogram of the causal estimates obtained from different initial values of model parameters in the optimization process for 3-year EFS (Figure A) and 3-year OS (Figure B), respectively. 
Except for some extreme initialization, most of the causal estimates are the same or very close to the causal estimate calculated by using zeros as initial parameters.
}
\label{figure_hist}
\end{figure}

\subsection{Comparisons to Sensitivity Analysis Method}

We compare the performance of our method with that of the sensitivity analysis similar to \cite{gilbert2003sensitivity} and \cite{shepherd2006sensitivity}. Recall that for $X=x \in \varGamma = \{0, 1, \ldots, K\}$, we have an equation system:
\begin{align*}
    \widehat{G}_L(x) = \sum_{y=0}^1 G_M(x,y;\bbeta) \cdot \widehat{G}_R(x,y); ~~x \in \Gamma  = \{0, 1, \ldots, K\}
\end{align*}
where $G_M(x,y;\bbeta) = \logit \{ \beta_0+\beta_{1}y+\beta_{2}x \}$. In the sensitivity analysis we vary the value of $\beta_1$ from -7 to -3. Then for each category of $x$ we define $\beta_x = \beta_0 + \beta_2 x$. Under this reparameterization we have only one unknown parameter, $\beta_x$, for each equation. We then solve for $\beta_x$ for each equation independently and obtain the causal estimand subsequently.

By varying values of $\beta_1$ around the estimated $\widehat{\beta_1}$ from Section~\ref{b40-intro}, the corresponding causal estimands in 3-year EFS and 3-year OS are presented in Table \ref{t:table_sensi}. The estimated causal effects in 3-year EFS vary from 0.159 to 0.181 with none of the 95\% CIs including 0; the estimated causal effects in 3-year OS vary from 0.132 to 0.176 with none of the 95\% CIs including 0. These intervals overlap a lot with the confidence intervals of real data. 
These results suggest the addition of bevacizumab may improve 3-year EFS and 3-year OS among patients who would respond to neoadjuvant chemotherapy plus bevacizumab.

\begin{table}[!htb]
\caption{Sensitivity analysis for the estimated causal effect of bevacizumab in 3-year survival among those who would obtain pCR under chemotherapy plus bevacizumab.}
\label{t:table_sensi}
\begin{center}
\begin{tabular}{llcc}
\toprule
Long-term survival  & $\beta_1$ &$\widehat{\theta}$ &{95\% CI for $\widehat{\theta}$} \\ \midrule 
EFS  & -7 & 0.181 & (0.025, 0.290) \\
     & -6 & 0.180 & (0.043, 0.289) \\
     & -5 & 0.178 & (0.040, 0.282) \\
     & -4 & 0.172 & (0.058, 0.272) \\
     & -3 & 0.159 & (0.065, 0.267) \\
\addlinespace[1ex]
OS   & -7 & 0.176 & (0.055, 0.278) \\
     & -6 & 0.172 & (0.067, 0.267) \\
     & -5 & 0.166 & (0.069, 0.267) \\
     & -4 & 0.153 & (0.066, 0.235) \\
     & -3 & 0.132 & (0.064, 0.200) \\
\bottomrule
\end{tabular}
\end{center}
\end{table}

\section{Discussion and Future Work}\label{s:discuss}

We have proposed a method under the principal stratification framework to estimate causal effects of a treatment on a binary long-term endpoint conditional on a post-treatment binary marker in randomized controlled clinical trials. We also extend our method to address censored outcome data. In our motivating study, we demonstrate the causal effect of the new regimen in the long-term survival for patients who would achieve pCR. Other principal stratum causal effects can be estimated in a similar fashion. Our approach can play an important role in a sensitivity analysis.

Identification of causal effects is achieved through two assumptions. First, a subject who responds under the control would respond if given the treatment. This monotonicity assumption could prove valuable \citep{bartolucci2011modeling} and can be justified in many scenarios that the additional therapy would help to improve the response. When the auxiliary variable $X$ is discrete, we can identify and estimate $\Pr\{S(1)=1|S(0)=0,X\}$ under the monotonicity assumption. Second, a parametric model is used to describe the counterfactual response under the treatment for a control non-respondent \citep{shepherd2006sensitivity}. Both the future long-term outcome and a baseline covariate are predictors in this parametric model. 
\cite{shepherd2006sensitivity} does not consider when the auxiliary $X$ is discrete, the parameters of model~\eqref{Model} can be identified when the level of the discrete covariate is at least of the same dimension of model parameters. Instead they perform sensitivity analyses by varying the values of those model parameters in order to estimate the causal estimands. It is recognized that no diagnostic tool is available to verify the validity of this counterfactual model. 

In the motivating dataset, we discretize a continuous baseline variable into several levels. In practice, the linearity assumption may not hold. We would consider a two-pronged approach: 1) to estimate $G_L(x)$ and $G_R(x,y)$ by nonparametric estimates such as spline or kernel density estimates for a univariate continuous $X$; 2) to use a more flexible model for the counterfactual response such as a logistic regression with natural cubic spline with fixed and even-spaced knots along the domain of $X$. For each given $x$, we can still use the same probabilistic argument to link those estimates and the model parameters. The objective function would be a weighted sum of the squared difference of those probabilistic estimates.

\chapter{A Tree-based Model Averaging Approach for Personalized Treatment Effect Estimation from Heterogeneous Data Sources}\label{chap:hetero}

\section{Introduction}

Estimating individualized treatment effects has been a hot topic because of its wide applications, ranging from personalized medicine, policy research, to customized marketing advertisement. Treatment effects of certain subgroups within the population are often of interest. Recently, there has been an explosion of research devoted to improving estimation and inference of covariate-specific treatment effects, or conditional average treatment effects (CATE) at a target research site \citep{athey2016recursive,wager2018estimation,hahn2020bayesian,kunzel2019metalearners,nie2020quasioracle}.  
However, due to the limited sample size in a single study, improving the accuracy of the estimation of treatment effects remains challenging.

Leveraging data and models from various research sites to conduct statistical analyses is becoming increasingly popular \citep{reynolds2020leveraging, cohen2020leveraging, berger2015optimizing}.
Distributed research networks have been established in many large scale studies \citep{fleurence2014launching,hripcsak2015observational,platt2018fda,donohue2021use}.  
A question often being asked is whether additional data or models from other research sites could bring improvement to a local estimation task, especially when a single site does not have enough data to achieve a desired statistical precision. 
This concern is mostly noticeable in estimating treatment effects where sample size requirement is high yet observations are typically limited. 
Furthermore, information exchange between data sites is often highly restricted due to privacy, feasibility, or other concerns, prohibiting centralized analyses that pool data from multiple sources \citep{maro2009design,brown2010distributed,toh2011comparative,raghupathi2014big,deshazo2015comparison,donahue2018veterans,dayan2021federated}. 
One way to tackle this challenge is through model averaging \citep{raftery1997bayesian}, where multiple research sites collectively contribute to the tasks of statistical modeling without sharing sensitive subject-level data.
Although this idea has existed in supervised learning problems  \citep{dai2011greedy,mcmahan2017communication}, 
to our best knowledge, there are no established model averaging approach and theoretical results on estimating CATE in a distributed environment. The extension is non-trivial because CATE is unobserved in nature, as opposed to prediction problems where labels are given.

This chapter focuses on improving the prediction accuracy of CATE concerning a target site by leveraging models derived from other sites 
where 
\textit{transportability}  \citep[to be formally defined in Section~\ref{sec:assumption},][]{pearl2011transportability,stuart2011use,pearl2014external,bareinboim2016causal,buchanan2018generalizing,dahabreh2019generalizing} may not hold. 
Specifically, there may exist heterogeneity in treatment effects. In the context of our multi-hospital example, these are: 
1) \textbf{local heterogeneity}: within a hospital, patients with different characteristics may have different treatment effects. 
This is the traditional notion of CATE; 
and 2) \textbf{global heterogeneity}: where the same patient may experience different treatment effects at different hospitals. The second type of heterogeneity is driven by site-level confounding, and hampers the transportability of models across hospital sites. 
We also note that these two types of heterogeneity may interact with each other in the sense that transportability is dependent on patient characteristics, which we will address.

We propose a model averaging framework that uses a flexible tree-based weighting scheme to combine learned models from sites that takes into account heterogeneity. The contribution of each learned model to the target site depends on subject characteristics. This is achieved by applying tree splittings \citep{breiman1984classification} at both the site and the subject levels. 
For example, effects of a treatment in two hospitals may be similar for female patients but not for male, suggesting us to consider borrowing information across sites only on selective subgroups. 
Our approach extends the classic model averaging framework \citep{raftery1997bayesian,wasserman2000bayesian,hansen2007least,yang2001adaptive} by allowing data-adaptive weights, which are interpretable in a sense that they can be used to lend credibility to transportability. 
For example, in the case of extreme heterogeneity where other sites merely contribute to the target, the weights can be used as a diagnostic tool to inform the decision against borrowing information.

Our primary contributions are summarized as follows. 
{{1)}} We propose a model averaging scheme with interpretable weights that are adaptive to both local and global heterogeneity via tree-splitting dedicated to improving CATE estimation under distributed data networks. 
{{2)}} We generalize model averaging techniques to study the transportability of causal inference. Causal assumptions with practical implications are explored to warrant the use of our approach. 
{{3)}} We provide an extensive empirical evaluation of the proposed approach with a concrete real-data example on how to apply the method in practice. 
{{4)}} Compared to other distributed learning methods, the proposed framework enables causal analysis without sharing subject-level data, is easy to implement, offers ease of operations, and minimizes infrastructure, which facilitates practical collaboration within research networks.

The remaining chapter is organized as follows. In Section~\ref{sec:related}, we present a general formulation of the problem and discuss related work on model averaging and data fusion. We describe the proposed method and assumptions in detail in Section~\ref{sec:method}. 
The performance of the proposed method is assessed by simulation experiments in Section~\ref{sec:simulation} and illustrated through a multi-hospital electronic health data application for critical care medicine in Section~\ref{sec:application} to estimate conditional treatment effects for oxygen therapy. 
We conclude the chapter in Section~\ref{sec:disc}.

\section{Related Work} \label{sec:related}

There are two types of construct of a distributed database \citep{breitbart1986database}: \emph{homogeneous} versus \emph{heterogeneous}. 
For homogeneous data sources, data across sites are random samples of the global population. 
Recent modeling approaches \citep{lin2010relative, lee2017communication, mcmahan2017communication, battey2018distributed, jordan2018communication, tang2020distributed,wang2021tributarypca} all assume samples are randomly partitioned, which guarantees 
identical data distribution across sites. 
The goal of these works is to improve overall prediction by averaging results from homogeneous sample divisions. 
The classic random effects meta-analysis (see, e.g.,  \citet{whitehead2002meta,sutton2000methods,borenstein2011introduction} describes heterogeneity using modeling assumptions, but its focus mostly is still on global patterns.

\subsection{Heterogeneous Models} 
In practice, however, there is often too much global heterogeneity 
in a distributed data network to warrant direct aggregation of models obtained from local sites. The focus shifts to improving the estimation of a target site by selectively leveraging information from other data sources. 
There are two main classes of approaches.
The first class is based on comparison of the learned model parameters $\{\widehat\btheta_1,\dots,\widehat\btheta_K\}$ from $K$ different sites where for site $k$ we adopt model $f_k(\bx) = f(\bx; \btheta_k)$ with subject features $\bx$ to approximate the outcome of interest $Y$. 
Clustering and shrinkage approaches are then used
by merging data or models that are similar \citep{ke2015homogeneity,smith2017federated,ma2017concave,wang2020sylvester,tang2020individualized}. 
Most of these require the pooling of subject-level data.
The second class of approaches falls in the \textit{model averaging} framework \citep{raftery1997bayesian} with weights directly associated with the local prediction. 
Let site 1 be our target site, and the goal is to improve $f_{1}$ using a weighted estimator $f^*(\bx) = \sum_{k=1}^K {\omega}_{k} f_k(\bx)$ with weights $\omega_k$ to balance the contribution of each model and $\sum_k \omega_{k} = 1$. 
It provides an immediate interpretation of usefulness of each data source. 
When the weights are proportional to the prediction performance of $f_k$ on site 1, for example, 
$${\omega}_{k} =\frac{\exp\{- \sum_{i \in \mathcal{I}_1}(f_k(\bx_i) - y_i)^2\}}{ \sum_{\ell=1}^{K} \exp\{- \sum_{i \in \mathcal{I}_1}(f_\ell(\bx_i) - y_i)^2\} },$$
with $y_i$ being the observed outcome of subject $i$ in site 1, indexed by $\mathcal{I}_1$, the method is termed as the exponential weighted model averaging (EWMA).
Several variations of ${\omega}_{k}$ can be found in
\citet{yang2001adaptive, dai2011greedy,yao2018using,dai2018bayesian}. 
In general, separate samples are used to obtain the estimates of $\omega_k$'s and $f_k$'s, respectively. 

Here we focus on the literature review of model averaging. 
We note that our framework is also related to federated learning \citep{mcmahan2017communication}. But the latter often
involves iterative updating  rather than a one-shot procedure, and could be hard to apply to nonautomated distributed
research networks. Besides, it has been developed mainly to estimate a global prediction model by leveraging distributed data, and is not designed to target any specific site. We further discuss these approaches and other related research topics and their distinctions with model averaging in Appendix~\ref{suppl-related}.

\subsection{Transportability}
In causal inference, there is a lot of interest in identifying subgroups with enhanced treatment effects, targeting  at  the  feasibility  of  customizing  estimates  for  individuals \citep{athey2016recursive,wager2018estimation,hahn2020bayesian,kunzel2019metalearners,nie2020quasioracle}. These methods aim to estimate the CATE function $\tau(\bx)$, denoting the difference in potential outcomes between treatment and control, conditional on subject characteristics $\bx$. 
To reduce uncertainty in estimation of personalized treatment effects, incorporating additional data or models are sought after. 
\citet{pearl2011transportability,pearl2014external,bareinboim2016causal} introduced the notion of transportability to warrant causal inference models be generalized to a new population. 
The issue of generalizability is common in practice due to the non-representative sampling of participants in randomized controlled trials \citep{cook2002experimental,druckman2011cambridge,allcott2015site,stuart2015assessing,egami2020elements}. 
Progress on bridging the findings from an experimental study with observational data can be found in, e.g., \citet{stuart2015assessing,kern2016assessing,stuart2018generalizability,ackerman2019implementing,yang2020elastic,harton2021combining}. 
See \citet{tipton2018review,colnet2020causal,degtiar2021review} and references therein for a comprehensive review. 
However, most methods require fully centralized data. In contrast, we leverage the distributed nature of model averaging to derive an integrative CATE estimator.

\section{A Tree-based Model Averaging Framework} \label{sec:method}

We first formally define the {conditional average treatment effect} (CATE). 
Let $Y$ denote the outcome of interest, $Z \in \{0,1\}$ denote a binary treatment indicator, and $\bX$ denote subject features. Correspondingly, let $y$, $z$ and $\bx$ denote their realizations. 
Using the potential outcome framework \citep{neyman1923applications,rubin1974estimating}, we define
CATE as $\tau(\bx)=E[Y^{(Z=1)}-Y^{(Z=0)} |\bX=\bx],$
where $Y^{(Z=1)}$ and $Y^{(Z=0)}$ are the potential outcomes under treatment arms $Z=1$ and $Z=0$, respectively. 
The expected difference of the potential outcomes is dependent on subject features $\bX$. By the causal consistency assumption, the observed outcome is $Y = ZY^{(Z=1)} + (1 - Z)Y^{(Z=0)}$.

Now suppose the distributed data network 
consists of $K$ sites, each with sample size of $n_k$. 
Site $k$ contains data $\mathcal{D}_k = \{y_i, z_i, \bx_i\}_{i \in \mathcal{I}_k}$, where $\mathcal{I}_k$ denotes its index set. 
Its CATE function is given by $\tau_k (\bx) = E_k[Y^{(Z=1)}-Y^{(Z=0)} |\bX=\bx],$ where the expectation is taken over the data distribution in site $k$. 
Without loss of generality, we assume the goal is to estimate the CATE function in site 1, $\tau_1$.

\subsection{Causal Assumptions} \label{sec:assumption}

To ensure information can be properly borrowed across sites, we first impose the following idealistic assumptions, and then present relaxed version of Assumption~\ref{assump:transportability}. 
Let $S$ be the site indicator taking values in $\mathcal{S} = \{1, \dots, K\}$ such that $S_i = k$ if $i \in \mathcal{I}_k$. 

\begin{assumption}[Unconfoundedness] \label{assump:unconfounded}
$$\{Y^{(Z=0)}, Y^{(Z=1)}\} \perp Z | \bX, S;$$
\end{assumption}
\begin{assumption}[Transportability] \label{assump:transportability}
$$\{Y^{(Z=0)}, Y^{(Z=1)}\} \perp S | \bX;$$
\end{assumption}
\begin{assumption}[Positivity] \label{assump:positivity}
$$	0< P(S = 1| \bX) < 1 \mbox{ and } 0< P(Z = 1| \bX, S) < 1  \quad \mbox{for all } \bX \mbox{ and } S.$$
\end{assumption}

Assumption~\ref{assump:unconfounded} 
ensures treatment effects are unconfounded within sites so that $\tau_k(\bx)$ can be consistently identified. 
It holds by design when data are randomized controlled trials or when treatment assignment depends on $\bX$. 
By this assumption, we have $\tau_{k}(\bx) = 
E[Y |\bX=\bx, S = k, Z = 1] - E[Y |\bX=\bx, S = k, Z = 0]$.
The equality directly results from the assumption. 
Assumption~\ref{assump:transportability}
essentially states that the CATE functions are transportable, i.e., $\tau_k(\bx) = \tau_{k'}(\bx)$ for $k, k'\in \{1,\dots, K\}$. See also \citet{stuart2011use}, \citet{buchanan2018generalizing} and \citet{yang2020elastic} for similar consideration.
This assumption may not be satisfied due to heterogeneity across sites. In other words, site can be a confounder which prevents transporting of CATE functions across sites.
Our method allows Assumption~\ref{assump:transportability} 
to be violated and use model averaging weights to determine transportability. Explicitly, we consider a relaxed Assumption~\ref{assump:partial-trans} to hold for a subset of sites that contains site 1.

\begin{assumption}[Partial Transportability] \label{assump:partial-trans}
$$\{Y^{(Z=0)}, Y^{(Z=1)}\} \perp S_1 | \bX.$$
\end{assumption}
Here, $S_1$ takes values in $\mathcal{S}_1 = \{k: \tau_k(\bx) = \tau_1(\bx)\}$ and $\{1\} \subset \mathcal{S}_1 \subset \mathcal{S}$. 
We denote $\mathcal{S}_1$ as the set of transportable sites with regard to site 1. 
Hence, transportability holds across some sites and specific subjects.
In a special case in Section~\ref{sec:simulation} where $\mathcal{S}_1 = \{1\}$, bias may be introduced to by model averaging. However, our approach is still able to exploits the bias and variance trade off to improve estimation. 
Assumption~\ref{assump:positivity} 
ensures that all subjects are possible to be observed in site 1 and all subjects in all sites are possible to receive either arm of treatment. 
The former ensures a balance of covariates between site 1 population and the population of other sites. 
Violation of either one may result in extrapolation and introduce unwanted bias to the ensemble estimates for site 1. This assumption is also used, e.g., in \citet{stuart2011use}.

\subsection{Model Ensemble} \label{sec:adaptiveMA}

We consider an adaptive weighting of $\{\tau_1, \dots, \tau_K\}$ by
\begin{equation} \label{eq:agg}
   {\tau}^*(\bx) = \sum_{k=1}^K \omega_{k}(\bx) \tau_k(\bx)
\end{equation}
where ${\tau}^*$ is the weighted model averaging estimator. The weight functions $\omega_{k}(\bx)$'s are not only site-specific, but also depend on $\bx$, and follow $\sum_{k=1}^{K} \omega_{k}(\bx) = 1$.
It measures the importance of $\tau_k$ in assisting site 1 when subjects with characteristics $\bx$ are of interest. 
We rely on each of the sites to derive their respective $\widehat\tau_k$ from $\mathcal{D}_k$ so that $\mathcal{D}_1, \dots, \mathcal{D}_K$ do not need to be pooled. Only the estimated functions $\{\widehat \tau_2, \dots, \widehat \tau_K\}$ are passed to site 1. 
We will describe the approaches to estimate $\widehat{\tau}_k$ in Section \ref{sec:local}.

A two-stage model averaging approach is proposed.
We first split $\mathcal{D}_1$, the data in the target site, into a training set and an estimation set indexed by $\{i \in \mathcal{I}_1^{(1)}\}$ and $\{i \in \mathcal{I}_1^{(2)}\}$, respectively.
\emph{1) Local stage:} 
Obtain $\widehat{\tau}_1$ from subjects in $\mathcal{I}_1^{(1)}$. 
Obtain $\widehat{\tau}_k$ from local subjects in $\mathcal{I}_k$, $k = 2, \dots, K$. These $\{\widehat{\tau}_k\}_{k=1}^K$ are then passed to site 1 to get $K$ predicted treatment effects for each subject in $\mathcal{I}_1^{(2)}$, resulting in an augmented data set as shown in Figure~\ref{fig:framework}(b).
\emph{2) Ensemble stage:} A tree-based ensemble model is trained on the augmented data by either an ensemble tree (ET) or an ensemble random forest (EF), with the predicted treatment effects from the previous stage, i.e., $\widehat{\tau}_k(\bx_i)$ as the \emph{outcome}. The site indicator $S$ of which local model is used as well as the subject features $\bx_i$ are fed into the ensemble model as \emph{predictors}. The resulting model will be used to compute our proposed model averaging estimator.
Figure~\ref{fig:framework}(a) illustrates a conceptual diagram of the proposed model averaging framework and structure of the augmented data. 
Note the idea of data augmentation has been used in, e.g., computer vision \citep{wang2017effectiveness,mo2020towards,mo2021point}, statistical computing \citep{van2001art}, and imbalanced classification \citep{chawla2002smote}. 
Here the technique is being used to construct weights for model averaging, which will be discussed in the following paragraph. 
Algorithm~\ref{algo:code} provides an algorithmic overview. 
Our method has been implemented as an R package \texttt{ifedtree} available on GitHub (\url{https://github.com/ellenxtan/ifedtree}).

\begin{figure}[!htb]
\centering
 \begin{subfigure}{0.8\textwidth}
  \centerline{\includegraphics[width=\linewidth]{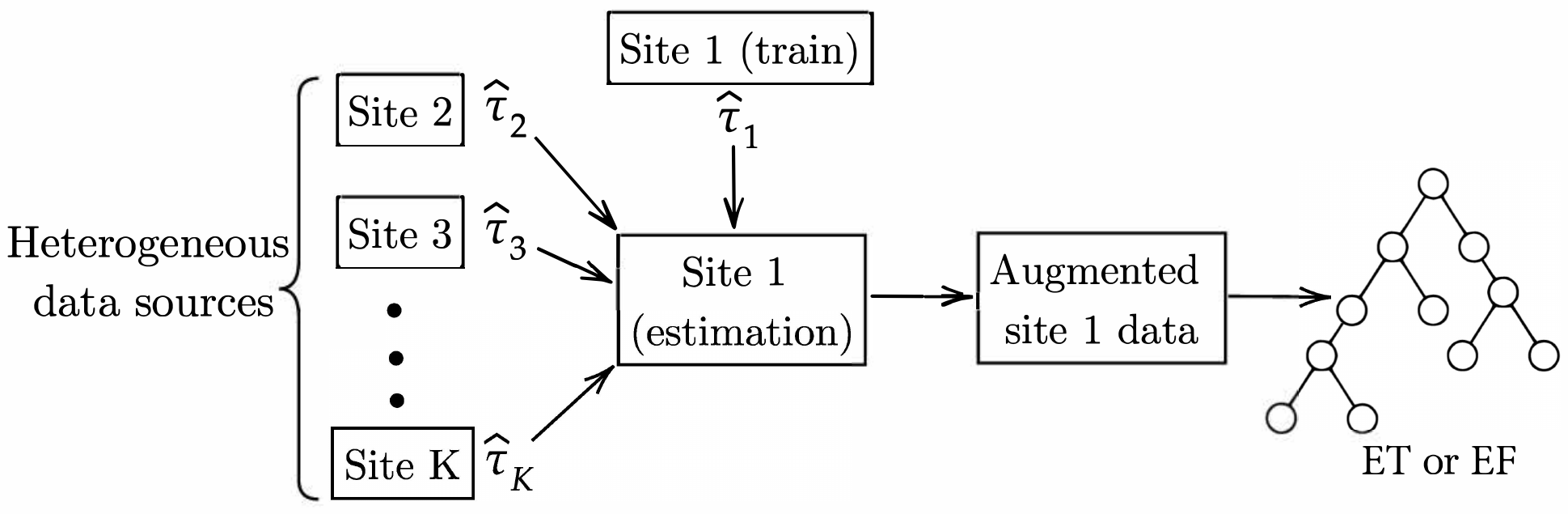}}
  \caption{}
 \end{subfigure}\\
 \begin{subfigure}{0.4\textwidth}
  \centerline{\includegraphics[width=\linewidth]{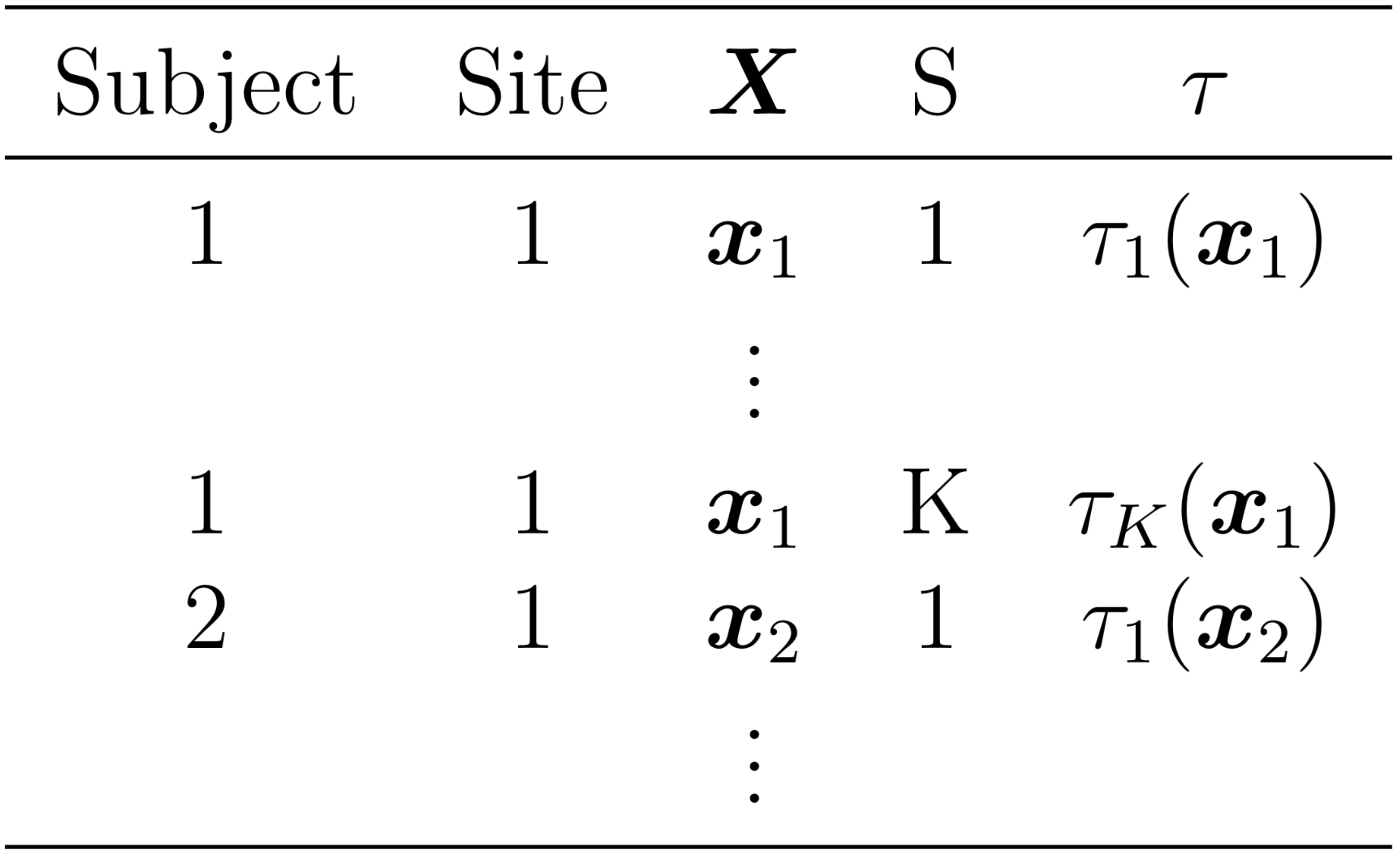}}
  \caption{}
 \end{subfigure}
 \caption{(a) Schema of the proposed algorithm. (b) Illustration of the augmented data constructed from the estimation set of site 1.} 
 \label{fig:framework}
\end{figure}

\begin{algorithm}[tb]
  \caption{Tree-based model averaging for heterogeneous data sources}
  \label{algo:code}
\begin{algorithmic}
    \For{$k=1$ {\bfseries to} $K$} \Comment{Loop through $K$ sites. Can be run in parallel.}
    \State Build a local model using site $k$ data. Site 1 model uses its training set only.
    \EndFor
    \For{$i \in \mathcal{I}_1^{(2)}$}  \Comment{Loop through subjects in site 1 estimation set.}
        \For{$k=1$ {\bfseries to} $K$}  \Comment{Loop through $K$ local models.}
            \State Predict $\widehat{\tau}_k(\bx_i)$ using local model $k$.
            \State $D_{i, k} = [\bx_i, k, \widehat{\tau}_k(\bx_i)]$.
        \EndFor
    \EndFor
    \State Create augmented site 1 data $\mathfrak{D}_{aug, 1}$ by concatenating $D_{i, k}$ vectors.
    \State $\widehat{\Tau}_{\text{EF}} (\bx, s) =$ {\sc EnsembleForest}($\mathfrak{D}_{aug, 1}$)
    \Comment{Or {\sc EnsembleTree} when $B = 1$.}
\end{algorithmic}
\end{algorithm}

\subsection{Construction of Weights} 
A tree-based ensemble is constructed to estimate the weighting functions $\{\omega_{k}\}_{k=1}^K$.
Heterogeneity across sites is explained by including the site index into an augmented training set when building trees.
An intuition of our approach is that sites that are split away from site 1 (by tree nodes) are ignored and the sites that fall into the same leaf node are considered homogeneous to site 1 hence contribute to the estimation of $\tau_1(\bx)$. A splitting by site may occur in any branches of a tree, resulting in an information sharing scheme across sites that is dependent on $\bx$. 
We construct the ensemble by first creating an augmented data 
    $\mathfrak{D}_{aug, 1} = \{\bx_i, k, \widehat{\tau}_k(\bx_i)\}_{i \in \mathcal{I}_1^{(2)}, k \in \mathcal{S}}$,
for subjects in $\mathcal{I}_1^{(2)}$.
The illustration of this augmented site 1 data is given in Figure~\ref{fig:framework}(b). An ensemble is then trained on this data by either a tree or a random forest, 
with the estimated treatment effects $\widehat{\tau}_k(\bx_i)$ as the outcome, and a categorical site indicator of which local model is used along with all subject-level features as predictors, i.e., $(\bx_i, k)$. 
We denote the resulting function as $\Tau(\bx, s)$ which depends on both $\bx$ and site $s$, specifically, 
$\Tau_{\text{ET}} (\bx, s)$ and $ \Tau_{\text{EF}} (\bx, s)$ for ensemble tree (ET) and ensemble forest (EF), respectively. Let $\mathcal{L}(\bx,s)$ denote the final partition of the feature space by the tree to which the pair $(\bx, s)$ belongs. The ET estimate based on the augmented site 1 data can be derived by
\begin{align}\label{eq:ET}
    \widehat \Tau_{\text{ET}} (\bx, s)
    &=  \{|\{(i, k): (\bx_i, k) \in \mathcal{L}(\bx, s)\}_{i \in \mathcal{I}_1^{(2)}, k \in \mathcal{S}}|\}^{-1} \sum_{\{(i, k): (\bx_i, k) \in \mathcal{L}(\bx, s)\}_{i \in \mathcal{I}_1^{(2)}, k \in \mathcal{S}}} \widehat{\tau}_k(\bx_i) \nonumber\\
    &=  \sum_{i\in\mathcal{I}_1^{(2)}} \sum_{k=1}^{K} \frac{\mathbbm{1}\{(\bx_i, k)\in \mathcal{L}(\bx, s)\}}{|\mathcal{L}(\bx, s)|} \widehat{\tau}_k(\bx_i).
\end{align}
Intuitively, observations with similar characteristics ($\bx$ and $\bx'$) and from similar sites ($s$ and $s'$) are more likely to fall in the same partition region in the ensemble tree, i.e., $(\bx, s) \in \mathcal{L}(\bx', s')$ or $(\bx', s') \in \mathcal{L}(\bx, s)$. This resembles a \emph{non-smooth kernel} where weights are $1/|\mathcal{L}(\bx, s)|$ for observations that are within the neighborhood of $(\bx, s)$, and 0 otherwise.
The estimator borrows information from neighbors in the space of $\bX$ and $S$. 
The splits of the tree are based on minimizing in-sample MSE of $\widehat\tau$ within each leaf 
and pruned by cross-validation over choices of the complexity parameter. 
Since a single tree is prone to be unstable,
in practice, we use random forest to reduce variance and smooth the partitioning boundaries.
By aggregating $B$ ET estimates each based on a subsample of the augmented data, $\{\widehat{\Tau}^{(b)}\}_{b=1}^B$, an EF estimate can be constructed by

\begin{align}
       \widehat \Tau_{\text{EF}} (\bx, s) 
& = \frac{1}{B} \sum_{b=1}^{B} \widehat \Tau^{(b)} (\bx, s) \nonumber \\
& = \sum_{i\in \mathcal{I}_1^{(2)}} \sum_{k=1}^{K} \lambda_{i,k}(\bx, s) \widehat{\tau}_k(\bx_i), \label{eq:EF}\\
    \text{where~} &  \lambda_{i,k}(\bx, s) = \frac{1}{B} \sum_{b=1}^{B} \frac{\mathbbm{1}\{(\bx_i, k)\in \mathcal{L}_b(\bx, s)\}}{|\mathcal{L}_b(\bx, s)|}.\nonumber
\end{align}
The form of $\widehat \Tau^{(b)} (\bx, s)$ closely follows \eqref{eq:ET} but is based on a subsample of $\mathfrak{D}_{aug, 1}$. The weights, $\lambda_{i,k}(\bx, s)$, are similar to that in \eqref{eq:ET}, and can be viewed as kernel weighting that defines an adaptive neighborhood of $\bx$ and $s$.  
We then obtain the model averaging estimates defined in \eqref{eq:agg} by fixing $s = 1$ such that $\widehat \tau_{\text{ET}}^*(\bx) = \widehat\Tau_{\text{ET}} (\bx, s=1)$ or $\widehat \tau_{\text{EF}}^*(\bx) = \widehat\Tau_{\text{EF}} (\bx, s=1)$. 
The weight functions $\{\omega_{k}(\bx)\}_{k=1}^K$ for $\widehat\tau^*(\bx)$ can be immediately obtained from the ET or EF by
\begin{align*}
    \widehat\tau_{\text{ET}}^*(\bx) = \ &  \widehat \Tau_{\text{ET}} (\bx, 1) = \sum_{k=1}^K \widehat \omega_{k}(\bx) \widehat \tau_k(\bx),  \\
    \text{where~~} & \widehat \omega_{k}(\bx) = \sum_{i\in\mathcal{I}_1^{(2)}} \frac{\mathbbm{1}\{(\bx_i, k)\in \mathcal{L}(\bx, 1)\}}{|\mathcal{L}(\bx, 1)|}; \nonumber \\
   \widehat\tau_{\text{EF}}^*(\bx) = \ &  \widehat \Tau_{\text{EF}} (\bx, 1) = \sum_{k=1}^K \widehat \omega_{k}(\bx) \widehat \tau_k(\bx),  \\
    \text{where~~} & \widehat \omega_{k}(\bx) = \sum_{i\in\mathcal{I}_1^{(2)}} \lambda_{i,k}(\bx, 1). 
\end{align*}
It can be verified that $\sum_{k=1}^{K} \widehat \omega_k(\bx) = 1$ for all $\bx$. 
As our simulations in Section \ref{sec:simulation} show, $\widehat \tau^*$ improves the local functional estimate $\widehat\tau_1$.
We set $B=2,000$ throughout the paper.
Tree and forest estimates are obtained by R packages \verb|rpart| and \verb|grf|, respectively.

\subsection{Interpretability of Weights}

The choice of tree-based models naturally results in such kernel weighting $w_k(\bx)$ \citep{athey2019generalized}, which are not accessible by other ensemble techniques.
Such explicit and interpretable weight functions could deliver meaningful rationales for data integration.
For example, under scenarios where there exists extreme global heterogeneity (as shown in Section~\ref{sec:simulation} when $c$ is large), $w_k(\bx)$ can be used as a diagnostic tool to decide which external data sources should be co-used. 
Weights close to $0$ inform against model transportability, and they are adaptive to subject-level features $\bx$ so that decisions can be made based on the subpopulations of interest.

\subsection{Local Models: Obtaining $\widehat\tau_k$} \label{sec:local}

Estimate of $\tau_k(\bx)$ at each local site must be obtained separately before the ensemble. 
Our proposed ensemble framework can be applied to a general estimator of $\tau_k(\bx)$. For each site, the local estimate could be obtained using different methods. 
Recently, there has been many work dedicated to the estimation of individualized treatment effects \citep{athey2016recursive,wager2018estimation,hahn2020bayesian,kunzel2019metalearners,nie2020quasioracle}. 
As an example, we consider using the causal tree (CT) \citep{athey2016recursive} to estimate the local model at each site. CT is a non-linear learner that \textit{(i)} allows different types of outcome such as discrete and continuous, and can be applied to a broad range of real data scenarios; \textit{(ii)} can manage hundreds of features and high order interactions by construction; \textit{(iii)} can be applied to both experimental studies and observational studies by propensity score weighting or doubly robust methods.  
CT is implemented in the R package \verb|causalTree|. We also explore another estimating option for local models in Appendix~\ref{suppl-sec:sim}.

\subsection{Asymptotic Properties} \label{sec:consist}

We provide consistency guarantee of the proposed estimator $\widehat \Tau_{\text{EF}}$ for the true target $\tau_1$. 
Assuming point-wise consistent local estimators are used for $\{\tau_k\}_{k=1}^{K}$, EF with subsampling procedure described in Appendix~\ref{suppl-sec:consist} is consistent. 
\begin{theorem}\label{Theorem}
Suppose the subsample used to build each tree in an ensemble forest is drawn from different subjects of the augmented data and the following conditions hold:

\begin{itemize}
  \item[(a)] Bounded covariates: Features $\bX_i$ and the site indicator $S_i$ are independent and have a density that is bounded away from 0 and infinity.
  \item[(b)] Lipschitz response: the conditional mean function $\mathbb{E}[ \Tau|\bX=\bx,S=1]$ is Lipschitz-continuous.
  \item[(c)] Honest trees: trees in the random forest use different data for placing splits and estimating leaf-wise responses.
\end{itemize} 

Then $\widehat \Tau_{\text{EF}}(\bx,1) \overset{p}{\to} \tau_{1}(\bx)$, for all $\bx$, as $\min_k n_k \to \infty$. Hence, $\widehat \tau_{\text{EF}}^*(\bx) \overset{p}{\to} \tau_1(\bx).$
\end{theorem}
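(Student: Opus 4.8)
The plan is to reduce the consistency of the ensemble forest estimator $\widehat{\Tau}_{\text{EF}}(\bx, 1)$ to a known consistency result for honest regression forests (in the style of \citet{athey2019generalized}), and then combine that with the point-wise consistency of the local estimators $\widehat{\tau}_k$ and the relaxed transportability assumption. The key observation is that $\widehat{\Tau}_{\text{EF}}(\bx, s)$ is, by construction, an honest random-forest regression estimator of the conditional mean $\mu(\bx, s) := \mathbb{E}[\Tau \mid \bX = \bx, S = s]$ trained on the augmented data $\mathfrak{D}_{aug,1}$, where the ``response'' is $\widehat{\tau}_k(\bx_i)$ and the ``covariates'' are $(\bx_i, k)$. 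So the argument splits naturally into two pieces: (i) the forest consistently estimates $\mu(\bx, 1)$, and (ii) $\mu(\bx, 1) = \tau_1(\bx)$.

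First I would handle piece (ii), the identification step, which should be short. Conditioning on $S = 1$ and using Assumption~\ref{assump:unconfounded}, the target site CATE is identified as $\tau_1(\bx) = \mathbb{E}[Y \mid \bX=\bx, S=1, Z=1] - \mathbb{E}[Y \mid \bX=\bx, S=1, Z=0]$. By the partial transportability Assumption~\ref{assump:partial-trans}, for every $k \in \mathcal{S}_1$ we have $\tau_k(\bx) = \tau_1(\bx)$, and for $k \notin \mathcal{S}_1$ the tree splits are designed to separate those sites away from site $1$ in the leaf $\mathcal{L}(\bx, 1)$ — so in the idealized limit the only contributions to $\mu(\bx, 1)$ come from transportable sites, giving $\mu(\bx, 1) = \tau_1(\bx)$. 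More carefully, since $\widehat{\tau}_k$ is assumed point-wise consistent for $\tau_k$, the ``response'' in the augmented data converges to a quantity whose conditional expectation given $(\bx, S=1)$ equals $\tau_1(\bx)$ under Assumption~\ref{assump:partial-trans}. The positivity Assumption~\ref{assump:positivity} guarantees covariate overlap so that $\mu(\bx, 1)$ is well-defined on the support of interest.

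Next I would handle piece (i) by invoking the honest-forest consistency machinery. Conditions (a), (b), (c) of the theorem are exactly the hypotheses needed: bounded covariate density, Lipschitz conditional mean, and honesty (splits and leaf estimates use disjoint data). Under these, the standard argument shows the leaf diameter $\mathrm{diam}(\mathcal{L}_b(\bx, 1)) \to 0$ and the number of points per leaf $\to \infty$ as the augmented sample size grows, so that $\sum_{i,k} \lambda_{i,k}(\bx,1) \widehat{\tau}_k(\bx_i) \overset{p}{\to} \mu(\bx, 1)$. One subtlety I would be careful about: the augmented data has $n_1^{(2)} \cdot K$ rows but these are not independent — each subject $i$ appears $K$ times with the same $\bx_i$. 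This is why the hypothesis insists the subsample used for each tree is drawn from \emph{different subjects} (not different rows) of the augmented data, so that within-tree observations behave like an i.i.d. sample for the purposes of the forest CLT/consistency results. The estimation error in $\widehat{\tau}_k$ versus $\tau_k$ contributes a separate vanishing term as $\min_k n_k \to \infty$, which I would control by a triangle-inequality decomposition: $\|\widehat{\Tau}_{\text{EF}}(\cdot,1) - \tau_1\| \le \|\widehat{\Tau}_{\text{EF}}(\cdot,1) - \mu(\cdot,1)\|$ (forest error, vanishes as augmented sample grows with $n_1$) $+ \|\mu(\cdot,1) - \tau_1\|$ (identification, plus local-estimation error vanishing as $\min_k n_k \to \infty$).

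The main obstacle I anticipate is the dependence structure in the augmented data and making the role of the site variable $S$ rigorous inside the forest consistency proof. Off-the-shelf forest consistency theorems assume i.i.d. training data, but here $S$ is a constructed categorical covariate and the ``responses'' $\widehat{\tau}_k(\bx_i)$ are themselves estimated (hence correlated across $k$ for fixed $i$, and correlated across $i$ if a global-ish local model is used). Bridging this gap — either by conditioning on the local models $\{\widehat{\tau}_k\}$ and treating them as fixed functions, then arguing the resulting randomness over subjects $i \in \mathcal{I}_1^{(2)}$ is i.i.d., or by a more delicate two-stage argument that lets $\min_k n_k \to \infty$ first so the $\widehat{\tau}_k$ ``freeze'' at $\tau_k$ — is where the real work lies, and it is why the subsampling-over-subjects device in the theorem hypothesis is essential. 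The remaining steps (leaf shrinkage, bias–variance control) are then routine applications of the cited generalized random forest results.
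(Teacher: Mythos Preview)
Your proposal is correct and follows essentially the same route as the paper: both reduce the claim to the honest-forest consistency theorem of \citet{wager2018estimation} by verifying that the subsampling-over-distinct-subjects device restores an i.i.d.\ structure within each tree, and then invoking conditions (a)--(c) to obtain $\widehat\Tau_{\text{EF}}(\bx,1)\overset{p}{\to}\mu(\bx,1)=\tau_1(\bx)$. Your write-up is in fact more detailed than the paper's (which essentially just cites Theorem~3.1 of \citet{wager2018estimation} after describing the subsampling scheme); the one superfluous element is the appeal to partial transportability in piece~(ii), since at $S=1$ the target $\mu(\bx,1)$ equals $\tau_1(\bx)$ directly once the local estimators are consistent and the leaf shrinks, independent of whether other sites are transportable.
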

The conditions and a proof of Theorem \ref{Theorem} is given in Appendix~\ref{suppl-sec:consist}.
To demonstrate the consistency properties of our methods, we add in Appendix~\ref{suppl-sec:sim} oracle versions of ET and EF estimators, denoted as ET-oracle and EF-oracle, which use the ground truth of local models $\{\tau_k\}_{k=1}^K$ in estimating $\{\widehat \omega_k\}_{k=1}^K$.
This removes the uncertainty in local models. 
The remaining uncertainty only results from the estimation of the ensemble weights, and we see both oracle estimators achieve minimal MSE.
Section~\ref{sec:simulation} gives a detailed evaluation of the finite sample performance.

\section{Simulation Studies}
\label{sec:simulation}

Monte Carlo simulations are conducted to assess the proposed methods. We specify $m(\bx, k)$ as the conditional outcome surface and $\tau(\bx, k)$ as the conditional treatment effect for individuals  with features $\bx$ in site $k$. The treatment propensity is specified as $e(\bx) = \Pr(Z=1|\bX=\bx)$. 
The potential outcomes can be written as $Y_i = m(\bX_i, S_i) + \{ Z_i - e(\bX_i) \} \tau(\bX_i, S_i) + \epsilon_i$, following notations in \citet{robinson1988root,athey2016recursive,wager2018estimation,nie2020quasioracle}. 
The mean function is $m(\bx, k) = \frac{1}{2} x_1 + \sum_{d=2}^4 x_d +(x_1 - 3) \cdot c \cdot U_k$, and the treatment effect function is specified as 
$$ \tau(\bx, k) = \mathbbm{1}\{x_1 > 0\} \cdot x_1 +(x_1 - 3) \cdot c \cdot U_k,$$
where $z=0,1$, $U_k$ denotes the global heterogeneity due to site-level confounding, controlled by a scaling factor $c$, and $\epsilon_i \sim N(0, 1)$. 
Features follow $\bX_i \sim {N}(\boldsymbol{0},\boldsymbol{I}_D)$, where $D=5$, and are independent of $\epsilon_i$.
The simulation setting within each site (with $k$ fixed) is motivated by designs in \citet{athey2016recursive}. Features in $\tau$ are determinants of treatment effect while those in $m$ but not in $\tau$ are prognostic only. 
The data are generated under a distributed data networks. 
We assume there are $K=20$ sites in total, each with a sample size $n=500$. 
In our main exposition, we consider an experimental study design where treatment propensity is $e(\bx) = 0.5$, i.e., individuals are randomly assigned to treatment and control.
Variations of the settings above are discussed, with results presented in Appendix~\ref{suppl-sec:sim}.

Two types for global heterogeneity are considered by the choice of $U_k$. 
For \textbf{\emph{discrete grouping}}, we assume there are two underlying groups among the $K$ sites $U_k \sim Bernoulli(0.5)$. Specifically, we assume odd-index sites and even-index sites form two distinct groups $\mathcal{G}_1 = 
\{1, 3, \dots, K-1\}$; $\mathcal{G}_2 = 
\{2, 4\dots, K\}$ such that $U_{k\in\mathcal{G}_1} = 0$ and $U_{k\in\mathcal{G}_2} = 1$. 
Sites from similar underlying groupings have similar treatment effects and mean effects, while sites from different underlying groupings have different treatment effects and mean effects. 
For \textbf{\emph{continuous grouping}}, we consider $U_k \sim Unif[0,1]$. 
We vary the scales of the global heterogeneity under the discrete and continuous cases, respectively, with $c$ taking values $c \in \{0,0.6,1,2\}$. A $c=0$ implies all data sources are homogeneous. 
In other words, Assumption~\ref{assump:transportability} is satisfied when $c = 0$ but not when $c > 0$.

\subsection{Compared Estimators and Evaluation} 
The proposed approaches ET and EF are compared with several competing methods. 
\textbf{LOC:} A local CT estimator
that does not utilize external information. 
It is trained on $\mathcal{I}_1$ only, combining training and estimation sets.
\textbf{MA:} A naive model averaging method
with weights ${\omega}_{k}^{\text{MA}} = 1/k$. This approach assumes models are homogeneous. 
\textbf{EWMA:}
We consider a modified version of EWMA that can be used for CATE.  
We obtain an approximation of $\tau_1(\bx)$ by fitting another local model using the estimation set of site 1, denoted by $\widetilde\tau_1(\bx)$. 
Its weights are given by $${\omega}_{k}^{\text{EWMA}} = \frac{\exp\{- \sum_{i \in \mathcal{I}_1^{(2)}}(\widehat\tau_k(\bx_i) - \widetilde\tau_1(\bx_i))^2\} }{ \sum_{\ell=1}^{K} \exp\{- \sum_{i \in \mathcal{I}_1^{(2)}}(\widehat\tau_\ell(\bx_i) - \widetilde\tau_1(\bx_i))^2\} }.$$ 
\textbf{STACK:} A stacking ensemble, which is a linear ensemble of predictions of several models \citep{breiman1996stacked}. 
To our end, we regress $\widetilde\tau_1(\bx)$ on the predictions of the estimation set in site 1 from each local model, $\{\widehat\tau_1(\bx), \dots, \widehat\tau_k(\bx) \}$. The stacking weights are not probabilistic hence not directly interpretable. 
We report the empirical mean squared error (MSE) of these methods over an independent testing set of sample size $n_{te}=2000$ from site 1. 
$\mbox{MSE}(\widehat{\tau}) = n_{te}^{-1}\sum_{i=1}^{n_{te}} \{ \widehat{\tau}(\bx_i) - \tau_1(\bx_i) \}^2. $
Each simulation scenario is repeated for 1000 times.

\clearpage
\begin{figure*}[!htb]
\centering
 \begin{subfigure}{0.49\textwidth}
  \centerline{\includegraphics[width=\linewidth]{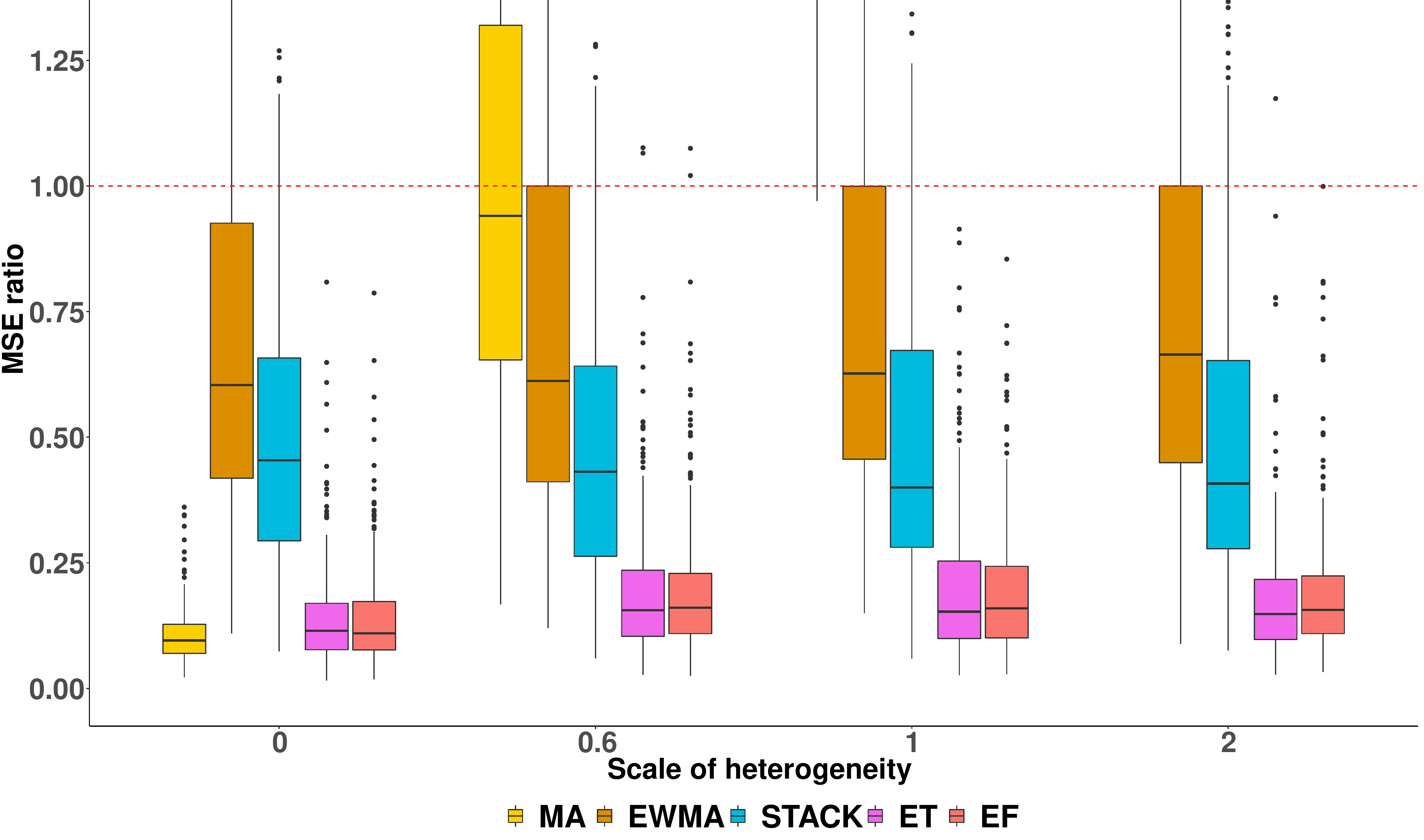}}
  \caption{}
 \end{subfigure}
 \begin{subfigure}{0.49\textwidth}
  \centerline{\includegraphics[width=\linewidth]{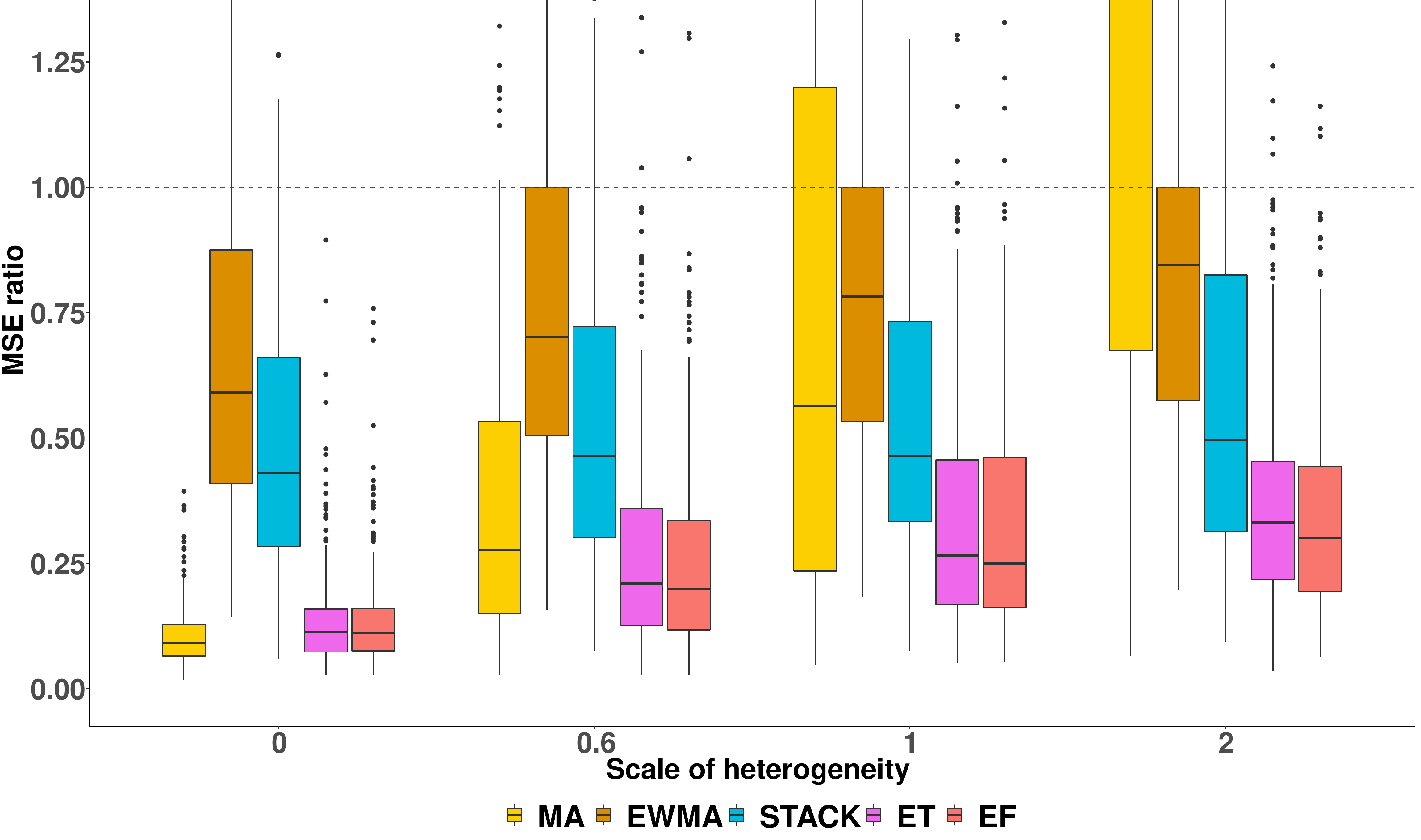}}
  \caption{}
 \end{subfigure}
 \caption{
 Box plots of MSE ratios of CATE estimators, respectively, over LOC, for \textbf{(a) discrete grouping} and \textbf{(b) continuous grouping} across site. 
  Different colors imply different estimators, and x-axis, i.e., the value of $c$, differentiates the scale of global heterogeneity. The red dotted line denotes an MSE ratio of 1. 
  MA performance is truncated due to large MSE ratios. 
The proposed ET and EF achieve smaller MSE ratios compared to standard model averaging or ensemble methods and are robust to heterogeneity across settings. 
}
\label{fig:sim_box}
\end{figure*}

\begin{table}[!htb]
\centering
\caption{MSE ratios of EF over LOC. As $n$ increases, model averaging becomes more powerful due to better estimation of $\tau_k$, and is more pronounced when $c$ is small.}
\label{fig:ratio}
\resizebox{0.6\columnwidth}{!}{
\begin{tabular}{@{}cccccc@{}}
\toprule
\multicolumn{1}{l}{}  &  & $c=0$  & $c=0.6$ & $c=1$  & $c=2$  \\ \midrule
\multirow{3}{*}{\shortstack{Discrete\\ grouping}} & $n=100$  & 0.57 & 0.59  & 0.61 & 0.59 \\
& $n=500$  & 0.12 & 0.17  & 0.17 & 0.16 \\
& $n=1000$ & 0.07 & 0.12  & 0.12 & 0.13 \\
\multirow{3}{*}{\shortstack{Continuous\\ grouping}} & $n=100$  & 0.54 & 0.59  & 0.63 & 0.69 \\
& $n=500$  & 0.11 & 0.24  & 0.31 & 0.34 \\
& $n=1000$ & 0.08 & 0.17  & 0.21 & 0.26 \\ \bottomrule
\end{tabular}
}
\end{table}

\subsection{Estimation Performance} 
Figure~\ref{fig:sim_box} shows the performance of the proposed estimators and the competing estimators, using LOC as the benchmark. 
The proposed ET and EF show the best performance in terms of the mean and variation of MSE among other estimators when $c > 0$, and comparable to equal weighting MA when $c = 0$.
Although, a forest is more stable than a tree in practice,
both ET and EF give similar results because the true model is relatively simple and can be accurately estimated by a single ensemble tree under the given sample size.

Although asymptotically consistency, under finite sample, bias exists in local models and leads to biased model averaging estimates. 
While explicit quantification of bias and variance remains challenging due to extra uncertainty carried forward from the local estimates, we demonstrated that the proposed estimators can improve upon the local models under small sample size via Table~\ref{fig:ratio}.
It shows the MSE ratio of EF over LOC as a measure of gain resulting from model averaging by varying $n =100, 500, 1000$. 
The decrease in MSE ratio as $n$ increases, regardless of the choice of $c$, is consistent with our asymptotic results in Theorem~\ref{Theorem}.
This is due to a bias-and-variance trade-off in the ensemble that ensures a small MSE, which remains smaller than that in LOC despite varying $n$.
It also shows our method is robust to the existence of local uncertainty.

\subsection{Visualization of Information Borrowing} Figure~\ref{fig:sim_visual} visualizes the proposed ET and EF.  
In (a) and (d), the site indicator and $X_1$ appear as splitting variables in the ETs, which is consistent with the data generation process. 
The estimated treatment effect (b) and (e) reveals the pattern of transportability across sites and with respect to $X_1$. 
Panels (c) and (f) plot the model averaging weights in EFs over $X_1$. Site 1 has a relatively large contribution to the weighted estimator while models from other sites have different contributions at different values of $X_1$ depending on their similarity in $\tau(\bx,k)$ to that in site 1. 
Corresponding ET and EF show consistent patterns.

\clearpage
\begin{figure*}[!htb]
  \begin{subfigure}{0.3\textwidth}
    \centering
    \includegraphics[width=\linewidth]{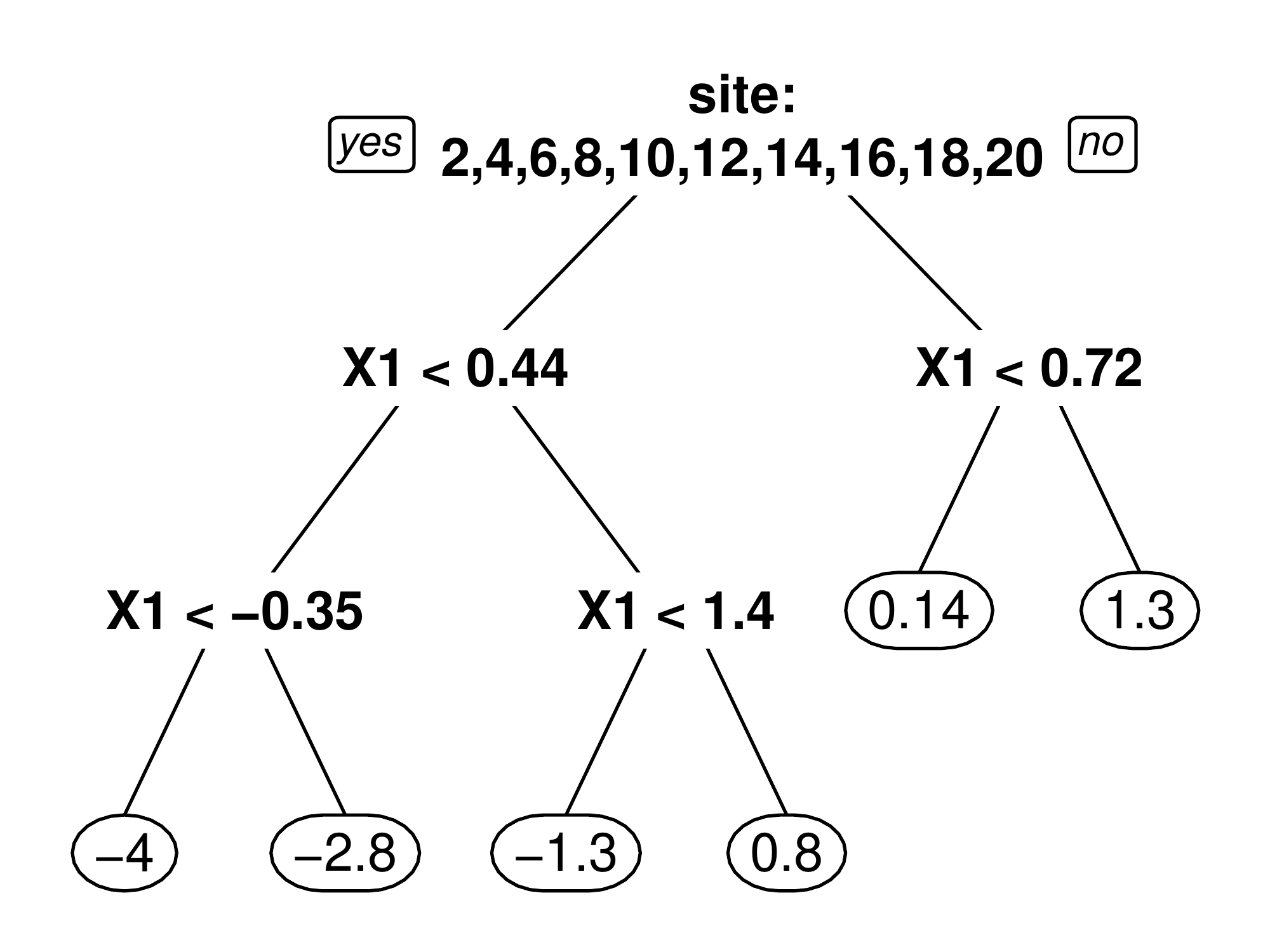}
    \caption{}
  \end{subfigure}
  \begin{subfigure}{0.325\textwidth}
    \centering
    \includegraphics[width=\linewidth]{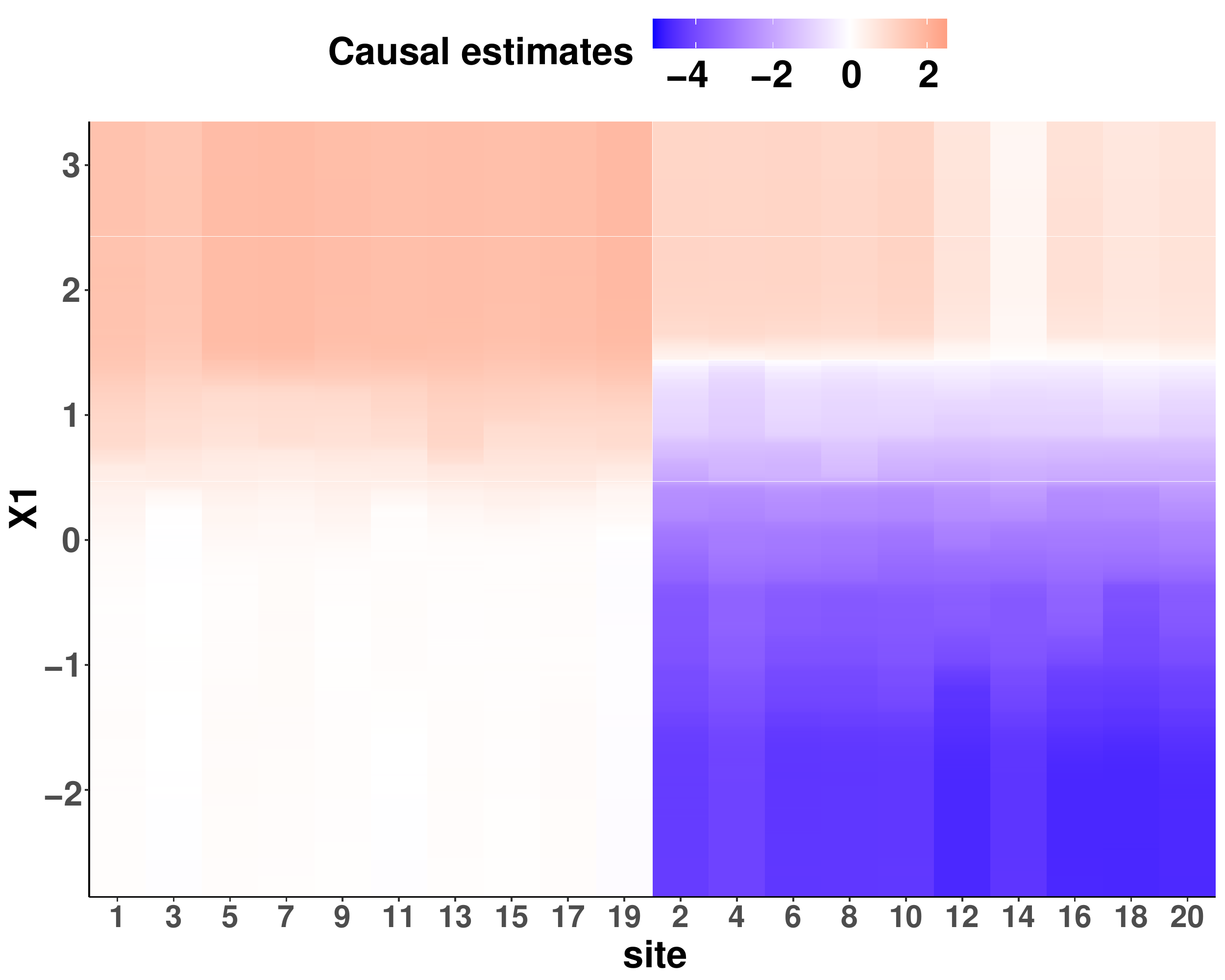}
    \caption{}
  \end{subfigure}
  \begin{subfigure}{0.37\textwidth}
    \centering
    \includegraphics[width=\linewidth]{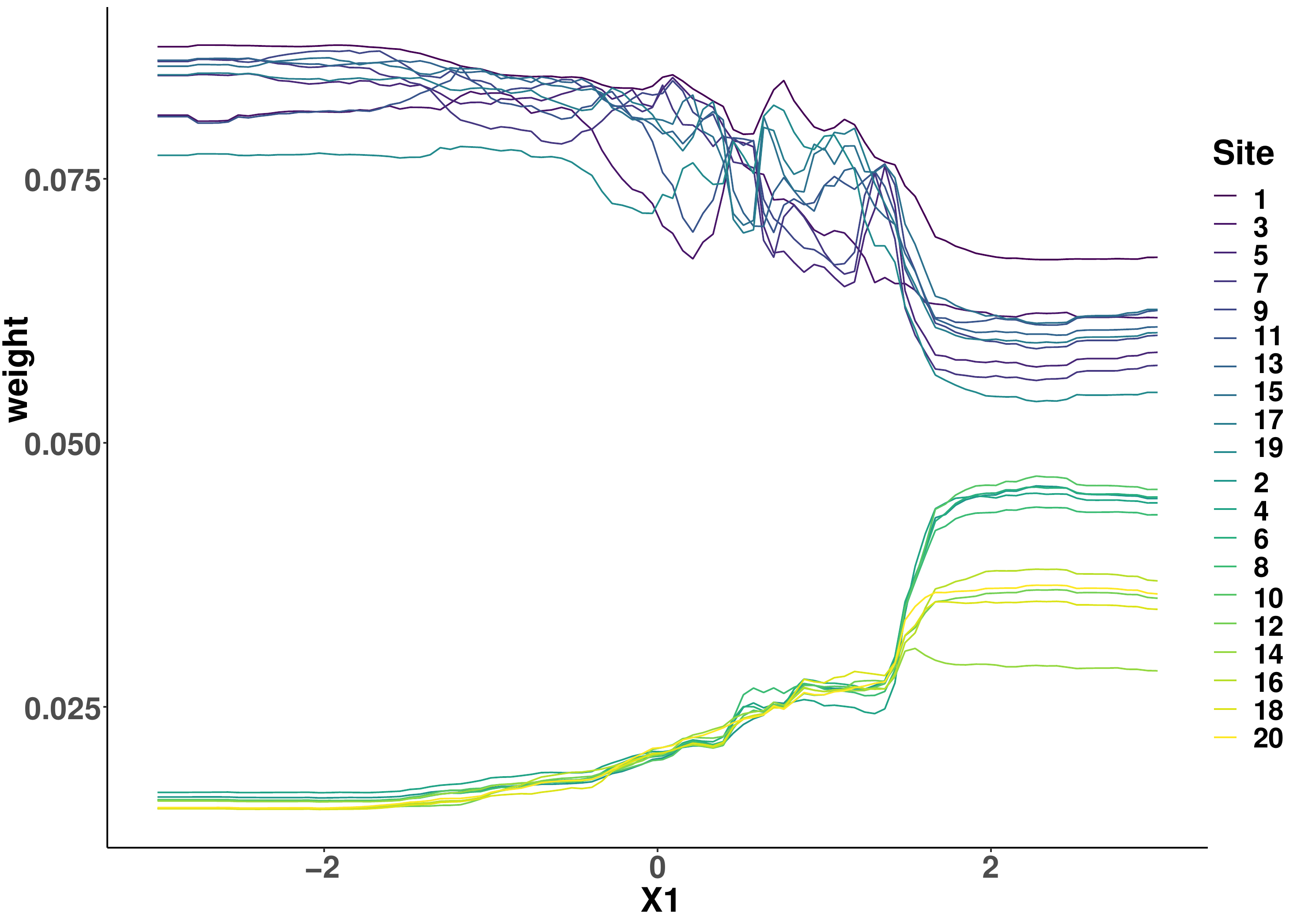}
    \caption{}
  \end{subfigure}
  \medskip

  \begin{subfigure}{0.3\textwidth}
    \centering
    \includegraphics[width=\linewidth]{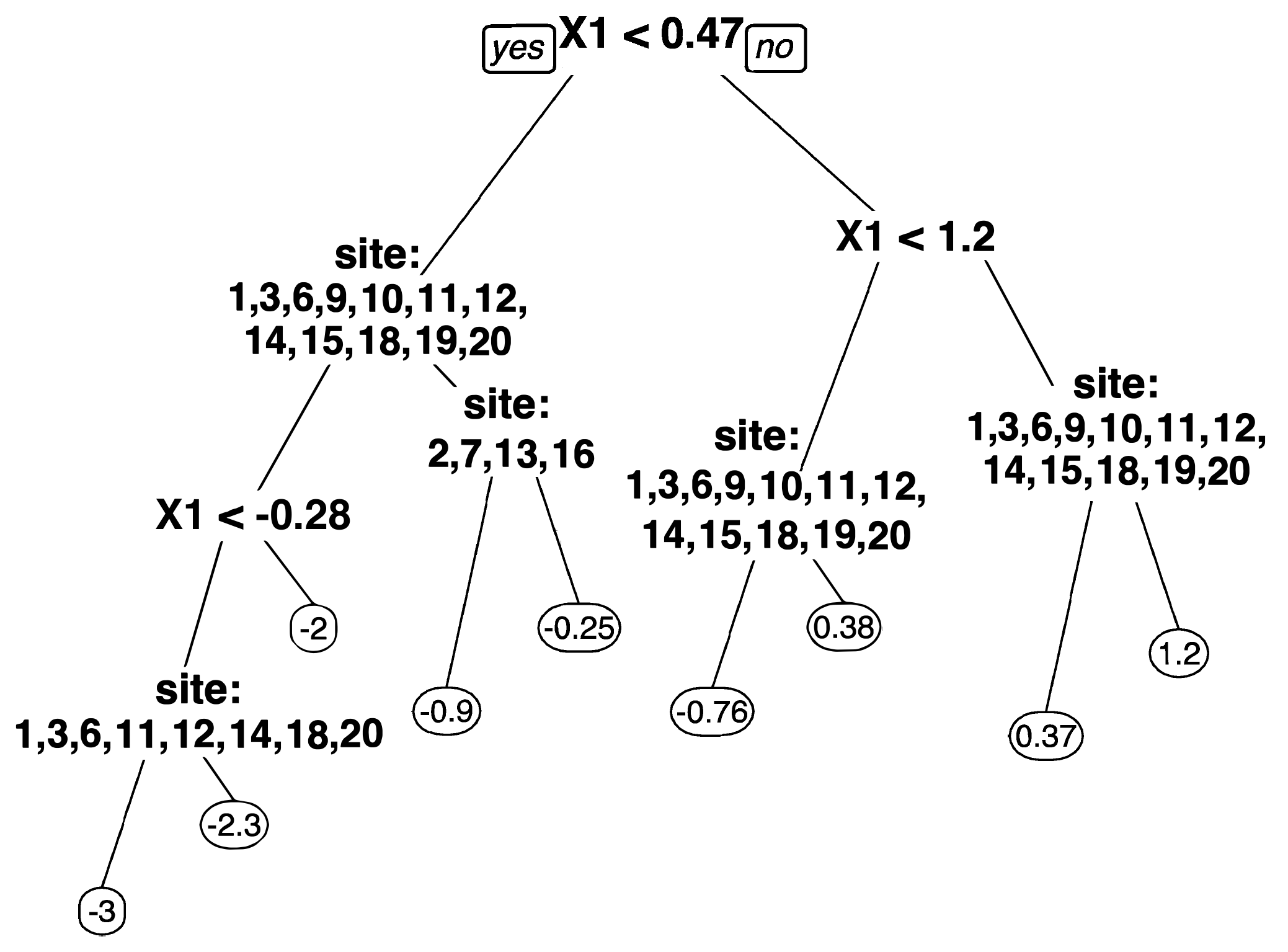}
    \caption{}
  \end{subfigure}
  \begin{subfigure}{0.325\textwidth}
    \centering
    \includegraphics[width=\linewidth]{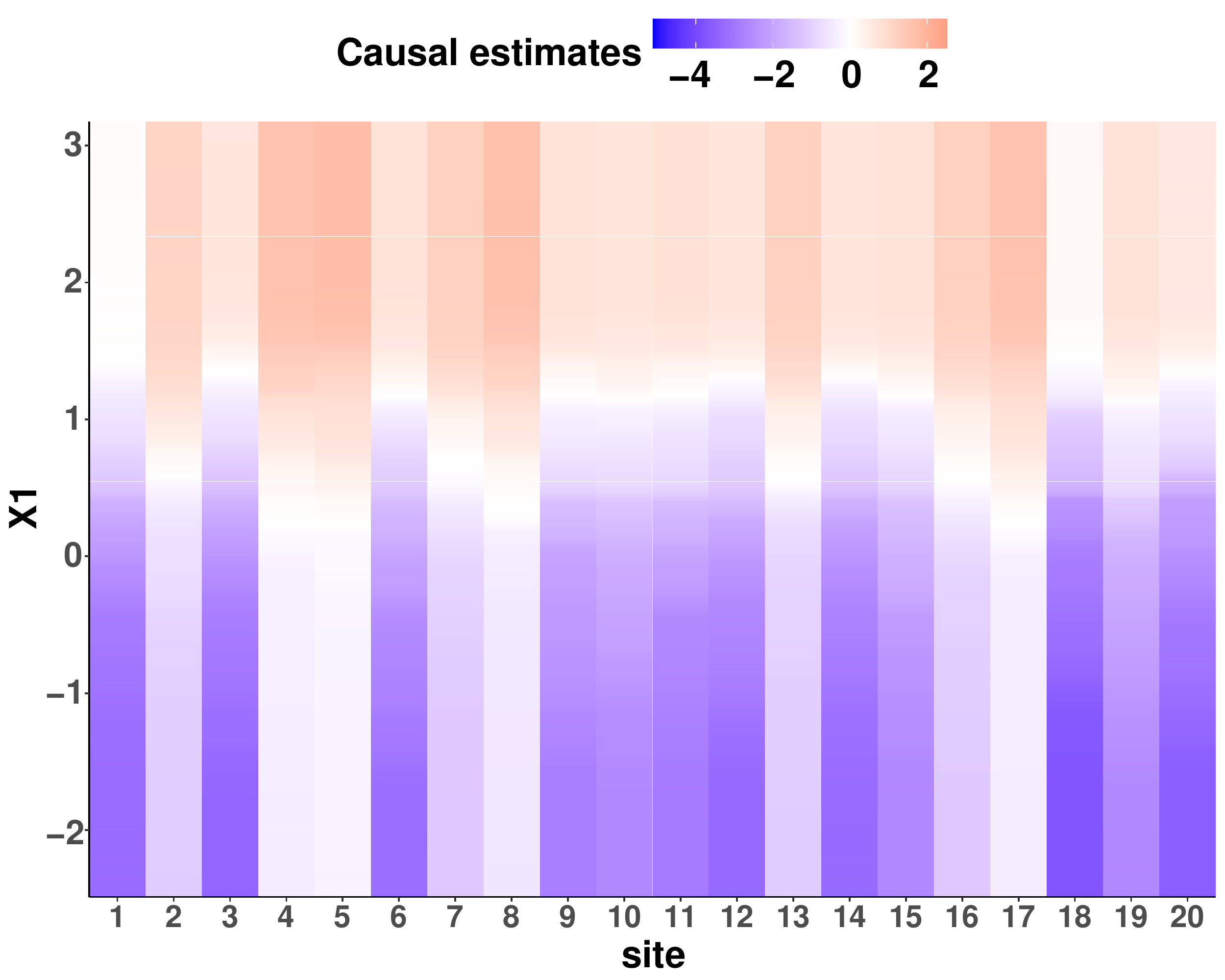}
    \caption{}
  \end{subfigure}
  \begin{subfigure}{0.37\textwidth}
    \centering
    \includegraphics[width=\linewidth]{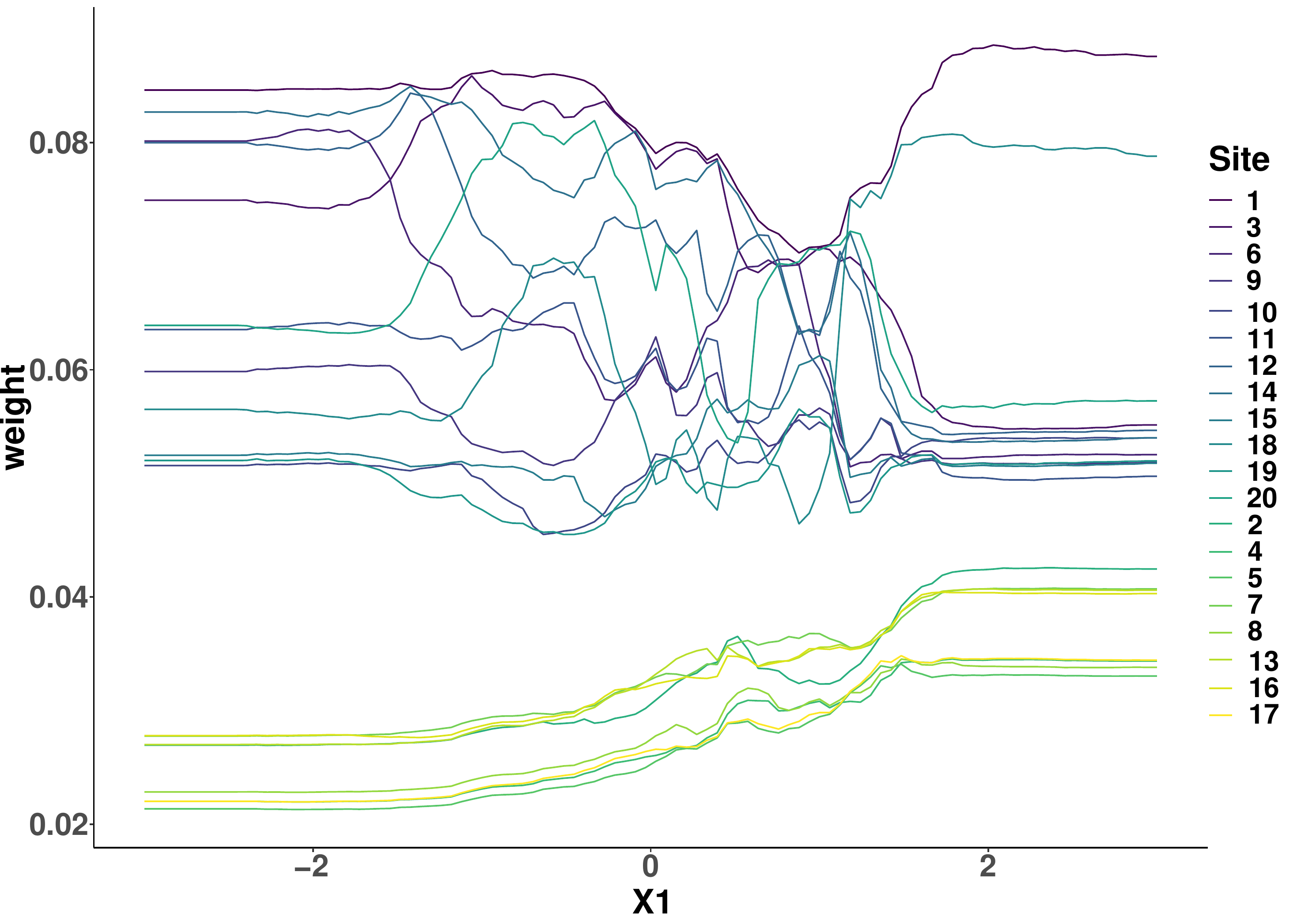}
    \caption{}
  \end{subfigure}
  \caption{
 Visualization of simulation 
 results under \textbf{discrete grouping (a,b,c)} and \textbf{continuous grouping (d,e,f)} when  $c=1$. (a) and (d) visualize the proposed ETs where the site indicator and $X_1$ are selected as splitting variables, which is consistent with the underlying data generation process. 
 (b) and (e) show the predicted treatment effects of the proposed EFs varying $X_1$ in each site, marginalized over all other features. (b) is arranged according to the true grouping, odd sites versus even sites. 
 The plot recovers the pattern of local and global heterogeneity. 
 (c) and (f) plot the interpretable model averaging weights in EFs over $X_1$. 
 The weights of site 1 have a relatively large contribution to the weighted estimator while models from other sites have different contributions for different $X_1$ depending on their similarity in $\tau(\bx,k)$ to that in site 1. 
 Corresponding ET and EF show consistent patterns and recover the true grouping.
  }
  \label{fig:sim_visual}
\end{figure*}

\subsection{Additional Simulations}
The detailed results of these additional simulations are included in Appendix~\ref{suppl-sec:sim}.

\emph{\textbf{1) Connection to supervised learning.}} The uniqueness of averaging $\tau_k(\bx)$ as opposed to supervised learning that averages prediction models $f_k(\bx)$ is that the outcome of $f_k(\bx)$ is immediately available. 
In our case, an additional estimation step is needed to construct the model averaging weights. 
We provide a comparison among estimators that utilize the ground truth $\{\tau_k(\bx)\}_{k=1}^K$ (denoted as ``-oracle'') when computing ensemble weights.
This mimics the case of supervised learning where weights are based on observed outcomes. 
Oracle methods achieve smaller MSE ratios; the pattern is consistent with Table~\ref{fig:ratio}.

\emph{\textbf{2) Simulation under observational studies.}} 
We also consider the treatment generation mechanism under an observational design. 
Specifically, the propensity is given as $e(\bx) = \text{expit}(0.6x_1)$. We consider both a correctly specified propensity model using a logistic regression of $Z$ on $X_1$ and a misspecified propensity model with a logistic regression of $Z$ on all $\bX$. In general, the proposed estimators obtain the best performance with similar results as in Figure~\ref{fig:sim_box}. With the correctly specified propensity score model, the local estimator is consistent in estimating $\tau_k(\bx)$, the proposed framework is valid. When the propensity model is misspecified, extra uncertainty is carried forward from the local estimates, but the proposed estimators can still improve upon LOC. This is due to a bias-and-variance trade-off that leads to small MSE, which remains smaller than the local models. 

\emph{\textbf{3) Covariate dimensions.}} 
Besides $D=5$, we consider other choices of covariate dimension including $D=20, 50$. With a higher dimension, the MSE ratio between the proposed estimates and LOC estimates increases but the same pattern across methods persists.

\emph{\textbf{4) Unequal sample size at each site.}} 
In the distributed date network, different sites may have a different sample size $n_k$. Those with a smaller sample size may not be representative of their population, leading to an uneven level of precision for local causal estimates. We consider a simulation setting where site 1 has a sample size of $n_1=500$ while other site $n_2,\ldots,n_K$ has a sample size of 200. Results show that the MSE ratio between the proposed estimates and LOC estimates increases compared to the scenario where the sample size in all sites are 500. However, the proposed estimators still enjoy the most robust performance. 
This also shows our method is robust to the existence of local uncertainty. 

\emph{\textbf{5) Different local estimators.}} We stress that other consistent estimators could be used as the local model. Options such as causal forest \citep{wager2018estimation} are explored varying the sample size at local sites. Similar performance is observed as in Figure~\ref{fig:sim_box}.

\emph{\textbf{6) Further comparisons to non-adaptive ensemble.}} Here we provide a brief discussion of the implications of the proposed method and how it differs from non-adaptive methods such as stacking. 
Although unrealistic, when the true weights are non-adaptive, the performance may be similar. Plus, our learned weights can be used to examine adaptivity, as shown in Figure~\ref{fig:sim_visual}(c,f) and Figure~\ref{fig:real}(c). 
Stacking is shown to be more robust than non-adaptive model averaging in case of model misspecification. See discussion in \citet{clarke2003comparing}. 
Our additional simulation results show that in case of a large global heterogeneity, as $c$ increases, the heterogeneity across sites gets larger, reducing the influence of important covariates on heterogeneity, hence the weights become more non-adaptive. However, the proposed methods still enjoy a comparable performance to STACK, which further indicates the robustness of the proposed methods.

\section{Example: A Multi-Hospital Data Network}
\label{sec:application}

Application with contextual insights is provided based on an analysis of the eICU Collaborative Research Database, a multi-hospital database published by Philips Healthcare \citep{pollard2018eicu}. 
The analysis is motivated by a recent retrospective study that there is a higher survival rate when SpO$_2$ is maintained at 94-98\% among patients requiring oxygen therapy \citep{van2020search}, not ``the higher the better''. 
We use the same data extraction code to create our data. 
We consider SpO$_2$ within this range as treatment ($Z=1$) and outside of this range as control ($Z=0$). 
A total of 7,022 patients from 20 hospitals, each with at least 50 patients in each treatment arm, are included with a randomly selected target (hospital 1).
Hospital-level summary information is provided in Appendix~\ref{suppl-sec:real}.
Patient-level features include age, BMI, sex, Sequential Organ Failure Assessment (SOFA) score, and duration of oxygen therapy.
The outcome is hospital survival ($Y=1$) or death ($Y=0$).

Figure~\ref{fig:real} visualizes the performance of EF-based estimated effect of oxygen therapy setting on in-hospital survival. 
CT is used as the local model with propensity score modeled by a logistic regression. 
Figure~\ref{fig:real}(a) shows the propensity score-weighted \textbf{average survival} for those whose received treatment is consistent with the estimated decision. 
Specifically, the expected reward is given by $$\frac{\sum_i Y_i 1(Z_i = Z^{est}_i) / \pi(Z_i,\bX_i)}{\sum_i 1(Z_i = Z^{est}_i) / \pi(Z_i,\bX_i)},$$ where $Z^{est}_i = 1(\widehat\tau>0)$ denotes the estimated treatment rule and $\pi(Z_i,\bX_i)$ is the probability of receiving the actual treatment.
We provide expected reward for the 1) observed treatment assignment (baseline), 2) LOC-based rule, and 3) EF-based rule. 
The treatment rule based on our method can increase mean survival by 3\% points compared to baseline, and is more promising than LOC.

\begin{figure*}[!htb]
\centering
 \begin{subfigure}{0.2\textwidth}
  \centerline{\includegraphics[width=\linewidth]{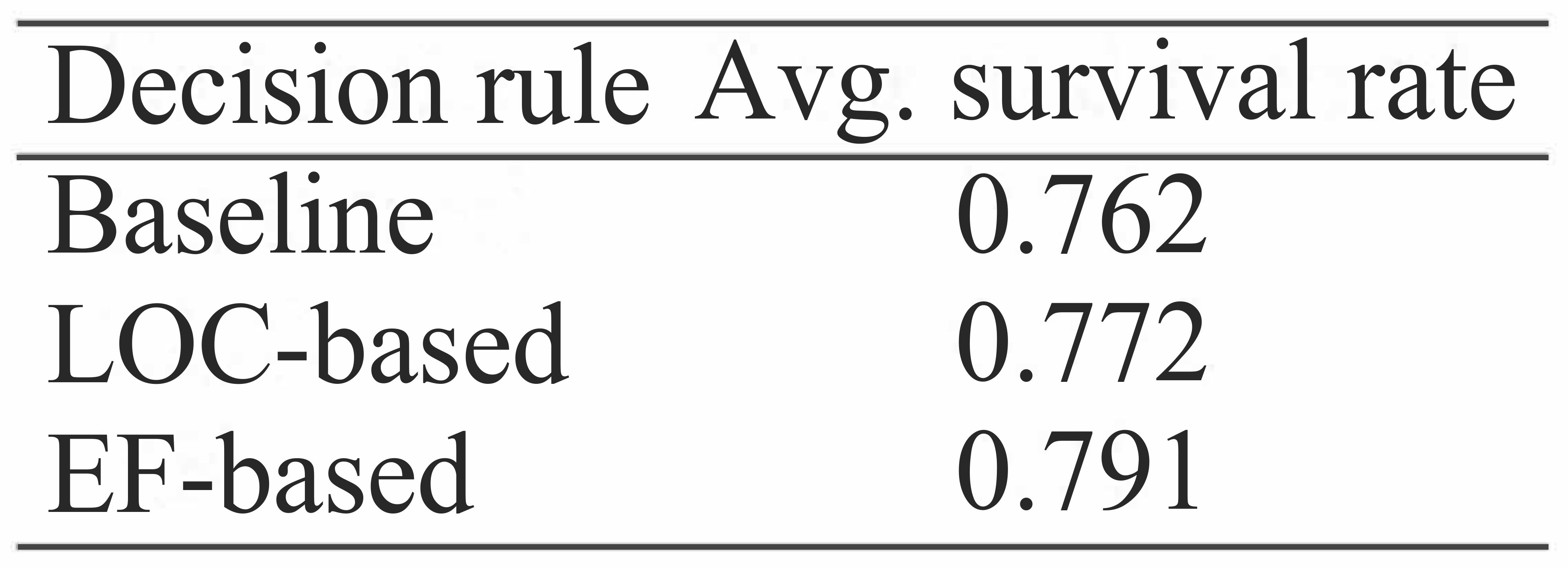}}
  \caption{}
 \end{subfigure}
 \begin{subfigure}{0.39\textwidth}
  \centerline{\includegraphics[width=\linewidth]{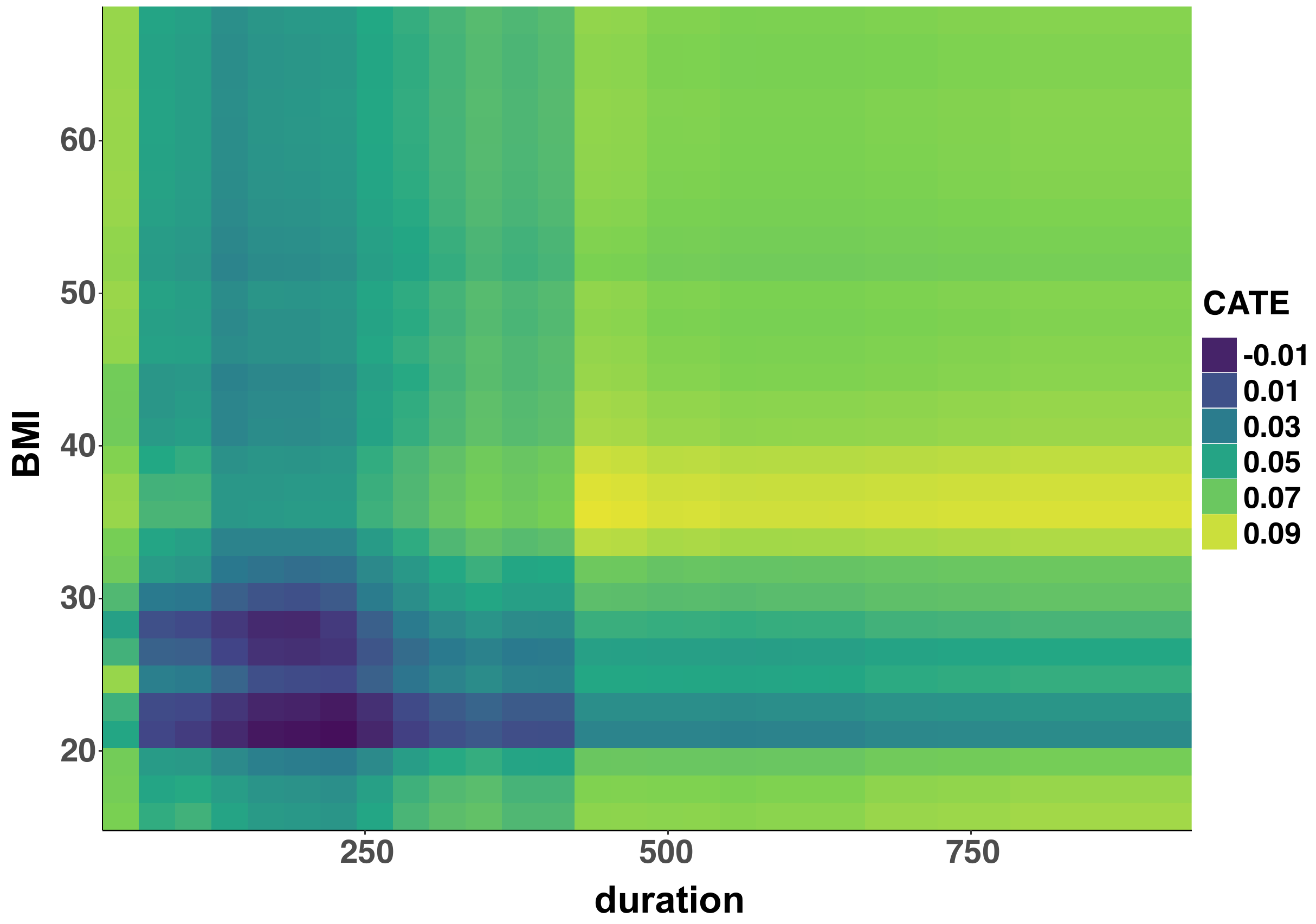}}
  \caption{}
 \end{subfigure}
 \begin{subfigure}{0.39\textwidth}
  \centerline{\includegraphics[width=\linewidth]{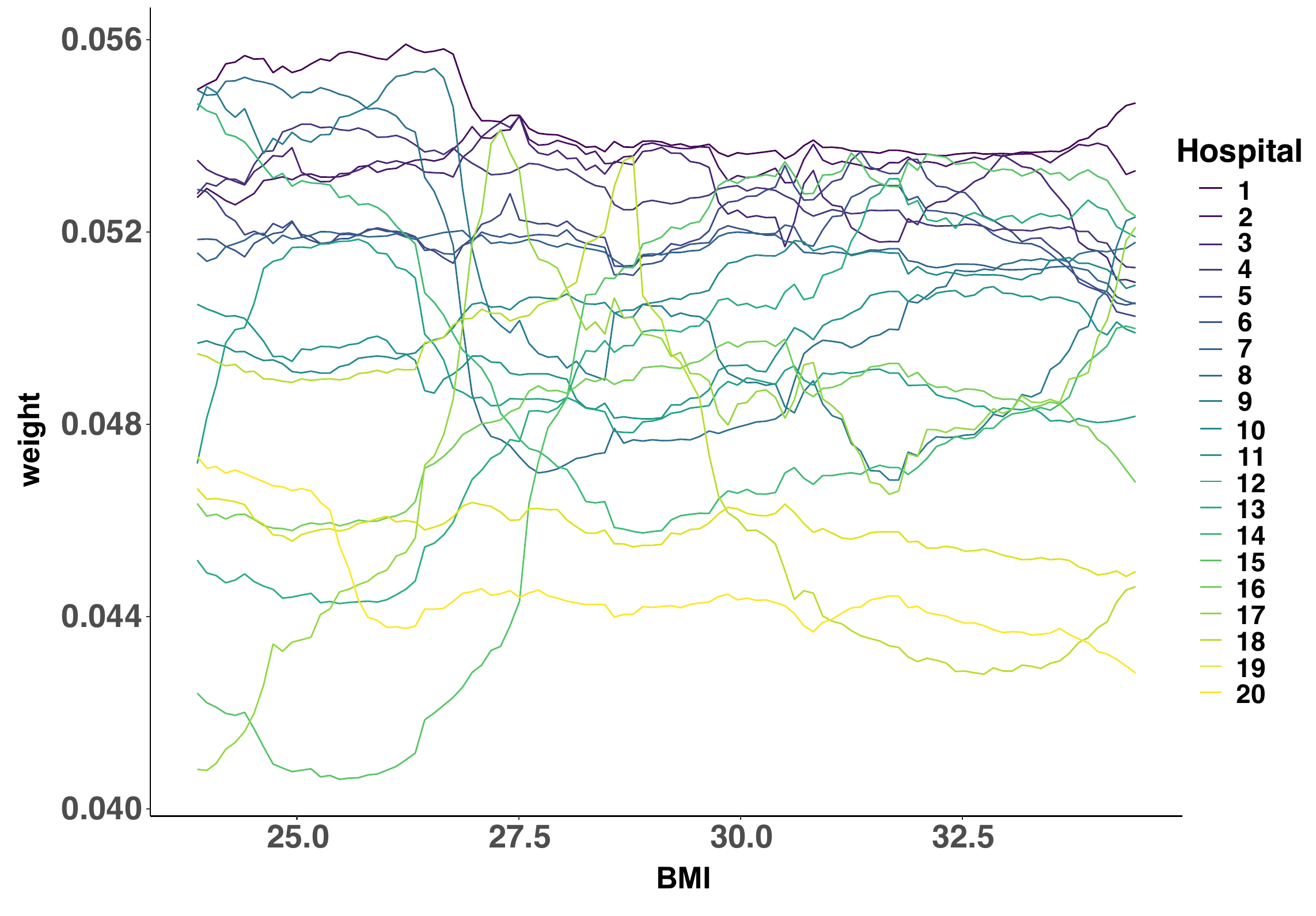}}
  \caption{}
 \end{subfigure}
 \caption{
 Application to estimating treatment effects of oxygen therapy on survival. 
 (a) Expected survival of treatment decision following different estimators. 
 The proposed EF shows the largest gain in improving survival rate, more promising than LOC and baseline. 
 (b) Estimated treatment effects varying duration and BMI, two important features in the fitted EF. 
 Patients with a BMI around 35, and a duration above 400 benefited the most. 
 (c) Visualization of data-adaptive weights in the estimated EF varying BMI. 
 Hospitals with a larger bed capacity tend to contribute more, the data of which might be more similar to hospital 1.
}
\label{fig:real}
\end{figure*}

In the fitted EF, 
the hospital indicator is the most important, explaining about 50\% of the decrease in training error. 
Figure~\ref{fig:real}(b) shows the estimated CATE varying two important features, BMI and oxygen therapy duration. 
Patients with BMI around 36 and duration above 400 show the most benefit from oxygen therapy in the target SpO$_2$ range.  
Patients with BMI between 20 and 30 and duration around 200 may not benefit from such alteration. 
Figure~\ref{fig:real}(c) visualizes 
the 
data-adaptive weights $\omega_{k}(\bx)$ in the fitted EF with respect to BMI for different models, while holding other variables constant. The weights of hospital 1 are quite stable while models from other sites may have different contribution to the weighted estimator for different values of BMI. 
Judging from hospital information 
in Appendix~\ref{suppl-sec:real}, hospitals with a larger bed capacity tend to be similar to hospital 1, and are shown to provide larger contributions.

In this distributed research network, different hospitals have a different sample size. 
For sensitivity analysis, we consider a weighting strategy to adjust for the sample size of site $k$. 
Results show similar patterns as in Figure \ref{fig:real}. Detailed results 
are provided in Appendix~\ref{suppl-sec:real}. 
The real-data access is provided in Appendix~\ref{suppl-sec:code}.

\section{Discussion} 
\label{sec:disc}

We have proposed an efficient and interpretable tree-based model averaging framework for enhancing treatment effect estimation at a target site by borrowing information from potentially heterogeneous data sources. We generalize standard model averaging scheme in a data-adaptive way such that the generated weights depend on subject-level features. 
This work makes multi-site collaborations and especially treatment effect estimation more practical by avoiding the need to share subject-level data. 
Our approach extends beyond causal inference to estimating a general $f(\bx)$ from heterogeneous data.

Unlike in classic model averaging where prediction performance can be assessed against observed outcomes or labels, treatment effects are not directly observed. 
While our approach is guaranteed to be consistent under randomized studies, 
the weights are estimated based on expected treatment effects, hence relying on  Assumption~\ref{assump:unconfounded} (unconfoundedness) to hold. It may be a strong assumption in observational studies with unmeasured confounding.

\chapter{Robust Individualized Decision Learning with Sensitive Variables}\label{chap:itr}

\section{Introduction}\label{sec:intro-rise}

Recently, there has been a widespread interest in developing methodology for individualized decision rules (IDRs) based on observational data. 
When deriving IDRs, some collectible data are important to the intervention decision, while their inclusion in decision making is prohibited due to reasons such as delayed availability or fairness concerns. 
For example, sensitive characteristics of subjects regarding their income, sex, race and ethnicity may not be appropriate to be used directly for decision making due to fairness concerns. 
In the medical field especially for patients in severe life-threatening conditions such as sepsis, timely bedside intervention decisions have to be made before lab measurements are ordered, assayed and returned to the attending physicians. However, due to the delayed availability of lab results, most of the decisions are made with great uncertainty and bias due to partial information at hand. 
We define \textit{sensitive variables} as variables whose inclusion into decision rules is prohibited.  
The formal definition of sensitive variables will be given in Section~\ref{ssec:method}.

In this work, we propose RISE (\textbf{R}obust \textbf{I}ndividualized decision learning with \textbf{SE}nsitive variables)\footnote{Python code is available at \url{https://github.com/ellenxtan/rise}.}, a robust IDR framework to improve the outcome of individuals when there are informative yet sensitive variables that are either not available or prohibited from using during IDR deployment. 
To achieve this, we propose to estimate the optimal IDR by optimizing a quantile- or infimum-based objective, respectively, for continuous or discrete sensitive variables. This optimization problem is then shown to be equivalent to a weighted classification problem where \textit{most existing machine learning classifiers can be readily applied}.
Our idea falls along the lines of work that considers algorithmic fairness \citep{dwork2012fairness} while extending it to the setting of causal inference \citep{rubin2005causal} in the sense that decisions are driven by causality rather than a general utility function. 
We show in our empirical analyses that this leads to fairer and safer real-life decisions with little sacrifice of the overall performance.

Assuming that a larger outcome value is preferable, optimal IDRs are traditionally derived through maximizing the mean outcome of the sample population. 
In this paper, we are interested in a specific yet broadly applicable setting of learning that involves sensitive variables. 
We consider offline learning where sensitive variables are collected and \textit{can be used in training the IDRs}, but they \textit{cannot be used as input in the resulting IDRs}. 
This is a setting commonly considered in the fairness and privacy-related literature for classification (e.g., \cite{kamiran2012data}), but not from a causal standpoint.
When there exist important variables that are simply left out from training, the estimated IDR will be biased.
This bias can be removed if all important variables are used during training, which we will show in Section~\ref{sec:assump} a mean-optimal approach.
The optimal action maximizes the mean outcome where the mean is taken over the sensitive variables, conditioning on other variables. 
This method, however, has no control of the disparity in sensitive variables. Subjects with different sensitive values may report large outcome differences, hence unfairly or unsafely treated. 
Therefore, objective functions with robustness guarantees for sensitive variables are preferred, since they offer protection to subjects in the lower tail of the outcome distribution with regards to the sensitive values.

For illustration, we consider a toy example with binary actions, $A \in \{-1, 1\}$. 
We remark that the decision can only be made based on the variable $X$ whereas $S$ is a sensitive variable.
The setup is shown in Table~\ref{t:toy_setup} and the oracle values under the mean-optimal rule and RISE\footnote{For the mean-optimal rule, overall average reward is calculated by $(30+0+5+27)/4=15.5$, reward among vulnerable subjects is calculated by $(0+5)/2=2.5$; for RISE, overall average reward is calculated by $(11+13+15+13)/4=13$, reward among vulnerable subjects is calculated by $(13+15)/2=14$.}
are given in Table~\ref{t:toy_res}. 
The detailed setup can be found in Section~\ref{ssec:simulation} under Example 1. 
We consider \textit{vulnerable subjects} as those with low outcome values, as highlighted in red in Table~\ref{t:toy_setup} (A full definition is given in Section~\ref{ssec:vul}). 
For $X \leq 0.5$, the mean-optimal rule would assign action $A=1$ as it tries to achieve the largest average reward across $S=0$ and $S=1${. Recall that $S$ is not available at the time of decision}. However, this action results in great harm for subjects with $S=1$ as they could get the worst expected outcome of 0. On the contrary, RISE improves the worst-case outcome by assigning $A=-1$, protecting the vulnerable subjects. 
Likewise, for $X > 0.5$, the mean-optimal rule assigns $A=-1$ while RISE assigns $A=1$ protecting those with $S=0$ that could have experienced an outcome of $5$. 
Compared to the mean-optimal rule, the proposed rule achieves a larger reward among vulnerable subjects while maintaining a comparable overall reward.

\begin{table}[hbt!]
\caption{Toy example setup {of $E(Y|X,S,A)$.}}
\centering
\label{t:toy_setup}
\begin{tabular}{cccccc}
\toprule
 & \multicolumn{2}{c}{$X \leq 0.5$} & & \multicolumn{2}{c}{$X > 0.5$} \\ \cmidrule(l){2-3} \cmidrule(l){5-6} 
 & $S=0$ & $S=1$ & & $S=0$ & $S=1$ \\ \midrule
$A=-1$ & {11} & 13 & & \red{5} & 27 \\
$A=1$ & 30 & \red{0} & & 15 & {13} \\ \bottomrule
\end{tabular}
\end{table}

\begin{table}[hbt!]
\caption{Toy example results.}
\centering
\label{t:toy_res}
\begin{tabular}{ccc}
\toprule
  & \multicolumn{2}{c}{Average reward} \\ \cmidrule(l){2-3}  
 & Overall & Vulnerable \\ \midrule
Mean-optimal rule & {\textbf{15.5}} & {2.5} \\
RISE &  13 & \textbf{14} \\ \bottomrule
\end{tabular}
\end{table}

\vspace{1cm}
\textbf{Main contributions. } 
\textit{Methodology-wise,}  
\textit{1)} we propose a novel framework, RISE, to handle sensitive variables in causality-driven decision making.
Robustness is introduced to improve the worst-case outcome caused by sensitive variables, and as a result, it reduces the outcome variation across subjects. The latter is directly associated with fairness and safety in decision making. 
To the best of our knowledge, we are among the first to propose a robust-type fairness criterion under causal inference.
\textit{2)} We introduce a classification-based optimization framework that can easily leverage most existing classification tools catered to different functional classes, including state-of-the-art random forest, boosting, or neural network models. 
\textit{Application-wise,} 
\textit{3)} the consideration of sensitive variables in decision learning is important to applications in policy, education, healthcare, etc. 
Specifically, we illustrate the application of RISE using three real-world examples from fairness and safety perspectives where robust decision rules are needed, across which we have observed robust performance of the proposed approach. 
From a fairness perspective, we consider a job training program where age is considered as a sensitive variable. 
From a safety perspective, we consider two applications to healthcare where lab measurements are considered as sensitive variables.

The remaining chapter is organized as follows.  Section~\ref{ssec:related} discusses related work. 
Section~\ref{ssec:method} describes the proposed RISE framework in detail. 
The performance of the proposed framework is evaluated by simulation studies and applied to three real-data applications in  Section~\ref{sec:numerical}. 
We conclude and discuss future work in Section~\ref{ssec:disc}.

\section{Related Work}\label{ssec:related}

Our work focuses on individualized decision rules, which aim at assigning treatment decision based on subject characteristics. 
Existing methods for deriving IDRs include model-based methods such as Q-learning \citep{watkins1992q,murphy2003optimal,moodie2007demystifying} and A-learning \citep{robins2000marginal,shi2018high}, model-free policy search methods \citep{zhang2012robust,zhao2012estimating,zhao2015doubly}, and contextual bandit methods \citep{bietti2021contextual,li2011unbiased}. 
In Appendix~\ref{suppl:lit}, we provide additional literature review on general IDRs under causal settings. 
Fairness, safety and robustness are topics of interest that extend well beyond causal inference. In the following, we provide a review of these areas, with focus given to work related to causal inference and IDRs.

\textbf{Fairness and safety in IDRs. } 
The consideration of fairness and safety in machine learning has seen an explosion of interest in the past few years {\citep{dwork2012fairness,varshney2016engineering,barocas2017fairness,nabi2018fair,hashimoto2018fairness,chouldechova2020snapshot,mehrabi2021survey,pessach2022review}, especially for solving classification and regression problems to help derive decisions that are not only accurate but also fair.}  
In these work, sensitive variables are also referred to as sensitive, protected, or auxiliary attributes. 
We extend the definition of sensitive variables to include delayed information that is not available at deployment as it is also suitable for this framework.

Among earlier work, preprocessing \citep{kamiran2012data,feldman2015certifying,creager2019flexibly,sattigeri2019fairness} 
and inprocess training approaches \citep{beutel2017data,hashimoto2018fairness,lahoti2020fairness} 
consider disentangling the input $X$ from a known or unknown sensitive variable $S$ so that the transformed $X$ does not contain any information related to $S$. 
Due to the causal nature of IDRs, effect of IDRs cannot be estimated consistently when an informative $S$ is left out and the resulting rule is suboptimal.
This follows from the classic argument that any unmeasured confounding (i.e., $S$), if not accounted for, would lead to bias.
Similar issues persist in contextual bandits \citep{joseph2016fairness,patil2020achieving}. 
{\cite{makar2022causally} considers reducing the impact of auxiliary variables on prediction under distributional shift. Although it is motivated from a causal idea, its main focus is still on prediction.} 
Inside the causal framework,
\cite{zhang2018fairness,nabi2019learning} extend fairness from prediction to policy learning using causal graphical models by incorporating fairness constraints. 
\cite{chen2022learning} considers counterfactual fairness that seeks to achieve conditional independence of the decisions via data preprocessing. 
Despite earlier efforts in bringing fairness into the causal framework, most of these approaches 
only ensure mean zero disparity in $S$ but do not have robustness guarantees in the sense that the variance of the disparity in $S$ is not controlled. 
Besides, most examples consider a single categorical sensitive variable, but not multiple or continuous ones.

\textbf{Robustness in IDRs. } 
Recently the statistical literature has witnessed a growing interest in developing robust methods for estimating IDRs. They introduce robustness into the objective function by using quantile-optimal treatment regimes or mean-optimal treatment regimes under certain constraints to improve the gain of individuals at the lower tail of the reward spectrum \citep{wang2018quantile,wang2018learning,qi2019estimating,qi2019estimation,fang2021fairness}. 
In particular, \cite{wang2018quantile,qi2019estimating} propose to estimate quantile- or tail-optimal treatment regimes. 
\cite{wang2018learning} studies the mean-optimal treatment regime under a constraint to control for the average potential risk. 
\cite{qi2019estimation} proposes a decision-rule-based optimized covariates dependent equivalent for tasks of  individualized decision making. 
\cite{fang2021fairness} considers mean and quantile objectives simultaneously by maximizing the average value with the guarantee that its tail performance exceeds a prespecified threshold. 
Robustness, in their sense, pertains to the outcome distribution subject to the sampling error.
When sensitive variables are present, we consider instead the robustness of the outcome distribution subject to the uncertainty due to sensitive variables, providing a more targeted way of ensuring robustness, which is directly related to fairness and safety. 
Compared to algorithms based on explicit fairness constraints (for example \cite{zafar2017fairness,zhang2018mitigating} in classification and \cite{zhang2018fairness,chen2022learning} in causal inference) that seek to remove the disparity across different values of $S$, our method reduces the variance of disparity across $S$. In addition, constraint-based approaches typically require specialized optimization procedures whereas our approach presents an elegant and systematic way for optimization. 
To our knowledge, we are the first few to consider decision fairness via a robust objective under the causal framework.

\section{Robust Decision Learning Framework with Sensitive Variables} \label{ssec:method}

\subsection{Preliminaries}\label{sec:assump}

\textbf{Notation. } We let random variables be represented by upper-case letters, and their realizations be represented by lower-case letters. 
Suppose there are $n$ independent subjects sampled from a given population. For subject $i$, 
let $A_i \in \{-1,1\}$ denote a binary treatment assignment and $Y_i$ denote the corresponding outcome. Without loss of generality, we assume a larger value of outcome is desirable. 
Under the potential outcomes framework \citep{rubin1978bayesian,splawa1990application}, let $Y_i(a)$ be the potential outcome had the subject been assigned to $A = a$ for $a = 1$ or $-1$. 
Let $X_i \in \mathbb{X}$ be the feature vector and, for now, $S_i$ be a single sensitive variable. 
We consider $S \in \mathbb{S}$ where $\mathbb{S} = \{1,\ldots,K\}$ if $S$ is discrete and $\mathbb{S} = \mathbb{R}$ if $S$ is continuous. 
The extension to multiple sensitive variables is presented in Section~\ref{sec:exten}. 

\textbf{Definition of sensitive variables. } We define sensitive variables $S$ as variables that are important to the intervention decision, but their inclusion in decision making is prohibited. 
Formally, consider variables $X$ and $S$ that are both available during model training and are both determinants of conditional average treatment effect \citep{rubin1974estimating}. 
While $S$ may be involved in training, the derived decision rule $d(\cdot)$ precludes the input of $S$ due to sensitive concerns. Hence, the derived IDR is a function of the form $d(X): \mathbb{X} \mapsto \mathbb{A}$.  
Following the above definition, we consider an offline learning framework where sensitive variables are collected and can be used in obtaining the IDRs, but they cannot be used in the resulting IDRs. 
A causal diagram and a decision diagram are provided in Figure~\ref{fig:diagram}. 
As it shows in Figure~\ref{fig:diagram}a, both $X$ and $S$ confound the effect of treatment $A$ on outcome $Y$. The arrows represent the causal relationship between variables. Note that $X$ and $S$ can be correlated.  
This causal diagram is formalized in Assumption~\ref{assump:rise-unconf} below. 
On the other hand in Figure~\ref{fig:diagram}b, in the decision diagram under our setting, $S$ is shown in a dotted circle as $S$ may not be readily available at the time of decision making. We connect $S$ and $A$ with a dotted arrow to indicate that $S$ is incorporated in the training of the decision rule, but it is not required at deployment.

\vspace{0.5em}
\begin{figure}[!htb]
\centering
 \begin{subfigure}{0.25\textwidth}
  \centerline{\includegraphics[width=\linewidth]{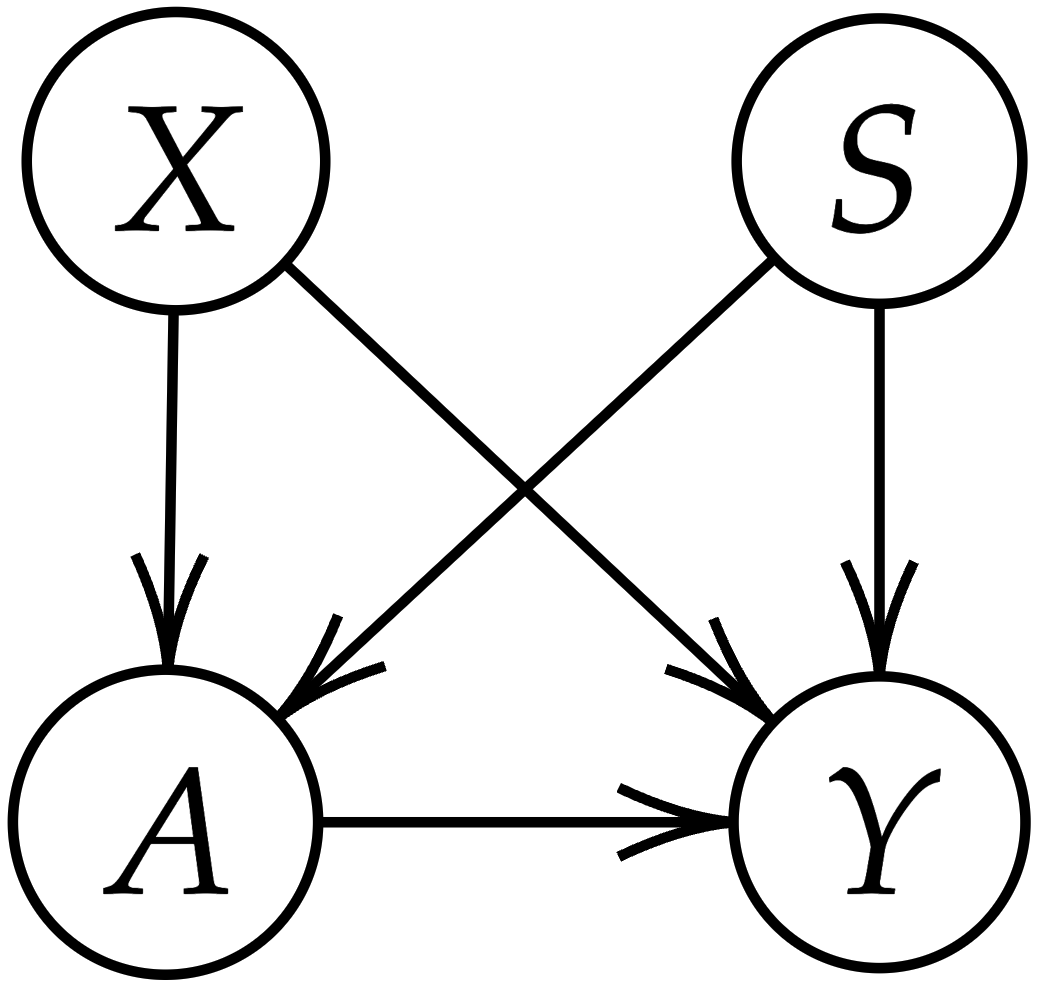}}
  \caption{}
 \end{subfigure}
 \qquad
 \begin{subfigure}{0.25\textwidth}
  \centerline{\includegraphics[width=\linewidth]{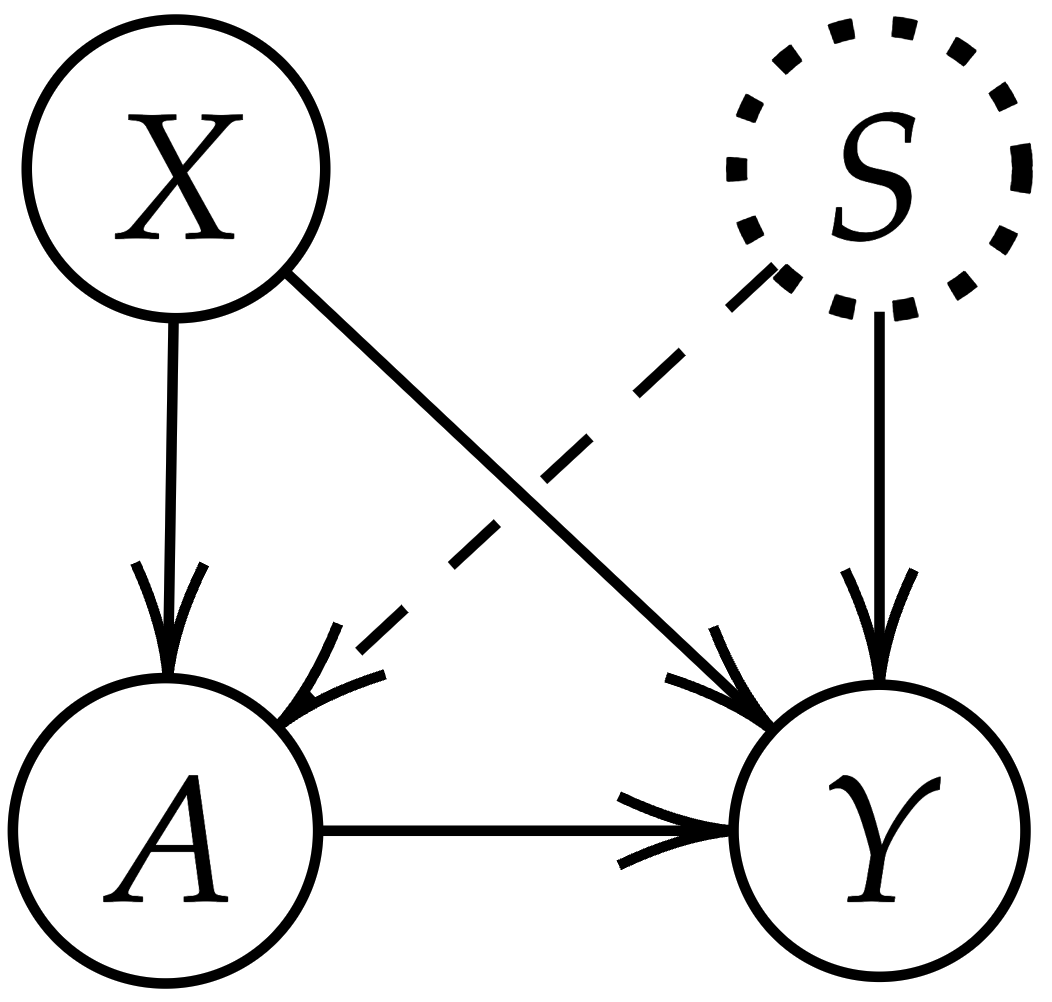}}
  \caption{}
 \end{subfigure}
 \vspace{-1.3cm}
 \caption{
 (a) A causal diagram.   
 (b) A decision diagram. 
 } 
 \label{fig:diagram}
\end{figure}

\begin{assumption}[Consistency] \label{assump:rise-consist}
$$ Y=Y(-1)\mathbbm{1}(A=-1)+Y(1)\mathbbm{1}(A=1).$$ 
\end{assumption}

\begin{assumption}[Positivity] \label{assump:rise-pos}
$$ 0 < Pr(A=1 | X,S) < 1.$$ 
\end{assumption}

\begin{assumption}[Unconfoundedness] \label{assump:rise-unconf}
$$ \{Y(-1), Y(1)\} \perp A | \{X,S\} ~\text{and}~ \{Y(-1), Y(1)\} \not\perp A | X.$$ 
\end{assumption}

Assumption~\ref{assump:rise-consist} is the standard consistency assumption in causal inference and  
Assumption~\ref{assump:rise-pos} states that every subject has a nonzero probability of getting the treatment. 
Assumption~\ref{assump:rise-unconf} states that given $X$ and $S$, the potential outcomes are independent of the treatment assignments. Besides, unconfoundedness does not hold when only $X$ is given, signifying the important role of $S$. 
Under causal settings, Assumption~\ref{assump:rise-unconf} implies that treatment effects cannot be non-parametrically identifiable without $S$ \citep{neyman1923applications,rubin1974estimating}. Approaches such as ones that disentangle $X$ from $S$ under supervised learning settings mentioned in Section~\ref{ssec:related} will introduce bias towards estimating IDRs.

Before introducing the proposed method, we first discuss two kinds of approaches to deal with sensitive variables under causal settings. 
The first kind is naive approaches that omit sensitive variables. 
When $S$ is not available for future deployment, a naive approach is to maximize $E_{X} \{E(Y|X,A=d(X))\}$ over $d$ using $(X, A, Y)$ during the training procedure. This approach will introduce bias in the estimation of potential outcomes and lead to a suboptimal IDR due to the unmeasured confounder $S$. 
It is thus important that one includes $S$ into the training procedure. For example, if we consider the value function framework (i.e., expected outcome) used by most existing works such as \cite{manski2004statistical,qian2011performance}, we can show that 
\begin{align}\label{eq:obj_mean}
    E\{Y(d)\} 
    &= E_{X,S} \big[ E(Y(d)|X,S) \big] = E_{X} \big[ E_{S|X} \{ E(Y(d)|X,S) \}   \big]  \\
    &= E_{X} \big[ E_{S|X}\{E(Y|X,S,A=d(X))\}   \big] \neq E_{X} \big[ E(Y|X,A=d(X))\big], \nonumber
\end{align}
where the third equality in \eqref{eq:obj_mean} holds by Assumptions in Section~\ref{sec:assump} and the last inequality also indicates the naive approaches without using $S$ will in general fail. Then one valid approach is the mean-optimal approach that uses the sensitive variables. That is, to maximize $E_{X} \big[ E_{S|X}\{E(Y|X,S,A=d(X))\}   \big]$ over $d$ using $(X, S, A, Y)$. The optimal IDR under this criterion is, for every $X \in \mathbb{X}$,
\begin{align*}
    \tilde{d}(X) \in \operatorname{sgn}(E_{S|X}\{E(Y|X,S,A=1)\} - E_{S|X}\{E(Y|X,S,A=-1)\}), 
\end{align*}
which guarantees to find the treatment that maximizes the conditional expected outcome given each $X$ by averaging out the effect of the sensitive variable $S$. 
The mean-optimal approaches, however, fail to control the disparities across realizations of the sensitive variables due to the integration over $S$, which may lead to unsatisfactory decisions to certain subgroups, as illustrated in the toy example in Section~\ref{sec:intro-rise}.

\subsection{Robust Optimality with Sensitive Variables}\label{ssec:vul}

Driven by the limitation of existing approaches, our goal is to derive a robust decision rule that maximizes the worst-case scenarios of subjects when some sensitive information is not available at the time of deploying the decision rule. Specifically, our robust decision learning framework draws decisions based on individuals' available characteristics summarized in the vector $X$ without the sensitive variable $S$, while improving the worst-case outcome of subjects in terms of the sensitive variable in the population. Formally, given a collection $\mathbb{D}$ of all treatment decision rules depending only on $X$, the proposed RISE approach estimates the following IDR, which is defined as 
\begin{align}\label{eq:obj_max} 
    d^\ast \in {\arg\max}_{d\in\mathbb{D}} E_X \big[ G_{S|X} \{E (Y|X,S,A=d(X)) \} \big],
\end{align}
where $G_{S|X}(\cdot)$ could be chosen as some risk measure for evaluating $E (Y|X,S,A=d(X))$ for each $S \in \mathbb{S}$. Examples include variance, conditional value at risk, quantiles, etc. In this paper, we consider $G_{S|X}$ as the conditional quantiles (for a continuous $S$) or the infimum (for a discrete $S$) over $\mathbb{S}$. 

Specifically, for a discrete $S$, $G_{S|X}$ 
is consider as an infimum operator of $E(Y|X,S,A=d(X))$ over $S$. We thus aim to find
$$
d^\ast \in {\arg\max}_{d\in\mathbb{D}} E_X\big[\operatorname{inf}_{s \in \mathbb{S}} \{E (Y|X,S=s,A=d(X)) \} \big]
,$$
where $\operatorname{inf}$ is the infimum taken with respect to
$E (Y|X,s,A=d(X))$ over $s \in \mathbb{S}$. This implies that for a given $X$, $d^\ast(X)$ assigns the treatment that yields the best worst-case scenario among all possible values of $S$ for every $X \in \mathbb{X}$, or equivalently,
$$
d^\ast(X) \in \operatorname{sgn}(\operatorname{inf}_{s \in \mathbb{S}} \{E (Y|X,S=s,A=1) - \operatorname{inf}_{s \in \mathbb{S}} \{E (Y|X,S=s,A=-1)\}).
$$

For a continuous $S$, we consider $G_{S|X} \{E (Y|X,S,A=d(X)) \}$ as $Q_{S|X}^{\tau} \{E (Y|X,S,A=d(X)) \},$
which is the $\tau$-th quantile of $\{E (Y|X,S,A=d(X)) \}$ and $\tau \in (0,1)$ is the quantile level of interest. 
Specifically, $Q_{S|X}^{\tau} \{E (Y|X,S,A=d(X)) \} = \inf\{ t: F(t) \geq \tau\}$ with $F$ denoting the conditional distribution function of $E (Y|X,S,A=d(X))$ over $\mathbb{S}$ given $X$ and $d$. Note the randomness behind $E (Y|X,S,A=d(X))$ given $X$ and $d$ is fully determined by the sensitive variable $S$. Then optimal IDR under this criterion is defined as
$$
d^\ast \in {\arg\max}_{\mathbb{D}} E_X\big[Q_{S|X}^{\tau} \{E (Y|X,S,A=d(X)) \} \big]
.$$ 
This implies that for a given $X$, $d^\ast(X)$ assigns a treatment that yields the largest $\tau$-th quantile of the outcome over the distribution related to $S$, or equivalently,
$$
d^\ast(X) \in \operatorname{sgn}( \{Q_{S|X}^{\tau} \{E (Y|X,S,A=1) \}  - Q_{S|X}^{\tau} \{E (Y|X,S,A=-1) \} ).
$$ 
We let $\tau=0.25$ throughout the paper and suppress $\tau$ for simplicity. Results on varying the value of $\tau$ is provided in Appendix; see Section~\ref{ssec:simulation} for details.

\subsection{Identifying Vulnerable Subjects } 

Our RISE framework provides a natural way to define \textit{vulnerable groups}. Specifically, for a discrete $S$, if $\inf_S \{E(Y | X,S, A = 1)\} > \inf_S \{E(Y | X,S, A = 0)\}$, then $\arg \inf_S \{E(Y | X,S, A = 0)\}$ is vulnerable given $X$, otherwise is $\arg \inf_S \{E(Y | X,S, A = 1)\}$. 
In other words, the vulnerable subjects are those in the worst-off group that needs protection.
Similarly, for a continuous $S$, if $Q_S \{E(Y | X,S, A = 1)\} > Q_S \{E(Y | X,S, A = 0)\}$, then the set $\{S: E(Y | X,S, A = 0) \leq Q_S \{E(Y | X,S, A = 0)\} \}$ defines the vulnerable subjects given $X$, otherwise this group is defined as $\{S: E(Y | X,S, A = 1) \leq Q_S \{E(Y | X,S, A = 1)\} \}$.

\subsection{Estimation and Algorithm}\label{sec:diff_g}

Here we provide a transformation of the proposed RISE from an optimization problem to a weighted classification problem. There are several advantages to this conversion: 
1) The optimization problem defined in \eqref{eq:obj_max} involves a nonsmooth and nonconvex objective function that could lead to computational challenges. 
2) With multiple powerful statistical and machine learning toolbox to choose from, a classification problem can be more readily solved in practice. Hyperparameter tuning and model selection could be conducted to further boost performance. 
3) Compared to a direct optimization of \eqref{eq:obj_max}, a classification-based optimizer allows the use of off-the-shelf software packages that can be tailored to different functional classes or incorporate different properties such as model sparsity.

\begin{proposition}\label{prop:obj}
    Maximizing the objective function in~\eqref{eq:obj_max} is equivalent to maximizing $$E_X \big\{ \mathbbm{1}(d(X) = 1) [G_{S|X} \{E (Y|X,S,A=1) \}- G_{S|X} \{E (Y|X,S,A=-1) \}] \big\}.$$
\end{proposition}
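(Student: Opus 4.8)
The plan is to reduce the claim to a pointwise‑in‑$X$ case analysis on the sign of the decision rule, exploiting the fact that every $d \in \mathbb{D}$ depends only on $X$. Fix $x \in \mathbb{X}$. Since $d$ is $\{-1,1\}$‑valued, $d(x)$ is a deterministic constant once $x$ is fixed, so the random quantity $E(Y|X=x,S,A=d(x))$ — random only through $S$ given $X=x$ — equals $E(Y|X=x,S,A=1)$ when $d(x)=1$ and equals $E(Y|X=x,S,A=-1)$ when $d(x)=-1$. Applying the conditional risk functional $G_{S|X=x}$, which in both the infimum and the $\tau$‑quantile instantiations is defined on random variables in $S$ after conditioning on $X=x$, therefore gives, for every realization of $X$,
\begin{align*}
G_{S|X}\{E(Y|X,S,A=d(X))\}
&= \mathbbm{1}(d(X)=1)\, G_{S|X}\{E(Y|X,S,A=1)\} \\
&\quad + \mathbbm{1}(d(X)=-1)\, G_{S|X}\{E(Y|X,S,A=-1)\}.
\end{align*}

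Next I would substitute $\mathbbm{1}(d(X)=-1) = 1 - \mathbbm{1}(d(X)=1)$ into this identity and take $E_X$ of both sides:
\begin{align*}
E_X\big[G_{S|X}\{E(Y|X,S,A=d(X))\}\big]
&= E_X\big\{\mathbbm{1}(d(X)=1)\big[G_{S|X}\{E(Y|X,S,A=1)\} - G_{S|X}\{E(Y|X,S,A=-1)\}\big]\big\} \\
&\quad + E_X\big[G_{S|X}\{E(Y|X,S,A=-1)\}\big].
\end{align*}
The last term involves no $d$; over the class $\mathbb{D}$ it is a fixed constant. Hence the original objective $d \mapsto E_X[G_{S|X}\{E(Y|X,S,A=d(X))\}]$ and the reduced objective $d \mapsto E_X\{\mathbbm{1}(d(X)=1)[G_{S|X}\{E(Y|X,S,A=1)\} - G_{S|X}\{E(Y|X,S,A=-1)\}]\}$ differ by a constant, so they are maximized over $\mathbb{D}$ by exactly the same decision rules, which is the asserted equivalence. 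As a byproduct, the pointwise maximizer assigns $A=1$ precisely when the contrast $G_{S|X}\{E(Y|X,S,A=1)\} - G_{S|X}\{E(Y|X,S,A=-1)\}$ is positive, recovering the sign‑form expressions for $d^\ast$ given in Section~\ref{ssec:vul}.

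The argument is short, and the only delicate point — the one I would treat as the main obstacle — is justifying the first display: one must be explicit that $G_{S|X}$ is applied only \emph{after} conditioning on $X$, so that inserting the $X$‑measurable action $d(X)$ into $E(Y|X,S,A=\cdot)$ and then applying $G_{S|X}$ coincides with first applying $G_{S|X}$ to each of the two fixed‑action conditional means and then selecting according to $d(X)$. This commutation is immediate for a deterministic rule but would fail for stochastic ones, and it is worth one sentence to record that both the infimum‑over‑$\mathbb{S}$ and the $\tau$‑quantile‑over‑$\mathbb{S}$ versions of $G_{S|X}$ fit this template. One should also note that the maps $x \mapsto G_{S|X=x}\{E(Y|X=x,S,A=a)\}$ are measurable so that the outer expectation $E_X$ is well defined, but this is routine under the standing assumptions of Section~\ref{sec:assump}.
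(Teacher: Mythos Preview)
Your proposal is correct and follows exactly the same route as the paper: decompose $G_{S|X}\{E(Y|X,S,A=d(X))\}$ via the indicators $\mathbbm{1}(d(X)=1)$ and $\mathbbm{1}(d(X)=-1)$, substitute $\mathbbm{1}(d(X)=-1)=1-\mathbbm{1}(d(X)=1)$, and drop the $d$-free term. Your write-up is in fact more careful than the paper's, which simply records the three-line computation without the explicit pointwise justification or the measurability remark.
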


With Proposition~\ref{prop:obj} and a proper estimator of the outcome model $E (Y|X,S,A)$ using our training data $\mathcal{D}_n = \{X_i, S_i, A_i, Y_i\}_{i=1}^n$, we replace the expectation of $Y_i$ by its estimate $\hat Y_i$ and solve the following problem. 
\begin{align} \label{eq:obj2}
     {\arg\max}_{d\in\mathbb{D}}   n^{-1} \sum_{i = 1}^{n} [ \mathbbm{1}(d(x_i)=1) \{g_1(x_i)-g_2(x_i)\} ],
\end{align}
where
$ g_1(x_i) = G_{s|x} \{\hat Y_i(x_i,s,a_i=1)  \} $ and $ g_2(x_i) = G_{s|x} \{\hat Y_i(x_i,s,a_i=-1) \}$.  
We have the following proposition to address noncontinuity in \eqref{eq:obj2} and transform it into a classification problem. 
Define $\mathbb{F}$ as a class of all measurable functions over $\mathbb{X}$.

\begin{proposition}\label{prop:obj2}
    Let $f(x)$ to be a smooth function. 
    Maximizing the empirical objective in~\eqref{eq:obj2} is equivalent to a weighted classification problem of minimizing over $f \in \mathbb{F}$,
    \begin{align} \label{eq:obj_min}
        n^{-1} \sum_{i = 1}^{n} \mathbbm{1}[ \operatorname{sgn}\{g_1(x_i)-g_2(x_i)\} \cdot f(x_i) <0 ] \cdot |g_1(x_i)-g_2(x_i)|,  
    \end{align}
    with features $x_i$, the true label $\operatorname{sgn}\{g_1(x_i)-g_2(x_i)\}$, and the sample weight $|g_1(x_i)-g_2(x_i)|$, for subject $i$, $i = 1,\dots, n$. 
\end{proposition}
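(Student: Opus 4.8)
The plan is to reduce the maximization in~\eqref{eq:obj2} to the weighted $0$--$1$ classification loss in~\eqref{eq:obj_min} by an exact algebraic rewriting of the per-subject summand, together with the substitution $d(\cdot)=\operatorname{sgn}\{f(\cdot)\}$ that identifies the class $\mathbb{D}$ of $X$-measurable rules with $\{\operatorname{sgn}(f):f\in\mathbb{F}\}$. Throughout I would write $c_i=g_1(x_i)-g_2(x_i)$, so that the objective in~\eqref{eq:obj2} is $n^{-1}\sum_{i=1}^n \mathbbm{1}(d(x_i)=1)\,c_i$.

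First I would establish the pointwise identity
\[
\mathbbm{1}(d(x_i)=1)\,c_i \;=\; \max(c_i,0)\;-\;\mathbbm{1}\{d(x_i)\neq\operatorname{sgn}(c_i)\}\,|c_i|,
\]
valid for every rule $d$ taking values in $\{-1,1\}$ and every real $c_i$, with an arbitrary fixed convention for $\operatorname{sgn}(0)$ (when $c_i=0$ both sides vanish). This is verified by enumerating the sign of $c_i$ against the event $\{d(x_i)=\operatorname{sgn}(c_i)\}$: for $c_i>0$ the rule agrees with $\operatorname{sgn}(c_i)$ iff $d(x_i)=1$, giving $c_i$ on the agreeing branch and $0$ otherwise; for $c_i<0$ it agrees iff $d(x_i)=-1$, giving $0$ on the agreeing branch and $c_i=-|c_i|$ otherwise; in each case the right-hand side reproduces these values. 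Summing over $i$ and noting that $\sum_{i=1}^n\max(c_i,0)$ is a constant independent of $d$, maximizing~\eqref{eq:obj2} over $d\in\mathbb{D}$ is equivalent to minimizing $\sum_{i=1}^n \mathbbm{1}\{d(x_i)\neq\operatorname{sgn}(c_i)\}\,|c_i|$ over $d\in\mathbb{D}$.

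Next I would translate the disagreement event into the margin form of~\eqref{eq:obj_min}. Writing $d(x)=\operatorname{sgn}\{f(x)\}$ with $f\in\mathbb{F}$, the event $\{d(x_i)\neq\operatorname{sgn}(c_i)\}$ coincides with $\{\operatorname{sgn}(c_i)\cdot f(x_i)<0\}$ except at the boundary points $\{f(x_i)=0\}$ and $\{c_i=0\}$; these are harmless, since at a tie $c_i=0$ the weight $|c_i|$ is zero, and at $f(x_i)=0$ one may adopt the convention $\operatorname{sgn}(0)\in\{-1,1\}$, which does not alter the optimizer. Substituting $c_i=g_1(x_i)-g_2(x_i)$ and reinstating the factor $n^{-1}$ recovers exactly~\eqref{eq:obj_min}, read as a weighted classification problem with feature $x_i$, label $\operatorname{sgn}\{g_1(x_i)-g_2(x_i)\}$, and nonnegative weight $|g_1(x_i)-g_2(x_i)|$; hence any minimizer $f^\ast$ of~\eqref{eq:obj_min} yields, through $d^\ast=\operatorname{sgn}(f^\ast)$, a maximizer of~\eqref{eq:obj2}, and vice versa.

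The only non-mechanical point is bookkeeping at the boundaries (ties $c_i=0$ and $f(x_i)=0$) and making explicit that the reduction is an equivalence of optimizers rather than merely of objective values --- which holds because the two objectives differ by the additive constant $\sum_i\max(c_i,0)$ and a positive multiplicative factor. I expect the main (and still minor) obstacle to be stating the identification $\mathbb{D}=\{\operatorname{sgn}(f):f\in\mathbb{F}\}$ cleanly so that ``equivalent'' is unambiguous; no analytic difficulty arises, and the smoothness hypothesis on $f$ plays no role in the equivalence itself, mattering only for the downstream numerical optimization.
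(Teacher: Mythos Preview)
Your proposal is correct and follows essentially the same approach as the paper: both arguments identify $d=\operatorname{sgn}(f)$ and show that the objective in~\eqref{eq:obj2} and the weighted $0$--$1$ loss in~\eqref{eq:obj_min} differ by an additive constant independent of $f$. The paper's proof performs this in two short chained rewritings (first replacing $\mathbbm{1}\{\operatorname{sgn} f=1\}$ by $1-\mathbbm{1}\{f<0\}$ and dropping the constant $\sum_i c_i$, then absorbing the sign of $c_i$ into the indicator and dropping another constant), whereas you compress it into the single identity $\mathbbm{1}(d=1)\,c=\max(c,0)-\mathbbm{1}\{d\neq\operatorname{sgn}(c)\}\,|c|$; your version is slightly more explicit about the tie cases $c_i=0$ and $f(x_i)=0$, which the paper leaves implicit.
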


With Proposition~\ref{prop:obj2}, we have transformed the optimization problem \eqref{eq:obj_max} into a weighted classification problem \eqref{eq:obj_min} where for subject $i$ with features $x_i$, the true label is $\operatorname{sgn}\{g_1(x_i)-g_2(x_i)\}$ and the sample weight is $|g_1(x_i)-g_2(x_i)|$. 
The estimated optimal decision rule by \eqref{eq:obj_min} is then given by $\hat d(x) = \operatorname{sgn}\{\hat f(x)\}$\footnote{If $\hat f(x)=0$, assign a random treatment.}. 
The proof of Proposition~\ref{prop:obj} and Proposition~\ref{prop:obj2} is presented in Appendix~\ref{suppl:proof}.

Algorithm~\ref{algo:code} provides an algorithmic overview of RISE. The inner expectation $E (Y|X,S,A)$ can be modeled as $\hat Y(X,S,A)$ using a twin model separated by the treatment and control groups (i.e., a T-learner as in \cite{kunzel2019metalearners}). 
For a continuous $S$, $G(X,A)= Q_{S|X,A} \{ E (Y|X,S,A) \}$ is estimated via a quantile regression of $\hat Y$ on $X$ but without $S$. 
For a discrete $S$, an estimate of $G(X,A) = \inf_{S} \{ E (Y|X,S,A) \}$ is obtained by finding the minimum among $\{E (Y|X,S=1,A),\ldots,E (Y|X,S=K,A)\}$. 
The estimated decision rule can then be obtained from the weighted classification. 
In our implementation, neural networks are used to fit models in the training data sets. The details on modeling and hyperparameter tuning via cross-validations are given in Appendix~\ref{suppl:tune}. 
A Python package \texttt{rise} based on neural networks is available on GitHub 
(\url{https://github.com/ellenxtan/rise}). 
Note that the model choices are flexible.

\clearpage
\begin{algorithm}
  \caption{RISE (Robust individualized decision learning with sensitive variables)}
  \label{algo:code-rise}
  \hspace*{\algorithmicindent} \textbf{Input} Training data $\mathcal{D}_n = \{Y_i,A_i,X_i,S_i\}_{i=1}^n$ \\
  \hspace*{\algorithmicindent} \textbf{Output} Estimated decision rule $\hat{d}$
  \begin{algorithmic}[1]
    \For{$i=1$ {\bfseries to} $n$} 
    \State $\hat Y_i(x_i,s_i,a_i) \gets$ Model $E(Y |X, S, A=a)$ using $\mathcal{D}_n$ with $a=1$ and $a=-1$, respectively. 
    \If {$S$ is continuous}
    \State $ g_1(x_i) \gets $ Model $Q_{S|X,A} \{ E (Y|X,S,A=a) \}$ via quantile regressions of $\hat Y_i(x_i,s_i,a_i)$ on $x_i$, for $\mathcal{D}_n$ with $a=1$. 
    \State $ g_2(x_i) \gets $ Model $Q_{S|X,A} \{ E (Y|X,S,A=a) \}$ via quantile regressions of $\hat Y_i(x_i,s_i,a_i)$ on $x_i$, for $\mathcal{D}_n$ with $a=-1$. 
    \EndIf
    \If {$S$ is discrete}
    \State $ g_1(x_i) \gets $ Compute $\inf_{s \in \mathbb{S}} \{ \hat Y_i(x_i,s,a_i=1) \}$. 
    \State $ g_2(x_i) \gets $ Compute $\inf_{s \in \mathbb{S}} \{ \hat Y_i(x_i,s,a_i=-1) \}$. 
    \EndIf
    \EndFor
    \State $\hat{d} \gets$ Build a weighted classification model with features $x_i$, label $\operatorname{sgn}\{g_1(x_i)-g_2(x_i)\}$, and sample weight $|g_1(x_i)-g_2(x_i)|$ for $1\leq i \leq n$. 
    \State \textbf{Return} $\hat{d}$
  \end{algorithmic}
\end{algorithm}

\subsection{Extension to Multiple Sensitive Variables}\label{sec:exten}

The extension from $S$ being a single continuous variable to multiple continuous variables is straightforward in  Algorithm~\ref{algo:code}.
For multiple discrete sensitive variables, similar estimation procedure can be conducted as outlined in Section~\ref{sec:diff_g}.  Suppose there are $L$ discrete sensitive variables, i.e., $\mathcal{S} = \{S_1, S_2, \ldots, S_L\}$. 
The inner expectation $E(Y|X,S_1,\ldots,S_L,A)$ can be obtained with a twin model of $Y$ on $X$ and all $\mathcal{S}$ for each treatment level. 
The infimum over $\mathbb{S}$ is obtained by finding the minimum iterating space of possible parameter values for each sensitive variable. 
See Section~\ref{ssec:app} for an example of using multiple discrete sensitive variables. 
We will discuss in Section~\ref{ssec:disc} the challenges and future work related to the scenario with a mixture of continuous and discrete sensitive variables and the identification of vulnerable subjects under these cases.

\section{Numerical Studies}\label{sec:numerical}

In this section, we perform extensive numerical experiments to investigate the merit of robustness of the proposed framework via simulations and three real-data applications.
The results demonstrate that the proposed rules achieve a robust objective with sensitive variables unavailable at the time of decision while maintaining comparable mean outcomes.

For comparison, we consider the naive and mean-optimal approaches described in Section~\ref{sec:assump}, which correspond to different choices of $G(\cdot)$ functions. The naive decision rule that simply disregard information of $S$, denoted as \textbf{Base}, can be formulated in our optimization framework of \eqref{eq:obj_max} by letting $G(X,A) = E (Y|X,A)$. The IDR can be estimated directly by fitting a model of $Y$ on $X$ in each treatment arm. 
The resulting IDR is not sensitive variables-aware and is biased due to confounding, as discussed. 
Another IDR that resembles traditional mean-optimal decision rules, denoted as \textbf{Exp}, can be formulated as $G(X,S,A) = E (Y|X,S,A)$. This can be obtained by training a classification model without $S$, i.e., only using $X$, after obtaining an outcome model for the inner expectation $E(Y|X,S,A)$. Note that this approach is not robust to extreme behaviors in $S$. 
The modeling approaches described in Appendix~\ref{suppl:tune} apply to here. 
We also include the \textit{double robust} \citep{Chernozhukov2018dml} versions of Base and Exp, respectively, by adapting Policytree \citep[PT,][]{sverdrup2020policytree,athey2021policy}, the latest state-of-the-art policy learning method for maximizing the expected values. The two new methods are termed \textbf{PT-Base} and \textbf{PT-Exp}.

We consider the following evaluation metrics. 
\textit{1) Objective:} the quantile objective is estimated and reported for a continuous $S$ and the infimum objective is for a discrete $S$. 
The objective, when $\tau < 0.5$, (here $\tau = 0.25$) represents the value of the ``low performers'' among all possible value of $S$ under a given $d$.
\textit{2) Value:} the value function, or expected reward used by the most existing methods, such as \cite{manski2004statistical,qian2011performance}, is defined as $V(d) = E\{Y(d)\}$. 
It represents the ``average performers''. 
For randomized trials, an unbiased estimator of $V(d)$ is given by $\hat{V}(d) = \{\sum_{i=1}^T Y_i {\mathbbm{1}}(A_i = \hat d(X_i)) / \pi(A_i,X_i) \} / \{\sum_{i=1}^T {\mathbbm{1}}(A_i = \hat d(X_i)) / \pi(A_i,X_i) \}$ \citep{murphy2001marginal}, where $T$ is the sample size of the test data and $\pi(A,X)$ is propensity score.  For observational studies, the value is estimated with $\hat{V}(d) = T^{-1}\sum_{i=1}^T \hat{Y_i}(x_i,s_i,a_i=\hat{d})$.
We report the metrics among all subjects and among the potential vulnerable subgroup, respectively. 
For simulation, we consider training data and testing data with sample sizes of 8,000 and 2,000, respectively. 
For real-data applications, we consider a 80-20 split of the dataset into a training data and a testing data. 
Continuous covariates are standardized before the estimation. 
All results are based on 100 replications.

\subsection{Simulation Studies}\label{ssec:simulation}

\textit{Example 1. } 
Here we provide the detail for the simulation of the motivating example introduced in Section~\ref{sec:intro-rise}. 
The outcome is generated using the following model: 
$Y_i = \mathbbm{1}(X_i>0.5)\{5 + 10\mathbbm{1}(A_i=1) + 22S_i - 24\mathbbm{1}(A_i=1) S_i\} + \mathbbm{1}(X_i\leq0.5)\{11 + 19\mathbbm{1}(A_i=1) + 2S_i - 32\mathbbm{1}(A_i=1) S_i\} + \epsilon_i$, where the covariate $X_i \sim Unif(0,1)$, treatment assignment $A_i \sim Bernoulli(0.5)$, and the noise $\epsilon_i \sim N(0,1)$. For a discrete type $S$, the sensitive variable $S_i \sim Bernoulli(0.5)$. For a continuous type $S$, $S_i$ is generated from a mixture of beta distributions, $Beta(4, 1)$ and $Beta(1, 4)$, with equal mixing proportions.

\textit{Example 2. } 
We generate the outcome $Y$ using the following model: 
$Y_i = \{0.5 + \mathbbm{1}(A_i=1) + \exp(S_i) - 2.5S_i\mathbbm{1}(A_i=1)\}  \{1+X_{i1} -X_{i2} +X_{i3}^2 +\exp(X_{i4})\} + \{1 + 2\mathbbm{1}(A_i=1) + 0.2\exp(S_i) - 3.5S_i\mathbbm{1}(A_i=1)\} \{1+5X_{i1} -2X_{i2} +3X_{i3} +2\exp(X_{i4})\} + \epsilon$, 
where $X_{ij} \sim Unif(0,1), ~j=1,\ldots,6$, $A$ satisfies $\log \{P(A_i = 1 |X_i)/P(A = 0 |X_i)\} = 0.6(-S_i + X_{i1} - X_{i2} + X_{i3} - X_{i4} + X_{i5} - X_{i6})$, and $\epsilon_i \sim N(0,1)$. For a continuous type $S$, $S_i$ is generated from a mixture of beta distributions, $Beta(4,1)$ and $Beta(1,4)$, with equal mixing proportions;  
for a discrete type $S$, we consider a binary $S_i$ that satisfies $\log \{P(S_i = 1 |X_i)/P(S_i = 0 |X_i)\} = -2.5 + 0.8(X_{i1} + X_{i2} + X_{i3} + X_{i4} + X_{i5} + X_{i6})$.

Table~\ref{t:sim_toy_full} summarizes the performance of the proposed IDRs compared to the mean criterion for Example 1 and Example 2. The proposed RISE achieves the largest objectives and improves the value among vulnerable subjects, while maintaining comparative overall values. 
As for the objective, intuitively, the proposed rule is expected to achieve a larger objective than all other methods uniformly in $\mathbb{X}$. 
We also point out that there is no direct relationship between the objective among all subjects versus the objective among vulnerable subjects. For example, using the toy example with setup in Table~\ref{t:toy_setup}, and limiting to subjects with $X \leq 0.5$ only, $S=1$ is vulnerable and is assigned $A=-1$ by the proposed RISE. 
The objective among $S=1$ is 13 but the objective among both $S=0$ and $S=1$ is $12=(11+13)/2$, which is smaller than that among the vulnerable group. 
In other words, by protecting the vulnerable subjects, the proposed rule may lead to an increase in the outcome of the vulnerable group, and the gain may result in a higher outcome than the overall mean outcome. 
PT-Exp tends to show the best improvement in terms of the overall value, as the doubly robust-based estimators tend to reduce variance in value estimation. However, PT-Exp is shown to have minimal benefits for vulnerable subjects. RISE still shows the largest gain in the objective and value among vulnerable subjects among all compared methods.

In the appendix, we consider a continuous $S$ for different quantile criteria $\tau=0.1$ and $0.5$ to test the robustness of RISE. 
Results show that when $\tau$ is small, there is more strength in the proposed method, as the algorithm aims to improve the worst-outcome scenarios. The proposed RISE has the largest gain in objective and value among vulnerable subjects when $\tau$ is 0.1, and has similar performance as the compared approaches when $\tau$ is 0.5. 
We also consider a scenario where $S$ is not involved in the data generation of $Y$, i.e., Assumption~\ref{assump:rise-unconf} is simplified as $\{Y(-1), Y(1)\} \perp A | X$. 
The estimated objective and value function are similar across all compared approaches, which indicates the robustness of RISE. 
Finally, we study the performances of our method when Assumption~\ref{assump:rise-pos} is nearly violated or Assumption~\ref{assump:rise-unconf} is violated. Similar patterns have been observed that the proposed RISE achieves the largest objectives and improves the value among vulnerable subjects, while maintaining comparable overall values. 
The details can be found in Appendix~\ref{suppl:sim}.

\clearpage
\begin{table}[!htb]
\centering
\caption{Simulation results for Example 1 and Example 2. Standard error in parenthesis. }
\label{t:sim_toy_full}
\resizebox{0.99\columnwidth}{!}{
\begin{tabular}{@{}ccccccc@{}}
\toprule
Example & Type of $S$ & IDR & Obj. (all) & Obj. (vulnerable) & Value (all) & Value (vulnerable) \\ \midrule
\multirow{10}{*}{1} & \multirow{5}{*}{Disc.} 
     & Base & 7.03 (0.03) & 7.01 (0.04) & 14.3 (0.05) & 7.92 (0.06) \\
 &   & Exp  & 6.39 (0.03) & 6.39 (0.04) & {14.4} (0.05) & 7.14 (0.06) \\
 &    & {PT-Base}   & {2.66 (0.02)} & {2.65 (0.02)} & {15.4 (0.05)} & {2.58 (0.02)} \\
 &    & {PT-Exp}   & {2.62 (0.02)} & {2.62 (0.02)} & {\textbf{15.5} (0.05)} & {2.55 (0.02)} \\
 &   & RISE  & \textbf{12.0} (0.01) & \textbf{12.0} (0.01) & 13.0 (0.01) & \textbf{14.0} (0.01) \\
 \cmidrule(l){2-7}
 & \multirow{5}{*}{Cont.} 
     & Base & 9.12 (0.03) & 9.14 (0.04) & 14.5 (0.08) & 8.25 (0.11) \\
 &   & Exp  & 8.75 (0.03) & 8.75 (0.04) & {14.6} (0.08) & 7.58 (0.06) \\
 &    & {PT-Base}   & {6.71 (0.03)} & {6.72 (0.03)} & {15.3 (0.05)} & {4.52 (0.02)} \\
 &    & {PT-Exp}   & {6.68 (0.02)} & {6.67 (0.02)} & {\textbf{15.4} (0.05)} & {4.47 (0.02)} \\
 &   & RISE  & \textbf{12.2} (0.02) & \textbf{12.2} (0.03) & 13.0 (0.01) & \textbf{13.7} (0.01) \\ 
 \midrule
\multirow{10}{*}{2} & \multirow{5}{*}{Disc.} 
      & Base & 7.79 (0.02) & 8.66 (0.03) & 19.4 (0.04) & 11.4 (0.06) \\
 &    & Exp   & 9.12 (0.03) & 10.1 (0.03) & \textbf{19.5} (0.04) & 14.4 (0.05) \\
 &    & {PT-Base}   & {7.19 (0.03)} & {7.77 (0.03)} & {19.0 (0.05)} & {9.71 (0.05)} \\
 &    & {PT-Exp}   & {8.30 (0.02)} & {9.03 (0.03)} & {{19.1} (0.04)} & {12.2 (0.05)} \\
 &    & RISE   & \textbf{13.5} (0.01) & \textbf{14.0} (0.01) & 17.4 (0.02) & \textbf{22.1} (0.02) \\
 \cmidrule(l){2-7}
& \multirow{5}{*}{Cont.} 
      & Base & 9.89 (0.02) & 9.87 (0.03) & 17.6 (0.02) & 9.09 (0.04) \\
 &    & Exp   & 11.1 (0.02) & 11.1 (0.02) & {17.8} (0.02) & 12.2 (0.04) \\
 &    & {PT-Base}   & {9.30 (0.02)} & {9.29 (0.03)} & {18.0 (0.03)} & {7.61 (0.04)} \\
 &    & {PT-Exp}   & {9.41 (0.02)} & {9.41 (0.02)} & {\textbf{18.1} (0.02)} & {7.92 (0.04)} \\
 &    & RISE   & \textbf{14.1} (0.01) & \textbf{14.2} (0.02) & 17.0 (0.01) & \textbf{20.3} (0.03) \\ 
 \bottomrule
\end{tabular}
}
\end{table}

\clearpage
\subsection{Real-data Applications}\label{ssec:app} 

We present three real-data examples to showcase the robust performance of RISE. These applications consider either fairness or safety in the context of policy \citep{lalonde1986evaluating} and healthcare \citep{hammer1996trial,seymour2016assessment} where sensitive variables commonly exist.

\subsubsection{Fairness in a Job Training Program }
To illustrate the implication of the proposed method from a fairness perspective, we consider the National Supported Work (NSW) program \citep{lalonde1986evaluating} for improving personalized recommendations of a job training program on increasing incomes. This program intended to provide a 6 to 18-month training for individuals in face of economic and social problems such as former drug addicts and juvenile delinquents. 
The original experimental dataset consists of 185 individuals who received the job training program ($A = 1$) and 260 individuals who did not ($A = -1$). 
The baseline covariates are age, years of schooling, race (1 = African Americans or Hispanics, 0 = others), married (1 = yes, 0 = no), high school diploma (1 = yes, 0 = no), earning in 1974, and earning in 1975. 
The outcome variable is the earning in 1978. 
In the exploratory analysis using causal forest \citep{wager2018estimation}, we observe that age may play an important role in the causal effect of the job training program on the long-term post-market earning. In the following data example we use age as the sensitive variable $S$ and other baseline covariates as $X$. 
The earnings in years 1974, 1975, and 1978 are transformed by taking the logarithm of the earning plus one.

\subsubsection{Improvement of HIV Treatment }
To illustrate the implication of the proposed method from a safety perspective when there is delayed information, we consider the ACTG175 dataset among HIV positive patients \citep{hammer1996trial}. 
The original study considers a total of 2,139 patients who were randomly assigned into four treatment groups. In this data application, we focus on finding the optimal IDRs between two treatments: zidovudine combined with didanosine ($A=-1$) and zidovudine combined with zalcitabine ($A=1$). The total number of patients receiving these two treatments is 1,046. 
The baseline covariates we consider are 
age, weight, CD4 T-cell amount at baseline, hemophilia (1 = yes, 0 = no), homosexual activity (1 = yes, 0 = no), Karnofsky score, history of intravenous drug use (1 = yes, 0 = no), gender (1 = male, 0 = female), CD8 T-cell amount at baseline, race (1 = non-Caucasian, 0 = Caucasian), number of days of previously received antiretroviral therapy, 
use of zidovudine in the 30 days prior to treatment initiation (1 = yes, 0 = no), and
symptomatic indicator (1 = symptomatic, 0 = asymptomatic). 
The outcome variable is the CD4 T-cell amount at $96\pm5$ weeks from the baseline. 
We consider CD8 T-cell amount at baseline as the sensitive variable. 
The response of CD8 T-cell among HIV positive patients has not been fully understood \citep{boppana2018understanding}. 
Clinically, it is plausible that only CD4 is measured in clinical visits where treatments are based on, hence CD8 might not be measured and not used in decision making. 
As our exploratory analysis using causal forest shows, CD8 T-cell amount may play an important part in the treatment effect of the outcome.

\subsubsection{Safe Resuscitation for Patients with Sepsis }

For this application, we apply the proposed method to treating sepsis, a life-threatening disease. 
This application intends to provide an example to apply our method with multiple categorical sensitive variables in the scenario where there is missing yet important information at the time of decision making. 
We apply the proposed method to a sepsis study from the University of Pittsburgh Medical Center (UPMC). 
The original study cohort includes 30,687 patients with Sepsis-3 \citep{seymour2016assessment} within 6 hours of hospital arrival from 14 UPMC hospitals between 2013 and 2017. 
For our data analysis, we consider $X$ to be baseline patient characteristics 4 hours before sepsis onset, which includes patient demographics of age, gender (1 = male, 0 = female), race (1 = Caucasian, 0 = others), and weight, and vital signs of usage of mechanical ventilation (1 = yes, 0 = no), respiratory rate, temperature, intravenous fluids (1 = yes, 0 = no), Glasgow Coma Scale score, platelets, blood urea nitrogen, white blood cell counts, glucose, creatinine. 
We consider two sensitive variables, lactate and Sequential Organ Failure Assessment (SOFA) score 4 hours before sepsis onset. 
Lactate and SOFA score have been two important indicators of sepsis severity \citep{howell2007occult, krishna2009evaluation,shankar2016developing,singh2022acute}. 
Different from the baseline patient demographics or common vital signs that are typically obtained at the admission of patients, SOFA score combines performance of several organ systems in the body \citep{seymour2016assessment}, which requires additional calculation and cannot be obtained directly. Lactate labs measures the level of lactic acid in the blood \citep{andersen2013etiology} and are less common in routine examination, which could be delayed in ordering. 
Hence, their measurements are obtained retrospectively after treatment decisions have been made and are not available at times of decision. 
We dichotomize lactate level at clinically meaningful value of 2 mmol/L \citep{shankar2016developing}, and SOFA score at value of 6 for analysis \citep{vincent1996sofa,ferreira2001serial}. 
The treatment option is whether the patient took any vasopressors during the first 24 hours after sepsis onset. The outcome is patient survival at day 90. The analysis cohort contains 6,539 patients in total. 
We are interested in making decision about whether to treat patients with vasopressors in the first 24 hours after sepsis onset given the measurements of lactate and SOFA are not available at the time of decision making. 
Additional rationale and background on this example is provided in Appendix~\ref{suppl:real}.

\subsubsection{Results of Real-data Applications } 
Table~\ref{t:res_data} presents the performance of various IDRs on the three applications. As expected, RISE has the largest objective as well as value among vulnerable subjects. The patterns are similar to that in the synthetic experiments in Section~\ref{ssec:simulation}. 
In applications to the job training data and the sepsis study, results show that RISE has a larger value among all subjects than other IDRs. This is possible when there are more gains in the vulnerable subjects than other subjects, which further demonstrate the superiority of the proposed approach in improving worst-case outcomes caused by sensitive variables. 
We provide visualizations by Shapley additive explanations (SHAP) \citep{lundberg2017unified} for RISE and Exp, respectively, in Appendix~\ref{suppl:real} about feature importance in the final classification models to help interpret important covariates in making the decisions. 
The SHAP approach provides united values to describe the correlation between each feature and the predicted decision rule, respectively \citep{lundberg2017unified}. 
Overall, the direction of correlations is similar for RISE and Exp, but their
ranking of feature importance may be different.

\begin{table}[!htb]
\centering
\caption{Estimated objective and value of different IDRs for the three data applications. Standard error in parenthesis. The outcome of each study is italicized. }
\label{t:res_data}
\resizebox{0.99\columnwidth}{!}{
\begin{tabular}{@{}cccccc@{}}
\toprule
Dataset & IDR & Obj. (all) & Obj. (vulnerable) & Value (all) & Value (vulnerable) \\ \midrule
\multirow{5}{*}{\begin{tabular}[c]{@{}c@{}}NSW\\ \textit{log(income+1)}\end{tabular}} 
 & Base  & 5.26 (0.04) & 5.28 (0.05) & 6.32 (0.05) & 6.33 (0.07) \\
 & Exp   & 5.22 (0.04) & 5.24 (0.05) & 6.37 (0.05) & 6.37 (0.07) \\
 & {PT-Base}   & {4.97 (0.04)} & {5.08 (0.06)} & {6.40 (0.03)} & {6.38 (0.05)} \\
 & {PT-Exp}   & {5.03 (0.04)} & {5.11 (0.05)} & {\textbf{6.43} (0.03)} & {6.40 (0.05)} \\
 & RISE  & \textbf{5.43} (0.04) & \textbf{5.44} (0.04) & {6.42} (0.04) & \textbf{6.42} (0.06) \\ 
 \midrule
\multirow{5}{*}{\begin{tabular}[c]{@{}c@{}}ACTG175\\ \textit{CD4 T-cell amount}\end{tabular}} 
 & Base  & 336.9 (1.65) & 338.1 (2.23) & 353.5 (1.86) & 357.5 (2.24) \\ 
 & Exp   & 337.5 (1.65) & 338.9 (1.80) & {355.9} (1.95) & 359.1 (2.21) \\ 
  & {PT-Base}   & {299.7 (1.01)} & {299.5 (1.91)} & {356.9 (1.72)} & {350.7 (2.54)} \\
 & {PT-Exp}   & {300.1 (0.99)} & {299.9 (1.83)} & {\textbf{357.1} (1.55)} & {352.7 (2.61)} \\
 & RISE  & \textbf{351.5} (1.67) & \textbf{351.2} (1.80) & 351.8 (1.88) & \textbf{363.1} (2.19) \\
 \midrule
\multirow{5}{*}{\begin{tabular}[c]{@{}c@{}}Sepsis\\ \textit{survival rate}\end{tabular}} 
 & Base  & 0.803 (0.001) & 0.822 (0.001) & 0.965 (0.001) & 0.905 (0.002) \\ 
 & Exp   & 0.803 (0.001) & 0.822 (0.002) & 0.966 (0.001) & 0.908 (0.002) \\ 
& {PT-Base}   & {0.758 (0.001)} & {0.771 (0.002)} & {0.981 (0.001)} & {0.848 (0.003)} \\
 & {PT-Exp}   & {0.758 (0.001)} & {0.772 (0.002)} & {\textbf{0.984} (0.001)} & {0.875 (0.003)} \\
 & RISE  & \textbf{0.836} (0.001) & \textbf{0.833} (0.001) & {0.972} (0.001) & \textbf{0.923} (0.002) \\ 
 \bottomrule
  \vspace{0.1cm}
\end{tabular}
}
\end{table}

\section{Discussion and Future Work} \label{ssec:disc}

We have proposed RISE, a robust decision learning framework with a novel quantile- or infimum-optimal treatment objective intended to improve the worst-case scenarios of individuals when decisions with uncertainty need to be made, but with sensitive yet important information missing. 
Our approach can be applied to a broad range of applications, including but not limited to policy, education, healthcare, etc.   
For a mixture of continuous and discrete sensitive variables, the estimated rule can be obtained by first taking the infimum over the discrete ones as in Section~\ref{sec:exten}, then obtaining the quantile over the continuous ones. However, challenges remain in finding the vulnerable subjects described in Section~\ref{ssec:vul} under these settings as it may be computationally difficult to find a vulnerable set of $S$ when it is multi-dimensional. 
Another future work includes the extension of the current binary treatment option to a multi-treatment option. 
It is also worth mentioning that our work can be naturally extended to the scenario where there exist unmeasured confounders. As long as the conditional average outcome given observed covariates can be identified (via instrumental variables such as \cite{wang2018bounded} or negative control variables such as \cite{qi2021proximal}), our method can be applied.

\chapter{Summary and Future Work}

This dissertation is motivated by challenges in causal inference under practical data restrictions. 
Traditional randomized clinical trials typically require long years of data collection, which leads to loss of follow-up, poor compliance, or other issues and subsequently affects the estimation of treatment effects. 
In the more recent neoadjuvant trials, on the other hand, the efficacy of a treatment can be estimated early with an intermediate post-treatment response. However, the clinical implication of this intermediate post-treatment response has not yet been understood. 
This idea, along with real data from a neoadjuvant clinical trial, has motivated the development of methods in Chapter~\ref{chap:ps}. 
In the modern context, new challenges arise with growing concerns such as data privacy and operational feasibility in distributed research networks, and the timeliness and fairness of individualized decision rules. 
Driven by these concerns, we develop the privacy-protecting method for improving the estimation of conditional average treatment effects in Chapter~\ref{chap:hetero} and the fairness-aware decision learning framework in Chapter~\ref{chap:itr} with board applications, including but not limited to politics, education, healthcare, etc. 
The three proposed methods in this dissertation respectively address important challenges in causal inference, which includes: identification of principal stratum treatment effects, enhancement of treatment effect estimation via heterogeneous data integration, and derivation of robust individualized decision rules. 
Each of the methods can be further improved and extended along in their own framework and settings as have been discussed in each of the chapters.

One promising direction for future research is to generalize the toolbox of privacy-protecting analytic and data-sharing methods and to construct a unified framework that is applicable to a broader range of problems pertaining to modern distributed data networks. 
It is of great interest to borrow the strengths of each of the proposed methods, Chapter~\ref{chap:hetero} in particular, pairing with new methodologies developed by others to develop, test, and distribute open-source statistical software packages, maximizing value of methodology research to practical applications.

Driven by the recent initiatives of collaboratory distributed research networks, another promising direction and natural extension of current work, Chapter~\ref{chap:itr} in particular, is to develop a generalizable recommendation system for treatment that is robust to population heterogeneity across multiple sites. 
A more robust treatment recommendation system that jointly take into account the population heterogeneity due to observed and/or unobserved confounding can enhance treatment gain from across all sites, hence more general and widely applicable.

Last but not least, it is of great interest to investigate the unconfoundedness assumption that is typically adopted in causal inference for observational studies, as mentioned in Chapter~\ref{chap:hetero} and Chapter~\ref{chap:itr}. 
Unconfoundedness is a strong and untestable assumption for observational studies where unmeasured confounding could exist. 
It is therefore important to stress how this assumption breaks down when there is unobserved confounding. 
For example, little has been understood about how bad things could get and when do things cancel out in cases of assumption violations. 
We would like to address these important issues by leveraging the knowledge accumulated through the development of this dissertation.



\appendix
\chapter{for Chapter 2}

\section{Estimation of \texorpdfstring{$\Pr \{S_i(0) = 1|X_i=x\}$}{Pr\{Si(0)=1|Xi=x\}} and \texorpdfstring{$\Pr \{S_i(1) = 1|X_i=x\}$}{Pr\{Si(1)=1|Xi=x\}}  } \label{supplA}

We use the maximum likelihood approach to estimate $\Pr \{S_i(0) = 0|X_i=x\}$, $\Pr \{S_i(0) = 1|X_i=x\}$ and $\Pr \{S_i(1) = 1|X_i=x\}$. Let 
\begin{equation*}
    E_{jkx} = \{ i: S_i(0)=j, S_i(1)=k|X_i=x \},~~ j,k=0,1, x \in \Gamma
\end{equation*}
be the principal stratum under each category $X=x$. Because of the monotonicity assumption, $E_{10x}$ is empty. Let
\begin{equation*}
    p_{jkx} = \Pr\{E_{jkx}\} = \Pr\{S_i(0)=j, S_i(1)=k|X_i=x\},~~ j,k=0,1, x \in \Gamma
\end{equation*}
Therefore, ${p}_{00x}+p_{01x}+{p}_{11x}=1$ for all $x\in\Gamma$. For each $x$, $\Pr\{E_{jkx}\}$ can be estimated from the observed data $\{Z_i,X_i,S_i(Z_i),i=1,2,\ldots,n\}$ via maximum likelihood.  
Let $N_{zsx}$ be the total number of subjects with $Z=z, S(Z)=s$ and baseline category $x$ with $\sum_{Z;S=0,1;X}N_{zsx}=n$. Then the likelihood function for $(p_{00x}, p_{01x}, p_{11x})$ is given by
\begin{align*}
    L&(p_{00x}, p_{01x}, p_{11x}|N_{00x},N_{01x},N_{10x},N_{11x}) 
    \propto f(N_{zsx}) \\
    & \propto \Pr\{S(0)=0|X=x\}^{N_{00x}} \cdot \Pr\{S(0)=1|X=x\}^{N_{01x}} \\
    & \quad \cdot \Pr\{S(1)=0|X=x\}^{N_{10x}} \cdot \Pr\{S(1)=1|X=x\}^{N_{11x}} \\
    &= (p_{00x}+p_{01x})^{N_{00x}} \cdot p_{11x}^{N_{01x}} \cdot p_{00x}^{N_{10x}} \cdot (p_{01x}+p_{11x})^{N_{11x}} 
~~\mbox{(by monotonicity assumption)}\\
    &= (1-p_{11x})^{N_{00x}} \cdot p_{11x}^{N_{01x}} \cdot p_{00x}^{N_{10x}} \cdot (1-p_{00x})^{N_{11x}} \\
    &= (1-p_{11x})^{N_{00x}} \cdot p_{11x}^{N_{01x}} \cdot (1-p_{+1x})^{N_{10x}} \cdot p_{+1x}^{N_{11x}}
\end{align*}

1) When $N_{00x} \cdot N_{11x} \geq N_{01x} \cdot N_{10x}$, the resulting MLEs for $(p_{00x}, p_{01x}, p_{11x})$ are given by
\begin{align*}
    &\widehat{p}_{00x} = \widehat{\Pr}\{S_i(0) = 0, S_i(1) = 0|X_i = x\} = 1 - \widehat{p}_{+1x} \\
    &\quad ~~ = \frac{N_{10x}}{N_{10x}+N_{11x}} 
    = \frac{\sum_i \mathbbm{1}(Z_i = 1, S_i(1) = 0, X_i = x)}{\sum_i \mathbbm{1}(Z_i = 1,X_i = x)}
\end{align*}
\begin{align*}
    &\widehat{p}_{11x} = \widehat{Pr}\{S_i(0) = 1, S_i(1) = 1|X_i = x\} \\
    &\quad ~~ = \frac{N_{01x}}{N_{00x}+N_{01x}}  
    = \frac{\sum_i \mathbbm{1}(Z_i = 0, S_i(0) = 1, X_i = x)}{\sum_i \mathbbm{1}(Z_i = 0,X_i = x)}
\end{align*}
\begin{align*}
    \widehat{p}_{01x} = \widehat{Pr}\{S_i(0) = 0, S_i(1) = 1|X_i = x\} = 1 - \widehat{p}_{00x} - \widehat{p}_{11x}
\end{align*}

Obviously for each $x\in \Gamma$, $\widehat{p}_{00x}$ is the proportion of non-respondents in the treatment arm with $X=x$; 
$\widehat{p}_{11x}$ is the proportion of respondents in the control arm with $X=x$.

2) When $N_{00x} \cdot N_{11x} < N_{01x} \cdot N_{10x}$, $\widehat{p}_{11x}=\widehat{p}_{+1x}$. The likelihood function is given by
\begingroup
\allowdisplaybreaks
\begin{align*}
    &L(p_{00x}, p_{01x}, p_{11x}|N_{00x},N_{01x},N_{10x},N_{11x}) \\
    &= (1-p_{11x})^{N_{00x}} \cdot p_{11x}^{N_{01x}} \cdot (1-p_{11x})^{N_{10x}} \cdot p_{11x}^{N_{11x}} \\
    &= (1-p_{11x})^{N_{00x}+N_{10x}} \cdot p_{11x}^{N_{01x}+N_{11x}}
\end{align*}
\endgroup

The resulting MLEs for $(p_{00x}, p_{01x}, p_{11x})$ are given by
\begingroup
\allowdisplaybreaks
\begin{align*}
    &\widehat{p}_{01x} = \widehat{Pr}\{S_i(0) = 0, S_i(1) = 1|X_i = x\} = 0 \\
    &\widehat{p}_{00x} = \widehat{\Pr}\{S_i(0) = 0, S_i(1) = 0|X_i = x\} 
    = \frac{N_{+0x}}{N_{++x}}
    = \frac{\sum_i \mathbbm{1}(S_i = 0, X_i = x)}{\sum_i \mathbbm{1}(X_i = x)}  \\
    &\widehat{p}_{11x} = \widehat{Pr}\{S_i(0) = 1, S_i(1) = 1|X_i = x\} 
    = \frac{N_{+1x}}{N_{++x}}
    = \frac{\sum_i \mathbbm{1}(S_i = 1, X_i = x)}{\sum_i \mathbbm{1}(X_i = x)}
\end{align*}
\endgroup

Then $\widehat{p}_{00x}$ is the proportion of non-respondents among all subjects with $X=x$; $\widehat{p}_{11x}$ is the proportion of respondents among all subjects with $X=x$.

\section{Proofs of Consistency of Model Parameters and Causal Estimands} \label{supplB}

Here we show our estimator $\widehat{\bbeta}$ is a consistent estimator for $\bbeta$. We first show that $\widehat{\bbeta}$ can be considered as an extremum estimator as defined by \citet{hayashi2000econometrics}. Then we prove that the conditions set forth by \citet{hayashi2000econometrics} for consistency of an extremum estimator are satisfied by our estimator. Then by Slutsky's theorem, the causal estimand $\widehat{\theta}$ is a consistent estimator for $\theta$.

\begin{definition}[Extremum Estimator]
\textit{An estimator $\widehat{\eta}$ is an extremum estimator if there is a function $Q_n(\eta)$ such that}
\citep{hayashi2000econometrics}
\begin{align*}
    \widehat{\eta} = \operatorname*{\arg\,max}_{\eta} Q_n(\eta); ~~\eta \in H .
\end{align*}
\end{definition}

One example of an extremum estimator is the maximum likelihood estimator where
\begin{align*}
    Q_n(\eta) = \prod_{i=1}^n f(x_i|\eta).
\end{align*}
Here we minimize the objective function,
\begin{align*}
    Q_n(\bbeta) &= \sum_{x=0}^K Q_n^{(x)}(\bbeta) \\
    &= \sum_{x=0}^K \{\widehat{G}_L(x) - \sum_{y=0}^1 G_M(x,y;\bbeta) \cdot \widehat{G}_R(x,y)\}^2; \, x \in \Gamma
\end{align*}
which is equivalent to maximizing $-Q_n(\bbeta)$. Therefore $\widehat{\bbeta}$ is an extremum estimator.

Let
\begin{align*}
    Q_0(\bbeta) &= \sum_{x=0}^K Q_0^{(x)}(\bbeta); \, x \in \Gamma
\end{align*}
where $Q_0^{(x)}(\bbeta)= \{{G}_L(x) - \displaystyle{\sum_{y=0}^1} G_M(x,y;\bbeta) \cdot {G}_R(x,y)\}^2$. We present sufficient conditions for the existence of a unique local minimizer of $Q_0(\bbeta)$ in Lemma~\ref{lemma2}. 

\begin{lemma}\label{lemma2}
There exists a unique local minimizer $\bbeta_0$ for $Q_0(\bbeta)$ if:
\begin{enumerate}
    \item[(a)] $Q_0^{(x)}(\bbeta_0)=0$, $\forall x\in \Gamma=\{0,1,2,\ldots,K\}$.
    \item[(b)] $\rank \bigg |\displaystyle{\frac{\partial \tilde{Q_0}(\bbeta)}{\partial \bbeta}} \bigg |_{\bbeta=\bbeta_0} \geq \dim(\bbeta)$ where $\tilde{Q_0}(\bbeta)=\{Q_0^{(0)}(\bbeta), Q_0^{(1)}(\bbeta), \dots, Q_0^{(K)}(\bbeta)\}^T$.
\end{enumerate}
\end{lemma}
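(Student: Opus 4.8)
The plan is to read the lemma as an \emph{identification} statement: condition~(a) already posits a point $\bbeta_0$ at which every $Q_0^{(x)}$ vanishes, so the real content is that such a $\bbeta_0$ is isolated --- there is a neighbourhood on which it is the \emph{unique} minimiser of $Q_0$. The starting observation is the sum-of-squares structure. Writing the residuals $r_x(\bbeta) = G_L(x) - \sum_{y=0}^1 G_M(x,y;\bbeta)\,G_R(x,y)$ and $r(\bbeta) = \{r_0(\bbeta),\ldots,r_K(\bbeta)\}^T$, we have $Q_0(\bbeta) = \|r(\bbeta)\|_2^2 \ge 0$, and condition~(a) says exactly $r(\bbeta_0) = 0$, so $Q_0(\bbeta_0) = 0$ is the global minimum value of a nonnegative function. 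Since $G_M(x,y;\bbeta) = \logit^{-1}(\beta_0 + \beta_1 y + \beta_2 x)$ is smooth in $\bbeta$ and $G_L,G_R$ are fixed constants, $r$ (hence $Q_0$) is smooth.

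The key step is to convert the rank hypothesis into local injectivity of $r$. I would interpret condition~(b) as the assertion that the Jacobian $J := \partial r(\bbeta)/\partial\bbeta\big|_{\bbeta=\bbeta_0} \in \mathbb{R}^{(K+1)\times 3}$ has full column rank $3 = \dim(\bbeta)$; this is precisely the non-degeneracy already invoked in Lemma~\ref{lemma_idf}, and it is the residual map $r$, not the squared-residual vector $\tilde Q_0$ (whose Jacobian vanishes at $\bbeta_0$ because each $r_x(\bbeta_0)=0$), that must carry it. Full column rank of $J$ is equivalent to $J^T J$ being positive definite. Taylor-expanding $r$ about $\bbeta_0$ gives $r(\bbeta) = J(\bbeta-\bbeta_0) + o(\|\bbeta-\bbeta_0\|)$, whence
\begin{align*}
Q_0(\bbeta) = \|r(\bbeta)\|_2^2 = (\bbeta-\bbeta_0)^T J^T J (\bbeta-\bbeta_0) + o(\|\bbeta-\bbeta_0\|^2) \ge \lambda_{\min}(J^T J)\,\|\bbeta-\bbeta_0\|^2 + o(\|\bbeta-\bbeta_0\|^2).
\end{align*}
Since $\lambda_{\min}(J^T J) > 0$, there is $\delta > 0$ with $Q_0(\bbeta) \ge \tfrac12\lambda_{\min}(J^T J)\|\bbeta-\bbeta_0\|^2 > 0$ whenever $0 < \|\bbeta-\bbeta_0\| < \delta$. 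Combined with $Q_0(\bbeta_0) = 0$, this exhibits $\bbeta_0$ as a strict local minimiser and the unique minimiser of $Q_0$ on the ball of radius $\delta$ --- exactly the claimed local uniqueness.

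I expect the main obstacle to be interpretive rather than technical: pinning down what ``$\rank|\partial\tilde Q_0(\bbeta)/\partial\bbeta| \ge \dim(\bbeta)$'' is meant to assert, since, taken literally with $\tilde Q_0$ the vector of squared residuals, that Jacobian is identically zero at $\bbeta_0$; once the hypothesis is re-expressed through the residual map as above, the quadratic-growth argument is routine. Finally, I would record how the lemma feeds into Theorem~\ref{Theorem1}: together with $\widehat G_L(x) \overset{p}{\to} G_L(x)$ and $\widehat G_R(x,y) \overset{p}{\to} G_R(x,y)$ (condition~(c)), which force $Q_n \overset{p}{\to} Q_0$ uniformly on compact sets, the local identification of $\bbeta_0$ lets one invoke the extremum-estimator consistency result of \citet{hayashi2000econometrics} to obtain $\widehat\bbeta \overset{p}{\to} \bbeta_0$, after which Slutsky's theorem applied to the continuous mapping $\bbeta \mapsto \theta$ assembled from the estimators of Section~\ref{s:Estimation} yields $\widehat\theta \overset{p}{\to} \theta$.
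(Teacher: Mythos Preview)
Your proposal is correct, and indeed more careful on one point than the paper itself. The paper's own proof of Lemma~\ref{lemma2} proceeds differently: after noting (as you do) that $Q_0(\bbeta)\ge 0$ and $Q_0(\bbeta_0)=0$, it invokes the Implicit Function Theorem, regarding $\bbeta_0$ as an implicit function of the data parameters $\{G_L(x),G_R(x,y)\}$, to conclude that the zero of the residual system is locally unique. Your route---Taylor-expanding the residual map $r$ and extracting quadratic growth of $Q_0$ from positive definiteness of $J^TJ$---is a direct analytic argument that avoids the implicit-function machinery and makes the strict-local-minimiser property explicit; the paper's IFT argument, by contrast, is slightly more in the spirit of Lemma~\ref{lemma_idf} and emphasises how $\bbeta_0$ moves smoothly with the underlying probabilities, which is conceptually useful downstream but not strictly needed for the lemma as stated.

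You are also right to flag the interpretive issue with condition~(b): read literally, $\partial\tilde{Q}_0(\bbeta)/\partial\bbeta$ has rows $2r_x(\bbeta_0)\,\partial r_x/\partial\bbeta$, all of which vanish at $\bbeta_0$, so the rank condition cannot hold there. The paper silently resolves this by appealing to the IFT, which requires full rank of the residual Jacobian $\partial r/\partial\bbeta$ (the same object called $\partial H/\partial\bbeta$ in Lemma~\ref{lemma_idf}), so the intended hypothesis is exactly the one you work with. Your closing remarks about feeding the lemma into Theorem~\ref{Theorem1} via uniform convergence, the extremum-estimator result of \citet{hayashi2000econometrics}, and Slutsky's theorem match the paper's Appendix~\ref{supplB} argument precisely.
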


\begin{proof}
From (a) we have that $\bbeta_0$ minimizes $Q_0(\bbeta)$ since $Q_0(\bbeta) \geq 0$, $\forall \bbeta$ and $Q_0(\bbeta_0)=0$.

Then from (b) and the Implicit Function Theorem, there exists a unique function $g\{\boldsymbol{G_L(x)}$, $\boldsymbol{G_R(x, y)}\}$ such that $g\{\boldsymbol{G_L(x)}$, $\boldsymbol{G_R(x, y)}\}=\bbeta_0$, in the neighborhood of $\{\boldsymbol{G_L(x)}$, $\boldsymbol{G_R(x, y)}\}$ where $\{\boldsymbol{G_L(x)}$, $\boldsymbol{G_R(x, y)}\} = [G_L(x), G_R(x, y); x \in \{0,1, \ldots, K\}, y=0, 1]$. 
Thus, $\bbeta_0$ is a unique local minimizer for $Q_0(\bbeta)$. 
\end{proof}

The proof of Theorem \ref{Theorem} is given as below.
\begin{proof}
From Proposition 7.1 in \citep{hayashi2000econometrics}: an extremum estimator $\widehat{\eta}$ is a consistent estimator for $\eta$ if there is a function $Q_0(\eta)$ satisfying the following two conditions:
\begin{enumerate}
    \item[(I)] Identification: $Q_0(\eta)$ is uniquely maximized on $H$ at $\eta_0 \in H$.
    \item[(II)] Uniform convergence: $Q_n(\cdot)$ converges uniformly in probability to $Q_0(\cdot)$.
\end{enumerate}

The condition (I) is satisfied according to Lemma \ref{lemma2}. To show that the condition (II) is satisfied here, let
\begingroup
\allowdisplaybreaks
\begin{align*}
    \begin{split}
        Q_n(\bbeta) &=\sum_{x=0}^K Q_n^{(x)}(\bbeta)^2 \\
        &= \sum_{x=0}^K \{\widehat{G}_L(x) - \sum_{y=0}^1 G_M(x,y;\bbeta) \cdot \widehat{G}_R(x,y)\}^2
    \end{split} \\
    \begin{split}
        Q_0(\bbeta) &=\sum_{x=0}^K Q_0^{(x)}(\bbeta)^2 \\
        &= \sum_{x=0}^K \{{G}_L(x) - \sum_{y=0}^1 G_M(x,y;\bbeta) \cdot {G}_R(x,y)\}^2. 
    \end{split}
\end{align*}
\endgroup

From
\begin{align*}
    |Q_n(\bbeta) - Q_0(\bbeta)|
    &=|\sum_{x=0}^K Q_n^{(x)}(\bbeta)^2 - \sum_{x=0}^K Q_0^{(x)}(\bbeta)^2| \\
    &\leq \sum_{x=0}^K |Q_n^{(x)}(\bbeta)^2 - Q_0^{(x)}(\bbeta)^2| \\
    &= \sum_{x=0}^K |Q_n^{(x)}(\bbeta) - Q_0^{(x)}(\bbeta)| \cdot |Q_n^{(x)}(\bbeta) + Q_0^{(x)}(\bbeta)| \\
    &\leq \sum_{x=0}^K 2 \cdot |Q_n^{(x)}(\bbeta) - Q_0^{(x)}(\bbeta)|, \quad x \in \Gamma
\end{align*}
because $0 \leq |Q_n^{(x)}(\bbeta)| \leq 1$ and $0 \leq |Q_0^{(x)}(\bbeta)| \leq 1$, each of which is a difference of two probability estimates. 

Therefore,
\begin{align}
    &|Q_n(\bbeta) - Q_0(\bbeta)| \nonumber \\
    &\leq \sum_{x=0}^K 2 \cdot \big \{ |\widehat{G}_L(x)-G_L(x)| + \sum_{y=0}^1 G_M(x, y; \bbeta) \cdot |\widehat{G}_R(x,y)-G_R(x,y)| \big \} \nonumber \\
    &\leq \sum_{x=0}^K 2 \cdot \big \{ |\widehat{G}_L(x)-G_L(x)| + \sum_{y=0}^1 |\widehat{G}_R(x,y)-G_R(x,y)| \big \} \label{inequal}
\end{align}
because $G_M(x, y; \bbeta)$ is a probability bounded between 0 and 1.

Since $\widehat{G}_L(x)$ and  $\widehat{G}_R(x, y)$ are either sample proportions or their ratios, 
\begin{gather*}
    \widehat{G}_L(x) \overset{p}{\to} G_L(x), \text{ as } n \to \infty \\
    \widehat{G}_R(x, y) \overset{p}{\to} G_R(x, y), \text{ as } n \to \infty
\end{gather*}

As $\widehat{G}_L(x)$ and  $\widehat{G}_R(x, y)$ do not involve $\bbeta$, from (\ref{inequal}) we have 
\begin{gather*}
    Q_n(\bbeta) \overset{p}{\Longrightarrow} Q_0(\bbeta), \text{ as } n \to \infty
\end{gather*}
where $\overset{p}{\Longrightarrow}$ denotes uniform convergence in probability.  This confirms condition  (II) and completes the proof of $\widehat{\bbeta}\overset{p}{\to} \bbeta$ as $n \to \infty$.

Because the causal estimate $\widehat{\theta}$ is a continuously differentiable function of $\widehat{\bbeta}$ and relevant sample proportions, by Slutsky's theorem,  $\widehat{\theta} \overset{p}{\to} \theta$  as $n \to \infty$. 
\end{proof}

\section{Calculation of True Principal Stratum Causal Effects} \label{supplC}

For the simulated data, the true average causal effect for principal stratum $S_i(1) = 1$ can be calculated by
\begin{align*}
    \mathbb{E}\{Y_i(1)-Y_i(0)|S_i(1) = 1\} &= \mathbb{E}\{Y_i(1)=1|S_i(1) = 1\} - \mathbb{E}\{Y_i(0)=1|S_i(1) = 1\} \\
    &= \frac{\Pr \{Y_i(1)=1, S_i(1)=1\} - \Pr \{ Y_i(0)=1, S_i(1)=1 \}}{\Pr \{ S_i(1)=1 \}}
\end{align*}
where
\begin{align*}
    \Pr \{ S_i(1)=1 \} &= \sum_x \Big \{ \Pr \{S_i(0) = 1|X_i = x\} \cdot \Pr \{X_i = x\} \\ 
        & + \sum_{y} \big [ \Pr \{X_i=x\} \cdot \Pr \{ S_i(0)=0|X_i=x \} \\
        & \cdot \Pr \{ Y_i(0)=y|S_i(0)=0, X_i=x \} \\
        & \cdot \Pr \{ S_i(1)=1|S_i(0)=0, Y_i(0)=y, X_i=x \} \big ] \Big \}
\end{align*}
\begin{align*}
    \Pr \{ Y_i(0)=1, S_i(1)=1 \} &= \sum_x \Big [ \Pr \{X_i=x\} \cdot \Pr \{ S_i(0)=1|X_i=x \} \\
        & \cdot \Pr \{ Y_i(0)=1|S_i(0)=1, X_i=x \} \\
        & + \Pr \{ X_i=x \} \cdot \Pr \{ S_i(0)=0|X_i=x \} \\
        & \cdot \Pr \{ Y_i(0)=1|S_i(0)=0, X_i=x \} \\
        & \cdot \Pr \{ S_i(1)=1|S_i(0)=0, Y_i(0)=1, X_i=x \} \Big ]
\end{align*}
\begin{align*}
    \Pr \{ Y_i(1)=1, S_i(1)=1 \} &= \sum_x \sum_{y} \Big [ \Pr \{X_i=x\} \cdot \Pr \{ S_i(0)=1|X_i=x \} \\
        & \cdot \Pr \{ Y_i(0)=y|S_i(0)=1, X_i=x \} \\
        & \cdot \Pr \{ Y_i(1)=1|Y_i(0)=y, S_i(0)=1, X_i=x \} \\
        & + \Pr\{ X_i=x \} \cdot \Pr \{ S_i(0)=0|X_i=x \} \\
        & \cdot \Pr \{ Y_i(0)=y|S_i(0)=0, X_i=x \} \\
        & \cdot \Pr \{ S_i(1)=1|S_i(0)=0, Y_i(0)=y, X_i=x \} \\
        & \cdot \Pr \{ Y_i(1)=1|S_i(0)=0, S_i(1)=1, Y_i(0)=y, X_i=x \} \Big ]
\end{align*}

\chapter{for Chapter 3}

\section{Related Topics and Distinctions}
\label{suppl-related}

In Section~\ref{sec:related}, we focused on the literature review of model averaging for ease of exposition, because the most innovated part of our method is motivated directly from this class of work. 
Here we clarify the main differences among model averaging, meta-analysis, federated learning, as well as super learner. 

Model averaging: a convex averaging of models via model-specific weights \citep{raftery1997bayesian,yang2001adaptive, dai2011greedy,yao2018using,dai2018bayesian}. The extension of the weights from scalars to functions provides the best motivation for our approach. 

Meta-analysis: classic in the way that it describes the site-level heterogeneity using modeling assumptions \citep{whitehead2002meta,sutton2000methods}, rather than a more data-driven approach such as tree models. It can be either frequentist or Bayesian, the latter of which tends to be more useful under limited sample sizes. However, its main interest is typically the overall effect rather than the site-level heterogeneity, which is usually modeled by a nuisance parameter \citep{borenstein2011introduction,riley2011interpretation,tan2018changepoint,rover2020dynamically}. 

Federated learning: originated from the field of computer science \citep{mcmahan2017communication}, federated learning is a collaborative learning procedure that ensures data privacy by exchanging model parameters only. Federated learning methods often involves iterative updating \citep{fallah2020personalized,cho2021personalized,smith2017federated,yang2019federated}, rather than a one-shot procedure, which could be hard to apply to nonautomated distributed research networks.  
It has been developed mainly to estimate a global prediction model by leveraging distributed data \citep{li2020federated,kairouz2019advances,zhao2018federated,hard2018federated}, 
and is not designed to target any specific site. 

Super learner: an ensemble of multiple statistical and machine learning models \citep{van2007super}. It learns an optimal weighted average of those candidate models by minimizing the cross-validated risk, and assigns higher weights to more accurate models \citep{polley2010super}.  The final prediction on an independent testing data is the weighted combination of the predictions of those models. 
Super learner has been showed empirically to improve treatment effect estimation via the modeling of propensity score in observational studies \citep{pirracchio2015improving,wyss2018using,shortreed2019challenges,ju2019propensity,tan2022doubly}. 

Mixture of experts: an ensemble learning technique that decomposes a task into multiple subtasks with domain knowledge, followed by using multiple expert models to handle each subtask. A gating model is then used to decide which expert to use to make future prediction \citep{masoudnia2014mixture}. It differs from other ensemble methods typically in that often only a few experts will be selected for predictions, rather than combining results from all experts \citep{masoudnia2014mixture}.

\section{Proof of Theorem~\ref{Theorem}}\label{suppl-sec:consist}

The proof of Theorem~\ref{Theorem} closely follows arguments given in \citet{wager2018estimation}. Suppose the subsamples for building each tree in an ensemble forest are drawn from different subjects in the augmented site 1 data. Specifically, in one round of EF, we draw $m$ samples from the augmented data, where $m$ is less than the rows in the augmented data, i.e., $m < (n_1 \cdot K)$. By randomly picking $m$ unique subjects from site 1 and then randomly picking a site indicator $k$ out of $K$ sites for each of the $m$ subjects. The resulted $m$ subsamples should not be from the same subject and are hence independent and identically distributed. 
As long as $m < n_1$, we can ensure that all the subsamples are independent. 
In practice, when the ratio of $n_1 / K$ is relatively large, the probability of obtaining samples from the same subject is small. 

Assume that subject features $\bX_i$ and the site indicator $S_i$ are independent and have a density that is bounded away from 0 and infinity. Suppose moreover that the conditional mean function $\mathbb{E}[ \Tau|\bX=\bx,S=k]$ is Lipschitz continuous. We adopt the honesty definition in \citet{athey2016recursive} when building trees in a random forest. Honest approaches separate the training sample into two halves, one half for building the tree model, and another half for estimating treatment effects within the leaves \citep{athey2016recursive}. Following Definitions 1-5 and Theorem 3.1 in \citet{wager2018estimation}, the proposed estimator $\widehat \Tau_{\text{EF}}(\bx, 1)$ is a consistent estimator of the true treatment effect function $\tau_1(\bx)$ for site 1.

\section{Additional Simulation Results}
\label{suppl-sec:sim}

\subsection{Connection to Supervised Learning} 
Similar to ET-oracle and EF-oracle whose weights are built on the ground truth CATE functions $\tau_k$'s, we also consider for EWMA and STACK under a similar hypothetical setting.
Specifically, we assume the true $\tau_1$ is known and use it to compute the weights. 
This version of EWMA estimator is denoted as EWMA-oracle and its weight is given by $${\omega}_{k}^{\text{EWMA-oracle}} =\frac{\exp\{- \sum_{i \in \mathcal{I}_1^{(2)}}(\widehat\tau_k(\bx_i) - \tau_1(\bx_i))^2\} }{ \sum_{\ell=1}^{K} \exp\{- \sum_{i \in \mathcal{I}_1^{(2)}}(\widehat\tau_\ell(\bx_i) - \tau_1(\bx_i))^2\} }.$$
Similarly, the corresponding linear stacking approach, denoted as STACK-oracle, regresses the ground truth $\tau_1(\bx)$ on the predictions of the estimation set in site 1 from each local model, $\{\widehat\tau_1(\bx), \dots, \widehat\tau_k(\bx) \}$. 
We compare the proposed model averaging estimators with the local estimator, MA, two versions of modified EWMA, as well as two versions of the linear stacking approach. 
We present simulation results using CT as the local model and the sample size at local sites to be $n=500$. 
Figure~\ref{web:sim_fig_ct500_full} presents the performance of the proposed estimators along with other competing estimators. Each series of boxes corresponds to a different strength of global heterogeneity $c$. 
Table~\ref{tab:sim_res} reports the ratio between MSE of the estimator and MSE of the local model in terms of average and standard deviation of MSE, respectively, over 1000 replicates.
Our proposed estimators ET and EF shows the best performance overall in terms of the mean and variation of MSE among the estimators without using the information of ground truth $\tau_1(\bx)$. Comparing with ET, EF has a slightly smaller MSE when $c$ is large, which is expected because forest models tend to be more stable and accurate than a single tree.  
ET-oracle achieves minimal MSE for low and moderate degrees of heterogeneity while EF-oracle has the minimal MSE under all settings.
The local estimator (LOC) in general shows the largest MSE compared to other estimators, as it does not leverage information from other sites. By borrowing information from additional sites, variances are greatly reduced, resulting in a small MSE of ensemble estimators. 
MA that naively adopts the inverse of sample size as weights performs well under low levels of heterogeneity, but suffers from a huge MSE with large variation as $c$ increases. 
EWMA estimators perform slightly better and are more stable than LOC and MA. EWMA-oracle has better performance than EWMA in all settings as the information of true CATE is used for weight construction. STACK estimators performs better than EWMA estimators. 
Similarly, STACK-oracle performs better than STACK in all settings. STACK-oracle, with ground truth $\tau_1(\bx)$ available, outperforms ET and EF when there exists a moderate to high level of heterogeneity across sites.

\subsection{Various Sample Sizes in Local Sites}
 
We provide detailed simulation results varying $n$ $(100, 500, 1000)$ with CT as the local model. 
Figure~\ref{web:sim_fig_ct100} and Figure~\ref{web:sim_fig_ct1000} show box plots of simulation results with a sample size of 100 and 1000, respectively, at each site. Our proposed methods ET and EF show robust performance in all settings. ET-oracle and EF-oracle achieve close-to-zero MSE with very small spreads in some settings.  Figure~\ref{fig:sim_vary_n} shows plots of the bias and MSE of EF-oracle varying sample size at each site ($n = 100, 500, 1000$). As the sample size increases, both bias and MSE of EF-oracle reduce to zero. Consistency of EF-oracle can be shown via simulation when perfect estimates are obtained from local models. 
Meanwhile, our proposed method greatly reduce MSE by selectively borrowing information from multiple sites.

\subsection{Simulations under Observational Studies}

We also consider the treatment generation mechanism under an observational design. 
Specifically, the propensity is given as $e(\bx) = \text{expit}(0.6x_1)$. We consider both a correctly specified propensity model using a logistic regression of $Z$ on $X_1$ and a misspecified propensity model with a logistic regression of $Z$ on all $\bX$. 
Figure~\ref{web:sim_obs_correct} and Figure~\ref{web:sim_obs_misspecified} show box plots of simulation results. 
In general, the proposed estimators obtain the best performance with similar results are obtained as in the Figure~\ref{fig:sim_box}. With the correctly specified propensity score model, the local estimator is consistent in estimating $\tau_k(\bx)$, the proposed framework is valid. When the propensity model misspecified, extra uncertainty is carried forward from the local estimates, but the proposed estimators can improve upon the local models. This is due to a bias-and-variance trade-off that guarantees small MSE in prediction, which remains smaller than those from local estimators.

\subsection{Covariate Dimensions}

We consider various choices of covariate dimensions besides $D=5$. Specifically, we also try $D=20$ and $D=50$. 
Figure~\ref{web:sim_p20} and Figure~\ref{web:sim_p50} show box plots of simulation results. 
With a higher dimension of variables, the MSE ratio between the proposed estimates and LOC estimates increases than that in the scenario with a small dimension.

\subsection{Unequal Sample Size at Each Site}

In the distributed date network, different sites may have a different sample size $n_k$. Those with a smaller sample size may not be representative of their population, leading to an uneven level of precision for local causal estimates. We consider a simulation setting where site 1 has a sample size of $n_1=500$ while other site $n_2,\ldots,n_K$ has a sample size of 200. 
Figure~\ref{web:sim_diffN} shows box plots of simulation results. 
Results show that the MSE ratio between the proposed estimates and LOC estimates increases compared to the scenario where the sample size in all sites are 500. However, the proposed estimators still enjoy the most robust performance via bias-and-variance trade-off. 
This also shows our method is robust to the existence of local uncertainty.

\subsection{Different Local Estimators} 

We explore another option for the local model using the causal forest (CF) \citep{wager2018estimation} varying the sample size at local sites. 
A causal forest is a stochastic averaging of multiple causal trees \citep{athey2016recursive}, and hence is more powerful in estimating treatment effects. In each tree of the causal forest, MSE of treatment effect is used to select the feature and cutoff point in each split \citep{wager2018estimation}. 
CF is implemented in the R packages \verb|grf|.
Figure~\ref{web:sim_fig_cf100}, Figure~\ref{web:sim_fig_cf500}, and Figure~\ref{web:sim_fig_cf1000} show box plots of simulation results with a sample size of 100, 500, and 1000, respectively, at each site. Our proposed methods ET and EF show robust performance in all settings regardless of the use of information of the ground truth $\tau_1(\bx)$.

\subsection{Further Comparisons to Non-adaptive Ensemble} 
We provide simulation results to compare the proposed methods to the non-adaptive method STACK. 
Consider the following setting where the heterogeneity is continuous and nonlinear: $\tau(\bx, k) = \mathbbm{1}\{x_1 > 0\} \cdot x_1 +(x_1 - 3) \cdot {(U_k)}^c,$
with $U_k\sim Unif[0,3]$, $\bX_i\sim {N}(\boldsymbol{0},\boldsymbol{I}_5)$, and $c=(1,2,3,4)$. 
As $c$ increases, the heterogeneity across sites gets larger, reducing the influence of $x_1$ on heterogeneity, hence the weights become more non-adaptive. 
For $c=(1,2,3,4)$, 
the one-SD ranges of MSE ratios of EF over STACK are [0.73,0.82], [0.86,0.87], [0.99,1.04], [0.87,1.07], respectively. 
When $c$ is relatively small, the proposed EF has a smaller MSE compared to STACK. As $c$ increases, the performance of EF is similar to that of STACK, in the case of a large global heterogeneity. This further indicates the robustness of the proposed methods.

\clearpage
\begin{figure}[!htb]
\centering
 \begin{subfigure}{0.49\textwidth}
  \centerline{\includegraphics[width=\linewidth]{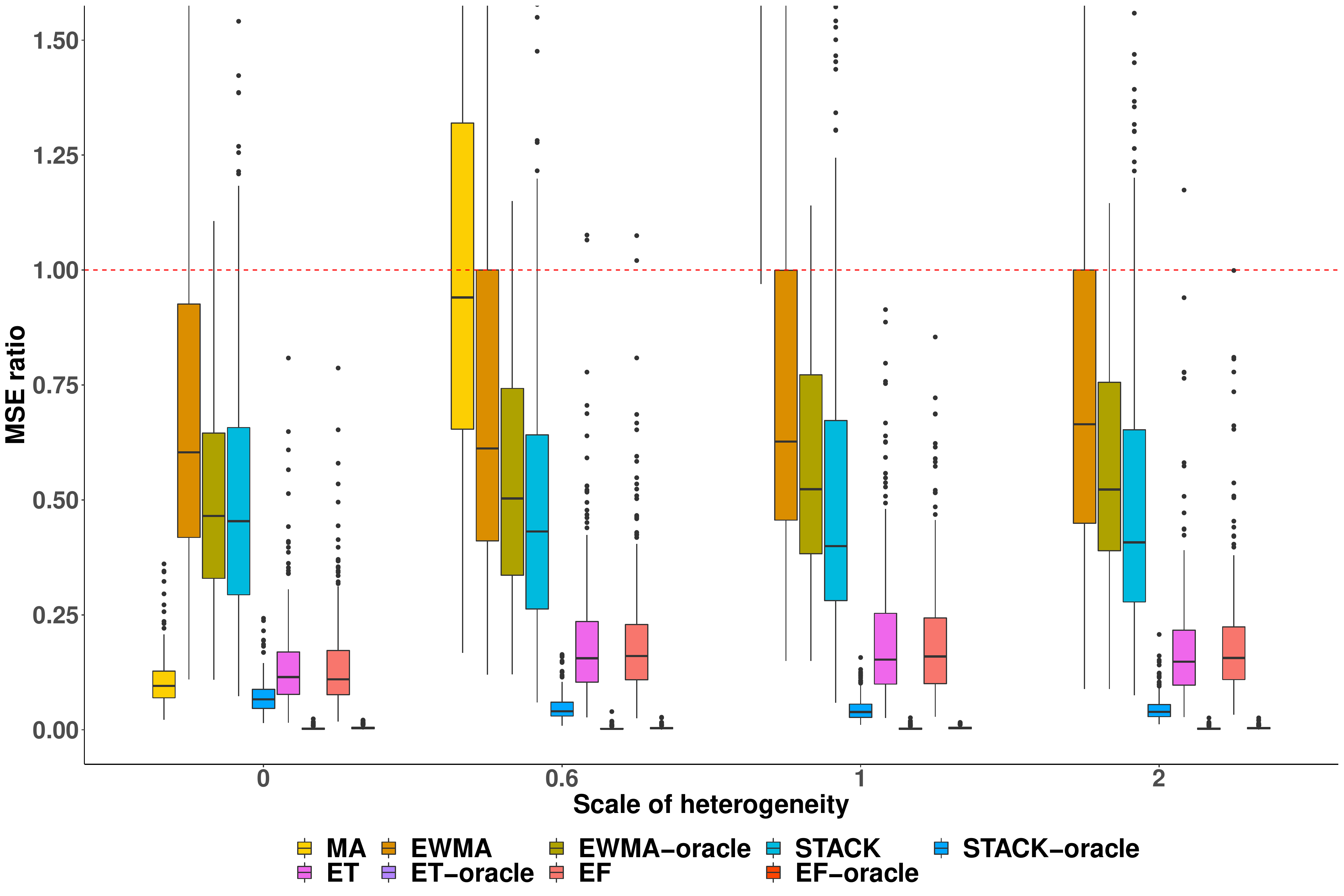}}
  \caption{}
 \end{subfigure}
 \begin{subfigure}{0.49\textwidth}
  \centerline{\includegraphics[width=\linewidth]{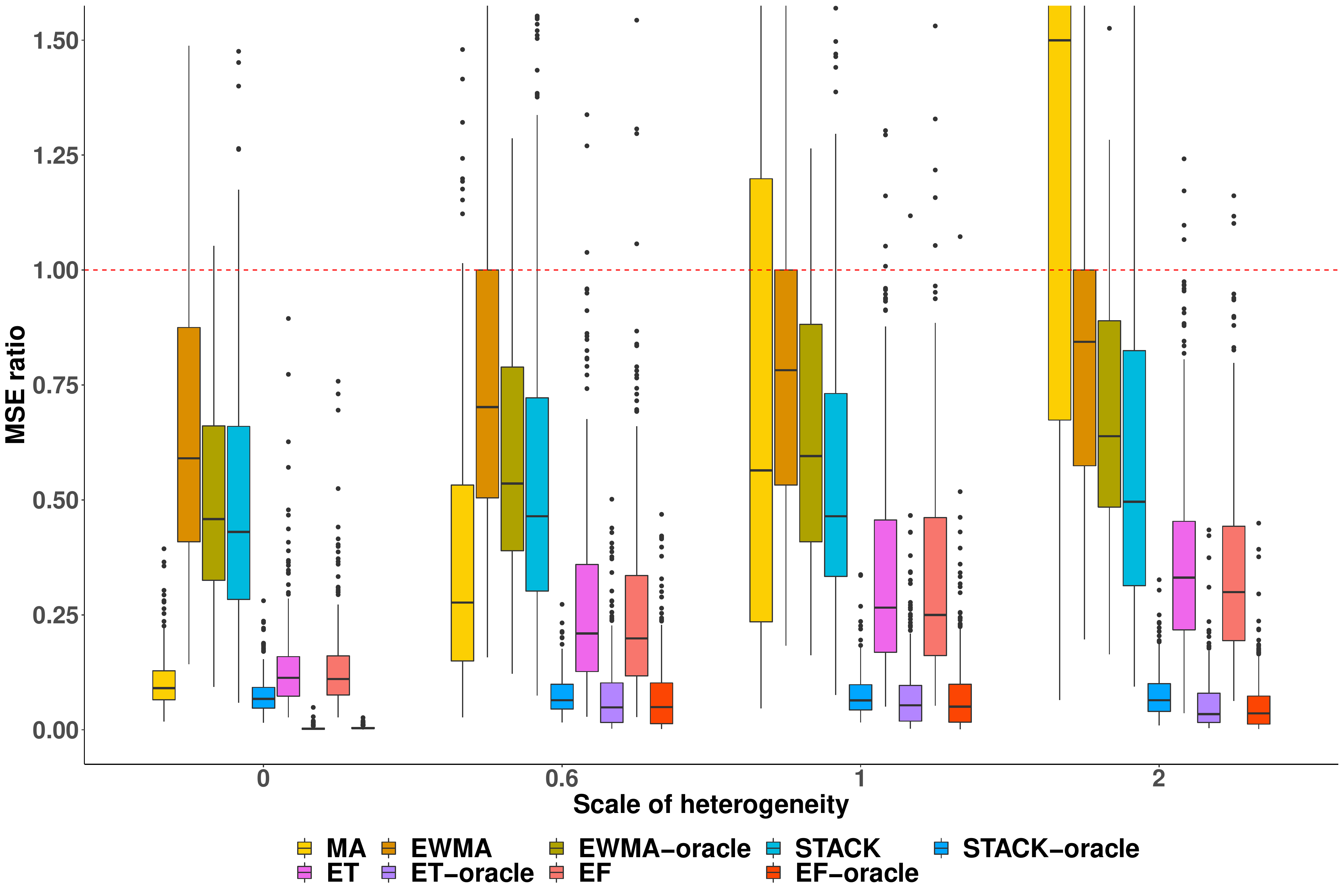}}
  \caption{}
 \end{subfigure}
 \caption{
Box plots of the MSE ratios of CATE estimators, respectively, over LOC (\textbf{CT}) and a sample size of \textbf{500} at each site for (a) discrete grouping and (b) continuous grouping across site, respectively, varying scale of global heterogeneity. 
Estimators ending with ``-oracle" makes use of ground truth treatment effects. 
Different colors imply different estimators, and x-axis, i.e., the value of $c$, differentiates the scale of global heterogeneity. The red dotted line denotes an MSE ratio of 1. 
MA performance is truncated due to large MSE ratios. 
The proposed ET and EF achieve competitive performance compared to standard model averaging or ensemble methods and are robust to heterogeneity across settings. 
Note that ET-oracle and EF-oracle achieve close-to-zero MSE ratios with very small spreads in some settings. 
}
 \label{web:sim_fig_ct500_full}
\end{figure}

\clearpage
\begin{table}[!htb]
\centering
\caption{Simulation results for ratio between MSE of the estimator and MSE of LOC  (\textbf{CT}) with a sample size of \textbf{500} at each site. A smaller number indicates larger improvement over the local model. 
Estimators ending with ``-oracle" makes use of ground truth treatment effects.
Our proposed methods ET and EF shows robust performance in all settings whether or not using the information of ground truth $\tau_1(\bx)$.
}
\label{tab:sim_res}
\resizebox{0.9\columnwidth}{!}{
\begin{tabular}{@{}ccccccccc@{}}
\toprule
 & \multicolumn{4}{c}{Discrete grouping} & \multicolumn{4}{c}{Continuous grouping} \\ \cmidrule(l){2-9} 
Estimator & $c=0$ & $c=0.2$ & $c=0.6$ & $c=1$ & $c=0$ & $c=0.2$ & $c=0.6$ & $c=1$ \\ \midrule
\multicolumn{9}{l}{\textit{Ratio of average of MSEs over 1000 replicates}} \\ 
MA & 0.09 & 0.91 & 2.4 & 9.87 & 0.08 & 0.32 & 0.65 & 1.78 \\
EWMA & 0.57 & 0.62 & 0.61 & 0.62 & 0.56 & 0.65 & 0.7 & 0.77 \\
EWMA-oracle & 0.42 & 0.5 & 0.49 & 0.5 & 0.42 & 0.49 & 0.53 & 0.59 \\
STACK & 0.44 & 0.45 & 0.44 & 0.45 & 0.45 & 0.45 & 0.48 & 0.54 \\
STACK-oracle & 0.06 & 0.04 & 0.04 & 0.04 & 0.06 & 0.06 & 0.06 & 0.07 \\
ET & 0.12 & 0.17 & 0.16 & 0.16 & 0.13 & 0.24 & 0.29 & 0.37 \\
ET-oracle & $<$0.01 & $<$0.01 & $<$0.01 & $<$0.01 & $<$0.01 & 0.08 & 0.1 & 0.07 \\
EF & 0.1 & 0.13 & 0.13 & 0.13 & 0.1 & 0.19 & 0.25 & 0.3 \\
EF-oracle & $<$0.01 & $<$0.01 & $<$0.01 & $<$0.01 & $<$0.01 & 0.06 & 0.06 & 0.05 \\
\midrule
\multicolumn{9}{l}{\textit{Ratio of standard deviation of MSEs over 1000 replicates}} \\
MA & 0.15 & 0.35 & 0.76 & 3.05 & 0.14 & 0.24 & 0.38 & 0.81 \\
EWMA & 0.61 & 0.65 & 0.67 & 0.66 & 0.58 & 0.65 & 0.69 & 0.75 \\
EWMA-oracle & 0.46 & 0.52 & 0.54 & 0.54 & 0.44 & 0.52 & 0.55 & 0.6 \\
STACK & 0.47 & 0.46 & 0.47 & 0.47 & 0.45 & 0.49 & 0.52 & 0.6 \\
STACK-oracle & 0.1 & 0.08 & 0.08 & 0.08 & 0.09 & 0.11 & 0.12 & 0.14 \\
ET & 0.18 & 0.23 & 0.22 & 0.22 & 0.18 & 0.26 & 0.32 & 0.43 \\
ET-oracle & 0.02 & 0.03 & 0.02 & 0.02 & 0.02 & 0.06 & 0.07 & 0.07 \\
EF & 0.17 & 0.19 & 0.19 & 0.2 & 0.17 & 0.23 & 0.29 & 0.39 \\
EF-oracle & 0.03 & 0.03 & 0.03 & 0.03 & 0.03 & 0.06 & 0.07 & 0.08\\ \bottomrule
\end{tabular}
}
\end{table}

\clearpage
\begin{figure}[!htb]
\centering
 \begin{subfigure}{0.49\textwidth}
  \centerline{\includegraphics[width=\linewidth]{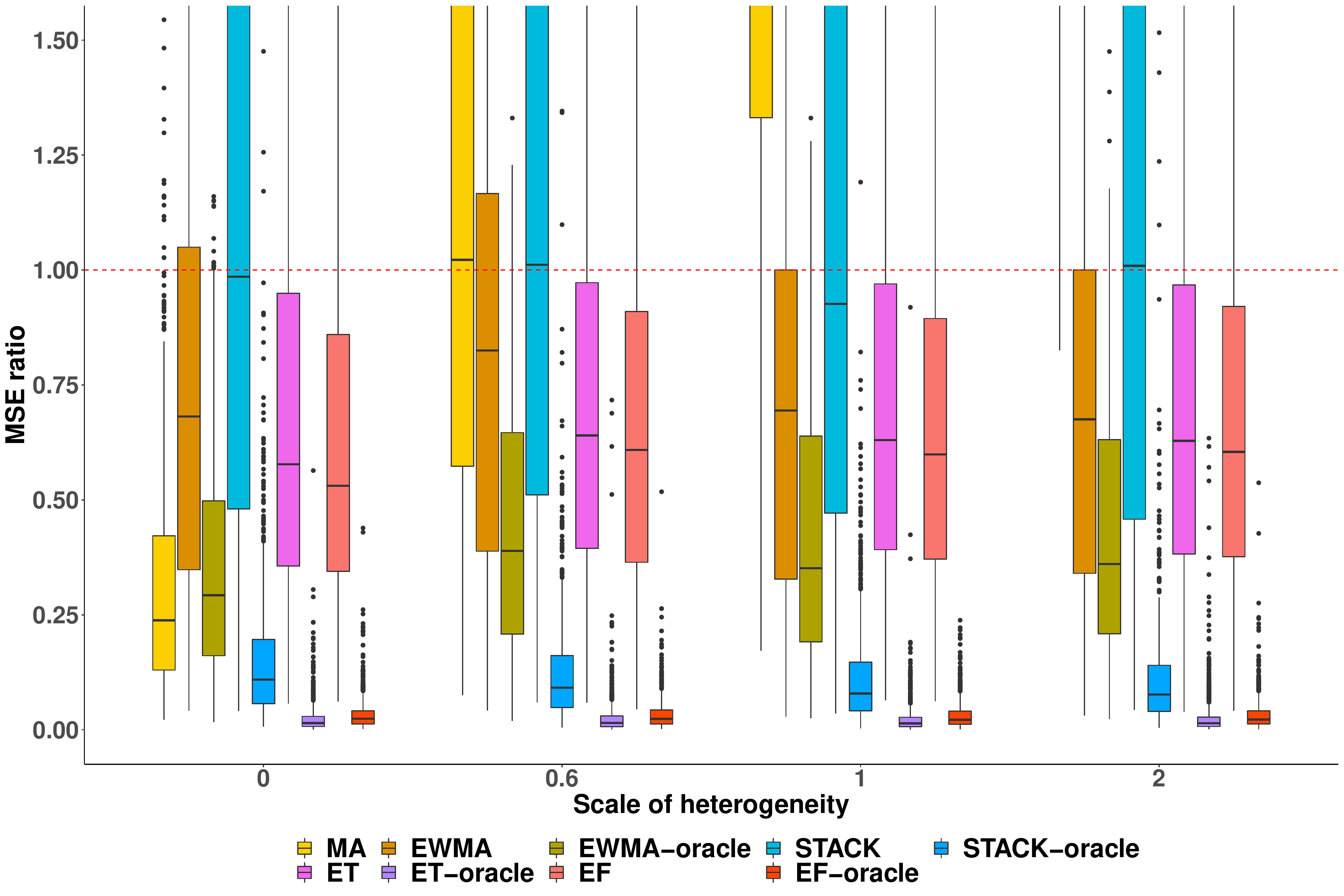}}
  \caption{}
 \end{subfigure}
 \begin{subfigure}{0.49\textwidth}
  \centerline{\includegraphics[width=\linewidth]{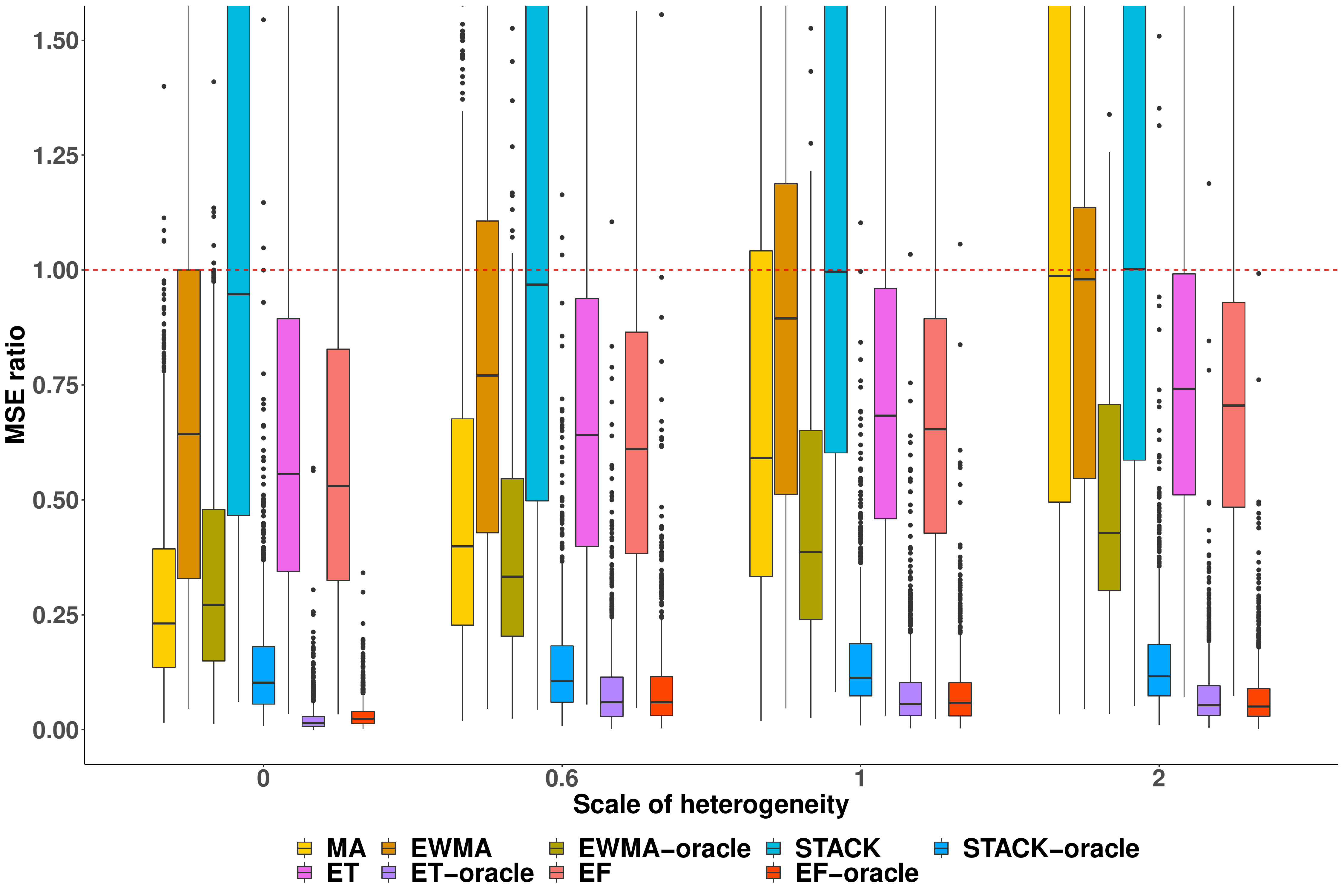}}
  \caption{}
 \end{subfigure}
 \caption{
Box plots of the MSE ratios of CATE estimators, respectively, over LOC (\textbf{CT}) and a sample size of \textbf{100} at each site for (a) discrete grouping and (b) continuous grouping across site, respectively, varying scale of global heterogeneity. 
Estimators ending with ``-oracle" makes use of ground truth treatment effects. 
Different colors imply different estimators, and x-axis, i.e., the value of $c$, differentiates the scale of global heterogeneity. The red dotted line denotes an MSE ratio of 1. 
MA performance is truncated due to large MSE ratios. 
The proposed ET and EF achieve competitive performance compared to standard model averaging or ensemble methods and are robust to heterogeneity across settings. 
Note that ET-oracle and EF-oracle achieve close-to-zero MSE ratios with very small spreads in some settings. 
 }
 \label{web:sim_fig_ct100}
\end{figure}

\clearpage
\begin{figure}[!h]
\centering
 \begin{subfigure}{0.49\textwidth}
  \centerline{\includegraphics[width=\linewidth]{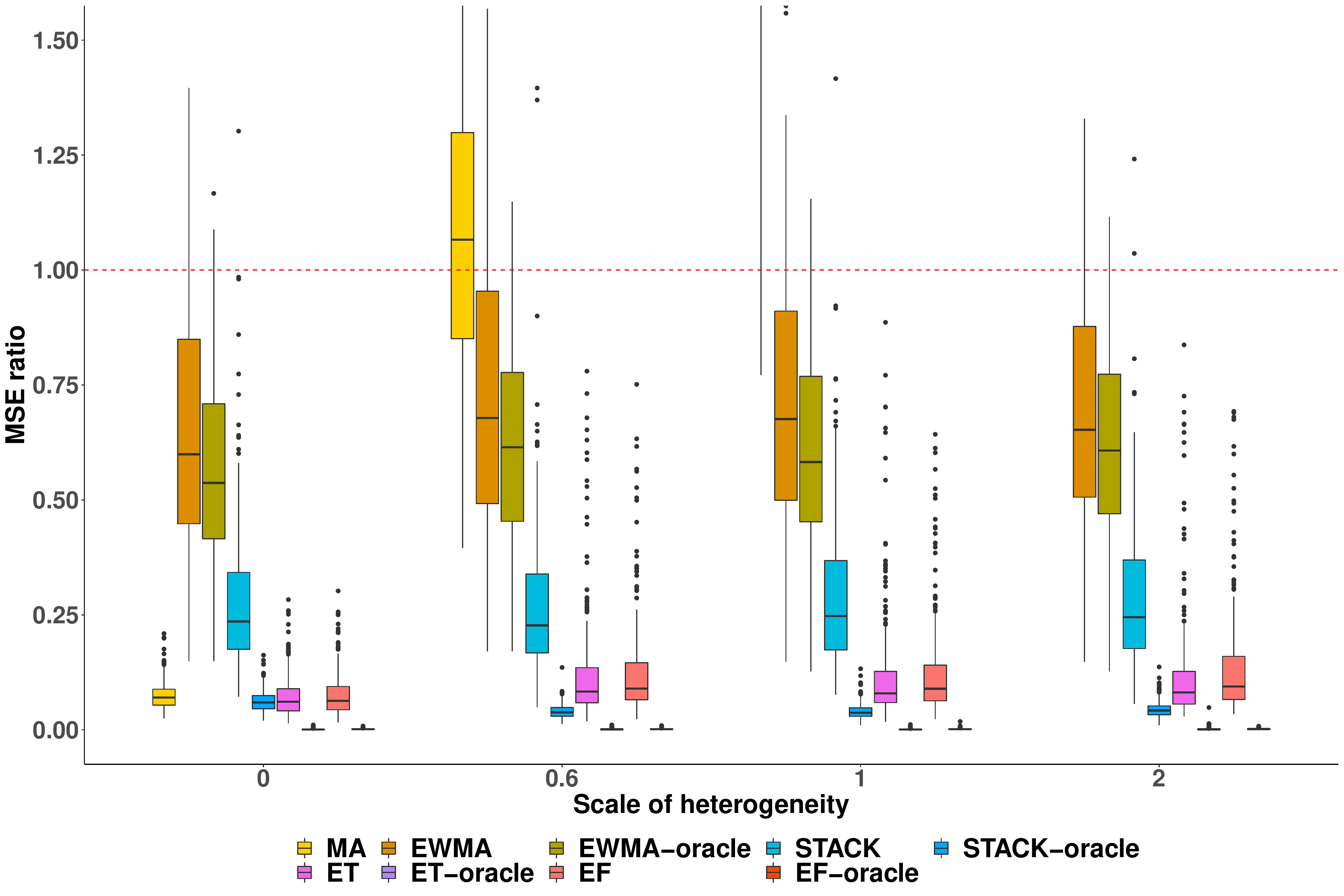}}
  \caption{}
 \end{subfigure}
 \begin{subfigure}{0.49\textwidth}
  \centerline{\includegraphics[width=\linewidth]{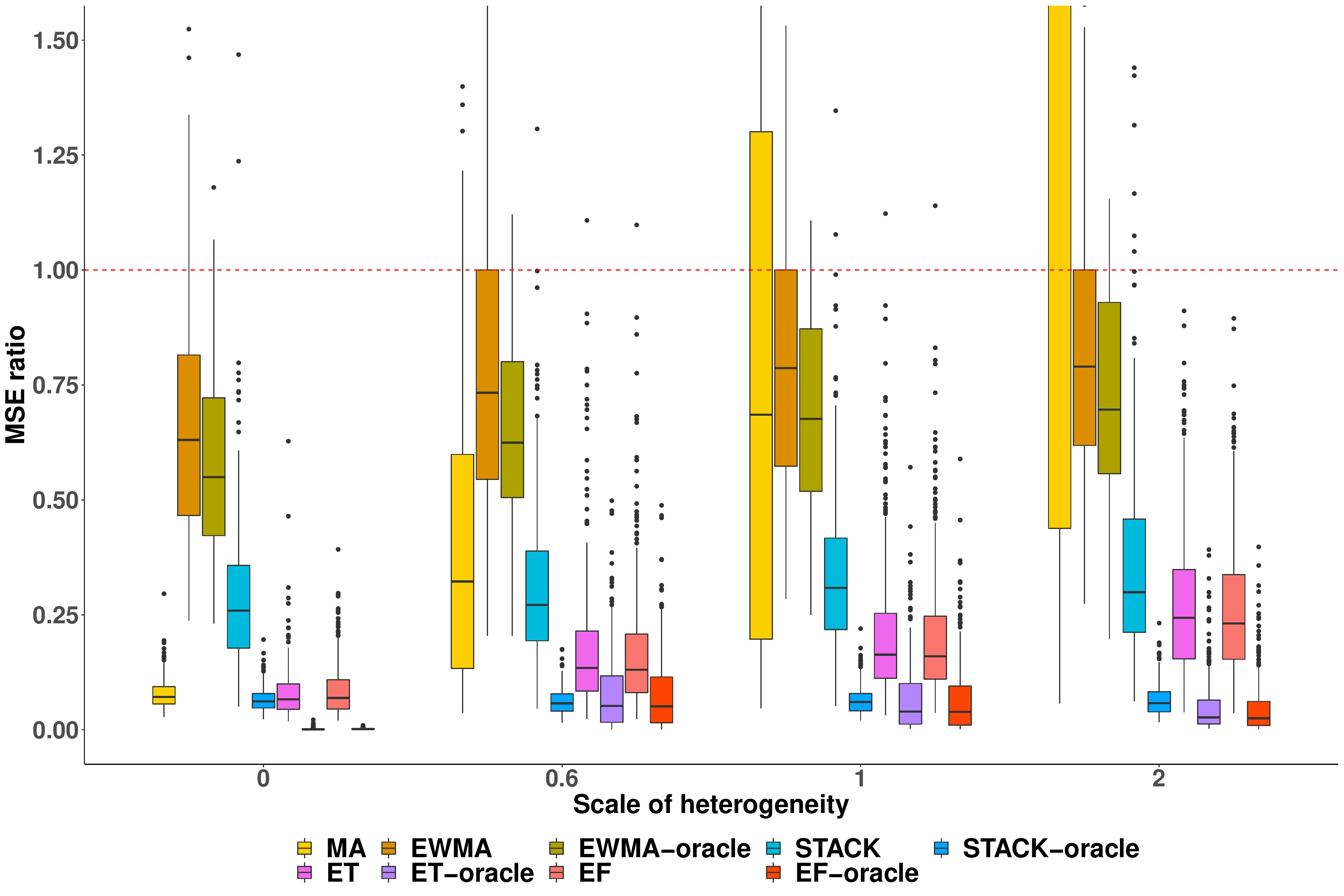}}
  \caption{}
 \end{subfigure}
 \caption{
Box plots of the MSE ratios of CATE estimators, respectively, over LOC (\textbf{CT}) and a sample size of \textbf{1000} at each site for (a) discrete grouping and (b) continuous grouping across site, respectively, varying scale of global heterogeneity. 
Estimators ending with ``-oracle" makes use of ground truth treatment effects. 
Different colors imply different estimators, and x-axis, i.e., the value of $c$, differentiates the scale of global heterogeneity. The red dotted line denotes an MSE ratio of 1. 
MA performance is truncated due to large MSE ratios. 
The proposed ET and EF achieve competitive performance compared to standard model averaging or ensemble methods and are robust to heterogeneity across settings. 
Note that ET-oracle and EF-oracle achieve close-to-zero MSE ratios with very small spreads in some settings. 
 }
 \label{web:sim_fig_ct1000}
\end{figure}

\clearpage
\begin{figure}[hbt!]
\centering
 \begin{subfigure}{0.49\textwidth}
  \centerline{\includegraphics[width=\linewidth]{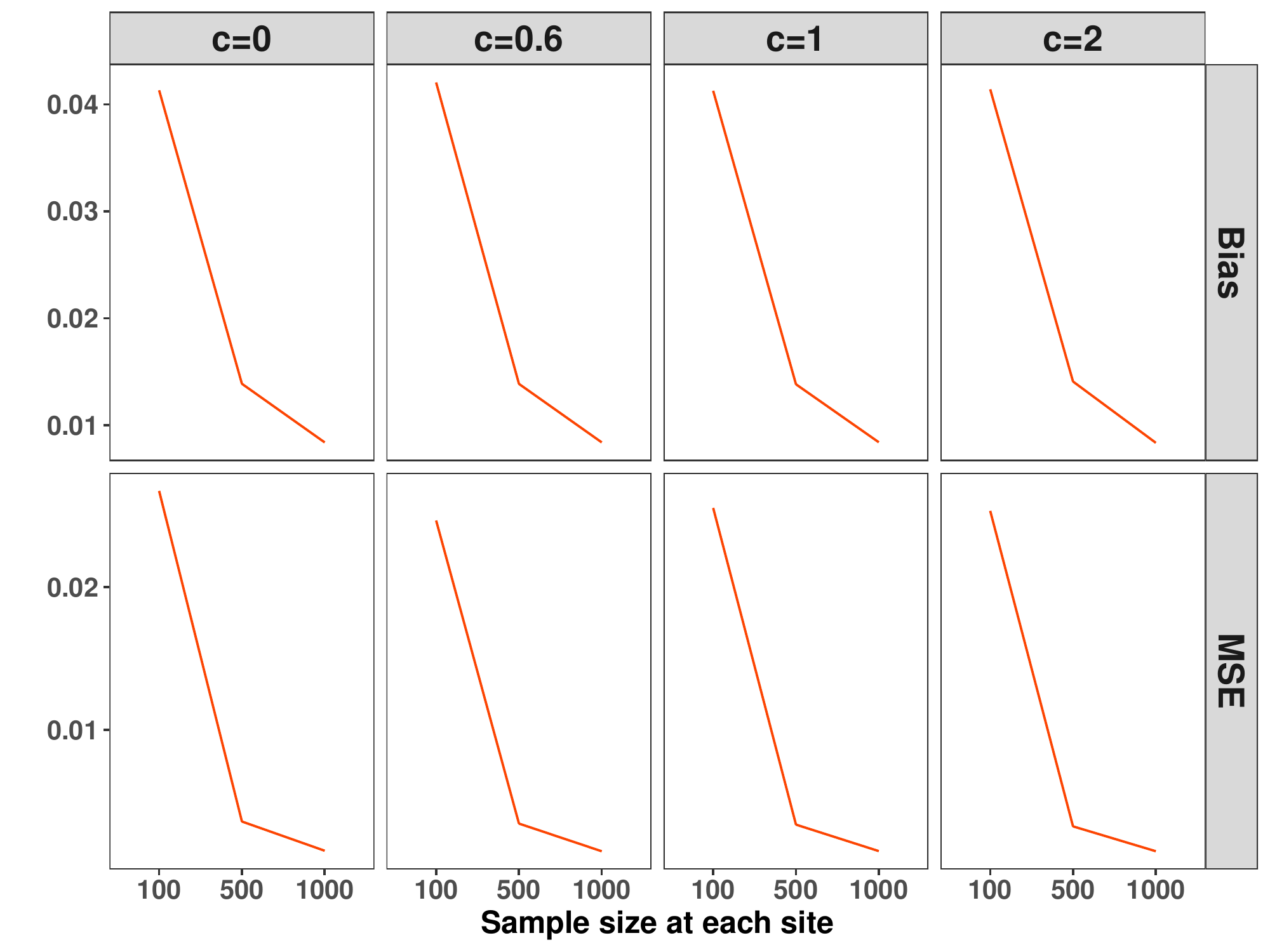}}
  \caption{}
 \end{subfigure}
 \begin{subfigure}{0.49\textwidth}
  \centerline{\includegraphics[width=\linewidth]{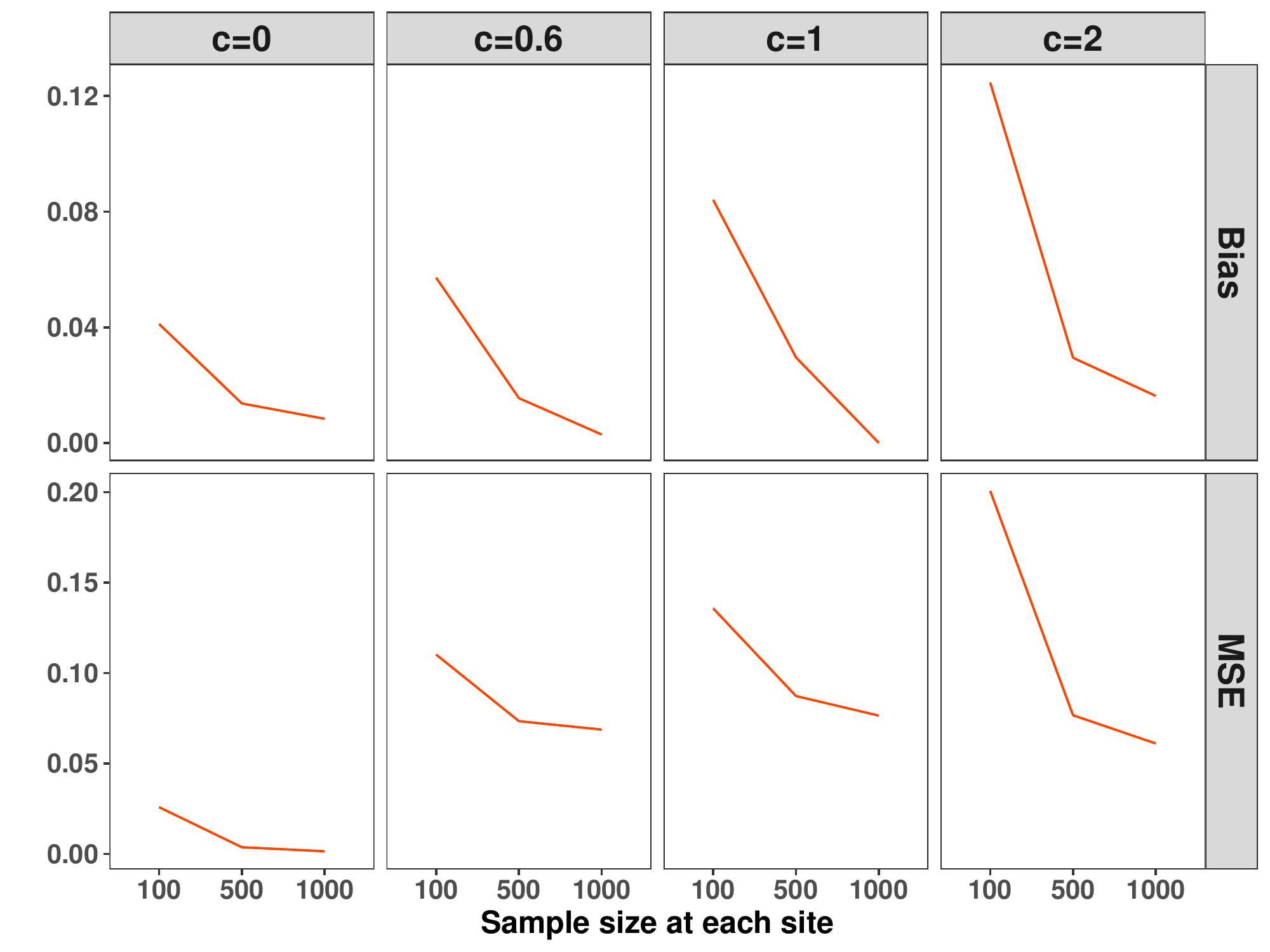}}
  \caption{}
 \end{subfigure}
 \caption{Plots of the bias and MSE of \textbf{EF-oracle} varying sample site at each site for (a) discrete grouping and (b) continuous grouping across site, varying scale of global heterogeneity. 
 Both bias and MSE reduces to zero as the sample size increases. 
 }
 \label{fig:sim_vary_n}
\end{figure}

\clearpage
\begin{figure}[hbt!]
\centering
 \begin{subfigure}{0.49\textwidth}
  \centerline{\includegraphics[width=\linewidth]{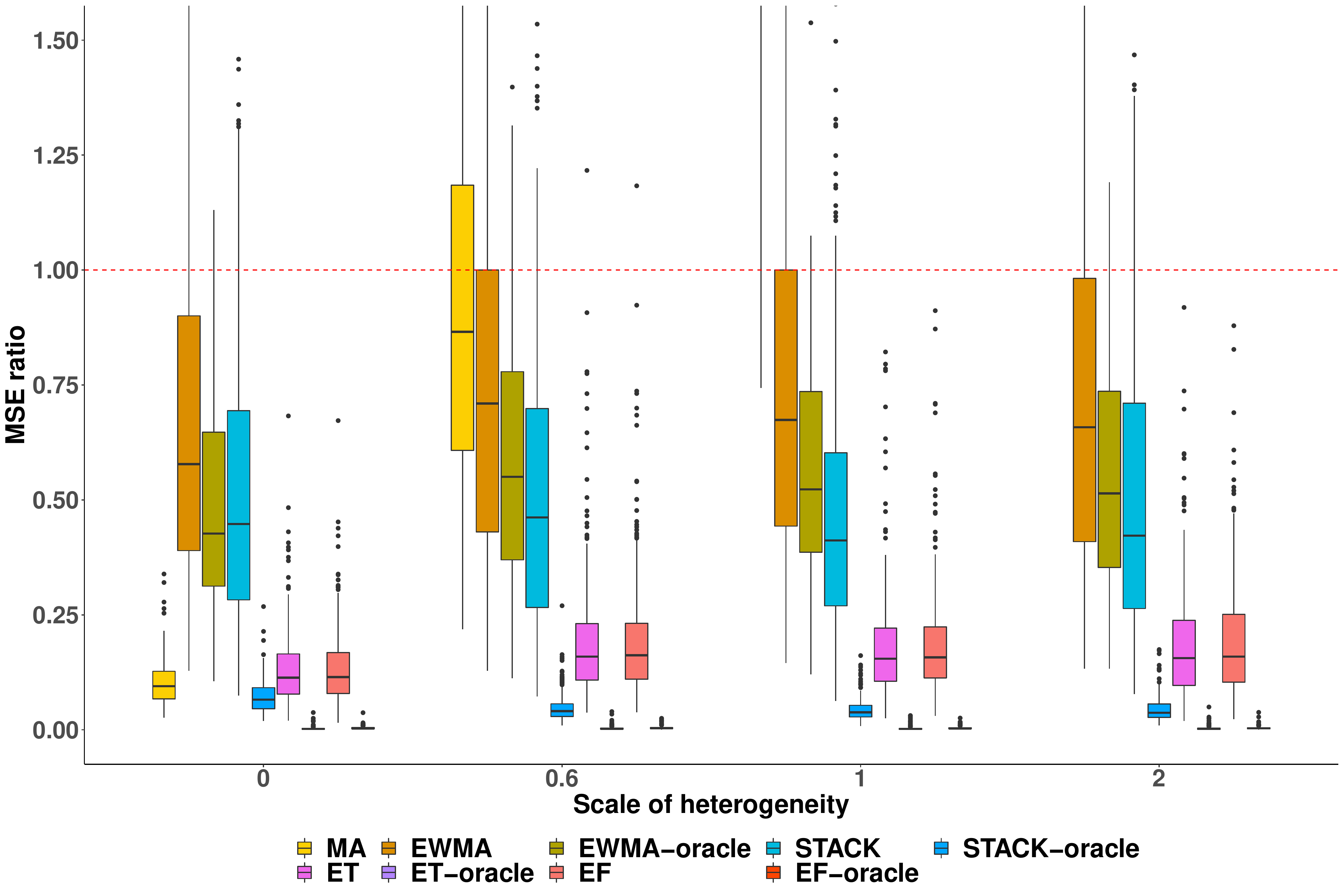}}
  \caption{}
 \end{subfigure}
 \begin{subfigure}{0.49\textwidth}
  \centerline{\includegraphics[width=\linewidth]{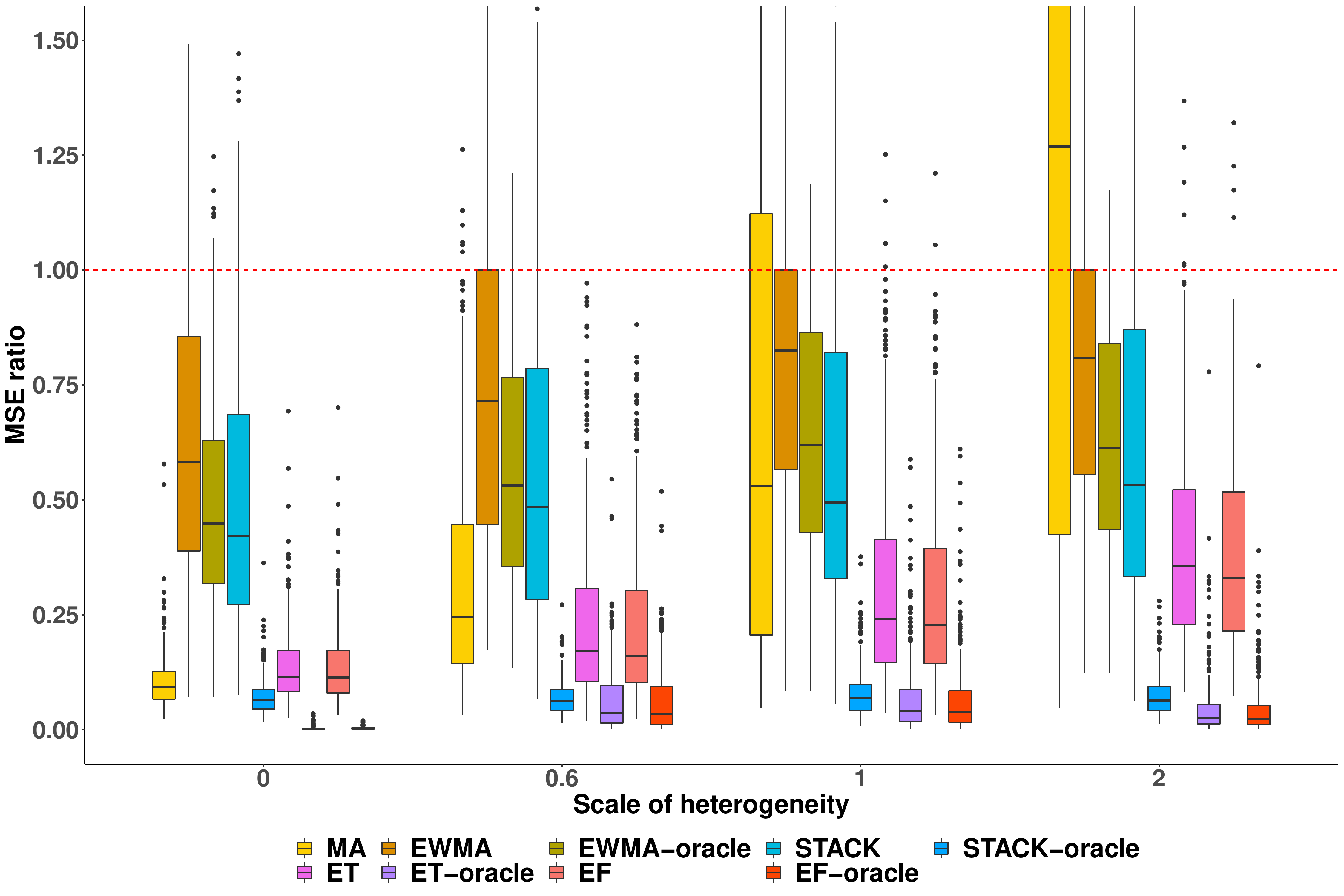}}
  \caption{}
 \end{subfigure}
 \caption{
Box plots of the MSE ratios of CATE estimators, respectively, over LOC (\textbf{CT}) and a sample size of \textbf{500} at each site under \textbf{observational design with a correctly specified propensity score model} for (a) discrete grouping and (b) continuous grouping across site, respectively, varying scale of global heterogeneity. 
Estimators ending with ``-oracle" makes use of ground truth treatment effects. 
Different colors imply different estimators, and x-axis, i.e., the value of $c$, differentiates the scale of global heterogeneity. The red dotted line denotes an MSE ratio of 1. 
MA performance is truncated due to large MSE ratios. 
The proposed ET and EF achieve competitive performance compared to standard model averaging or ensemble methods and are robust to heterogeneity across settings. 
Note that ET-oracle and EF-oracle achieve close-to-zero MSE ratios with very small spreads in some settings. 
 }
 \label{web:sim_obs_correct}
\end{figure}

\clearpage
\begin{figure}[hbt!]
\centering
 \begin{subfigure}{0.49\textwidth}
  \centerline{\includegraphics[width=\linewidth]{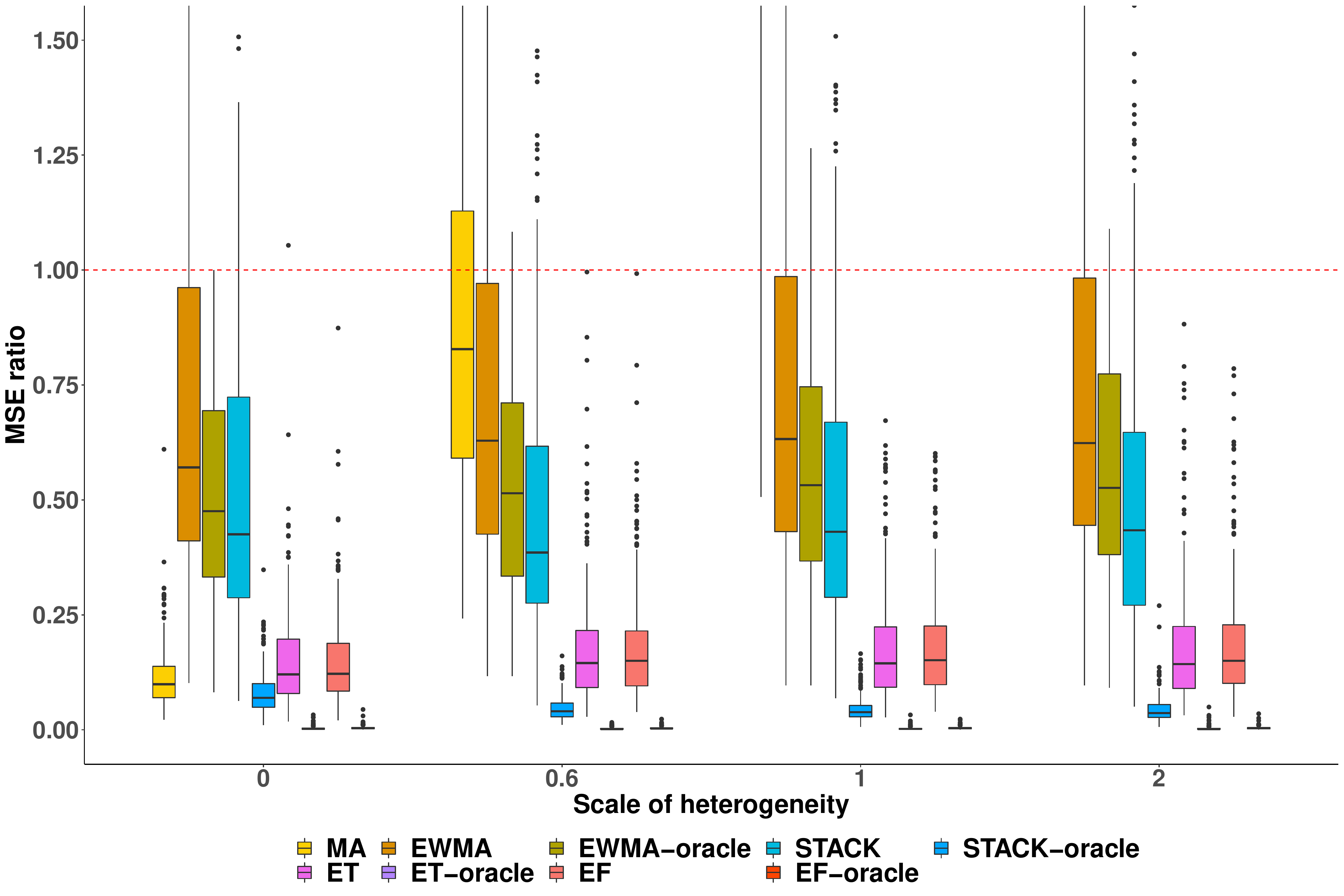}}
  \caption{}
 \end{subfigure}
 \begin{subfigure}{0.49\textwidth}
  \centerline{\includegraphics[width=\linewidth]{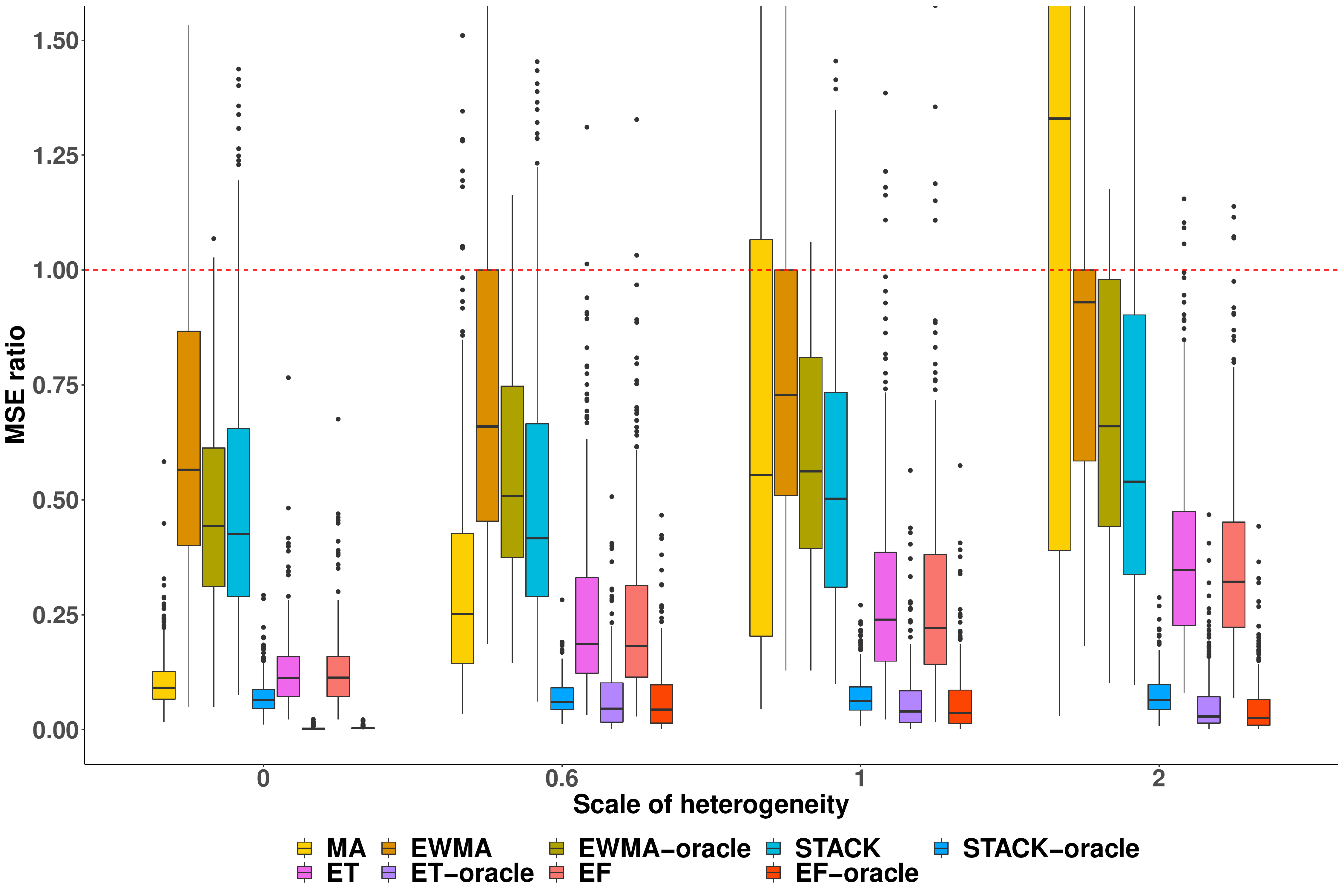}}
  \caption{}
 \end{subfigure}
 \caption{
Box plots of the MSE ratios of CATE estimators, respectively, over LOC (\textbf{CT}) and a sample size of \textbf{500} at each site under \textbf{observational design with a misspecified propensity score model} for (a) discrete grouping and (b) continuous grouping across site, respectively, varying scale of global heterogeneity. 
Estimators ending with ``-oracle" makes use of ground truth treatment effects. 
Different colors imply different estimators, and x-axis, i.e., the value of $c$, differentiates the scale of global heterogeneity. The red dotted line denotes an MSE ratio of 1. 
MA performance is truncated due to large MSE ratios. 
The proposed ET and EF achieve competitive performance compared to standard model averaging or ensemble methods and are robust to heterogeneity across settings. 
Note that ET-oracle and EF-oracle achieve close-to-zero MSE ratios with very small spreads in some settings. 
 }
 \label{web:sim_obs_misspecified}
\end{figure}

\clearpage
\begin{figure}[!htb]
\centering
 \begin{subfigure}{0.49\textwidth}
  \centerline{\includegraphics[width=\linewidth]{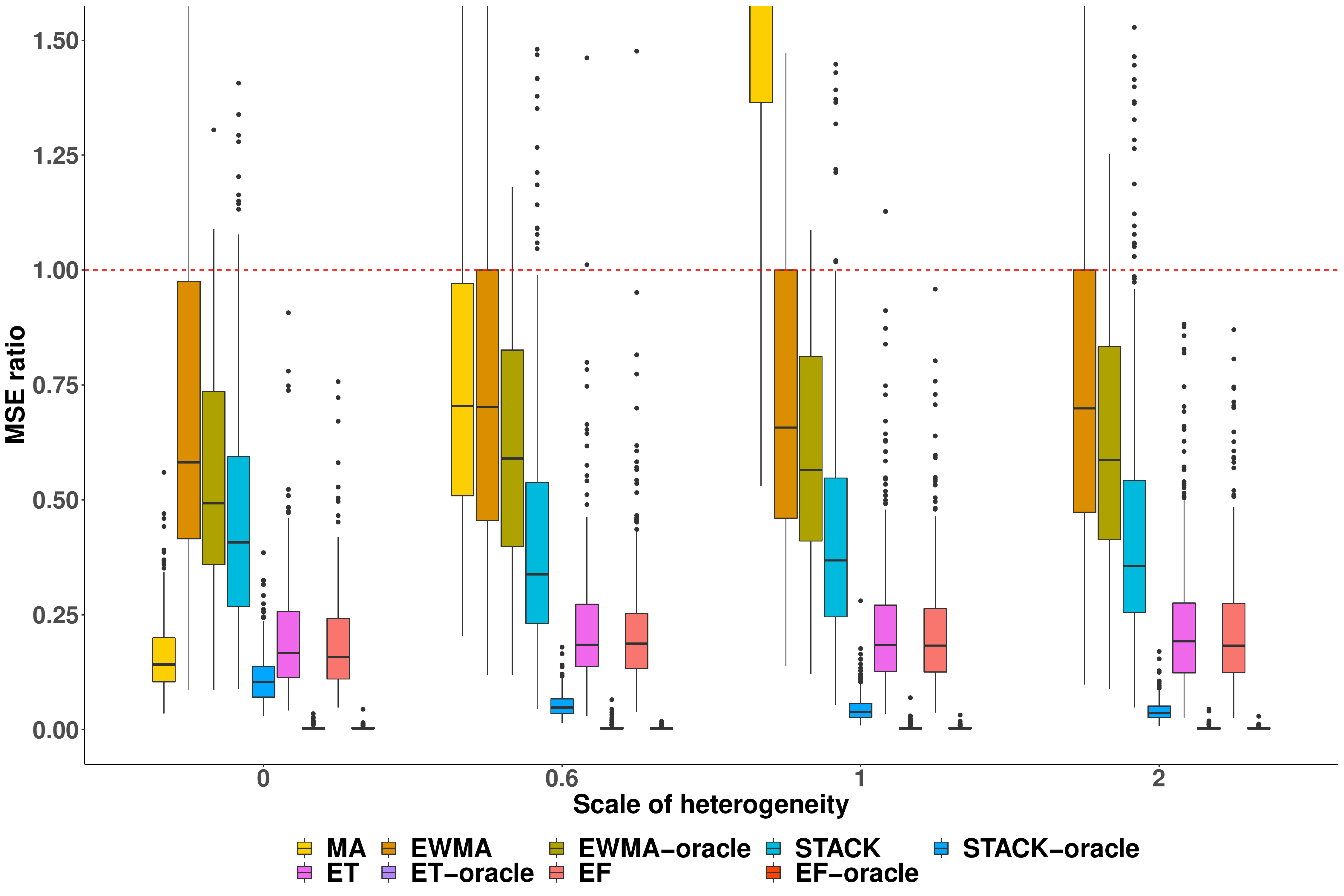}}
  \caption{}
 \end{subfigure}
 \begin{subfigure}{0.49\textwidth}
  \centerline{\includegraphics[width=\linewidth]{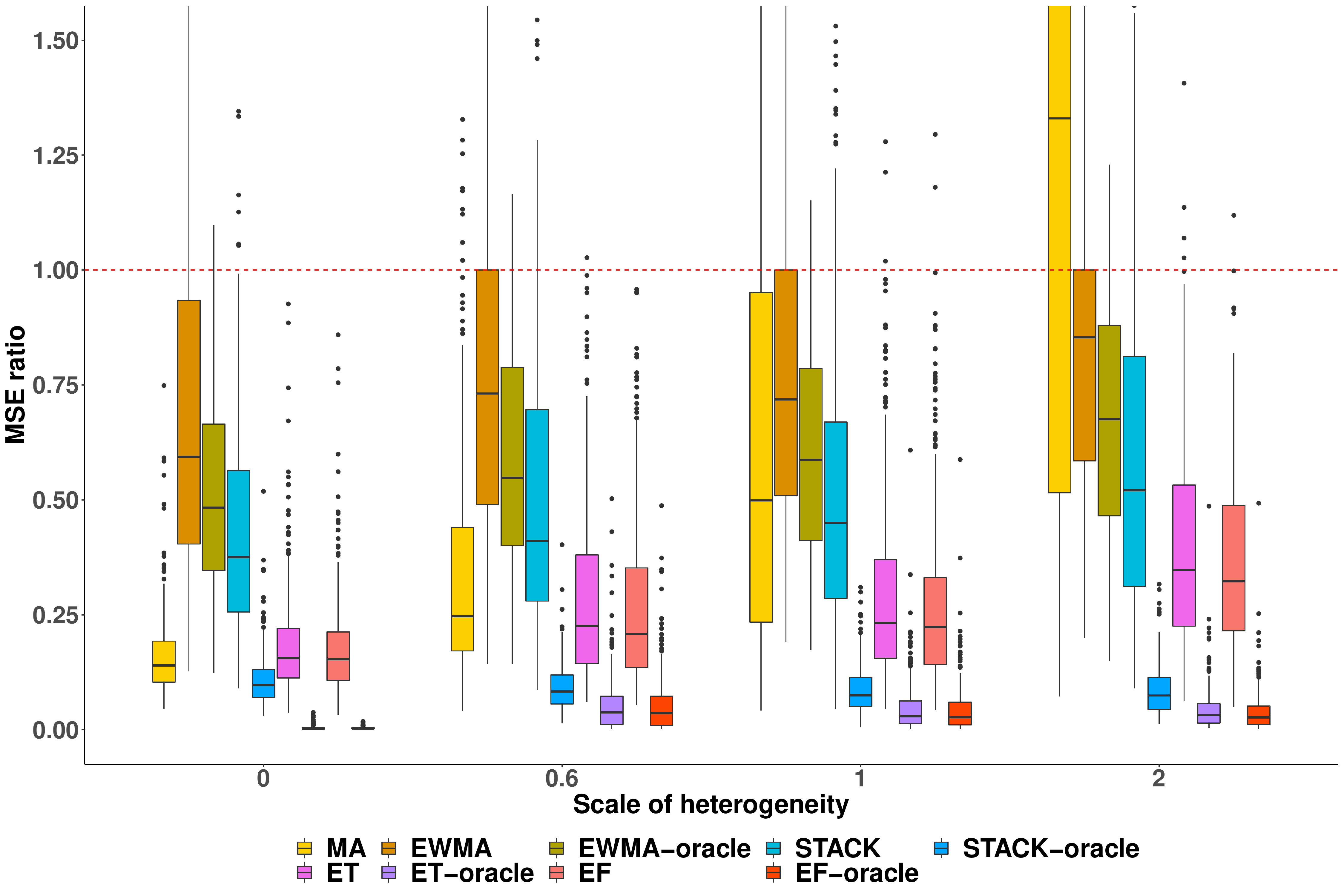}}
  \caption{}
 \end{subfigure}
 \caption{
Box plots of the MSE ratios of CATE estimators, respectively, over LOC (\textbf{CT}) and a sample size of \textbf{500} at each site, and covariate dimension of \textbf{20} for (a) discrete grouping and (b) continuous grouping across site, respectively, varying scale of global heterogeneity. 
Estimators ending with ``-oracle" makes use of ground truth treatment effects. 
Different colors imply different estimators, and x-axis, i.e., the value of $c$, differentiates the scale of global heterogeneity. The red dotted line denotes an MSE ratio of 1. 
MA performance is truncated due to large MSE ratios. 
The proposed ET and EF achieve competitive performance compared to standard model averaging or ensemble methods and are robust to heterogeneity across settings. 
Note that ET-oracle and EF-oracle achieve close-to-zero MSE ratios with very small spreads in some settings. 
 }
 \label{web:sim_p20}
\end{figure}

\clearpage
\begin{figure}[!htb]
\centering
 \begin{subfigure}{0.49\textwidth}
  \centerline{\includegraphics[width=\linewidth]{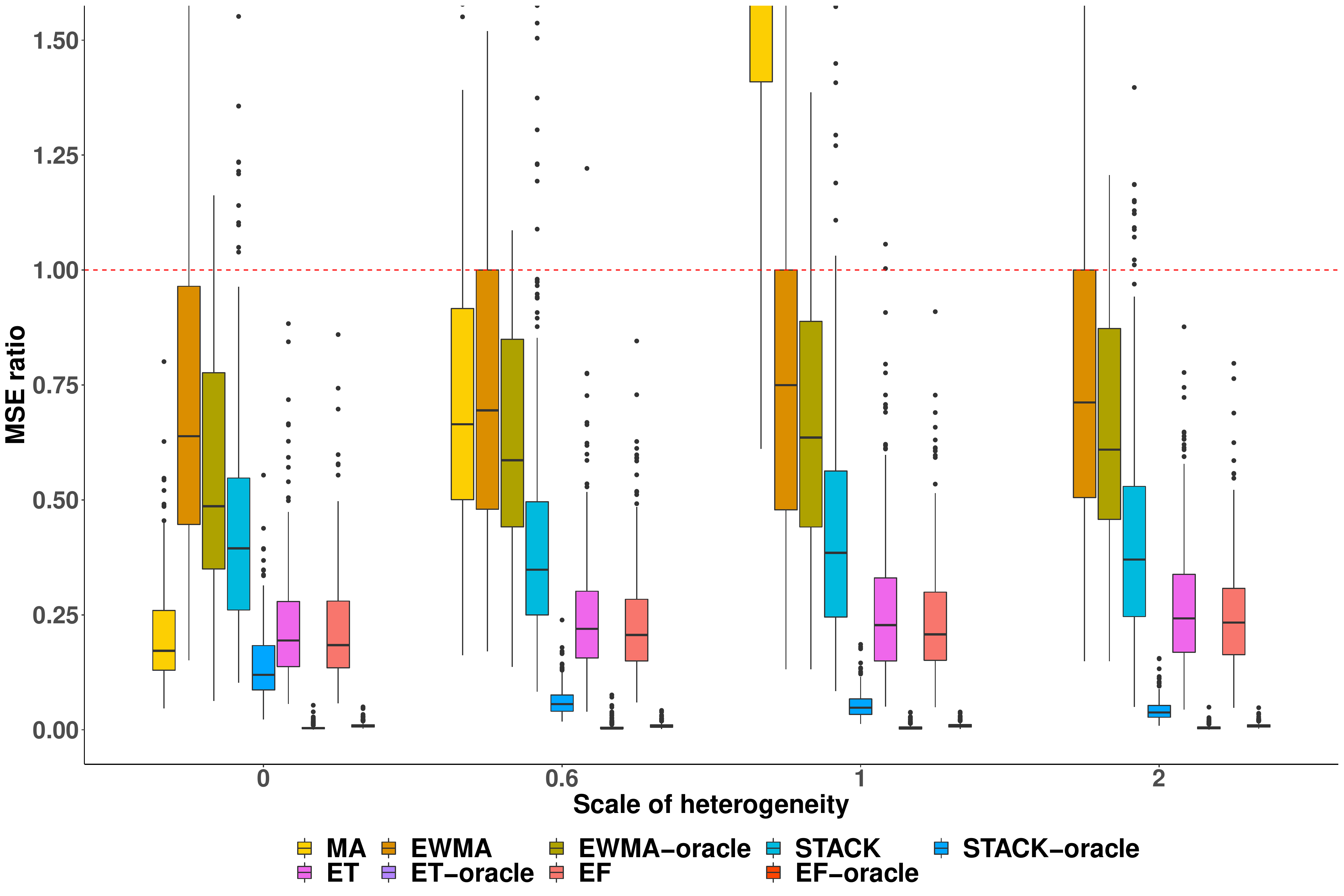}}
  \caption{}
 \end{subfigure}
 \begin{subfigure}{0.49\textwidth}
  \centerline{\includegraphics[width=\linewidth]{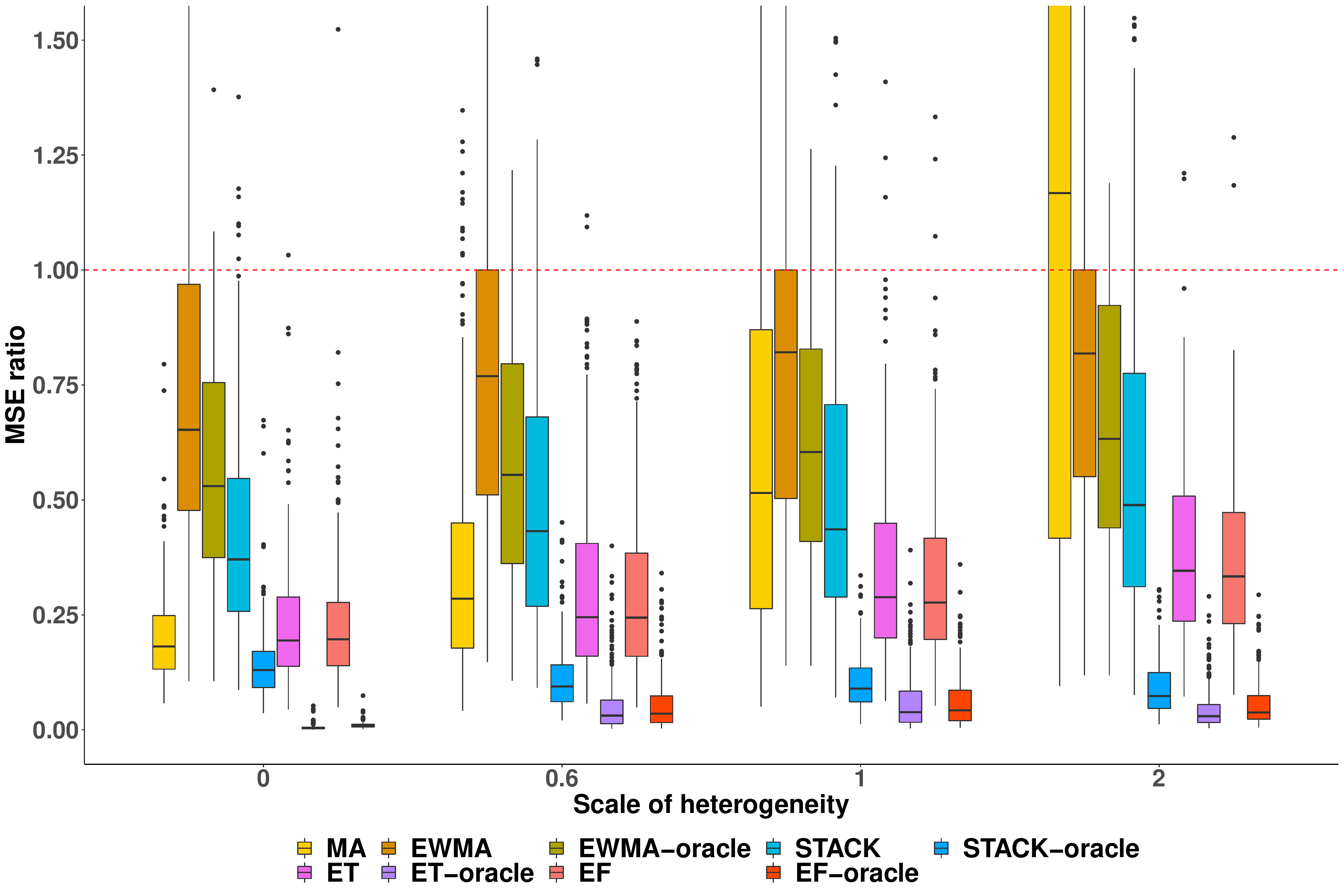}}
  \caption{}
 \end{subfigure}
 \caption{
Box plots of the MSE ratios of CATE estimators, respectively, over LOC (\textbf{CT}) and a sample size of \textbf{500} at each site, and covariate dimension of \textbf{50} for (a) discrete grouping and (b) continuous grouping across site, respectively, varying scale of global heterogeneity. 
Estimators ending with ``-oracle" makes use of ground truth treatment effects. 
Different colors imply different estimators, and x-axis, i.e., the value of $c$, differentiates the scale of global heterogeneity. The red dotted line denotes an MSE ratio of 1. 
MA performance is truncated due to large MSE ratios. 
The proposed ET and EF achieve competitive performance compared to standard model averaging or ensemble methods and are robust to heterogeneity across settings. 
Note that ET-oracle and EF-oracle achieve close-to-zero MSE ratios with very small spreads in some settings. 
 }
 \label{web:sim_p50}
\end{figure}

\clearpage
\begin{figure}[hbt!]
\centering
 \begin{subfigure}{0.49\textwidth}
  \centerline{\includegraphics[width=\linewidth]{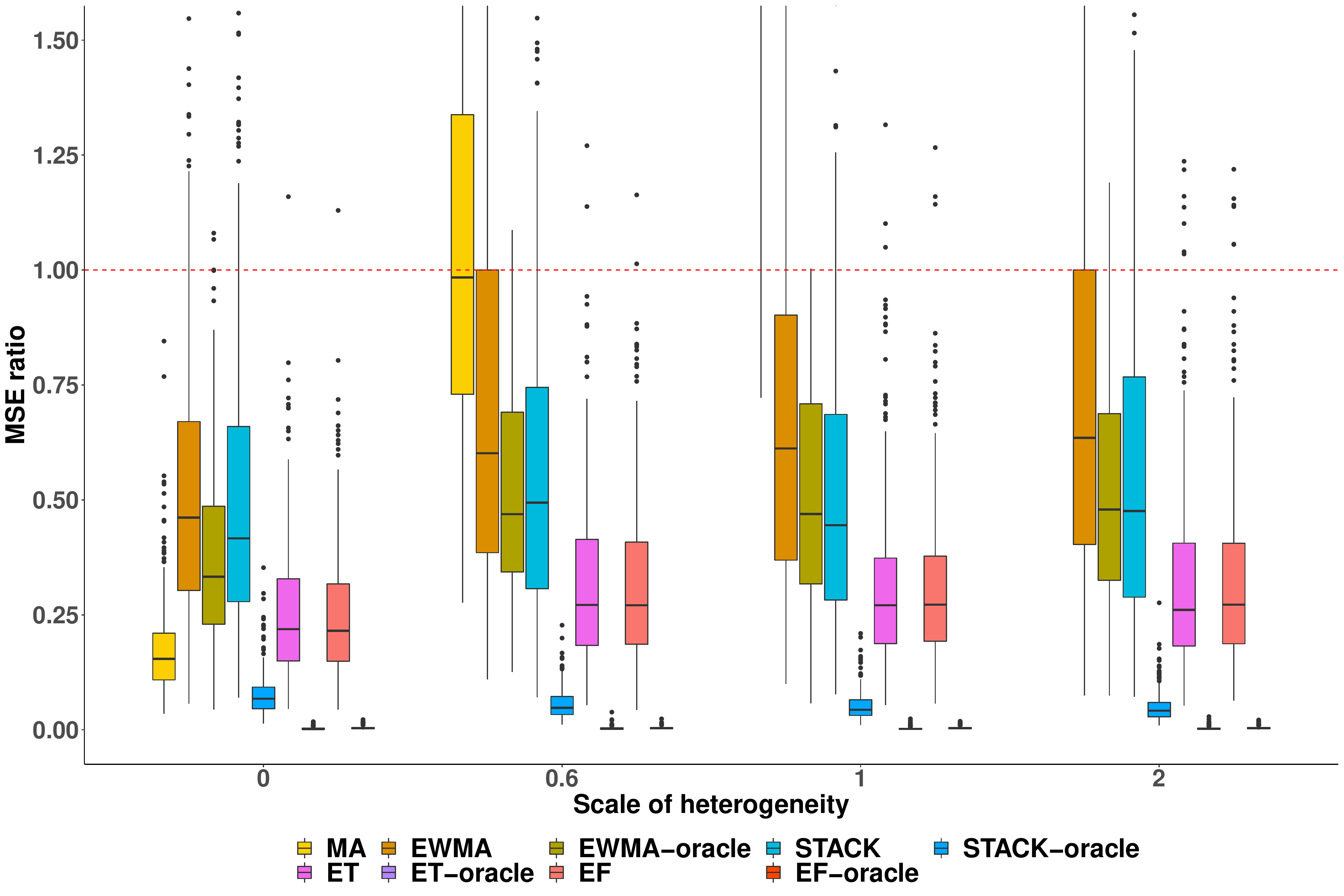}}
  \caption{}
 \end{subfigure}
 \begin{subfigure}{0.49\textwidth}
  \centerline{\includegraphics[width=\linewidth]{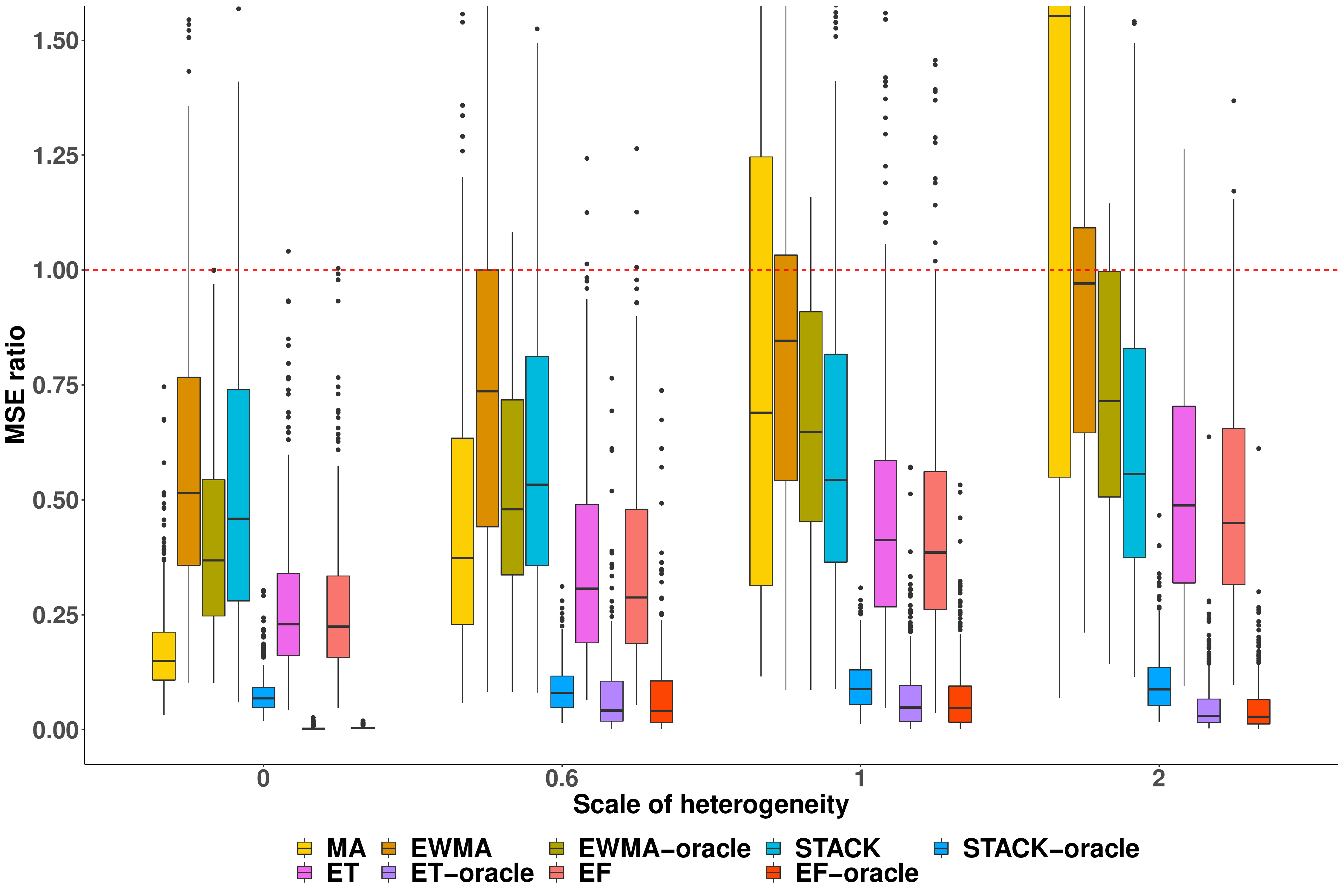}}
  \caption{}
 \end{subfigure}
 \caption{
Box plots of the MSE ratios of CATE estimators, respectively, over LOC (\textbf{CT}) and a sample size of \textbf{500} at site 1, and a sample size of \textbf{200} at other sites for (a) discrete grouping and (b) continuous grouping across site, respectively, varying scale of global heterogeneity. 
Estimators ending with ``-oracle" makes use of ground truth treatment effects. 
Different colors imply different estimators, and x-axis, i.e., the value of $c$, differentiates the scale of global heterogeneity. The red dotted line denotes an MSE ratio of 1. 
MA performance is truncated due to large MSE ratios. 
The proposed ET and EF achieve competitive performance compared to standard model averaging or ensemble methods and are robust to heterogeneity across settings. 
Note that ET-oracle and EF-oracle achieve close-to-zero MSE ratios with very small spreads in some settings. 
 }
 \label{web:sim_diffN}
\end{figure}

\clearpage
\begin{figure}[hbt!]
\centering
 \begin{subfigure}{0.49\textwidth}
  \centerline{\includegraphics[width=\linewidth]{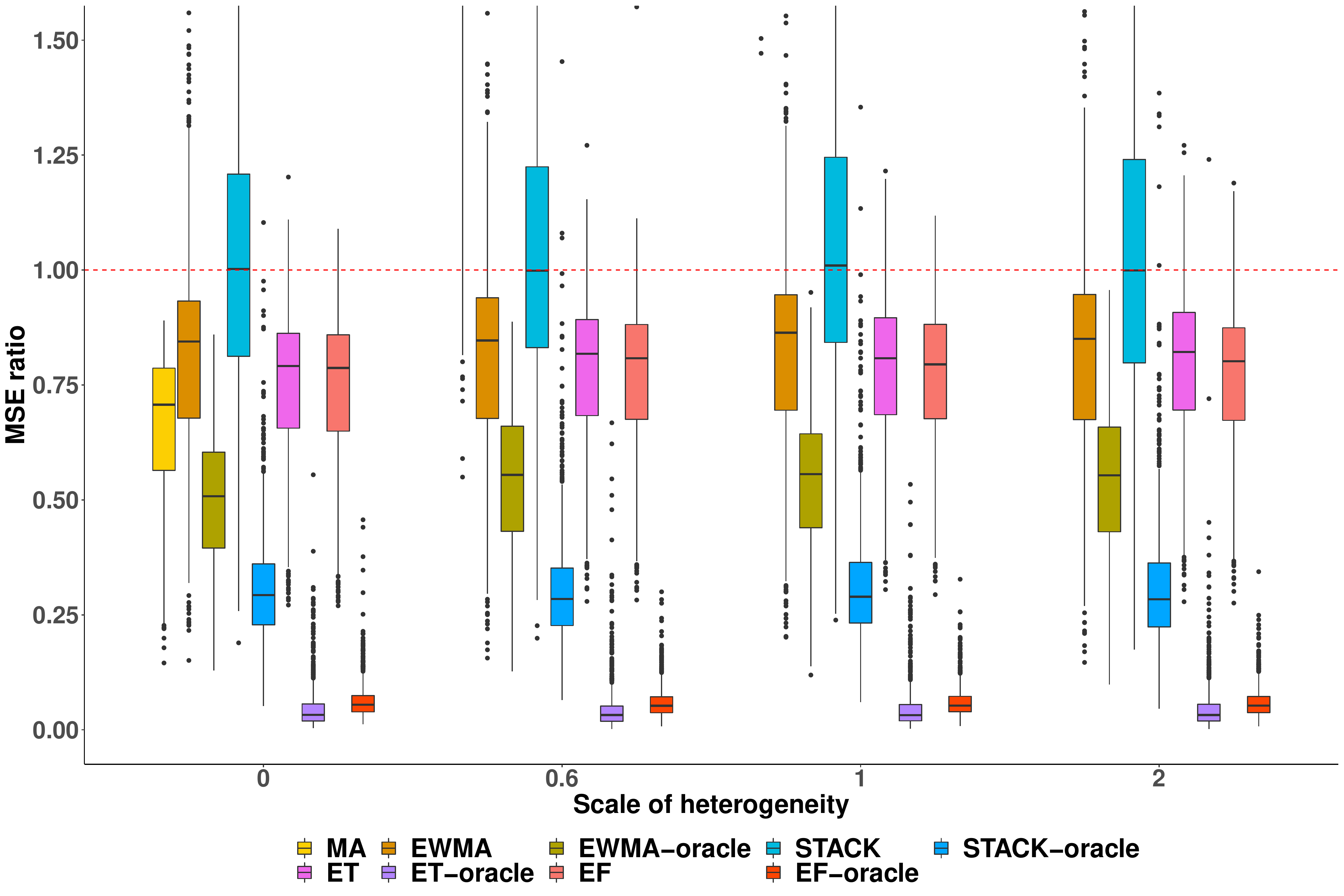}}
  \caption{}
 \end{subfigure}
 \begin{subfigure}{0.49\textwidth}
  \centerline{\includegraphics[width=\linewidth]{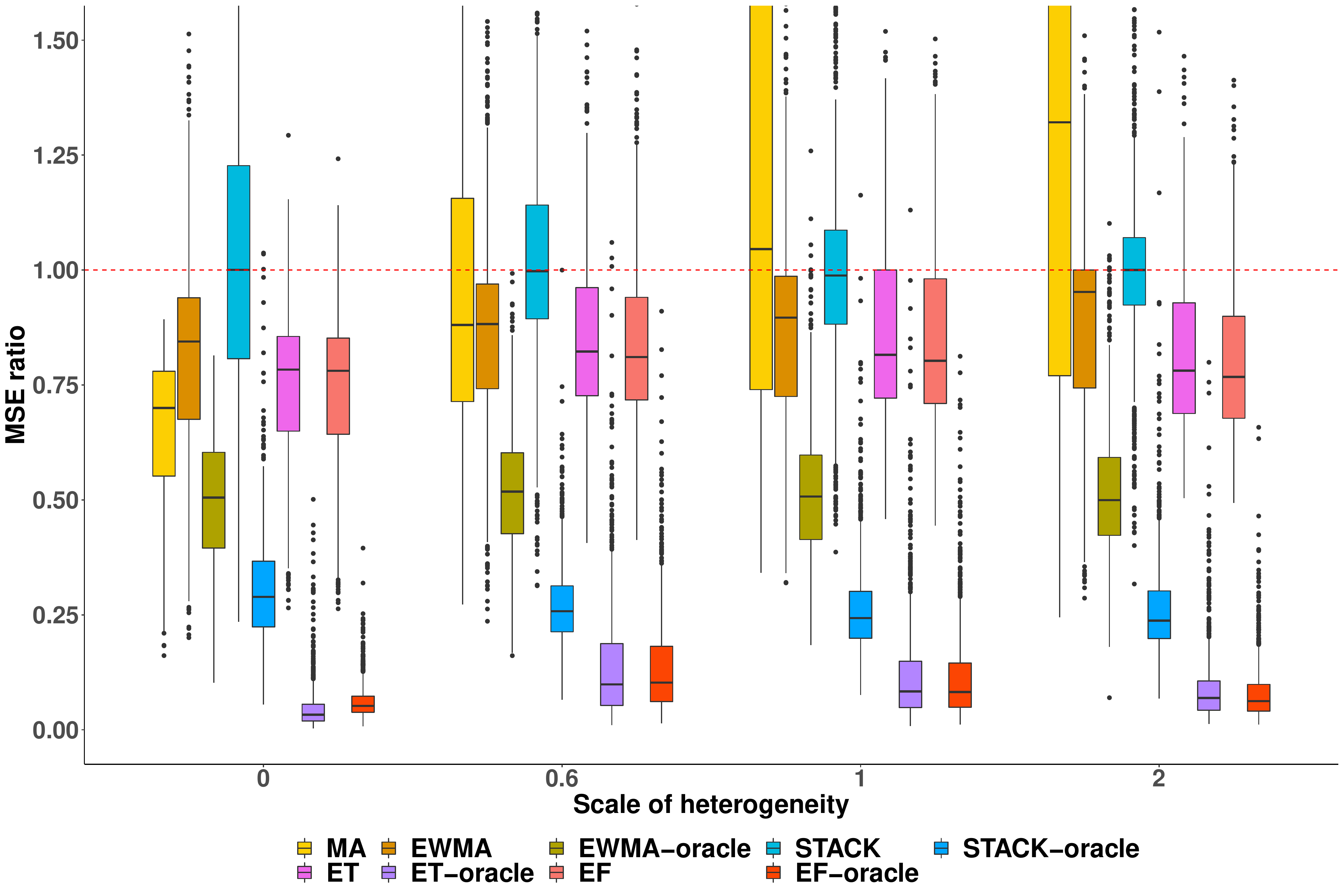}}
  \caption{}
 \end{subfigure}
 \caption{
Box plots of the MSE ratios of CATE estimators, respectively, over LOC (\textbf{CF}) and a sample size of \textbf{100} at each site for (a) discrete grouping and (b) continuous grouping across site, respectively, varying scale of global heterogeneity. 
Estimators ending with ``-oracle" makes use of ground truth treatment effects. 
Different colors imply different estimators, and x-axis, i.e., the value of $c$, differentiates the scale of global heterogeneity. The red dotted line denotes an MSE ratio of 1. 
MA performance is truncated due to large MSE ratios. 
The proposed ET and EF achieve competitive performance compared to standard model averaging or ensemble methods and are robust to heterogeneity across settings. 
Note that ET-oracle and EF-oracle achieve close-to-zero MSE ratios with very small spreads in some settings. 
 }
 \label{web:sim_fig_cf100}
\end{figure}

\clearpage
\begin{figure}[hbt!]
\centering
 \begin{subfigure}{0.49\textwidth}
  \centerline{\includegraphics[width=\linewidth]{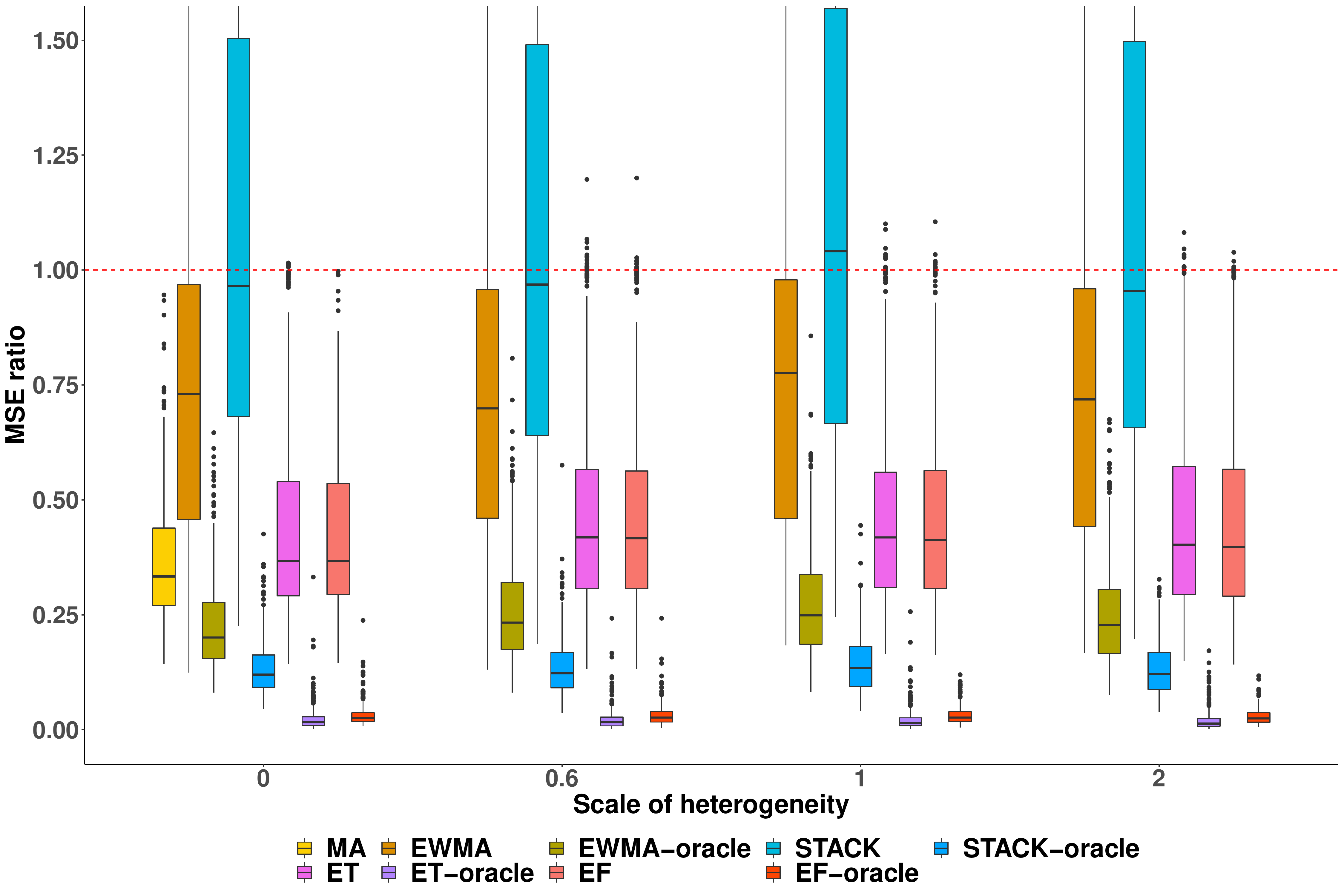}}
  \caption{}
 \end{subfigure}
 \begin{subfigure}{0.49\textwidth}
  \centerline{\includegraphics[width=\linewidth]{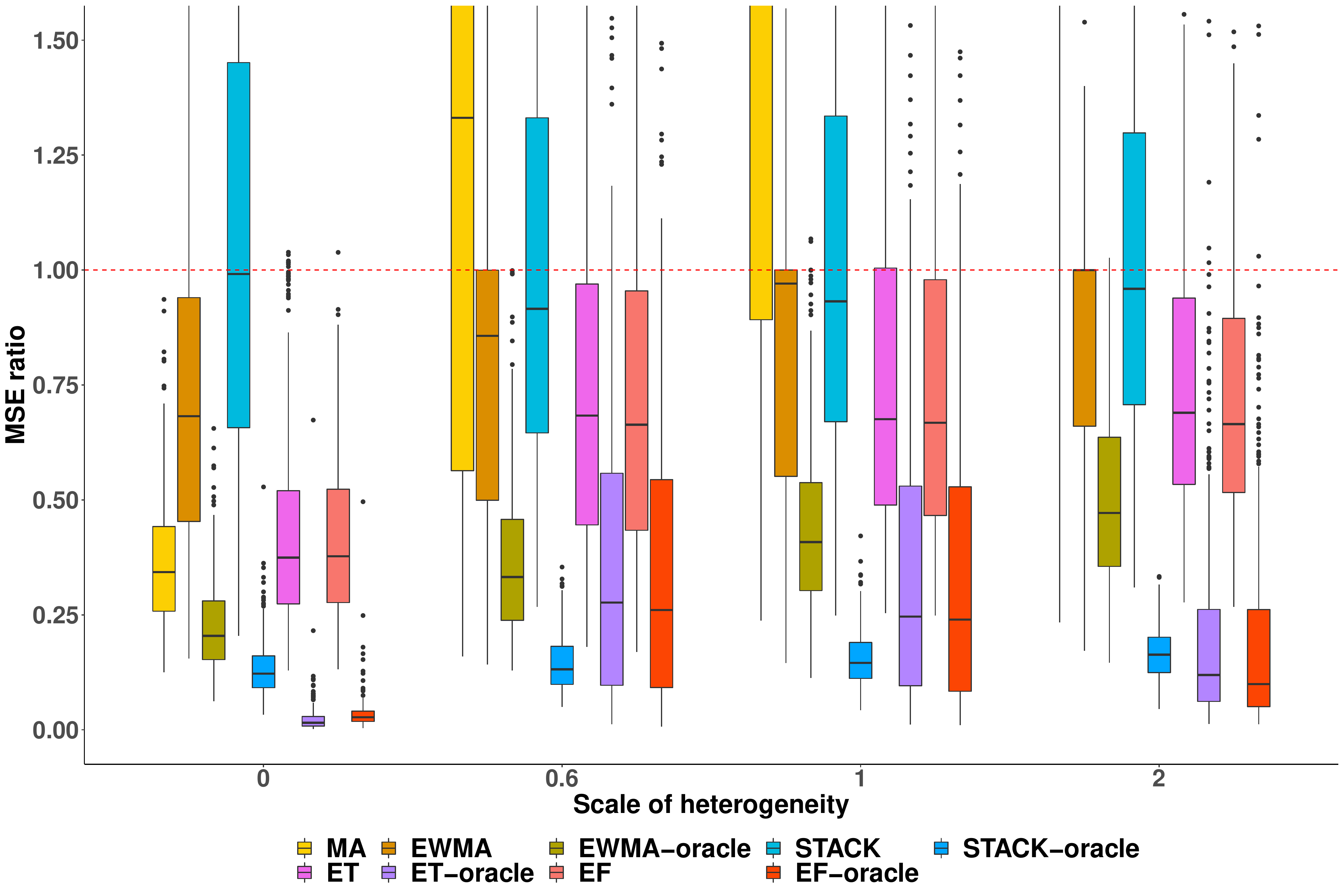}}
  \caption{}
 \end{subfigure}
 \caption{
Box plots of the MSE ratios of CATE estimators, respectively, over LOC (\textbf{CF}) and a sample size of \textbf{500} at each site for (a) discrete grouping and (b) continuous grouping across site, respectively, varying scale of global heterogeneity. 
Estimators ending with ``-oracle" makes use of ground truth treatment effects. 
Different colors imply different estimators, and x-axis, i.e., the value of $c$, differentiates the scale of global heterogeneity. The red dotted line denotes an MSE ratio of 1. 
MA performance is truncated due to large MSE ratios. 
The proposed ET and EF achieve competitive performance compared to standard model averaging or ensemble methods and are robust to heterogeneity across settings. 
Note that ET-oracle and EF-oracle achieve close-to-zero MSE ratios with very small spreads in some settings. 
 }
 \label{web:sim_fig_cf500}
\end{figure}

\clearpage
\begin{figure}[hbt!]
\centering
 \begin{subfigure}{0.49\textwidth}
  \centerline{\includegraphics[width=\linewidth]{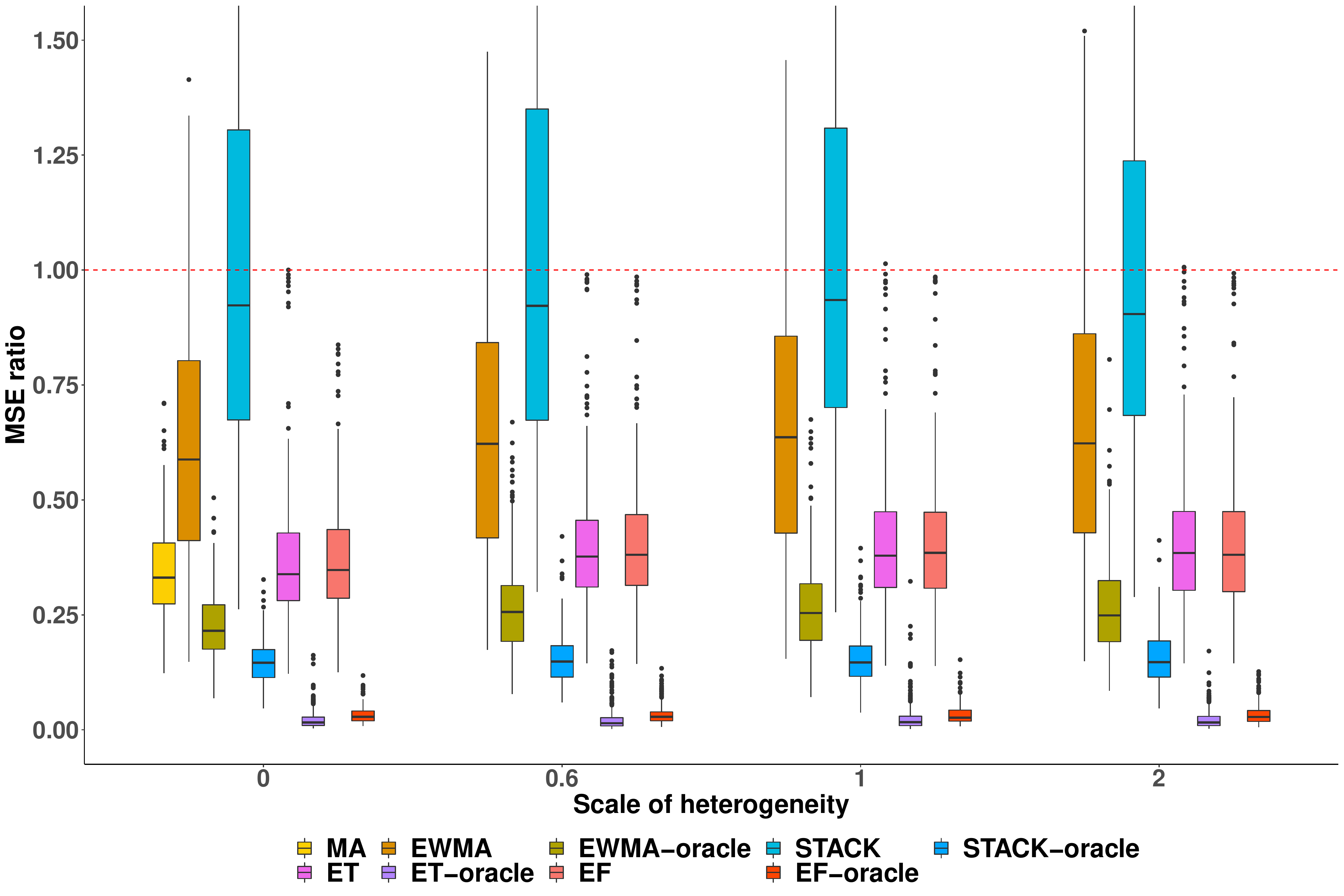}}
  \caption{}
 \end{subfigure}
 \begin{subfigure}{0.49\textwidth}
  \centerline{\includegraphics[width=\linewidth]{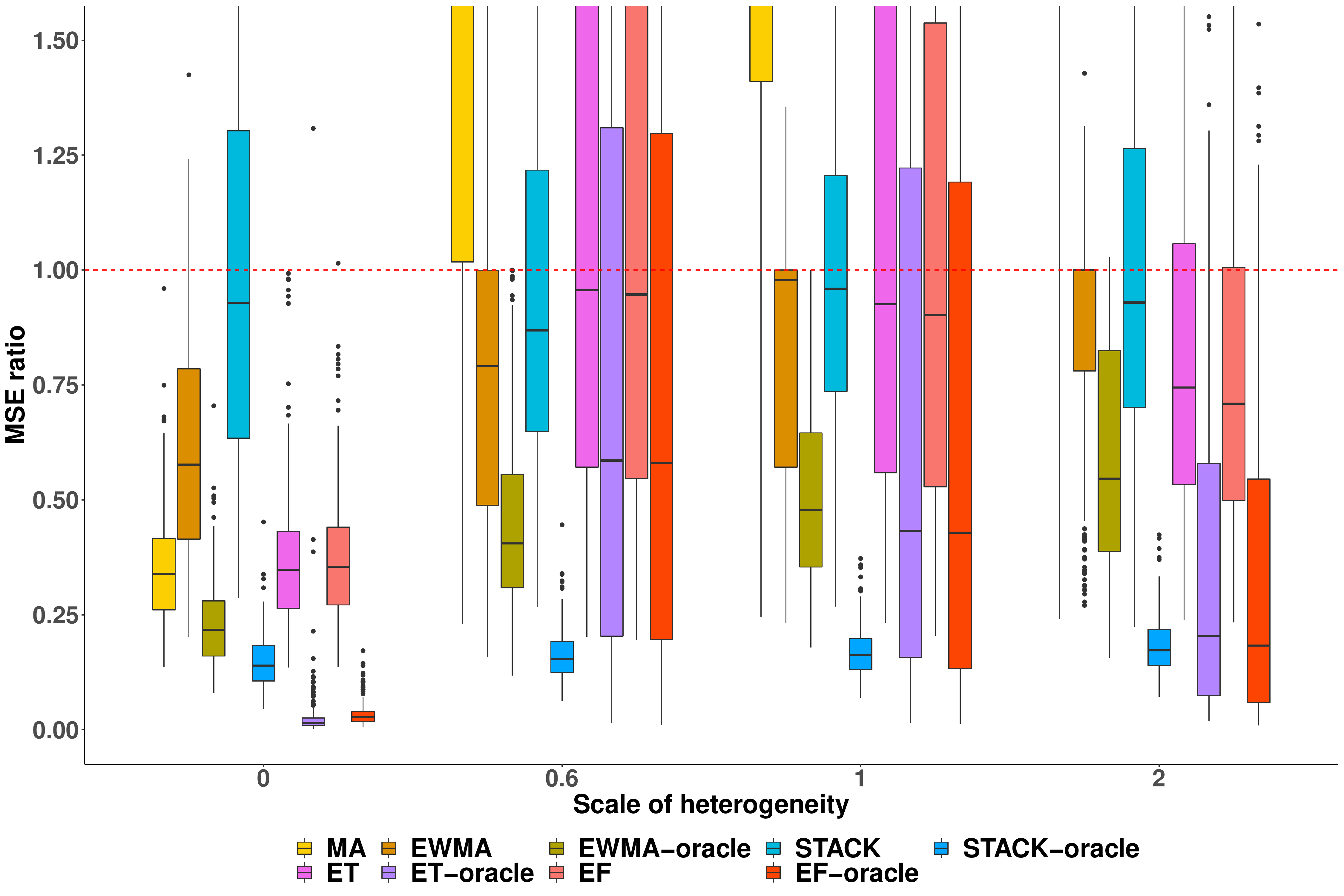}}
  \caption{}
 \end{subfigure}
 \caption{
Box plots of the MSE ratios of CATE estimators, respectively, over LOC (\textbf{CF}) and a sample size of \textbf{1000} at each site for (a) discrete grouping and (b) continuous grouping across site, respectively, varying scale of global heterogeneity. 
Estimators ending with ``-oracle" makes use of ground truth treatment effects. 
Different colors imply different estimators, and x-axis, i.e., the value of $c$, differentiates the scale of global heterogeneity. The red dotted line denotes an MSE ratio of 1. 
MA performance is truncated due to large MSE ratios. 
The proposed ET and EF achieve competitive performance compared to standard model averaging or ensemble methods and are robust to heterogeneity across settings. 
Note that ET-oracle and EF-oracle achieve close-to-zero MSE ratios with very small spreads in some settings. 
 }
 \label{web:sim_fig_cf1000}
\end{figure}

\clearpage
\section{Additional Results for Data Application}
\label{suppl-sec:real}

In real-life applications, hospitals may have different sample sizes $n_k$ that may affect the accuracy of the estimation of $\tau_k$. Table \ref{web:hosp_smry} shows hospital-level information for the 20 hospitals where the number of patients across sites varies. Information includes the region of the U.S. where the hospital is located, whether it is a teaching hospital, the bed capacity, and the number of patients within the hospital.

Hospitals with a smaller sample size may not be representative of the population, leading to an uneven level of precision for local causal estimates. To account for different sample sizes at each hospital, we consider a basic weighting strategy where we add weights to each observation $\widehat \tau_k(\bx)$ in the augmented site 1 data adjusting for the sample size of site $k$. The weights are defined as
$
    \eta_k(\bx) = K n_k \{\sum_{j=1}^K n_j\}^{-1}.
$

\begin{table}[!htb]
\centering
\caption{Hospital-level information of our analysis cohort in eICU database. Hospitals are relabeled according to their average contribution to the estimation task at hospital 1, the target site.}
\label{web:hosp_smry}
\resizebox{0.95\columnwidth}{!}{
\begin{tabular}{@{}rrrrlll@{}}
\toprule
\multicolumn{1}{l}{Hospital} &
  \multicolumn{1}{l}{Number of} &
  \multicolumn{1}{l}{Number of} &
  \multicolumn{1}{l}{Number of} &
  Bed &
  Teaching &
  \multirow{2}{*}{Region} \\
\multicolumn{1}{l}{site} &
  \multicolumn{1}{l}{patients} &
  \multicolumn{1}{l}{control} &
  \multicolumn{1}{l}{treated} &
  capacity &
  status &
   \\ \midrule
1  & 477 & 205 & 272 & $\geq$ 500 & False & South     \\
2  & 297 & 109 & 188 & $\geq$ 500 & True  & West      \\
3 & 163 & 58  & 105 & $\geq$ 500 & True  & Midwest   \\
4 & 222 & 58  & 164 & $\geq$ 500 & False & South     \\ 
5 & 659 & 165 & 494 & $\geq$ 500 & True  & Midwest   \\
6  & 305 & 174 & 131 & $\geq$ 500 & False & South     \\
7 & 347 & 109 & 238 & $\geq$ 500 & True  & Midwest   \\
8  & 523 & 162 & 361 & $\geq$ 500 & False & South     \\
9  & 210 & 78  & 132 & Unknown             & False & Unknown   \\
10 & 379 & 161 & 218 & $\geq$ 500 & True  & Midwest   \\
11 & 234 & 70  & 164 & $\geq$ 500 & True  & Midwest   \\
12 & 747 & 185 & 562 & $\geq$ 500 & True  & Northeast \\
13  & 464 & 129 & 335 & $\geq$ 500 & True  & South     \\
14 & 474 & 229 & 245 & $\geq$ 500 & False & South     \\
15 & 166 & 64  & 102 & 100 - 249           & False & Midwest   \\
16  & 388 & 94  & 294 & $\geq$ 500 & False & Midwest   \\
17 & 435 & 240 & 195 & $\geq$ 500 & True  & South     \\
18 & 200 & 55  & 145 & 250 - 499           & False & South     \\
19  & 183 & 52  & 131 & 250 - 499           & False & West      \\
20  & 149 & 71  & 78  & 250 - 499           & False & South     \\
\bottomrule
\end{tabular}
}
\end{table}

Figure~\ref{web:real_wt} visualizes the performance of oxygen therapy on hospital survival with the weighting strategy adopted. 
CT is used as the local model with propensity score modeled by a logistic regression. 
Figure~\ref{web:real_wt}(a) shows the propensity score-weighted average survival for those whose received treatment is consistent with the estimated decision. 
Treatment rule based on our method can increase survival by 4\%, more promising than the EF estimates without the weighting strategy and the LOC and the baseline. The weighting strategy takes account into the unequal sample size among the hospital network, and assign weights based on precision of local estimates. 

In the fitted EF, 
the hospital indicator remains the most important, explaining about 48\% of the decrease in training error. 
Figure~\ref{web:real_wt}(b) shows the estimated CATEs varying two important features, BMI and oxygen therapy duration. 
Patients with BMI between 36 and 40 and duration above 400 show the most benefit from oxygen therapy in the target SpO$_2$ range. 
Patients with BMI between 20 and 30 and duration between 100 and 400 may not benefit from such alteration. 
The treatment estimates are similar to that in Figure~\ref{fig:real}(b)
Figure~\ref{web:real_wt}(c) visualizes 
the proposed model averaging scheme with
data-adaptive weights $\omega_{k}(\bx)$ in the fitted EF with respect to BMI for different models, while holding other variables constant. The weights of hospital 1 are quite stable while models from other sites may have different contribution to the weighted estimator for different values of BMI. 
Similar to Figure~\ref{fig:real}(c), hospitals with a larger bed capacity tend to be similar to hospital 1, and are shown to provide larger contributions. 
In general, the weighting strategy helps further improve the expected survival rate. The patterns in each subfigure are similar to Figure~\ref{fig:real}, which indicates the robustness of our proposed estimators.
We do stress that improvements to the weighting strategy for different sample sizes at each site are needed. 
A strategy considering both treatment proportion as well as covariate distributions across sites may further enhance the data-adaptive model averaging estimator.

\begin{figure*}[hbt!]
\centering
\vspace{1cm}
 \begin{subfigure}{0.2\textwidth}
  \centerline{\includegraphics[width=\linewidth]{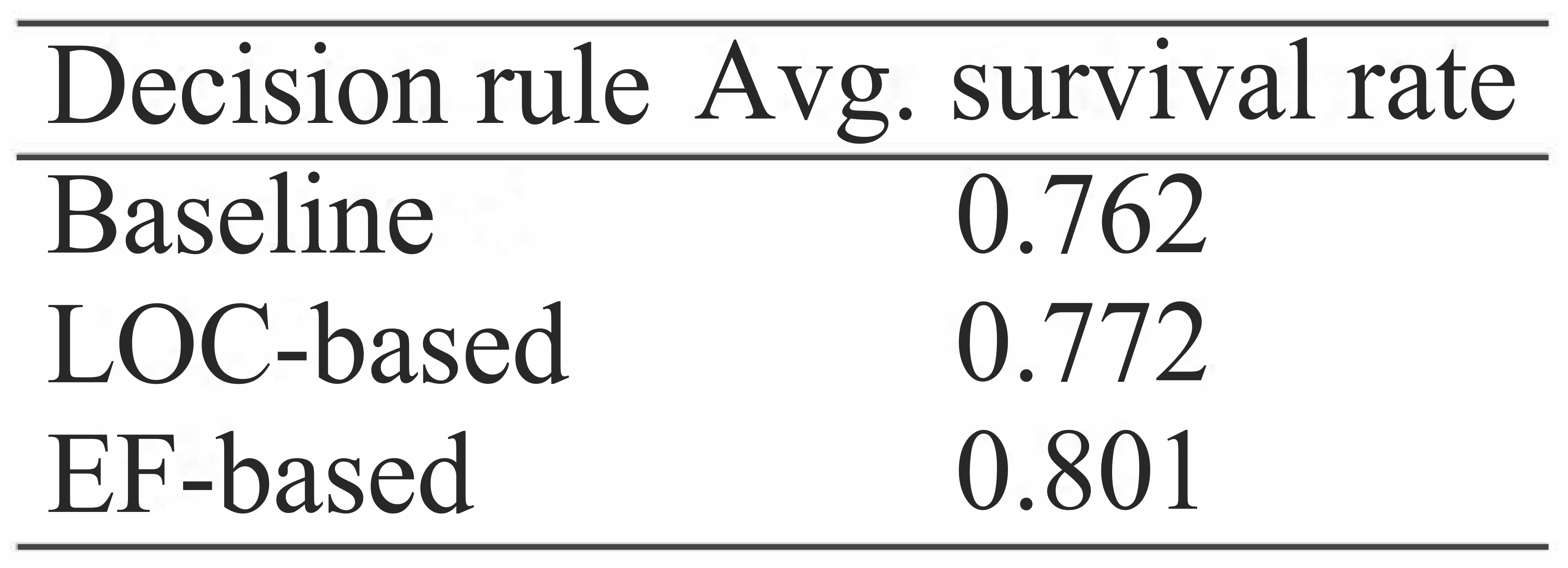}}
  \caption{}
 \end{subfigure}
 \begin{subfigure}{0.39\textwidth}
  \centerline{\includegraphics[width=0.85\linewidth]{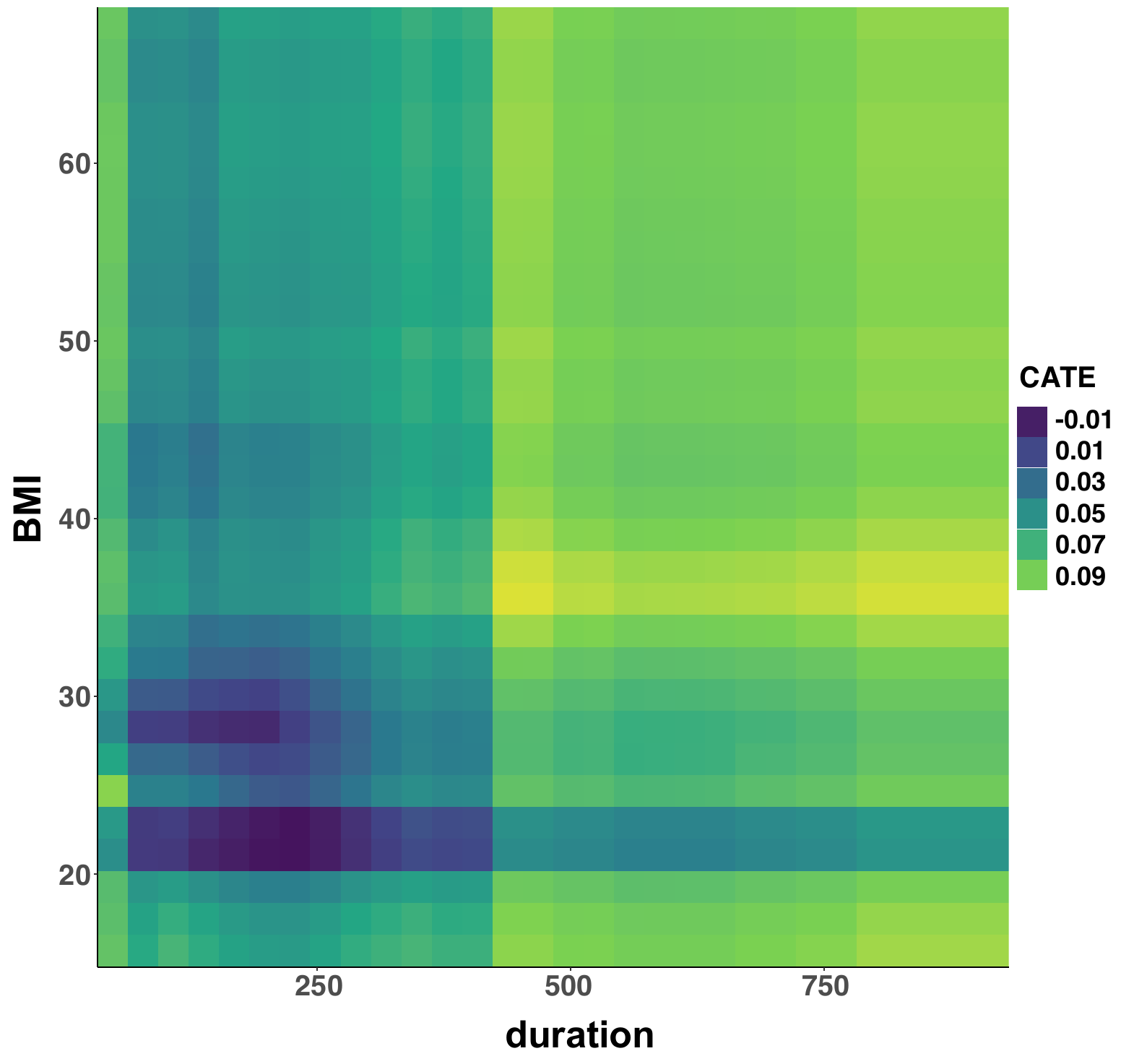}}
  \caption{}
 \end{subfigure}
 \begin{subfigure}{0.39\textwidth}
  \centerline{\includegraphics[width=0.85\linewidth]{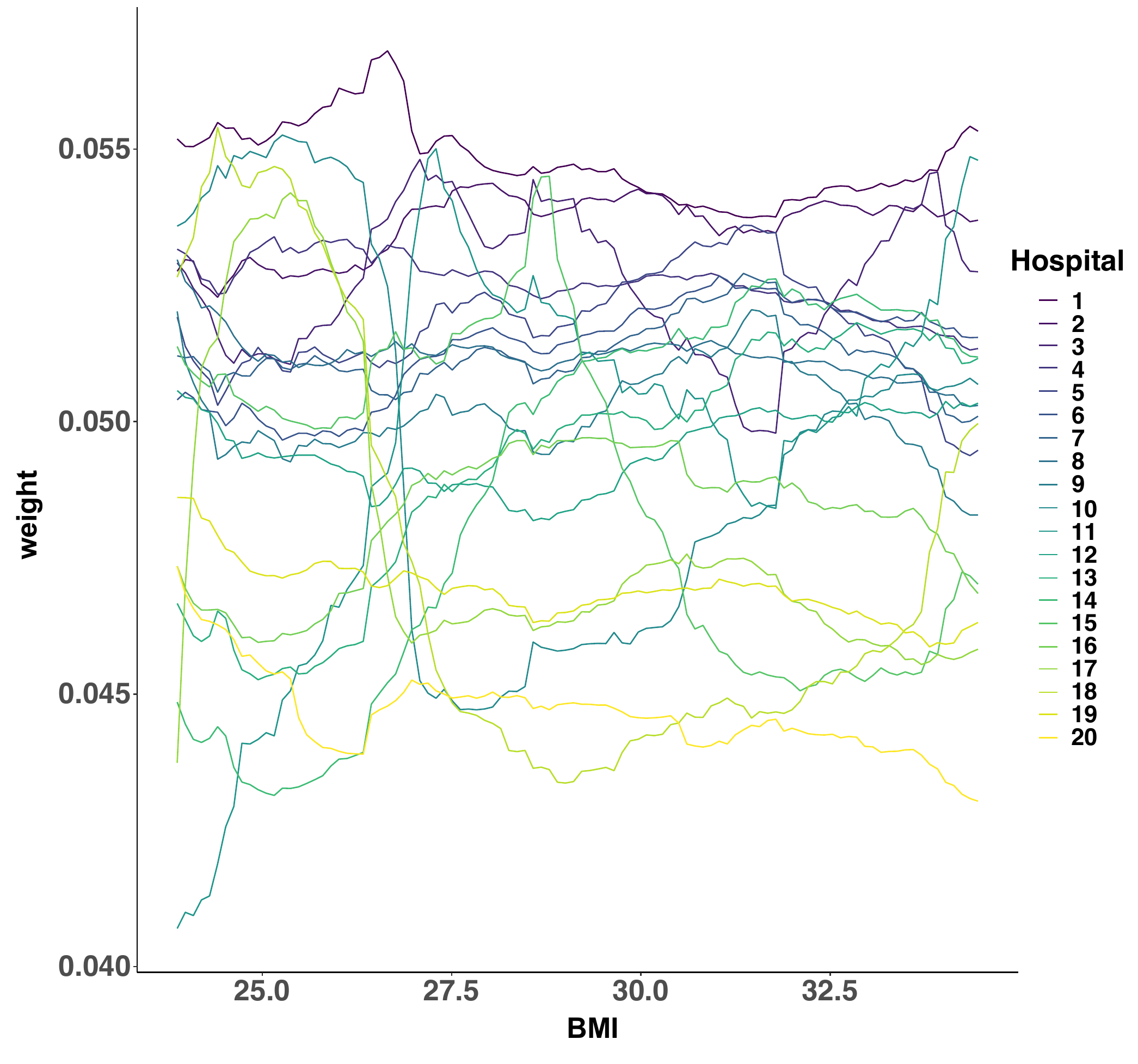}}
  \caption{}
 \end{subfigure}
 \vspace{-1cm}
 \caption{
 Application to estimating treatment effects of oxygen therapy on survival with a \textbf{sample size weighting strategy}. 
 (a) Expected survival of treatment decision following different estimators. 
 (b) Estimated treatment effects varying duration and BMI, two important features in the fitted EF. 
 (c) Visualization of data-adaptive weights in EF varying BMI. 
}
\label{web:real_wt}
\end{figure*}

\section{Real Data Access}
\label{suppl-sec:code}

Although the eICU data used in our application example cannot be shared subject to the data use agreement, access can be individually requested at \url{https://eicu-crd.mit.edu/gettingstarted/access/}.

\chapter{for Chapter 4}

\section{Additional Literature Review}\label{suppl:lit}

Typical model-based methods for deriving IDRs include Q-learning such as \citep{watkins1992q,murphy2003optimal,moodie2007demystifying} and A-learning such as \citep{robins2000marginal,murphy2005experimental} where a model of responses is imposed and the optimal decision rule is obtained by optimizing value function derived from the model. 
Model-based methods posit a model of responses given observed covariates and treatment assignments, and obtain the optimal IDR by optimizing the corresponding value function derived from the model. 
Q-learning optimizes the corresponding value function derived from a parametric model of responses given observed covariates and treatment assignments, and it results in an optimal decision rule. 
A-learning is a semiparametric method, which derives from a model that directly describes the difference between treatments, with the baseline remaining unspecified. 
On the other hand, model-free methods such as \cite{zhang2012robust,zhao2012estimating,zhao2015doubly} assign values to actions simply through trial and error without pre-specifying a model. 
Besides, contextual bandit methods (see \cite{bietti2021contextual} and references therein) test out different actions and automatically learn which one has the most rewarding outcome for a given situation. 
See \citep{chakraborty2010inference,chakraborty2013statistical,laber2014dynamic,kosorok2015adaptive} and references therein for a comprehensive review on general IDRs under causal settings.

\section{Proofs of Propositions}\label{suppl:proof}

Here we show proofs of Propositions \ref{prop:obj} and \ref{prop:obj2} in Section~\ref{sec:diff_g}. 

\begin{proof}[Proof of Proposition \ref{prop:obj}]
We observe that to maximize the objective function in~\eqref{eq:obj_max} is equivalent to maximizing 
\begin{align*}
    & E_X \big[ G_{S|X} \{E (Y|X,S,A=d(X)) \} |X \big] \\
    &= E_X \big[ G_{S|X} \{E (Y|X,S,A=1) \}\mathbbm{1}(d(X) = 1)   \\
    &~~~~~~~~ + G_{S|X} \{E(Y|X,S,A=-1)\mathbbm{1}(d(X) = -1) \} \big] \\
    &= E_X \big\{ \mathbbm{1}(d(X) = 1) [G_{S|X} \{E (Y|X,S,A=1) \} - G_{S|X} \{E (Y|X,S,A=-1) \} ] \\
    &~~~~~~~~ + G_{S|X} \{E (Y|X,S,A=-1)\} \big\} \\
    & \propto E_X \big\{ \mathbbm{1}(d(X) = 1) [G_{S|X} \{E (Y|X,S,A=1) \}- G_{S|X} \{E (Y|X,S,A=-1) \}] \big\}.
\end{align*}
\end{proof}

\begin{proof}[Proof of Proposition \ref{prop:obj2}]

Let $d(x) = \operatorname{sgn}\{f(x)\}$, by this transformation, we consider the following objective on a smooth function $f(x)$, 
\begin{align*}
    &  {\arg\max}_{d\in\mathbb{D}}   \frac{1}{n} \sum_{i = 1}^{n} \big\{ \mathbbm{1}(d(x_i)=1) [g_1(x_i)-g_2(x_i)] \big\} \\
    & = {\arg\max}_{f}  \frac{1}{n} \sum_{i = 1}^{n} \mathbbm{1}[ \operatorname{sgn}\{f(x_i)\} = 1] \cdot [g_1(x_i)-g_2(x_i)] \\
    & =  {\arg\min}_{f}  \frac{1}{n} \sum_{i = 1}^{n} \mathbbm{1}\{ 1 \cdot f(x_i) < 0 \} \cdot [g_1(x_i)-g_2(x_i)] \\
    & =  {\arg\min}_{f}  \frac{1}{n} \sum_{i = 1}^{n} \mathbbm{1}[ \operatorname{sgn}\{g_1(x_i)-g_2(x_i)\} \cdot f(x_i) <0 ] \cdot |g_1(x_i)-g_2(x_i)|.
\end{align*}
The sign of the estimated $f$ above is a $d$ to \eqref{eq:obj2}. 

Hence, the proposed classification-based objective is to minimize
\begin{align*} 
    \frac{1}{n} \sum_{i = 1}^{n} \mathbbm{1}[ \operatorname{sgn}\{g_1(x_i)-g_2(x_i)\} \cdot f(x_i) <0 ] \cdot |g_1(x_i)-g_2(x_i)|.  
\end{align*}
To this point, we have transformed the optimization problem \eqref{eq:obj_max} into a weighted classification problem where for subject $i$ with features $x_i$, the true label is $\operatorname{sgn}\{g_1(x_i)-g_2(x_i)\}$ and the sample weight is $|g_1(x_i)-g_2(x_i)|$. 

\end{proof}

\section{Details on Modeling and Hyperparameter Tuning}\label{suppl:tune}

In our implementation, neural networks with mean or quantile losses are used to fit the models with hyperparameters tuned via a 5-fold cross validation in the training data sets. 
Specifically, implemented in TensorFlow \citep{abadi2016tensorflow}, neural networks with mean squared loss is used to model $E(Y|X,S,A)$ separated by the control arm and the treatment arm, respectively. 
For continuous $S$, to model $Q_{S|X,A} \{ E (Y|X,S,A) \}$, neural networks with quantile loss is used with a prespecified $\tau$, for the control arm and the treatment arm, respectively. 
In the final weighted classification model, neural networks with cross-entropy loss is used. 
Note that the model choices here are flexible. One can perform model selection if they would like to.

Hyperparameter tuning helps prevent overfitting and is essential in machine learning methods or other black-box algorithms such as neural networks. In our implementation, the optimal hyperparameters are obtained via a 5-fold cross validation in the training data sets. Specifically, we consider 
the number of hidden layers (1, 2, and 3 layers), 
the number of hidden units in each layer (256, 512, and 1024 nodes), 
activation function (RELU, Sigmoid, and Tanh), 
optimizer (Adam, Nadam, and Adadelta), 
dropout rate (0.1, 0.2, and 0.3), 
number of epochs (50, 100, and 200), 
and batch size (32, 64, and 128).

\section{Additional Simulations}\label{suppl:sim}

\subsection{Different Quantile Criteria} 
For the quantile criteria, we also consider $\tau$ = 0.1 and 0.5, respectively. Table~\ref{t:sim_exp2_q0.1_0.5} presents the simulation results for Example 2 with continuous $S$ using 0.1 quantile criterion and 0.5 quantile criterion, respectively. 
Results show that when $\tau$ is small, there is more strength in the proposed method, as the algorithm aims to improve the worst-outcome scenarios. The proposed RISE has the largest gain in objective and value among vulnerable subjects when $\tau$ is 0.1, and has similar performance as the compared approaches when $\tau$ is 0.5.

\subsection{$S$ as a Noise Variable}
We generate the outcome $Y$ using the following model where $S$ is not involved: 
$Y = \mathbbm{1}(X_1\leq0.5)\{8+12\mathbbm{1}(A=1)+16\exp(X_2)-26\mathbbm{1}(A=1)X_2\} + \mathbbm{1}(X_1>0.5)\{13+3\mathbbm{1}(A_i=1)+2\exp(X_2)-8\mathbbm{1}(A=1)X_2\} + \epsilon$, 
where $X_j \sim U(0,1), ~j=1,2$, $A \sim Bernoulli(0.5)$, and $\epsilon \sim N(0,1)$. For continuous $S$, $S = \expit\{-2.5(1-X_1-X_2)\}$; for discrete $S$, we consider a binary $S$ that satisfies $\log \{P(S = 1 |X)/P(S = 0 |X)\} = -2.5(1-X_1-X_2)$. 
Table~\ref{t:sim_notimp} summarizes the performance of the proposed IDRs compared to the mean criterion for Example 2. The estimated objective and value function are similar for the compared IDRs, which indicates the robustness of the proposed RISE. 

\subsection{Violations of Causal Assumptions } 

To further test the robustness of the proposed RISE, we consider scenarios where the causal ssumptions in Section~\ref{sec:assump} may not hold. 
To test the violation of positivity assumption in Assumption~\ref{assump:rise-pos}, using the same setting as in Example 2, we consider an extreme propensity score, or the probability of being treated given $X$ and $S$. 
Specifically, we let $A$ satisfy $\log \{P(A_i = 1 |X_i)/P(A = 0 |X_i)\} = -1.2(-S_i + X_{i1} - X_{i2} + X_{i3} - X_{i4} + X_{i5} - X_{i6})$. 
To test the unconfoundedness assumption in Assumption~\ref{assump:rise-unconf}, a random normal noise, $e \sim N(0,1)$ is added to $X_1$ in the setting of Example 2. 
The simulation results are presented in Table~\ref{t:sim_pos} and Table~\ref{t:sim_unconf} respectively.

\clearpage
\begin{table}[!htb]
\centering
\caption{Simulation results for Example 2 with continuous $S$ using 0.1 quantile criterion and 0.5 quantile criterion, respectively. Standard error in parenthesis. 
The proposed RISE has more strengths when $\tau$ is small, as the algorithm aims to improve the worst-outcome scenarios. 
}
\label{t:sim_exp2_q0.1_0.5}
\resizebox{0.98\columnwidth}{!}{
\begin{tabular}{@{}ccccccc@{}}
\toprule
Type of $S$ & $\tau$ & IDR & Obj. (all) & Obj. (vulnerable) & Value (all) & Value (vulnerable) \\ \midrule
\multirow{5}{*}{Cont.} & \multirow{5}{*}{0.1}
    & Base & 7.93 (0.03) & 7.92 (0.03) & 17.7 (0.02) & 8.64 (0.07) \\
    &   & Exp  & 8.88 (0.05) & 8.85 (0.05) & {17.8} (0.02) & 10.6 (0.12) \\
    &  & {PT-Base}   & {6.97 (0.02)} & {6.95 (0.02)} & {17.9 (0.03)} & {6.65 (0.04)} \\
    & & {PT-Exp}   & {7.11 (0.02)} & {7.08 (0.03)} & {\textbf{18.0} (0.03)} & {6.96 (0.05)} \\
    &   & RISE & \textbf{13.8} (0.01) & \textbf{13.7} (0.02) & 16.9 (0.01) & \textbf{20.9} (0.03) \\
  \midrule
\multirow{5}{*}{Cont.} & \multirow{5}{*}{0.5}
    & Base & 17.3 (0.04) & 17.2 (0.04) & 17.7 (0.02) & 23.8 (0.19) \\
    &    & Exp  & 17.2 (0.03) & \textbf{17.4} (0.03) & {17.8} (0.02) & 22.1 (0.17) \\
    &  & {PT-Base}   & {17.3 (0.05)} & {17.3 (0.05)} & {18.0 (0.03)} & {23.9 (0.26)} \\
    &  & {PT-Exp}   & {\textbf{17.4} (0.05)} & {\textbf{17.4} (0.05)} & {\textbf{18.1} (0.03)} & {\textbf{24.0} (0.25)} \\
    &    & RISE & \textbf{17.4} (0.04) & \textbf{17.4} (0.04) & {17.8} (0.02) & \textbf{24.0} (0.22) \\
 \bottomrule
 \vspace{0.5cm}
\end{tabular}
}
\end{table}

\clearpage
\begin{table}[!htb]
\centering
\caption{Simulation results for scenario when $S$ is a noise variable. Vulnerable subjects cannot be defined as $S$ is not important in the example. The estimated objective and value function are similar for the compared IDRs, which indicates the robustness of the proposed RISE. }
\label{t:sim_notimp}
\resizebox{0.98\columnwidth}{!}{
\begin{tabular}{@{}cccccc@{}}
\toprule
Type of $S$ & IDR & Obj. (all) & Obj. (vulnerable) & Value (all) & Value (vulnerable) \\ \midrule
\multirow{5}{*}{Disc.} 
 & Base & 27.5 (0.03) & - & 27.5 (0.06) & - \\
 & Exp   & 27.5 (0.03) & - & 27.5 (0.06) & - \\
  & {PT-Base}   & {27.5 (0.02)} & {-} & {27.5 (0.03)} & {-} \\
 & {PT-Exp}   & {27.5 (0.02)} & {-} & {27.5 (0.03)} & {-} \\
 & RISE   & 27.5 (0.03) & - & 27.5 (0.06) & - \\
 \midrule
\multirow{5}{*}{Cont.} 
 & Base & 27.2 (0.04) & - & 27.3 (0.07) & - \\
 & Exp   & 27.2 (0.04) & - & 27.3 (0.07) & - \\
 & {PT-Base}   & {27.2 (0.04)} & {-} & {27.3 (0.07)} & {-} \\
 & {PT-Exp}   & {27.2 (0.04)} & {-} & {27.3 (0.07)} & {-} \\
 & RISE   & 27.2 (0.04) & - & 27.3 (0.07) & - \\ 
 \bottomrule
\end{tabular}
}
\end{table}

\clearpage
\begin{table}[!htb]
\centering
\caption{Simulation results for Example 2 where the positivity assumption in Assumption~\ref{assump:rise-pos} is nearly violated. Standard error in parenthesis. }
\label{t:sim_pos}
\resizebox{0.98\columnwidth}{!}{
\begin{tabular}{@{}cccccc@{}}
\toprule
Type of $S$ & IDR & Obj. (all) & Obj. (vulnerable) & Value (all) & Value (vulnerable) \\ \midrule
\multirow{5}{*}{Disc.} 
 & Base & 10.0 (0.03) & 11.1 (0.03) & 19.3 (0.02) & 16.1 (0.04) \\
 & Exp  & 8.80 (0.03) & 9.77 (0.04) & \textbf{19.5} (0.02) & 13.6 (0.04) \\
  & {PT-Base} & 9.88 (0.03) & 10.7 (0.04) & 18.9 (0.02) & 15.4 (0.05) \\
 & {PT-Exp}   & 8.42 (0.03) & 9.14 (0.04) & 19.1 (0.02) & 12.4 (0.05) \\
 & RISE & \textbf{13.5} (0.01) & \textbf{14.0} (0.01) & 17.3 (0.01) & \textbf{22.0} (0.02) \\
 \midrule
\multirow{5}{*}{Cont.} 
 & Base  & 11.5 (0.03) & 11.5 (0.04) & 17.5 (0.03) & 13.1 (0.04) \\
 & Exp   & 10.4 (0.04) & 10.4 (0.05) & 17.8 (0.04) & 10.3 (0.05) \\
 & {PT-Base}  & 11.0 (0.04) & 10.9 (0.04) & 17.7 (0.02) & 11.8 (0.04) \\
 & {PT-Exp}   & 9.63 (0.03) & 9.61 (0.03) & \textbf{18.0} (0.02) & 8.38 (0.03) \\
 & RISE  & \textbf{14.3} (0.01) & \textbf{14.3} (0.02) & 16.9 (0.01) & \textbf{20.4} (0.02) \\ 
 \bottomrule
\end{tabular}
}
\end{table}

\clearpage
\begin{table}[!htb]
\centering
\caption{Simulation results for Example 2 where the unconfoundedness assumption in Assumption~\ref{assump:rise-unconf} is violated. Standard error in parenthesis. }
\label{t:sim_unconf}
\resizebox{0.98\columnwidth}{!}{
\begin{tabular}{@{}cccccc@{}}
\toprule
Type of $S$ & IDR & Obj. (all) & Obj. (vulnerable) & Value (all) & Value (vulnerable) \\ \midrule
\multirow{5}{*}{Disc.} 
 & Base   & 7.65 (0.04) & 8.44 (0.05) & 19.3 (0.03) & 11.1 (0.06) \\
 & Exp    & 8.94 (0.05) & 9.91 (0.06) & \textbf{19.4} (0.02) & 13.9 (0.06) \\
  & {PT-Base} & 6.84 (0.03) & 7.35 (0.04) & 18.9 (0.03) & 8.91 (0.05) \\
 & {PT-Exp}   & 7.95 (0.05) & 8.62 (0.06) & 19.1 (0.03) & 11.4 (0.06) \\
 & RISE   & \textbf{13.5} (0.01) & \textbf{14.0} (0.01) & 17.4 (0.01) & \textbf{22.1} (0.02) \\
 \midrule
\multirow{5}{*}{Cont.} 
 & Base & 9.58 (0.03) & 9.58 (0.03) & 17.9 (0.02) & 8.33 (0.05) \\
 & Exp  & 10.2 (0.04) & 10.2 (0.04) & 17.8 (0.02) & 9.83 (0.06) \\
 & {PT-Base}  & 9.27 (0.02) & 9.26 (0.03) & 17.9 (0.02)  & 7.51 (0.03) \\
 & {PT-Exp}   & 9.34 (0.02) & 9.34 (0.03) & \textbf{18.0} (0.02)  & 7.72 (0.03) \\
 & RISE & \textbf{14.2} (0.01) & \textbf{14.1} (0.02) & 16.9 (0.01) & \textbf{20.1} (0.03) \\ 
 \bottomrule
\end{tabular}
}
\end{table}

\clearpage
\section{Additional Information and Results for Real-data Applications}\label{suppl:real}

\subsection{Data Availability } 

The job training dataset \citep{lalonde1986evaluating} is available at  \url{https://users.nber.org/~rdehejia/data/.nswdata2.html}. The ACTG175 dataset \citep{hammer1996trial} is available in the R package \texttt{speff2trial}. The sepsis dataset \citep{seymour2016assessment} is proprietary and not publicly available. 
All data used in this work are deidentified.

\subsection{Additional Background on the Sepsis Application }

Sepsis is leading cause of acute hospital mortality and commonly results in multi-organ dysfunction among ICU patients \citep{sakr2018sepsis,onyemekwu2022associations}. Clinically, treatment decisions for sepsis patients are needed to be made within a short period of time due to the rapid deterioration of patient conditions. 
Lactate and the Sequential Organ Failure Assessment (SOFA) score have been two important indicators of sepsis severity and has been found to be more useful for predicting the outcome of sepsis than other clinical vitals and comorbidity scores \citep{howell2007occult, krishna2009evaluation,shankar2016developing,machicado2021mortality}. 
Typically, information of baseline patient characteristics such as age, gender, race, and weight, and common vital signs such as usage of mechanical ventilation, respiratory rate, temperature, intravenous fluids, Glasgow Coma Scale score, platelets, blood urea nitrogen, white blood cell counts, glucose, and creatinine are obtained at the admission of patients. 
On the other hand, SOFA score combines performance of several organ systems in the body such as neurologic, blood, liver, and kidney \citep{seymour2016assessment,liu2019regional,tan2018changepoint,koutroumpakis2021serum} and cannot be obtained directly. Lactate labs measures the level of lactic acid in the blood \citep{andersen2013etiology,prathapan2020peripheral,du2017cbinderdb} and are less common in routine examination, which could be delayed in ordering. 
Hence, their information may not be available by the time of treatment decision due to multiple reasons including doctors' delayed ordering, long laboratory processing time, or the rapid deterioration of development of sepsis, which poses tremendous difficulties for early diagnosis and treatment decisions within a short time. 
According to the new definition of Sepsis-3 \citep{shankar2016developing}, a serum lactate level greater than 2 mmol/L is considered to be in critical conditions and is highly likely to indicate a septic shock. Also, a SOFA score greater than 6 has been associated with a higher mortality \citep{vincent1996sofa,ferreira2001serial}.

\subsection{Visualizations } 
Here we provide visualizations of features that are important in the estimated decision rules for the three real-data applications in Section~\ref{ssec:app}. The Shapely additive explanations (SHAP) \citep{lundberg2017unified} is considered to be a united approach to explaining the predictions of any machine learning or black-box models. Figure~\ref{fig:job_shap}, Figure~\ref{fig:actg_shap}, and Figure~\ref{fig:sepsis_shap} presents the SHAP variable importance plots in the final weighted classification model by RISE and Exp, respectively, for the three real-data applications. Correlations between the feature and their SHAP value are highlighted in color. The red color means a feature is positively correlated with assigning treatment A = 1 and the blue indicates a negative correlation. 
Overall, the direction of correlation is similar for RISE and Exp, but their ranking of feature importance may be different.

\textbf{Fairness in a job training program. } Figure~\ref{fig:job_shap} presents the SHAP variable importance plots in the final weighted classification model by RISE and Exp, respectively. 
We observe that whether having a high school diploma and income in 1974 are two important features in the variable important plot by RISE, while incomes in 1974 and in 1975 are important by Exp. It seems that being no degree and low income in 1974 has a higher chance of assigning $A=1$ (to receive the job training program) by RISE, while low income in 1974 and but a higher income in 1975 may be associated with assigning $A=1$ by Exp.

\textbf{Improvement of HIV treatment. }
Figure~\ref{fig:actg_shap} presents the SHAP variable importance plots in the final weighted classification model by RISE and Exp, respectively. 
We observe that age and CD4 T-cell counts are two important features in the variable important plot by RISE, while weight and number of days of previously received antiretroviral therapy are important by Exp. 
It seems that being of a younger age and high CD4 T-cell count has a higher chance of assigning $A=1$ (zidovudine combined with didanosine) by RISE, while being of a larger weight and few days of previously received antiretroviral therapy may be associated with assigning the treatment by Exp.

\textbf{Safe resuscitation for patients with sepsis. }
Figure~\ref{fig:sepsis_shap} presents the SHAP variable importance plots in the final weighted classification model by RISE and Exp, respectively. 
We observe that Glasgow Coma Scale score, age, and platelets appears to be important features in both the plot by RISE and that by Exp. 
Other important features in the plot by RISE include temperature and blood urea nitrogen, where in the plot by Exp, respiratory rate and white blood cell counts are of top importance. 
Being in a low temperature with a high blood urea nitrogen tends to be predicted as $A=1$ (to assign vasopressors) by RISE while being of higher respiratory rate with high white blood cell counts tends to be predicted as $A=1$ by Exp.

\clearpage
\begin{figure}[!htb]
\centering
 \begin{subfigure}{0.48\textwidth}
  \centerline{\includegraphics[width=\linewidth]{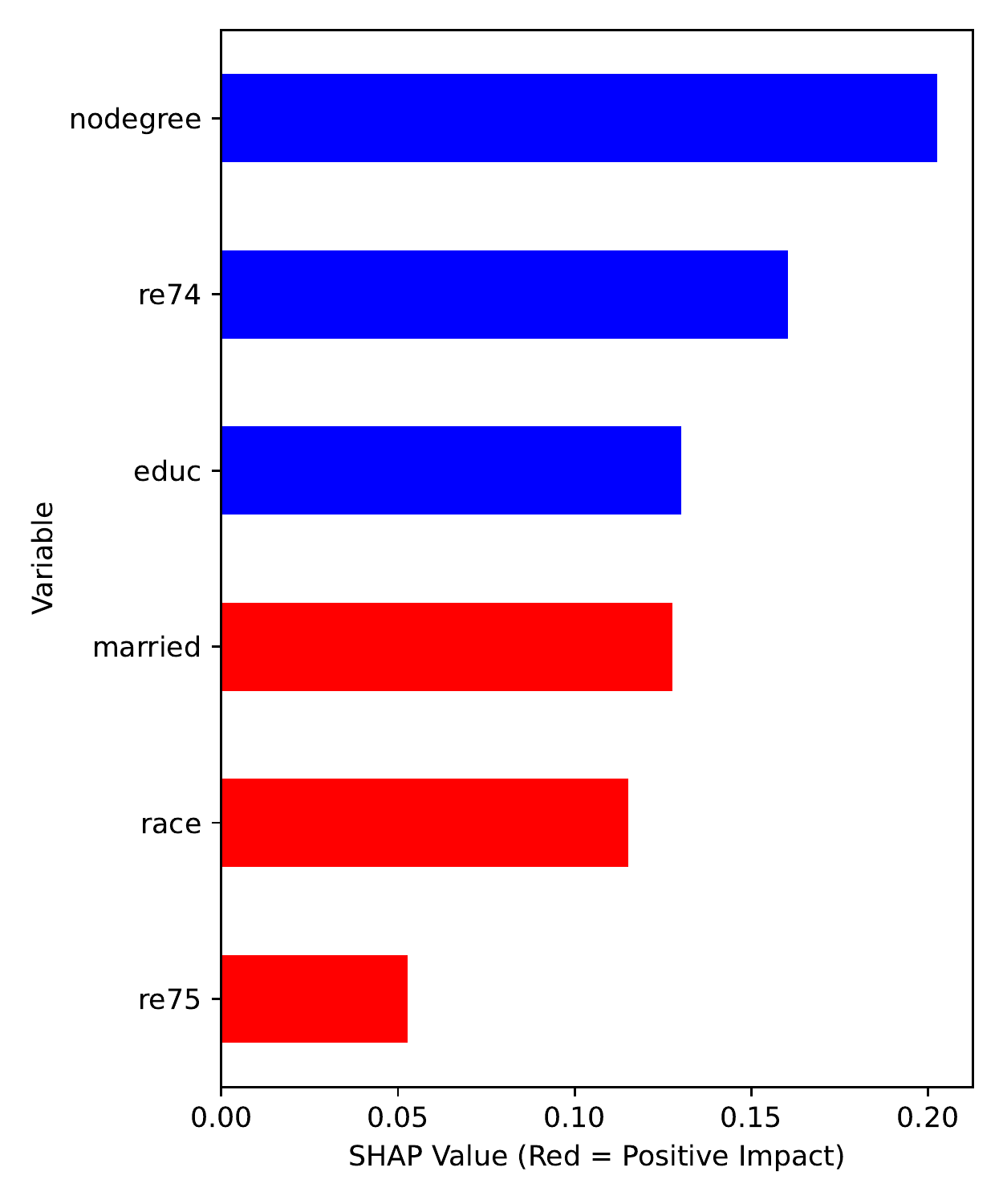}}
  \caption{}
 \end{subfigure}
 \begin{subfigure}{.48\textwidth}
  \centerline{\includegraphics[width=\linewidth]{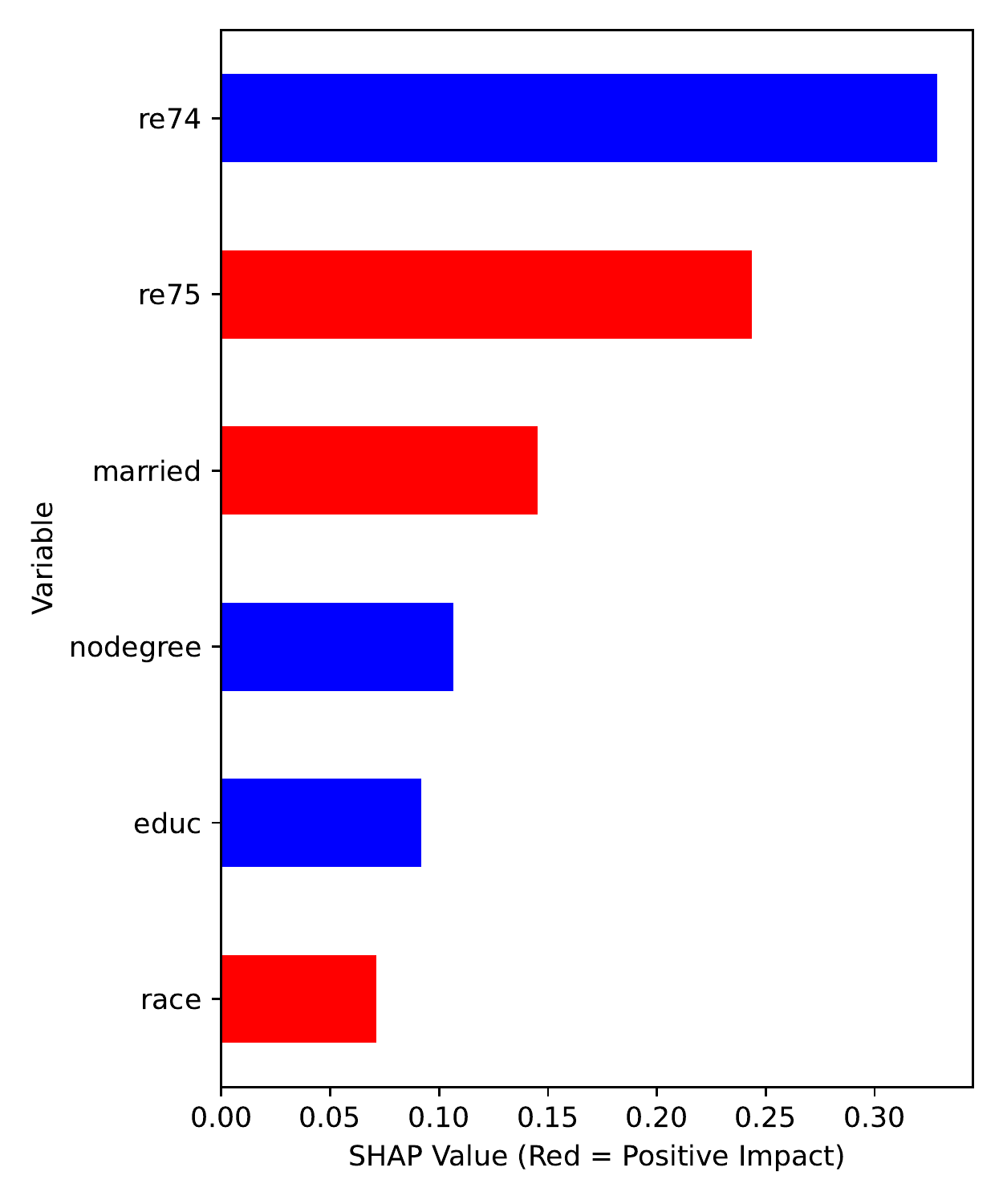}}
  \caption{}
 \end{subfigure}
 \caption{Visualization for the job training program: SHAP variable importance plots for decision rules RISE (a) and Exp (b), respectively. 
 Covariates ($X$) are ranked by variable importance in descending order. 
 Correlations between the feature and their SHAP value are highlighted in color. The red color means a feature is positively correlated with assigning treatment $A=1$ and the blue indicates a negative correlation. 
 } 
 \label{fig:job_shap}
\end{figure}

\clearpage
\begin{figure}[!htb]
\centering
 \begin{subfigure}{0.48\textwidth}
  \centerline{\includegraphics[width=\linewidth]{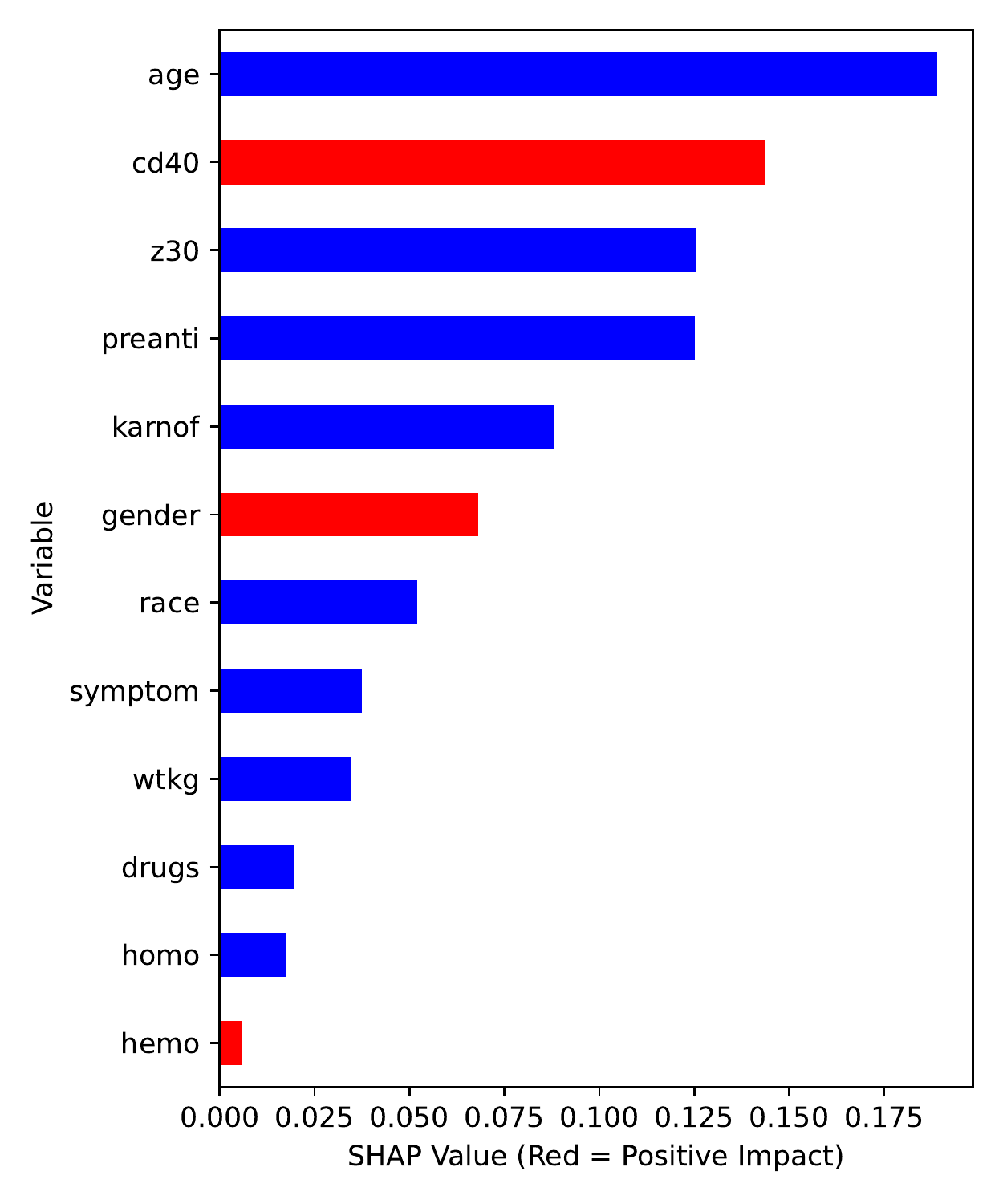}}
  \caption{}
 \end{subfigure}
 \begin{subfigure}{.48\textwidth}
  \centerline{\includegraphics[width=\linewidth]{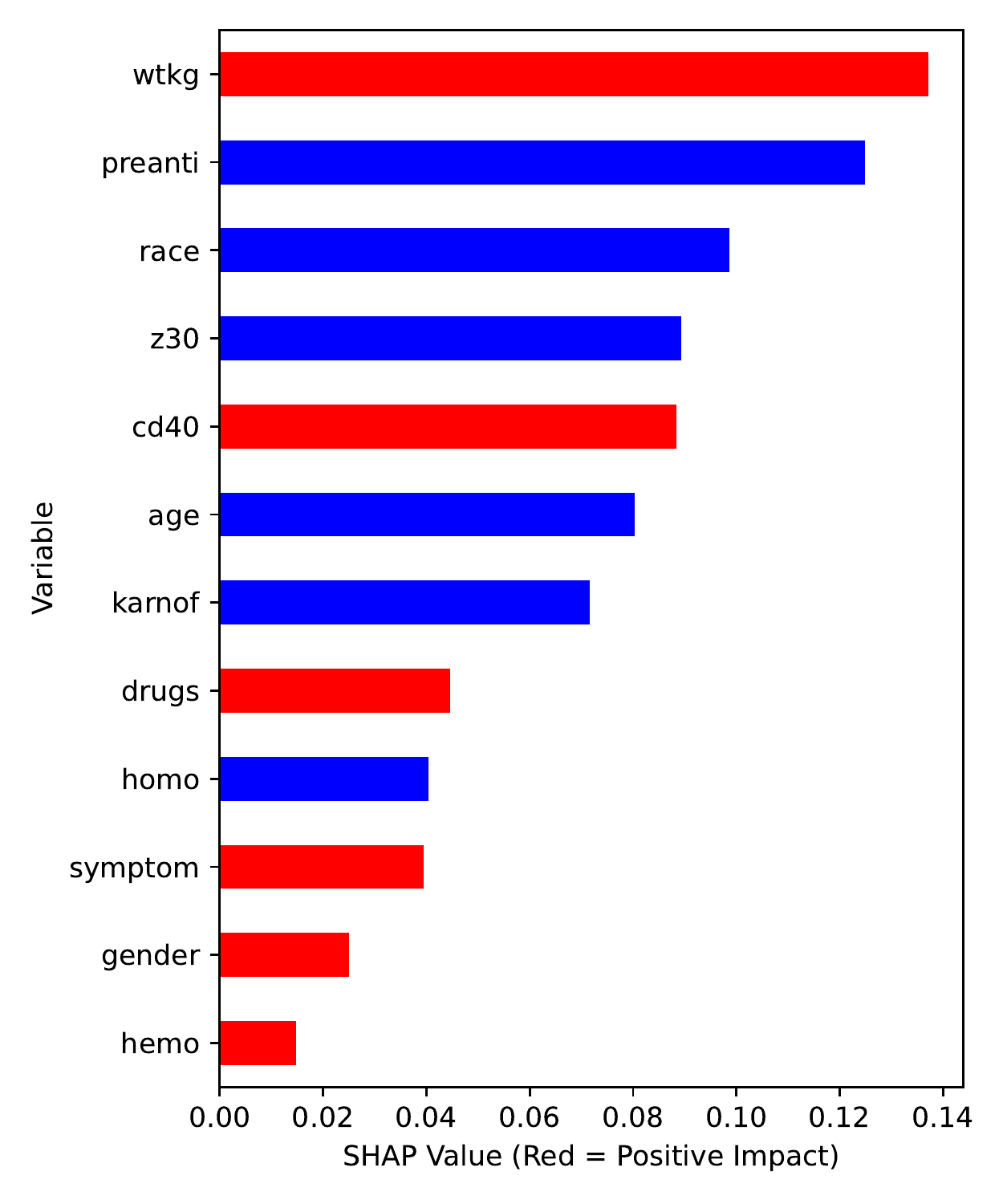}}
  \caption{}
 \end{subfigure}
 \caption{Visualization for the ACTG175 dataset: SHAP variable importance plots for decision rules RISE (a) and Exp (b), respectively. 
 Covariates ($X$) are ranked by variable importance in descending order. 
 Correlations between the feature and their SHAP value are highlighted in color. The red color means a feature is positively correlated with assigning treatment $A=1$ and the blue indicates a negative correlation. 
 } 
 \label{fig:actg_shap}
\end{figure}

\clearpage
\begin{figure}[!htb]
\centering
 \begin{subfigure}{0.48\textwidth}
  \centerline{\includegraphics[width=\linewidth]{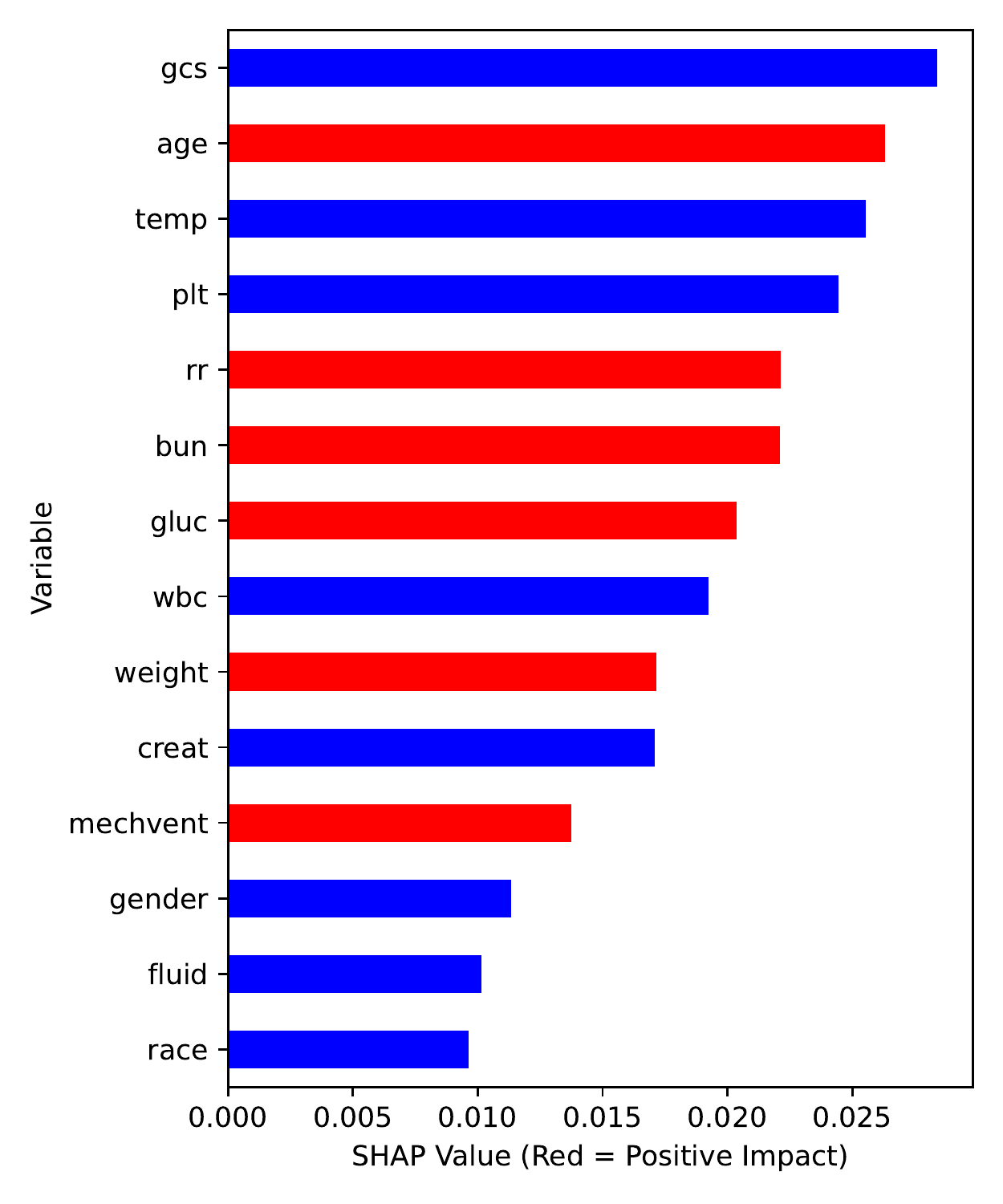}}
  \caption{}
 \end{subfigure}
 \begin{subfigure}{.48\textwidth}
  \centerline{\includegraphics[width=\linewidth]{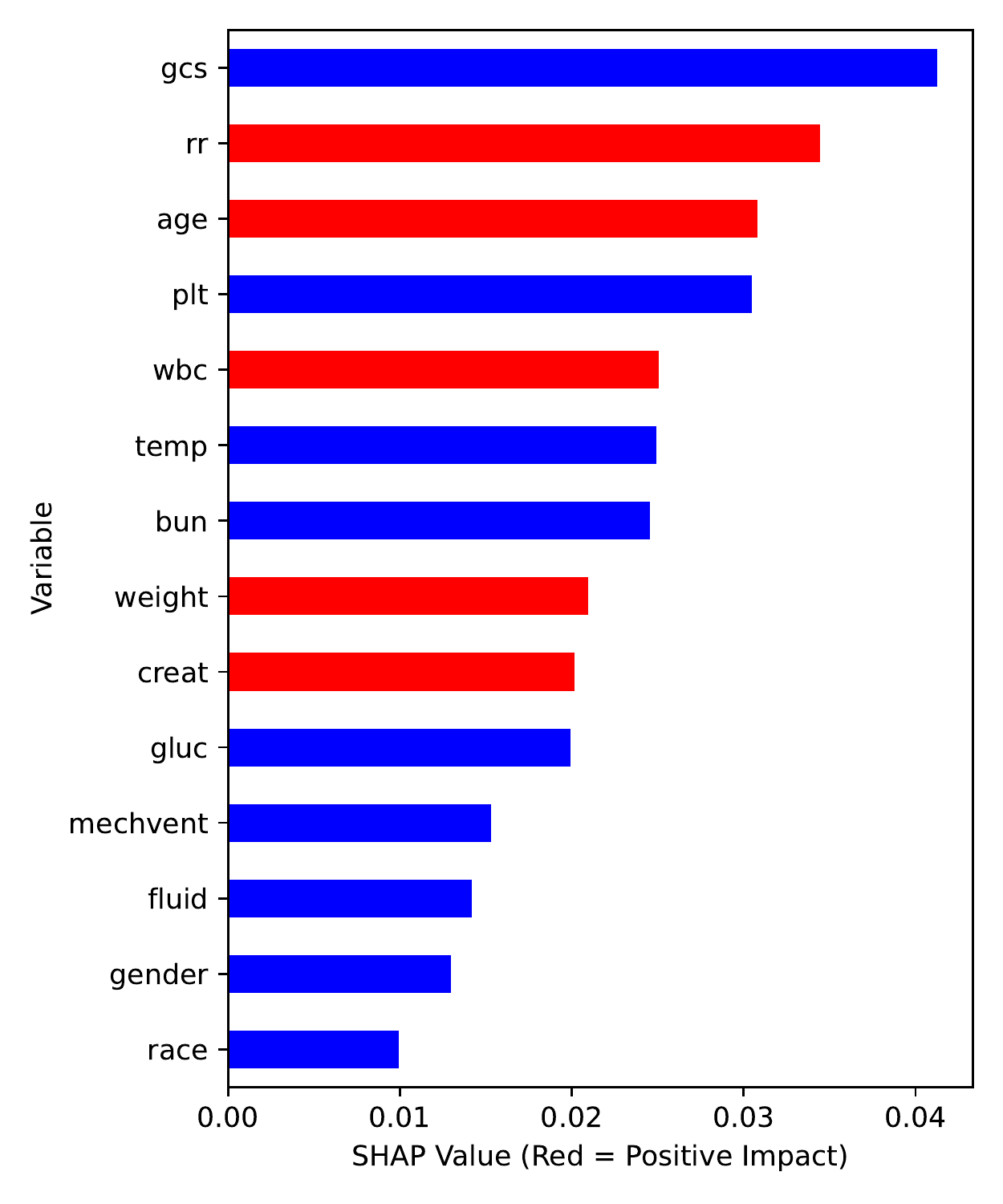}}
  \caption{}
 \end{subfigure}
 \caption{Visualization for the sepsis data: SHAP variable importance plots for decision rules RISE (a) and Exp (b), respectively. 
 Covariates ($X$) are ranked by variable importance in descending order. 
 Correlations between the feature and their SHAP value are highlighted in color. The red color means a feature is positively correlated with assigning treatment $A=1$ and the blue indicates a negative correlation. 
 } 
 \label{fig:sepsis_shap}
\end{figure}

    

\safebibliography{etdbib}
\bibliographystyle{apalike} 

\begin{thebibliography}{}

\bibitem[Abadi et~al., 2016]{abadi2016tensorflow}
Abadi, M., Barham, P., Chen, J., Chen, Z., Davis, A., Dean, J., Devin, M.,
  Ghemawat, S., Irving, G., Isard, M., et~al. (2016).
\newblock $\{$TensorFlow$\}$: A system for $\{$Large-Scale$\}$ machine
  learning.
\newblock In {\em 12th USENIX Symposium on Operating Systems Design and
  Implementation (OSDI 16)}, pages 265--283.

\bibitem[Ackerman et~al., 2019]{ackerman2019implementing}
Ackerman, B., Schmid, I., Rudolph, K.~E., Seamans, M.~J., Susukida, R.,
  Mojtabai, R., and Stuart, E.~A. (2019).
\newblock Implementing statistical methods for generalizing randomized trial
  findings to a target population.
\newblock {\em Addictive Behaviors}, 94:124--132.

\bibitem[Allcott, 2015]{allcott2015site}
Allcott, H. (2015).
\newblock Site selection bias in program evaluation.
\newblock {\em The Quarterly Journal of Economics}, 130(3):1117--1165.

\bibitem[Andersen et~al., 2013]{andersen2013etiology}
Andersen, L.~W., Mackenhauer, J., Roberts, J.~C., Berg, K.~M., Cocchi, M.~N.,
  and Donnino, M.~W. (2013).
\newblock Etiology and therapeutic approach to elevated lactate levels.
\newblock {\em Mayo Clinic Proceedings}, 88(10):1127--1140.

\bibitem[Angrist et~al., 1996]{angrist1996identification}
Angrist, J.~D., Imbens, G.~W., and Rubin, D.~B. (1996).
\newblock Identification of causal effects using instrumental variables.
\newblock {\em Journal of the American statistical Association},
  91(434):444--455.

\bibitem[Athey and Imbens, 2016]{athey2016recursive}
Athey, S. and Imbens, G. (2016).
\newblock Recursive partitioning for heterogeneous causal effects.
\newblock {\em Proceedings of the National Academy of Sciences},
  113(27):7353--7360.

\bibitem[Athey et~al., 2019]{athey2019generalized}
Athey, S., Tibshirani, J., Wager, S., et~al. (2019).
\newblock Generalized random forests.
\newblock {\em Annals of Statistics}, 47(2):1148--1178.

\bibitem[Athey and Wager, 2021]{athey2021policy}
Athey, S. and Wager, S. (2021).
\newblock Policy learning with observational data.
\newblock {\em Econometrica}, 89(1):133--161.

\bibitem[Bareinboim and Pearl, 2016]{bareinboim2016causal}
Bareinboim, E. and Pearl, J. (2016).
\newblock Causal inference and the data-fusion problem.
\newblock {\em Proceedings of the National Academy of Sciences},
  113(27):7345--7352.

\bibitem[Barocas et~al., 2019]{barocas2017fairness}
Barocas, S., Hardt, M., and Narayanan, A. (2019).
\newblock {\em Fairness and Machine Learning}.
\newblock fairmlbook.org.
\newblock \url{http://www.fairmlbook.org}.

\bibitem[Bartolucci and Grilli, 2011]{bartolucci2011modeling}
Bartolucci, F. and Grilli, L. (2011).
\newblock Modeling partial compliance through copulas in a principal
  stratification framework.
\newblock {\em Journal of the American Statistical Association},
  106(494):469--479.

\bibitem[Battey et~al., 2018]{battey2018distributed}
Battey, H., Fan, J., Liu, H., Lu, J., and Zhu, Z. (2018).
\newblock Distributed testing and estimation under sparse high dimensional
  models.
\newblock {\em Annals of Statistics}, 46(3):1352--1382.

\bibitem[Bear et~al., 2015]{bear2015neoadjuvant}
Bear, H.~D., Tang, G., Rastogi, P., Geyer~Jr, C.~E., Liu, Q., Robidoux, A.,
  Baez-Diaz, L., Brufsky, A.~M., Mehta, R.~S., Fehrenbacher, L., et~al. (2015).
\newblock Neoadjuvant plus adjuvant bevacizumab in early breast cancer ({NSABP
  B}-40 [{NRG O}ncology]): secondary outcomes of a phase 3, randomised
  controlled trial.
\newblock {\em The Lancet Oncology}, 16(9):1037--1048.

\bibitem[Bear et~al., 2012]{bear2012bevacizumab}
Bear, H.~D., Tang, G., Rastogi, P., Geyer~Jr, C.~E., Robidoux, A., Atkins,
  J.~N., Baez-Diaz, L., Brufsky, A.~M., Mehta, R.~S., Fehrenbacher, L., et~al.
  (2012).
\newblock Bevacizumab added to neoadjuvant chemotherapy for breast cancer.
\newblock {\em New England Journal of Medicine}, 366(4):310--320.

\bibitem[Berger et~al., 2015]{berger2015optimizing}
Berger, M.~L., Lipset, C., Gutteridge, A., Axelsen, K., Subedi, P., and
  Madigan, D. (2015).
\newblock Optimizing the leveraging of real-world data to improve the
  development and use of medicines.
\newblock {\em Value in Health}, 18(1):127--130.

\bibitem[Beutel et~al., 2017]{beutel2017data}
Beutel, A., Chen, J., Zhao, Z., and Chi, E.~H. (2017).
\newblock Data decisions and theoretical implications when adversarially
  learning fair representations.
\newblock {\em arXiv preprint arXiv:1707.00075}.

\bibitem[Bietti et~al., 2021]{bietti2021contextual}
Bietti, A., Agarwal, A., and Langford, J. (2021).
\newblock A contextual bandit bake-off.
\newblock {\em Journal of Machine Learning Research}, 22(133):1--49.

\bibitem[Boppana and Goepfert, 2018]{boppana2018understanding}
Boppana, S. and Goepfert, P. (2018).
\newblock Understanding the {CD8 T}-cell response in natural {HIV} control.
\newblock {\em F1000Research}, 7.

\bibitem[Borenstein et~al., 2011]{borenstein2011introduction}
Borenstein, M., Hedges, L.~V., Higgins, J.~P., and Rothstein, H.~R. (2011).
\newblock {\em Introduction to Meta-Analysis}.
\newblock John Wiley \& Sons.

\bibitem[Breiman, 1996]{breiman1996stacked}
Breiman, L. (1996).
\newblock Stacked regressions.
\newblock {\em Machine Learning}, 24(1):49--64.

\bibitem[Breiman et~al., 1984]{breiman1984classification}
Breiman, L., Friedman, J., Stone, C.~J., and Olshen, R.~A. (1984).
\newblock {\em Classification and Regression Trees}.
\newblock CRC Press.

\bibitem[Breitbart et~al., 1986]{breitbart1986database}
Breitbart, Y., Olson, P.~L., and Thompson, G.~R. (1986).
\newblock Database integration in a distributed heterogeneous database system.
\newblock In {\em 1986 IEEE Second International Conference on Data
  Engineering}, pages 301--310. IEEE.

\bibitem[Brown et~al., 2010]{brown2010distributed}
Brown, J.~S., Holmes, J.~H., Shah, K., Hall, K., Lazarus, R., and Platt, R.
  (2010).
\newblock Distributed health data networks: a practical and preferred approach
  to multi-institutional evaluations of comparative effectiveness, safety, and
  quality of care.
\newblock {\em Medical Care}, 48(6):S45--S51.

\bibitem[Buchanan et~al., 2018]{buchanan2018generalizing}
Buchanan, A.~L., Hudgens, M.~G., Cole, S.~R., Mollan, K.~R., Sax, P.~E., Daar,
  E.~S., Adimora, A.~A., Eron, J.~J., and Mugavero, M.~J. (2018).
\newblock Generalizing evidence from randomized trials using inverse
  probability of sampling weights.
\newblock {\em Journal of the Royal Statistical Society. Series A,(Statistics
  in Society)}, 181(4):1193.

\bibitem[Chakraborty and Moodie, 2013]{chakraborty2013statistical}
Chakraborty, B. and Moodie, E.~E. (2013).
\newblock {\em Statistical Methods for Dynamic Treatment Regimes}.
\newblock Springer.

\bibitem[Chakraborty et~al., 2010]{chakraborty2010inference}
Chakraborty, B., Murphy, S., and Strecher, V. (2010).
\newblock Inference for non-regular parameters in optimal dynamic treatment
  regimes.
\newblock {\em Statistical Methods in Medical Research}, 19(3):317--343.

\bibitem[Chawla et~al., 2002]{chawla2002smote}
Chawla, N.~V., Bowyer, K.~W., Hall, L.~O., and Kegelmeyer, W.~P. (2002).
\newblock Smote: synthetic minority over-sampling technique.
\newblock {\em Journal of Artificial Intelligence Research}, 16:321--357.

\bibitem[Chen et~al., 2022]{chen2022learning}
Chen, H., Lu, W., Song, R., and Ghosh, P. (2022).
\newblock On learning and testing of counterfactual fairness through data
  preprocessing.
\newblock {\em arXiv preprint arXiv:2202.12440}.

\bibitem[Chernozhukov et~al., 2018]{Chernozhukov2018dml}
Chernozhukov, V., Chetverikov, D., Demirer, M., Duflo, E., Hansen, C., Newey,
  W., and Robins, J. (2018).
\newblock {Double/debiased machine learning for treatment and structural
  parameters}.
\newblock {\em The Econometrics Journal}, 21(1):C1--C68.

\bibitem[Cho et~al., 2021]{cho2021personalized}
Cho, Y.~J., Wang, J., Chiruvolu, T., and Joshi, G. (2021).
\newblock Personalized federated learning for heterogeneous clients with
  clustered knowledge transfer.
\newblock {\em arXiv preprint arXiv:2109.08119}.

\bibitem[Chouldechova and Roth, 2020]{chouldechova2020snapshot}
Chouldechova, A. and Roth, A. (2020).
\newblock A snapshot of the frontiers of fairness in machine learning.
\newblock {\em Communications of the ACM}, 63(5):82--89.

\bibitem[Clarke, 2003]{clarke2003comparing}
Clarke, B. (2003).
\newblock Comparing bayes model averaging and stacking when model approximation
  error cannot be ignored.
\newblock {\em Journal of Machine Learning Research}, 4(Oct):683--712.

\bibitem[Cohen et~al., 2020]{cohen2020leveraging}
Cohen, J.~A., Trojano, M., Mowry, E.~M., Uitdehaag, B.~M., Reingold, S.~C., and
  Marrie, R.~A. (2020).
\newblock Leveraging real-world data to investigate multiple sclerosis disease
  behavior, prognosis, and treatment.
\newblock {\em Multiple Sclerosis Journal}, 26(1):23--37.

\bibitem[Colnet et~al., 2020]{colnet2020causal}
Colnet, B., Mayer, I., Chen, G., Dieng, A., Li, R., Varoquaux, G., Vert, J.-P.,
  Josse, J., and Yang, S. (2020).
\newblock Causal inference methods for combining randomized trials and
  observational studies: a review.
\newblock {\em arXiv preprint arXiv:2011.08047}.

\bibitem[Cook et~al., 2002]{cook2002experimental}
Cook, T.~D., Campbell, D.~T., and Shadish, W. (2002).
\newblock {\em Experimental and quasi-experimental designs for generalized
  causal inference}.
\newblock Houghton Mifflin Boston, MA.

\bibitem[Cortazar et~al., 2014]{cortazar2014pathological}
Cortazar, P., Zhang, L., Untch, M., Mehta, K., Costantino, J.~P., Wolmark, N.,
  Bonnefoi, H., Cameron, D., Gianni, L., Valagussa, P., et~al. (2014).
\newblock Pathological complete response and long-term clinical benefit in
  breast cancer: the ctneobc pooled analysis.
\newblock {\em The Lancet}, 384(9938):164--172.

\bibitem[Creager et~al., 2019]{creager2019flexibly}
Creager, E., Madras, D., Jacobsen, J.-H., Weis, M., Swersky, K., Pitassi, T.,
  and Zemel, R. (2019).
\newblock Flexibly fair representation learning by disentanglement.
\newblock In {\em International Conference on Machine Learning}, pages
  1436--1445. PMLR.

\bibitem[Dahabreh et~al., 2019]{dahabreh2019generalizing}
Dahabreh, I.~J., Robertson, S.~E., Tchetgen, E.~J., Stuart, E.~A., and
  Hern{\'a}n, M.~A. (2019).
\newblock Generalizing causal inferences from individuals in randomized trials
  to all trial-eligible individuals.
\newblock {\em Biometrics}, 75(2):685--694.

\bibitem[Dai et~al., 2018]{dai2018bayesian}
Dai, D., Han, L., Yang, T., and Zhang, T. (2018).
\newblock Bayesian model averaging with exponentiated least squares loss.
\newblock {\em IEEE Transactions on Information Theory}, 64(5):3331--3345.

\bibitem[Dai and Zhang, 2011]{dai2011greedy}
Dai, D. and Zhang, T. (2011).
\newblock Greedy model averaging.
\newblock {\em Advances in Neural Information Processing Systems},
  24:1242--1250.

\bibitem[Davison and Hinkley, 1997]{davison1997bootstrap}
Davison, A.~C. and Hinkley, D.~V. (1997).
\newblock {\em Bootstrap Methods and their Application}.
\newblock Cambridge Series in Statistical and Probabilistic Mathematics.
  Cambridge University Press.

\bibitem[Dayan et~al., 2021]{dayan2021federated}
Dayan, I., Roth, H.~R., Zhong, A., Harouni, A., Gentili, A., Abidin, A.~Z.,
  Liu, A., Costa, A.~B., Wood, B.~J., Tsai, C.-S., et~al. (2021).
\newblock Federated learning for predicting clinical outcomes in patients with
  covid-19.
\newblock {\em Nature Medicine}, 27(10):1735--1743.

\bibitem[Degtiar and Rose, 2021]{degtiar2021review}
Degtiar, I. and Rose, S. (2021).
\newblock A review of generalizability and transportability.
\newblock {\em arXiv preprint arXiv:2102.11904}.

\bibitem[DeShazo and Hoffman, 2015]{deshazo2015comparison}
DeShazo, J.~P. and Hoffman, M.~A. (2015).
\newblock A comparison of a multistate inpatient ehr database to the hcup
  nationwide inpatient sample.
\newblock {\em BMC Health Services Research}, 15(1):1--8.

\bibitem[Ding et~al., 2011]{ding2011identifiability}
Ding, P., Geng, Z., Yan, W., and Zhou, X.-H. (2011).
\newblock Identifiability and estimation of causal effects by principal
  stratification with outcomes truncated by death.
\newblock {\em Journal of the American Statistical Association},
  106(496):1578--1591.

\bibitem[Donahue et~al., 2018]{donahue2018veterans}
Donahue, M., Bouhaddou, O., Hsing, N., Turner, T., Crandall, G., Nelson, J.,
  and Nebeker, J. (2018).
\newblock Veterans health information exchange: successes and challenges of
  nationwide interoperability.
\newblock In {\em AMIA Annual Symposium Proceedings}, volume 2018, page 385.
  American Medical Informatics Association.

\bibitem[Donohue et~al., 2021]{donohue2021use}
Donohue, J.~M., Jarlenski, M.~P., Kim, J.~Y., Tang, L., Ahrens, K., Allen, L.,
  Austin, A., Barnes, A.~J., Burns, M., Chang, C.-C.~H., et~al. (2021).
\newblock Use of medications for treatment of opioid use disorder among us
  medicaid enrollees in 11 states, 2014-2018.
\newblock {\em The Journal of the American Medical Association},
  326(2):154--164.

\bibitem[Druckman et~al., 2011]{druckman2011cambridge}
Druckman, J.~N., Greene, D.~P., Kuklinski, J.~H., and Lupia, A. (2011).
\newblock {\em Cambridge handbook of experimental political science}.
\newblock Cambridge University Press.

\bibitem[Du et~al., 2017]{du2017cbinderdb}
Du, J., Yan, X., Liu, Z., Cui, L., Ding, P., Tan, X., Li, X., Zhou, H., Gu, Q.,
  and Xu, J. (2017).
\newblock {cBinderDB}: a covalent binding agent database.
\newblock {\em Bioinformatics}, 33(8):1258--1260.

\bibitem[Dwork et~al., 2012]{dwork2012fairness}
Dwork, C., Hardt, M., Pitassi, T., Reingold, O., and Zemel, R. (2012).
\newblock Fairness through awareness.
\newblock In {\em Proceedings of the 3rd Innovations in Theoretical Computer
  Science Conference}, pages 214--226.

\bibitem[Egami and Hartman, 2020]{egami2020elements}
Egami, N. and Hartman, E. (2020).
\newblock Elements of external validity: Framework, design, and analysis.
\newblock {\em Design, and Analysis (June 30, 2020)}.

\bibitem[Fallah et~al., 2020]{fallah2020personalized}
Fallah, A., Mokhtari, A., and Ozdaglar, A. (2020).
\newblock Personalized federated learning with theoretical guarantees: A
  model-agnostic meta-learning approach.
\newblock {\em Advances in Neural Information Processing Systems},
  33:3557--3568.

\bibitem[Fang et~al., 2022]{fang2021fairness}
Fang, E.~X., Wang, Z., and Wang, L. (2022).
\newblock Fairness-oriented learning for optimal individualized treatment
  rules.
\newblock {\em Journal of the American Statistical Association}, 0(0):1--14.

\bibitem[FDA, 2014]{food2013guidance}
FDA (2014).
\newblock Guidance for industry. pathological complete response in neoadjuvant
  treatment of high-risk early-stage breast cancer: use as an endpoint to
  support accelerated approval.

\bibitem[Feldman et~al., 2015]{feldman2015certifying}
Feldman, M., Friedler, S.~A., Moeller, J., Scheidegger, C., and
  Venkatasubramanian, S. (2015).
\newblock Certifying and removing disparate impact.
\newblock In {\em Proceedings of the 21th ACM SIGKDD International Conference
  on Knowledge Discovery and Data Mining}, pages 259--268.

\bibitem[Ferreira et~al., 2001]{ferreira2001serial}
Ferreira, F.~L., Bota, D.~P., Bross, A., M{\'e}lot, C., and Vincent, J.-L.
  (2001).
\newblock Serial evaluation of the sofa score to predict outcome in critically
  ill patients.
\newblock {\em The Journal of the American Medical Association},
  286(14):1754--1758.

\bibitem[Fleurence et~al., 2014]{fleurence2014launching}
Fleurence, R.~L., Curtis, L.~H., Califf, R.~M., Platt, R., Selby, J.~V., and
  Brown, J.~S. (2014).
\newblock Launching {PCORnet}, a national patient-centered clinical research
  network.
\newblock {\em Journal of the American Medical Informatics Association},
  21(4):578--582.

\bibitem[Frangakis and Rubin, 2002]{frangakis2002principal}
Frangakis, C.~E. and Rubin, D.~B. (2002).
\newblock Principal stratification in causal inference.
\newblock {\em Biometrics}, 58(1):21--29.

\bibitem[Gilbert et~al., 2003]{gilbert2003sensitivity}
Gilbert, P.~B., Bosch, R.~J., and Hudgens, M.~G. (2003).
\newblock Sensitivity analysis for the assessment of causal vaccine effects on
  viral load in hiv vaccine trials.
\newblock {\em Biometrics}, 59(3):531--541.

\bibitem[Gilbert et~al., 2015]{gilbert2015surrogate}
Gilbert, P.~B., Gabriel, E.~E., Huang, Y., and Chan, I.~S. (2015).
\newblock Surrogate endpoint evaluation: Principal stratification criteria and
  the prentice definition.
\newblock {\em Journal of causal inference}, 3(2):157--175.

\bibitem[Gilbert and Hudgens, 2008]{gilbert2008evaluating}
Gilbert, P.~B. and Hudgens, M.~G. (2008).
\newblock Evaluating candidate principal surrogate endpoints.
\newblock {\em Biometrics}, 64(4):1146--1154.

\bibitem[Hahn et~al., 2020]{hahn2020bayesian}
Hahn, P.~R., Murray, J.~S., Carvalho, C.~M., et~al. (2020).
\newblock Bayesian regression tree models for causal inference: Regularization,
  confounding, and heterogeneous effects (with discussion).
\newblock {\em Bayesian Analysis}, 15(3):965--1056.

\bibitem[Hammer et~al., 1996]{hammer1996trial}
Hammer, S.~M., Katzenstein, D.~A., Hughes, M.~D., Gundacker, H., Schooley,
  R.~T., Haubrich, R.~H., Henry, W.~K., Lederman, M.~M., Phair, J.~P., Niu, M.,
  et~al. (1996).
\newblock A trial comparing nucleoside monotherapy with combination therapy in
  hiv-infected adults with cd4 cell counts from 200 to 500 per cubic
  millimeter.
\newblock {\em New England Journal of Medicine}, 335(15):1081--1090.

\bibitem[Hansen, 2007]{hansen2007least}
Hansen, B.~E. (2007).
\newblock Least squares model averaging.
\newblock {\em Econometrica}, 75(4):1175--1189.

\bibitem[Hard et~al., 2018]{hard2018federated}
Hard, A., Rao, K., Mathews, R., Ramaswamy, S., Beaufays, F., Augenstein, S.,
  Eichner, H., Kiddon, C., and Ramage, D. (2018).
\newblock Federated learning for mobile keyboard prediction.
\newblock {\em arXiv preprint arXiv:1811.03604}.

\bibitem[Harton et~al., 2021]{harton2021combining}
Harton, J., Segal, B., Mamtani, R., Mitra, N., and Hubbard, R. (2021).
\newblock Combining real-world and randomized control trial data using
  data-adaptive weighting via the on-trial score.
\newblock {\em arXiv preprint arXiv:2108.08756}.

\bibitem[Hashimoto et~al., 2018]{hashimoto2018fairness}
Hashimoto, T., Srivastava, M., Namkoong, H., and Liang, P. (2018).
\newblock Fairness without demographics in repeated loss minimization.
\newblock In {\em International Conference on Machine Learning}, pages
  1929--1938. PMLR.

\bibitem[Hayashi, 2000]{hayashi2000econometrics}
Hayashi, F. (2000).
\newblock {\em Econometrics}.
\newblock Princeton University Press.

\bibitem[Howell et~al., 2007]{howell2007occult}
Howell, M.~D., Donnino, M., Clardy, P., Talmor, D., and Shapiro, N.~I. (2007).
\newblock Occult hypoperfusion and mortality in patients with suspected
  infection.
\newblock {\em Intensive Care Medicine}, 33(11):1892--1899.

\bibitem[Hripcsak et~al., 2015]{hripcsak2015observational}
Hripcsak, G., Duke, J.~D., Shah, N.~H., Reich, C.~G., Huser, V., Schuemie,
  M.~J., Suchard, M.~A., Park, R.~W., Wong, I. C.~K., Rijnbeek, P.~R., et~al.
  (2015).
\newblock Observational health data sciences and informatics (ohdsi):
  opportunities for observational researchers.
\newblock {\em Studies in Health Technology and Informatics}, 216:574.

\bibitem[Jemiai et~al., 2007]{jemiai2007semiparametric}
Jemiai, Y., Rotnitzky, A., Shepherd, B.~E., and Gilbert, P.~B. (2007).
\newblock Semiparametric estimation of treatment effects given base-line
  covariates on an outcome measured after a post-randomization event occurs.
\newblock {\em Journal of the Royal Statistical Society: Series B (Statistical
  Methodology)}, 69(5):879--901.

\bibitem[Jordan et~al., 2019]{jordan2018communication}
Jordan, M.~I., Lee, J.~D., and Yang, Y. (2019).
\newblock Communication-efficient distributed statistical inference.
\newblock {\em Journal of the American Statistical Association},
  114(526):668--681.

\bibitem[Joseph et~al., 2016]{joseph2016fairness}
Joseph, M., Kearns, M., Morgenstern, J.~H., and Roth, A. (2016).
\newblock Fairness in learning: Classic and contextual bandits.
\newblock {\em Advances in Neural Information Processing Systems}, 29.

\bibitem[Ju et~al., 2019]{ju2019propensity}
Ju, C., Combs, M., Lendle, S.~D., Franklin, J.~M., Wyss, R., Schneeweiss, S.,
  and {van der Laan}, M.~J. (2019).
\newblock Propensity score prediction for electronic healthcare databases using
  super learner and high-dimensional propensity score methods.
\newblock {\em Journal of Applied Statistics}, 46(12):2216--2236.

\bibitem[Kairouz et~al., 2019]{kairouz2019advances}
Kairouz, P., McMahan, H.~B., Avent, B., Bellet, A., Bennis, M., Bhagoji, A.~N.,
  Bonawitz, K., Charles, Z., Cormode, G., Cummings, R., et~al. (2019).
\newblock Advances and open problems in federated learning.
\newblock {\em arXiv preprint arXiv:1912.04977}.

\bibitem[Kamiran and Calders, 2012]{kamiran2012data}
Kamiran, F. and Calders, T. (2012).
\newblock Data preprocessing techniques for classification without
  discrimination.
\newblock {\em Knowledge and Information Systems}, 33(1):1--33.

\bibitem[Ke et~al., 2015]{ke2015homogeneity}
Ke, Z.~T., Fan, J., and Wu, Y. (2015).
\newblock Homogeneity pursuit.
\newblock {\em Journal of the American Statistical Association},
  110(509):175--194.

\bibitem[Kern et~al., 2016]{kern2016assessing}
Kern, H.~L., Stuart, E.~A., Hill, J., and Green, D.~P. (2016).
\newblock Assessing methods for generalizing experimental impact estimates to
  target populations.
\newblock {\em Journal of Research on Educational Effectiveness},
  9(1):103--127.

\bibitem[Kosorok and Moodie, 2015]{kosorok2015adaptive}
Kosorok, M.~R. and Moodie, E.~E. (2015).
\newblock {\em Adaptive treatment strategies in practice: planning trials and
  analyzing data for personalized medicine}.
\newblock SIAM.

\bibitem[Koutroumpakis et~al., 2021]{koutroumpakis2021serum}
Koutroumpakis, F., Phillips, A.~E., Yadav, D., Machicado, J.~D., Ahsan, M.,
  Ramos~Rivers, C., Tan, X., Schwartz, M., Proksell, S., Johnston, E., et~al.
  (2021).
\newblock Serum {IgG4} subclass deficiency defines a distinct, commonly
  encountered, severe inflammatory bowel disease subtype.
\newblock {\em Inflammatory Bowel Diseases}, 27(6):855--863.

\bibitem[Krishna et~al., 2009]{krishna2009evaluation}
Krishna, U., Joshi, S.~P., and Modh, M. (2009).
\newblock An evaluation of serial blood lactate measurement as an early
  predictor of shock and its outcome in patients of trauma or sepsis.
\newblock {\em Indian Journal of Critical Care Medicine}, 13(2):66.

\bibitem[K{\"u}nzel et~al., 2019]{kunzel2019metalearners}
K{\"u}nzel, S.~R., Sekhon, J.~S., Bickel, P.~J., and Yu, B. (2019).
\newblock Metalearners for estimating heterogeneous treatment effects using
  machine learning.
\newblock {\em Proceedings of the National Academy of Sciences},
  116(10):4156--4165.

\bibitem[Kurland et~al., 2009]{kurland2009longitudinal}
Kurland, B.~F., Johnson, L.~L., Egleston, B.~L., and Diehr, P.~H. (2009).
\newblock Longitudinal data with follow-up truncated by death: match the
  analysis method to research aims.
\newblock {\em Statistical Science}, 24(2):211.

\bibitem[Laber et~al., 2014]{laber2014dynamic}
Laber, E.~B., Lizotte, D.~J., Qian, M., Pelham, W.~E., and Murphy, S.~A.
  (2014).
\newblock Dynamic treatment regimes: Technical challenges and applications.
\newblock {\em Electronic Journal of Statistics}, 8(1):1225--1272.

\bibitem[Lahoti et~al., 2020]{lahoti2020fairness}
Lahoti, P., Beutel, A., Chen, J., Lee, K., Prost, F., Thain, N., Wang, X., and
  Chi, E. (2020).
\newblock Fairness without demographics through adversarially reweighted
  learning.
\newblock {\em Advances in Neural Information Processing Systems}, 33:728--740.

\bibitem[LaLonde, 1986]{lalonde1986evaluating}
LaLonde, R.~J. (1986).
\newblock Evaluating the econometric evaluations of training programs with
  experimental data.
\newblock {\em The American Economic Review}, pages 604--620.

\bibitem[Lee et~al., 2017]{lee2017communication}
Lee, J.~D., Liu, Q., Sun, Y., and Taylor, J.~E. (2017).
\newblock Communication-efficient sparse regression.
\newblock {\em The Journal of Machine Learning Research}, 18(1):115--144.

\bibitem[Lee et~al., 2010]{lee2010causal}
Lee, K., Daniels, M.~J., and Sargent, D.~J. (2010).
\newblock Causal effects of treatments for informative missing data due to
  progression/death.
\newblock {\em Journal of the American Statistical Association},
  105(491):912--929.

\bibitem[Li et~al., 2011]{li2011unbiased}
Li, L., Chu, W., Langford, J., and Wang, X. (2011).
\newblock Unbiased offline evaluation of contextual-bandit-based news article
  recommendation algorithms.
\newblock In {\em Proceedings of the fourth ACM international conference on Web
  search and data mining}, pages 297--306.

\bibitem[Li et~al., 2020]{li2020federated}
Li, T., Sahu, A.~K., Talwalkar, A., and Smith, V. (2020).
\newblock Federated learning: Challenges, methods, and future directions.
\newblock {\em IEEE Signal Processing Magazine}, 37(3):50--60.

\bibitem[Li et~al., 2010]{li2010bayesian}
Li, Y., Taylor, J.~M., and Elliott, M.~R. (2010).
\newblock A bayesian approach to surrogacy assessment using principal
  stratification in clinical trials.
\newblock {\em Biometrics}, 66(2):523--531.

\bibitem[Lin and Zeng, 2010]{lin2010relative}
Lin, D.-Y. and Zeng, D. (2010).
\newblock On the relative efficiency of using summary statistics versus
  individual-level data in meta-analysis.
\newblock {\em Biometrika}, 97(2):321--332.

\bibitem[Liu et~al., 2019]{liu2019regional}
Liu, G., Tan, X., Dang, C., Tan, S., Xing, S., Huang, N., Peng, K., Xie, C.,
  Tang, X., and Zeng, J. (2019).
\newblock Regional shape abnormalities in thalamus and verbal memory impairment
  after subcortical infarction.
\newblock {\em Neurorehabilitation and Neural Repair}, 33(6):476--485.

\bibitem[Lundberg and Lee, 2017]{lundberg2017unified}
Lundberg, S.~M. and Lee, S.-I. (2017).
\newblock A unified approach to interpreting model predictions.
\newblock {\em Advances in Neural Information Processing Systems}, 30.

\bibitem[Ma and Huang, 2017]{ma2017concave}
Ma, S. and Huang, J. (2017).
\newblock A concave pairwise fusion approach to subgroup analysis.
\newblock {\em Journal of the American Statistical Association},
  112(517):410--423.

\bibitem[Machicado et~al., 2021]{machicado2021mortality}
Machicado, J.~D., Gougol, A., Tan, X., Gao, X., Paragomi, P., Pothoulakis, I.,
  Talukdar, R., Kochhar, R., Goenka, M.~K., Gulla, A., et~al. (2021).
\newblock Mortality in acute pancreatitis with persistent organ failure is
  determined by the number, type, and sequence of organ systems affected.
\newblock {\em United European Gastroenterology Journal}, 9(2):139--149.

\bibitem[Makar et~al., 2022]{makar2022causally}
Makar, M., Packer, B., Moldovan, D., Blalock, D., Halpern, Y., and D’Amour,
  A. (2022).
\newblock Causally motivated shortcut removal using auxiliary labels.
\newblock In {\em International Conference on Artificial Intelligence and
  Statistics}, pages 739--766. PMLR.

\bibitem[Manski, 2004]{manski2004statistical}
Manski, C.~F. (2004).
\newblock Statistical treatment rules for heterogeneous populations.
\newblock {\em Econometrica}, 72(4):1221--1246.

\bibitem[Maro et~al., 2009]{maro2009design}
Maro, J.~C., Platt, R., Holmes, J.~H., Strom, B.~L., Hennessy, S., Lazarus, R.,
  and Brown, J.~S. (2009).
\newblock Design of a national distributed health data network.
\newblock {\em Annals of Internal Medicine}, 151(5):341--344.

\bibitem[Masoudnia and Ebrahimpour, 2014]{masoudnia2014mixture}
Masoudnia, S. and Ebrahimpour, R. (2014).
\newblock Mixture of experts: a literature survey.
\newblock {\em Artificial Intelligence Review}, 42(2):275--293.

\bibitem[McMahan et~al., 2017]{mcmahan2017communication}
McMahan, B., Moore, E., Ramage, D., Hampson, S., and y~Arcas, B.~A. (2017).
\newblock Communication-efficient learning of deep networks from decentralized
  data.
\newblock In {\em Artificial Intelligence and Statistics}, pages 1273--1282.
  PMLR.

\bibitem[Mehrabi et~al., 2021]{mehrabi2021survey}
Mehrabi, N., Morstatter, F., Saxena, N., Lerman, K., and Galstyan, A. (2021).
\newblock A survey on bias and fairness in machine learning.
\newblock {\em ACM Computing Surveys (CSUR)}, 54(6):1--35.

\bibitem[Mo et~al., 2020]{mo2020towards}
Mo, S., Tan, X., Xia, J., and Ren, P. (2020).
\newblock Towards improving spatiotemporal action recognition in videos.
\newblock {\em arXiv preprint arXiv:2012.08097}.

\bibitem[Mo et~al., 2021]{mo2021point}
Mo, S., Xia, J., Tan, X., and Raj, B. (2021).
\newblock {Point3D}: tracking actions as moving points with {3D} {CNNs}.
\newblock In {\em 32nd British Machine Vision Conference 2021, {BMVC} 2021,
  Online, November 22-25, 2021}, page 259. {BMVA} Press.

\bibitem[Moodie et~al., 2007]{moodie2007demystifying}
Moodie, E.~E., Richardson, T.~S., and Stephens, D.~A. (2007).
\newblock Demystifying optimal dynamic treatment regimes.
\newblock {\em Biometrics}, 63(2):447--455.

\bibitem[Murphy, 2003]{murphy2003optimal}
Murphy, S.~A. (2003).
\newblock Optimal dynamic treatment regimes.
\newblock {\em Journal of the Royal Statistical Society: Series B (Statistical
  Methodology)}, 65(2):331--355.

\bibitem[Murphy, 2005]{murphy2005experimental}
Murphy, S.~A. (2005).
\newblock An experimental design for the development of adaptive treatment
  strategies.
\newblock {\em Statistics in Medicine}, 24(10):1455--1481.

\bibitem[Murphy et~al., 2001]{murphy2001marginal}
Murphy, S.~A., van~der Laan, M.~J., Robins, J.~M., and Group, C. P. P.~R.
  (2001).
\newblock Marginal mean models for dynamic regimes.
\newblock {\em Journal of the American Statistical Association},
  96(456):1410--1423.

\bibitem[Nabi et~al., 2019]{nabi2019learning}
Nabi, R., Malinsky, D., and Shpitser, I. (2019).
\newblock Learning optimal fair policies.
\newblock In {\em International Conference on Machine Learning}, pages
  4674--4682. PMLR.

\bibitem[Nabi and Shpitser, 2018]{nabi2018fair}
Nabi, R. and Shpitser, I. (2018).
\newblock Fair inference on outcomes.
\newblock In {\em Proceedings of the AAAI Conference on Artificial
  Intelligence}, volume~32.

\bibitem[Neyman, 1923]{neyman1923applications}
Neyman, J. (1923).
\newblock Sur les applications de la thar des probabilities aux experiences
  agaricales: Essay des principle. {E}xcerpts reprinted (1990) in {E}nglish.
\newblock {\em Statistical Science}, 5:463--472.

\bibitem[Nie and Wager, 2021]{nie2020quasioracle}
Nie, X. and Wager, S. (2021).
\newblock Quasi-oracle estimation of heterogeneous treatment effects.
\newblock {\em Biometrika}, 108(2):299--319.

\bibitem[Onyemekwu et~al., 2022]{onyemekwu2022associations}
Onyemekwu, C., Tan, X., Gamboa, J., Moghbeli, K., Potter, K., Prendergast, N.,
  Tiberio, P., Billings, F., Ely, E., Hulgan, T., Pandharipande, P., Samuels,
  D., Tang, G., and Girard, T. (2022).
\newblock {\em Associations Between Plasma Cell-Free Mitochondrial DNA,
  Delirium and Coma During Sepsis}, pages A3700--A3700.
\newblock American Thoracic Society.

\bibitem[Patil et~al., 2020]{patil2020achieving}
Patil, V., Ghalme, G., Nair, V., and Narahari, Y. (2020).
\newblock Achieving fairness in the stochastic multi-armed bandit problem.
\newblock In {\em Proceedings of the AAAI Conference on Artificial
  Intelligence}, pages 5379--5386.

\bibitem[Pearl and Bareinboim, 2011]{pearl2011transportability}
Pearl, J. and Bareinboim, E. (2011).
\newblock Transportability of causal and statistical relations: A formal
  approach.
\newblock In {\em Twenty-fifth AAAI Conference on Artificial Intelligence}.

\bibitem[Pearl and Bareinboim, 2014]{pearl2014external}
Pearl, J. and Bareinboim, E. (2014).
\newblock External validity: From do-calculus to transportability across
  populations.
\newblock {\em Statistical Science}, 29(4):579--595.

\bibitem[Perez and Wang, 2017]{wang2017effectiveness}
Perez, L. and Wang, J. (2017).
\newblock The effectiveness of data augmentation in image classification using
  deep learning.
\newblock {\em arXiv preprint arXiv:1712.04621}.

\bibitem[Pessach and Shmueli, 2022]{pessach2022review}
Pessach, D. and Shmueli, E. (2022).
\newblock A review on fairness in machine learning.
\newblock {\em ACM Computing Surveys (CSUR)}, 55(3):1--44.

\bibitem[Pirracchio et~al., 2015]{pirracchio2015improving}
Pirracchio, R., Petersen, M.~L., and {van der Laan}, M. (2015).
\newblock Improving propensity score estimators' robustness to model
  misspecification using super learner.
\newblock {\em American Journal of Epidemiology}, 181(2):108--119.

\bibitem[Platt et~al., 2018]{platt2018fda}
Platt, R., Brown, J.~S., Robb, M., McClellan, M., Ball, R., Nguyen, M.~D., and
  Sherman, R.~E. (2018).
\newblock The {FDA} sentinel initiative—an evolving national resource.
\newblock {\em New England Journal of Medicine}, 379(22):2091--2093.

\bibitem[Pollard et~al., 2018]{pollard2018eicu}
Pollard, T.~J., Johnson, A.~E., Raffa, J.~D., Celi, L.~A., Mark, R.~G., and
  Badawi, O. (2018).
\newblock The {eICU} collaborative research database, a freely available
  multi-center database for critical care research.
\newblock {\em Scientific Data}, 5:180178.

\bibitem[Polley and {van der Laan}, 2010]{polley2010super}
Polley, E.~C. and {van der Laan}, M.~J. (2010).
\newblock Super learner in prediction.
\newblock {\em U.C. Berkeley Division of Biostatistics Working Paper Series.}

\bibitem[Prathapan et~al., 2020]{prathapan2020peripheral}
Prathapan, K.~M., Ramos~Rivers, C., Anderson, A., Koutroumpakis, F.,
  Koutroubakis, I.~E., Babichenko, D., Tan, X., Tang, G., Schwartz, M.,
  Proksell, S., et~al. (2020).
\newblock Peripheral blood eosinophilia and long-term severity in
  pediatric-onset inflammatory bowel disease.
\newblock {\em Inflammatory Bowel Diseases}, 26(12):1890--1900.

\bibitem[Qi et~al., 2019]{qi2019estimation}
Qi, Z., Cui, Y., Liu, Y., and Pang, J.-S. (2019).
\newblock Estimation of individualized decision rules based on an optimized
  covariate-dependent equivalent of random outcomes.
\newblock {\em SIAM Journal on Optimization}, 29(3):2337--2362.

\bibitem[Qi et~al., 2021]{qi2021proximal}
Qi, Z., Miao, R., and Zhang, X. (2021).
\newblock Proximal learning for individualized treatment regimes under
  unmeasured confounding.
\newblock {\em arXiv preprint arXiv:2105.01187}.

\bibitem[Qi et~al., 2022]{qi2019estimating}
Qi, Z., Pang, J.-S., and Liu, Y. (2022).
\newblock On robustness of individualized decision rules.
\newblock {\em Journal of the American Statistical Association}, pages 1--15.

\bibitem[Qian and Murphy, 2011]{qian2011performance}
Qian, M. and Murphy, S.~A. (2011).
\newblock Performance guarantees for individualized treatment rules.
\newblock {\em Annals of Statistics}, 39(2):1180.

\bibitem[Raftery et~al., 1997]{raftery1997bayesian}
Raftery, A.~E., Madigan, D., and Hoeting, J.~A. (1997).
\newblock Bayesian model averaging for linear regression models.
\newblock {\em Journal of the American Statistical Association},
  92(437):179--191.

\bibitem[Raghupathi and Raghupathi, 2014]{raghupathi2014big}
Raghupathi, W. and Raghupathi, V. (2014).
\newblock Big data analytics in healthcare: promise and potential.
\newblock {\em Health Information Science and Systems}, 2(1):1--10.

\bibitem[Reynolds et~al., 2020]{reynolds2020leveraging}
Reynolds, M., Christian, J., Mack, C., Hall, N., and Dreyer, N. (2020).
\newblock Leveraging real-world data for covid-19 research: challenges and
  opportunities.
\newblock {\em Journal of Precision Medicine}, 6:1--6.

\bibitem[Riley et~al., 2011]{riley2011interpretation}
Riley, R.~D., Higgins, J.~P., and Deeks, J.~J. (2011).
\newblock Interpretation of random effects meta-analyses.
\newblock {\em BMJ}, 342.

\bibitem[Robins et~al., 2000]{robins2000marginal}
Robins, J.~M., Hernan, M.~A., and Brumback, B. (2000).
\newblock Marginal structural models and causal inference in epidemiology.
\newblock {\em Epidemiology}, 11(5):550--560.

\bibitem[Robinson, 1988]{robinson1988root}
Robinson, P.~M. (1988).
\newblock Root-n-consistent semiparametric regression.
\newblock {\em Econometrica: Journal of the Econometric Society}, pages
  931--954.

\bibitem[R{\"o}ver and Friede, 2020]{rover2020dynamically}
R{\"o}ver, C. and Friede, T. (2020).
\newblock Dynamically borrowing strength from another study through shrinkage
  estimation.
\newblock {\em Statistical Methods in Medical Research}, 29(1):293--308.

\bibitem[Rubin, 1974]{rubin1974estimating}
Rubin, D.~B. (1974).
\newblock Estimating causal effects of treatments in randomized and
  nonrandomized studies.
\newblock {\em Journal of Educational Psychology}, 66(5):688.

\bibitem[Rubin, 1978]{rubin1978bayesian}
Rubin, D.~B. (1978).
\newblock Bayesian inference for causal effects: The role of randomization.
\newblock {\em The Annals of Statistics}, pages 34--58.

\bibitem[Rubin, 1980]{rubin1980randomization}
Rubin, D.~B. (1980).
\newblock Randomization analysis of experimental data: The fisher randomization
  test comment.
\newblock {\em Journal of the American Statistical Association},
  75(371):591--593.

\bibitem[Rubin, 2005]{rubin2005causal}
Rubin, D.~B. (2005).
\newblock Causal inference using potential outcomes: Design, modeling,
  decisions.
\newblock {\em Journal of the American Statistical Association},
  100(469):322--331.

\bibitem[Sakr et~al., 2018]{sakr2018sepsis}
Sakr, Y., Jaschinski, U., Wittebole, X., Szakmany, T., Lipman, J.,
  {\~N}amendys-Silva, S.~A., Martin-Loeches, I., Leone, M., Lupu, M.-N.,
  Vincent, J.-L., and {ICON Investigators} (2018).
\newblock Sepsis in intensive care unit patients: worldwide data from the
  intensive care over nations audit.
\newblock {\em Open Forum Infectious Diseases}, 5(12):ofy313.

\bibitem[Sattigeri et~al., 2019]{sattigeri2019fairness}
Sattigeri, P., Hoffman, S.~C., Chenthamarakshan, V., and Varshney, K.~R.
  (2019).
\newblock Fairness gan: Generating datasets with fairness properties using a
  generative adversarial network.
\newblock {\em IBM Journal of Research and Development}, 63(4/5):3--1.

\bibitem[Seymour et~al., 2016]{seymour2016assessment}
Seymour, C.~W., Liu, V.~X., Iwashyna, T.~J., Brunkhorst, F.~M., Rea, T.~D.,
  Scherag, A., Rubenfeld, G., Kahn, J.~M., Shankar-Hari, M., Singer, M., et~al.
  (2016).
\newblock Assessment of clinical criteria for sepsis: for the third
  international consensus definitions for sepsis and septic shock (sepsis-3).
\newblock {\em The Journal of the American Medical Association},
  315(8):762--774.

\bibitem[Shankar-Hari et~al., 2016]{shankar2016developing}
Shankar-Hari, M., Phillips, G.~S., Levy, M.~L., Seymour, C.~W., Liu, V.~X.,
  Deutschman, C.~S., Angus, D.~C., Rubenfeld, G.~D., Singer, M., et~al. (2016).
\newblock Developing a new definition and assessing new clinical criteria for
  septic shock: for the third international consensus definitions for sepsis
  and septic shock (sepsis-3).
\newblock {\em The Journal of the American Medical Association},
  315(8):775--787.

\bibitem[Shepherd et~al., 2006]{shepherd2006sensitivity}
Shepherd, B.~E., Gilbert, P.~B., Jemiai, Y., and Rotnitzky, A. (2006).
\newblock Sensitivity analyses comparing outcomes only existing in a subset
  selected post-randomization, conditional on covariates, with application to
  hiv vaccine trials.
\newblock {\em Biometrics}, 62(2):332--342.

\bibitem[Shi et~al., 2018]{shi2018high}
Shi, C., Fan, A., Song, R., and Lu, W. (2018).
\newblock High-dimensional a-learning for optimal dynamic treatment regimes.
\newblock {\em Annals of statistics}, 46(3):925.

\bibitem[Shortreed et~al., 2019]{shortreed2019challenges}
Shortreed, S.~M., Cook, A.~J., Coley, R.~Y., Bobb, J.~F., and Nelson, J.~C.
  (2019).
\newblock Challenges and opportunities for using big health care data to
  advance medical science and public health.
\newblock {\em American Journal of Epidemiology}, 188(5):851--861.

\bibitem[Singh et~al., 2022]{singh2022acute}
Singh, V.~K., Whitcomb, D.~C., Banks, P.~A., AlKaade, S., Anderson, M.~A.,
  Amann, S.~T., Brand, R.~E., Conwell, D.~L., Cote, G.~A., Gardner, T.~B.,
  et~al. (2022).
\newblock Acute pancreatitis precedes chronic pancreatitis in the majority of
  patients: Results from the naps2 consortium.
\newblock {\em Pancreatology}, 22(8):1091--1098.

\bibitem[Smith et~al., 2017]{smith2017federated}
Smith, V., Chiang, C.-K., Sanjabi, M., and Talwalkar, A.~S. (2017).
\newblock Federated multi-task learning.
\newblock {\em Advances in Neural Information Processing Systems}, 30.

\bibitem[Song et~al., 2021]{song2021association}
Song, N., Tan, X.~E., Wang, Y., Kim, R.~S., Bandos, H., Tang, G., Mamounas, E.,
  Geyer, C.~E., Rastogi, P., Jacobs, S.~A., Srinivasan, A., Lucas, P.~C., Paik,
  S., Wolmark, N., Swain, S.~M., and Pogue-Geile, K.~L. (2021).
\newblock {Abstract 532: Association of {pCR} and the 8-gene signature: {NRG
  Oncology/NSABP B-41}}.
\newblock {\em Cancer Research}, 81(13\_Supplement):532--532.

\bibitem[Splawa-Neyman et~al., 1990]{splawa1990application}
Splawa-Neyman, J., Dabrowska, D.~M., and Speed, T. (1990).
\newblock On the application of probability theory to agricultural experiments.
  essay on principles. section 9.
\newblock {\em Statistical Science}, pages 465--472.

\bibitem[Stuart et~al., 2018]{stuart2018generalizability}
Stuart, E.~A., Ackerman, B., and Westreich, D. (2018).
\newblock Generalizability of randomized trial results to target populations:
  design and analysis possibilities.
\newblock {\em Research on Social Work Practice}, 28(5):532--537.

\bibitem[Stuart et~al., 2015]{stuart2015assessing}
Stuart, E.~A., Bradshaw, C.~P., and Leaf, P.~J. (2015).
\newblock Assessing the generalizability of randomized trial results to target
  populations.
\newblock {\em Prevention Science}, 16(3):475--485.

\bibitem[Stuart et~al., 2011]{stuart2011use}
Stuart, E.~A., Cole, S.~R., Bradshaw, C.~P., and Leaf, P.~J. (2011).
\newblock The use of propensity scores to assess the generalizability of
  results from randomized trials.
\newblock {\em Journal of the Royal Statistical Society: Series A (Statistics
  in Society)}, 174(2):369--386.

\bibitem[Sutton et~al., 2000]{sutton2000methods}
Sutton, A.~J., Abrams, K.~R., Jones, D.~R., Jones, D.~R., Sheldon, T.~A., and
  Song, F. (2000).
\newblock {\em Methods for meta-analysis in medical research}, volume 348.
\newblock Wiley Chichester.

\bibitem[Sverdrup et~al., 2020]{sverdrup2020policytree}
Sverdrup, E., Kanodia, A., Zhou, Z., Athey, S., and Wager, S. (2020).
\newblock policytree: Policy learning via doubly robust empirical welfare
  maximization over trees.
\newblock {\em Journal of Open Source Software}, 5(50):2232.

\bibitem[Tan et~al., 2022a]{tan2022identifying}
Tan, X., Abberbock, J., Rastogi, P., and Tang, G. (2022a).
\newblock Identifying principal stratum causal effects conditional on a
  post-treatment intermediate response.
\newblock In Schölkopf, B., Uhler, C., and Zhang, K., editors, {\em
  Proceedings of the First Conference on Causal Learning and Reasoning}, volume
  177 of {\em Proceedings of Machine Learning Research}, pages 734--753. PMLR.

\bibitem[Tan et~al., 2022b]{tan2021tree}
Tan, X., Chang, C.-C.~H., Zhou, L., and Tang, L. (2022b).
\newblock A tree-based model averaging approach for personalized treatment
  effect estimation from heterogeneous data sources.
\newblock In Chaudhuri, K., Jegelka, S., Song, L., Szepesvari, C., Niu, G., and
  Sabato, S., editors, {\em Proceedings of the 39th International Conference on
  Machine Learning}, volume 162 of {\em Proceedings of Machine Learning
  Research}, pages 21013--21036. PMLR.

\bibitem[Tan et~al., 2022c]{tan2022rise}
Tan, X., Qi, Z., Seymour, C.~W., and Tang, L. (2022c).
\newblock {RISE}: Robust individualized decision learning with sensitive
  variables.
\newblock In Oh, A.~H., Agarwal, A., Belgrave, D., and Cho, K., editors, {\em
  Advances in Neural Information Processing Systems}.

\bibitem[Tan et~al., 2018]{tan2018changepoint}
Tan, X., Ross, C.~A., Miller, M., and Tang, X. (2018).
\newblock Changepoint analysis of putamen and thalamus subregions in
  premanifest huntington's disease.
\newblock In {\em 2018 IEEE 15th International Symposium on Biomedical Imaging
  (ISBI 2018)}, pages 531--535.

\bibitem[Tan et~al., 2022d]{tan2022doubly}
Tan, X., Yang, S., Ye, W., Faries, D.~E., Lipkovich, I., and Kadziola, Z.
  (2022d).
\newblock When doubly robust methods meet machine learning for estimating
  treatment effects from real-world data: A comparative study.
\newblock {\em arXiv preprint arXiv:2204.10969}.

\bibitem[Tang et~al., 2020a]{tang2020distributed}
Tang, L., Zhou, L., and Song, P. X.-K. (2020a).
\newblock Distributed simultaneous inference in generalized linear models via
  confidence distribution.
\newblock {\em Journal of Multivariate Analysis}, 176:104567.

\bibitem[Tang et~al., 2020b]{tang2020individualized}
Tang, X., Xue, F., and Qu, A. (2020b).
\newblock Individualized multidirectional variable selection.
\newblock {\em Journal of the American Statistical Association}, pages 1--17.

\bibitem[Tchetgen~Tchetgen, 2014]{tchetgen2014identification}
Tchetgen~Tchetgen, E.~J. (2014).
\newblock Identification and estimation of survivor average causal effects.
\newblock {\em Statistics in Medicine}, 33(21):3601--3628.

\bibitem[Tipton and Olsen, 2018]{tipton2018review}
Tipton, E. and Olsen, R.~B. (2018).
\newblock A review of statistical methods for generalizing from evaluations of
  educational interventions.
\newblock {\em Educational Researcher}, 47(8):516--524.

\bibitem[Toh et~al., 2011]{toh2011comparative}
Toh, S., Platt, R., Steiner, J., and Brown, J. (2011).
\newblock Comparative-effectiveness research in distributed health data
  networks.
\newblock {\em Clinical Pharmacology \& Therapeutics}, 90(6):883--887.

\bibitem[{van den Boom} et~al., 2020]{van2020search}
{van den Boom}, W., Hoy, M., Sankaran, J., Liu, M., Chahed, H., Feng, M., and
  See, K.~C. (2020).
\newblock The search for optimal oxygen saturation targets in critically ill
  patients: Observational data from large {ICU} databases.
\newblock {\em Chest}, 157(3):566--573.

\bibitem[{van der Laan} et~al., 2007]{van2007super}
{van der Laan}, M.~J., Polley, E.~C., and Hubbard, A.~E. (2007).
\newblock Super learner.
\newblock {\em Statistical Applications in Genetics and Molecular Biology},
  6(1).

\bibitem[{van Dyk} and Meng, 2001]{van2001art}
{van Dyk}, D.~A. and Meng, X.-L. (2001).
\newblock The art of data augmentation.
\newblock {\em Journal of Computational and Graphical Statistics}, 10(1):1--50.

\bibitem[Varshney, 2016]{varshney2016engineering}
Varshney, K.~R. (2016).
\newblock Engineering safety in machine learning.
\newblock In {\em 2016 Information Theory and Applications Workshop (ITA)},
  pages 1--5. IEEE.

\bibitem[Vincent et~al., 1996]{vincent1996sofa}
Vincent, J.-L., Moreno, R., Takala, J., Willatts, S., De~Mendon{\c{c}}a, A.,
  Bruining, H., Reinhart, C., Suter, P., and Thijs, L.~G. (1996).
\newblock The {SOFA} (sepsis-related organ failure assessment) score to
  describe organ dysfunction/failure. {On} behalf of the working group on
  sepsis-related problems of the european society of intensive care medicine.
\newblock {\em Intensive Care Medicine}, 22(7):707--710.

\bibitem[Von~Minckwitz et~al., 2012]{von2012definition}
Von~Minckwitz, G., Untch, M., Blohmer, J.-U., Costa, S.~D., Eidtmann, H.,
  Fasching, P.~A., Gerber, B., Eiermann, W., Hilfrich, J., Huober, J., et~al.
  (2012).
\newblock Definition and impact of pathologic complete response on prognosis
  after neoadjuvant chemotherapy in various intrinsic breast cancer subtypes.
\newblock {\em Journal of Clinical Oncology}, 30(15):1796--1804.

\bibitem[Wager and Athey, 2018]{wager2018estimation}
Wager, S. and Athey, S. (2018).
\newblock Estimation and inference of heterogeneous treatment effects using
  random forests.
\newblock {\em Journal of the American Statistical Association},
  113(523):1228--1242.

\bibitem[Wang and Tchetgen~Tchetgen, 2018]{wang2018bounded}
Wang, L. and Tchetgen~Tchetgen, E. (2018).
\newblock Bounded, efficient and multiply robust estimation of average
  treatment effects using instrumental variables.
\newblock {\em Journal of the Royal Statistical Society: Series B (Statistical
  Methodology)}, 80(3):531--550.

\bibitem[Wang et~al., 2017]{wang2017identification}
Wang, L., Zhou, X.-H., and Richardson, T.~S. (2017).
\newblock Identification and estimation of causal effects with outcomes
  truncated by death.
\newblock {\em Biometrika}, 104(3):597--612.

\bibitem[Wang et~al., 2018a]{wang2018quantile}
Wang, L., Zhou, Y., Song, R., and Sherwood, B. (2018a).
\newblock Quantile-optimal treatment regimes.
\newblock {\em Journal of the American Statistical Association},
  113(523):1243--1254.

\bibitem[Wang et~al., 2018b]{wang2018learning}
Wang, Y., Fu, H., and Zeng, D. (2018b).
\newblock Learning optimal personalized treatment rules in consideration of
  benefit and risk: with an application to treating type 2 diabetes patients
  with insulin therapies.
\newblock {\em Journal of the American Statistical Association},
  113(521):1--13.

\bibitem[Wang et~al., 2020]{wang2020sylvester}
Wang, Y., Jang, B., and Hero, A. (2020).
\newblock The sylvester graphical lasso (syglasso).
\newblock In {\em International Conference on Artificial Intelligence and
  Statistics}, pages 1943--1953. PMLR.

\bibitem[Wang et~al., 2021]{wang2021tributarypca}
Wang, Y., Klein, N., Morley, S., Jordanova, V., Henderson, M., Biswas, A., and
  Lawrence, E. (2021).
\newblock Tributarypca: Distributed, streaming pca for in situ dimension
  reduction with application to space weather simulations.
\newblock In {\em 2021 7th International Workshop on Data Analysis and
  Reduction for Big Scientific Data (DRBSD-7)}, pages 33--39. IEEE.

\bibitem[Wasserman, 2000]{wasserman2000bayesian}
Wasserman, L. (2000).
\newblock Bayesian model selection and model averaging.
\newblock {\em Journal of Mathematical Psychology}, 44(1):92--107.

\bibitem[Watkins and Dayan, 1992]{watkins1992q}
Watkins, C.~J. and Dayan, P. (1992).
\newblock Q-learning.
\newblock {\em Machine Learning}, 8(3):279--292.

\bibitem[Whitehead, 2002]{whitehead2002meta}
Whitehead, A. (2002).
\newblock {\em Meta-analysis of controlled clinical trials}, volume~7.
\newblock John Wiley \& Sons.

\bibitem[Wolfson and Gilbert, 2010]{wolfson2010statistical}
Wolfson, J. and Gilbert, P. (2010).
\newblock Statistical identifiability and the surrogate endpoint problem, with
  application to vaccine trials.
\newblock {\em Biometrics}, 66(4):1153--1161.

\bibitem[Wyss et~al., 2018]{wyss2018using}
Wyss, R., Schneeweiss, S., {van der Laan}, M., Lendle, S.~D., Ju, C., and
  Franklin, J.~M. (2018).
\newblock Using super learner prediction modeling to improve high-dimensional
  propensity score estimation.
\newblock {\em Epidemiology}, 29(1):96--106.

\bibitem[Yang et~al., 2019]{yang2019federated}
Yang, Q., Liu, Y., Chen, T., and Tong, Y. (2019).
\newblock Federated machine learning: concept and applications.
\newblock {\em ACM Transactions on Intelligent Systems and Technology},
  10(2):1--19.

\bibitem[Yang et~al., 2020]{yang2020elastic}
Yang, S., Zeng, D., and Wang, X. (2020).
\newblock Elastic integrative analysis of randomized trial and real-world data
  for treatment heterogeneity estimation.
\newblock {\em arXiv preprint arXiv:2005.10579}.

\bibitem[Yang, 2001]{yang2001adaptive}
Yang, Y. (2001).
\newblock Adaptive regression by mixing.
\newblock {\em Journal of the American Statistical Association},
  96(454):574--588.

\bibitem[Yao et~al., 2018]{yao2018using}
Yao, Y., Vehtari, A., Simpson, D., and Gelman, A. (2018).
\newblock Using stacking to average bayesian predictive distributions (with
  discussion).
\newblock {\em Bayesian Analysis}, 13(3):917--1007.

\bibitem[Zafar et~al., 2017]{zafar2017fairness}
Zafar, M.~B., Valera, I., Rogriguez, M.~G., and Gummadi, K.~P. (2017).
\newblock Fairness constraints: Mechanisms for fair classification.
\newblock In {\em Artificial Intelligence and Statistics}, pages 962--970.
  PMLR.

\bibitem[Zhang et~al., 2012]{zhang2012robust}
Zhang, B., Tsiatis, A.~A., Laber, E.~B., and Davidian, M. (2012).
\newblock A robust method for estimating optimal treatment regimes.
\newblock {\em Biometrics}, 68(4):1010--1018.

\bibitem[Zhang et~al., 2018]{zhang2018mitigating}
Zhang, B.~H., Lemoine, B., and Mitchell, M. (2018).
\newblock Mitigating unwanted biases with adversarial learning.
\newblock In {\em Proceedings of the 2018 AAAI/ACM Conference on AI, Ethics,
  and Society}, pages 335--340.

\bibitem[Zhang and Bareinboim, 2018]{zhang2018fairness}
Zhang, J. and Bareinboim, E. (2018).
\newblock Fairness in decision-making—the causal explanation formula.
\newblock In {\em Proceedings of the AAAI Conference on Artificial
  Intelligence}, volume~32.

\bibitem[Zhang and Rubin, 2003]{zhang2003estimation}
Zhang, J.~L. and Rubin, D.~B. (2003).
\newblock Estimation of causal effects via principal stratification when some
  outcomes are truncated by “death”.
\newblock {\em Journal of Educational and Behavioral Statistics},
  28(4):353--368.

\bibitem[Zhao et~al., 2018]{zhao2018federated}
Zhao, Y., Li, M., Lai, L., Suda, N., Civin, D., and Chandra, V. (2018).
\newblock Federated learning with non-iid data.
\newblock {\em arXiv preprint arXiv:1806.00582}.

\bibitem[Zhao et~al., 2012]{zhao2012estimating}
Zhao, Y., Zeng, D., Rush, A.~J., and Kosorok, M.~R. (2012).
\newblock Estimating individualized treatment rules using outcome weighted
  learning.
\newblock {\em Journal of the American Statistical Association},
  107(499):1106--1118.

\bibitem[Zhao et~al., 2015]{zhao2015doubly}
Zhao, Y.-Q., Zeng, D., Laber, E.~B., Song, R., Yuan, M., and Kosorok, M.~R.
  (2015).
\newblock Doubly robust learning for estimating individualized treatment with
  censored data.
\newblock {\em Biometrika}, 102(1):151--168.

\bibitem[Zigler and Belin, 2012]{zigler2012bayesian}
Zigler, C.~M. and Belin, T.~R. (2012).
\newblock A bayesian approach to improved estimation of causal effect
  predictiveness for a principal surrogate endpoint.
\newblock {\em Biometrics}, 68(3):922--932.

\end{thebibliography}

\end{document}